\documentclass[11pt,leqno]{article}

\usepackage{amsmath, amssymb, amsthm}
\usepackage{mathtools}
\usepackage{geometry}
\usepackage{setspace}
\usepackage{natbib}
\usepackage{hyperref}
\usepackage{booktabs}
\usepackage{graphicx}
\usepackage{caption}
\usepackage{subcaption}
\usepackage[flushleft]{threeparttable}
\usepackage{tabularx}
\usepackage{array}
\usepackage{todonotes}

\makeatletter
\def\tagform@#1{\maketag@@@{\hbox to 3em{\hss(\ignorespaces#1\unskip\@@italiccorr)}}}
\let\veqno\@@leqno
\renewcommand{\eqref}[1]{\textup{(\ref{#1})}}
\makeatother

\bibpunct{(}{)}{;}{a}{,}{,}

\newtheoremstyle{aerplain}%
  {1em}{1em}%
  {\itshape}%
  {}%
  {\bfseries}%
  {.}%
  {.5em}%
  {\thmname{#1}\thmnumber{ #2}\thmnote{ (#3)}}%

\newtheoremstyle{aerdef}%
  {1em}{1em}{\upshape}{}{\bfseries}{.}{.5em}%
  {\thmname{#1}\thmnumber{ #2}\thmnote{ (#3)}}

\theoremstyle{aerplain}
\newtheorem{theorem}{Theorem}
\newtheorem{proposition}{Proposition}
\newtheorem{lemma}{Lemma}
\newtheorem{corollary}{Corollary}

\theoremstyle{aerdef}
\newtheorem{assumption}{Assumption}

\newtheorem{remark}{Remark}

\hypersetup{colorlinks=true, linkcolor=blue, citecolor=blue,
            urlcolor=blue, pdfborder={0 0 0}}

\DeclareMathOperator*{\argmax}{arg\,max}
\DeclareMathOperator{\osc}{osc}
\newcommand{\R}{\mathbb{R}}

\newcommand{\E}{\mathbb{E}}
\renewcommand{\Pr}{\mathrm{Pr}}
\newcommand{\1}{\mathbf{1}}

\newcommand{\convd}{\xrightarrow{d}}
\newcommand{\convp}{\xrightarrow{p}}

\begin{document}


\begin{center}
  {\large\bfseries PPML and Heavy-Tailed Trade and Factor Flows:\\
   Why Standard Inference Fails and How to Fix It%
   {\renewcommand{\thefootnote}{\fnsymbol{footnote}}%
    \footnote{First arXiv date: September 16, 2026. We thank Yassine Sbai Sassi for very helpful discussions.}}%
   \setcounter{footnote}{0}\par}
  \vspace{1.25em}
  {\normalsize \textsc{Peter H. Egger}%
    \footnote{Professor of Economics, ETH Z\"urich. Email:
      \texttt{pegger@ethz.ch}.}
   \textsc{and} \textsc{Ting Ji}%
    \footnote{Associate Professor of Economics, Central University of Finance and Economics.
      Email: \texttt{jiting@cufe.edu.cn}.}
   \textsc{and} \textsc{Yulong Wang}%
    \footnote{Associate Professor of Economics, Lehigh University.
      Email: \texttt{yuw925@lehigh.edu}.}\par}
\end{center}

\vspace{0.75em}

\begin{abstract}
    The Poisson pseudo-maximum likelihood (PPML) estimator is widely used for estimating bilateral gravity equations. Its consistency requires only a correctly specified conditional mean. Conventional inference, however, also requires finite-variance scores and Gaussian limits. We show that these conditions fail: bilateral flows are Pareto-tailed, PPML scores have a stable limit under a structural gravity data-generating process, and sandwich confidence intervals are too narrow. We retain PPML for point estimation but replace sandwich inference with an $m$-out-of-$n$ bootstrap robust to heavy tails. Across three bilateral data settings, the correction is large and overturns conventionally significant gravity coefficients.
\\

\noindent \textit{Keywords}: PPML estimation, gravity equation, heavy tail, non-Gaussian limit, bootstrap.
\\
\textit{JEL Codes}: C13, C51, F14.
    
\end{abstract}

\vspace{1.5em}

\section{Introduction}
\label{sec:intro}

The Poisson pseudo-maximum likelihood (PPML) estimator of \citet{santossilva&tenreyro2006} has transformed empirical trade research. 
It targets the conditional mean of bilateral sales or factor flows between countries $i$ and $j$, $y_{ij}$, as a function of a vector of observable explanatory variables, $x_{ij}$, directly as $\E[y_{ij}|x_{ij}]=\exp(x_{ij}'\beta_0)$, where $\beta_0$ is an unknown vector of parameters of interest.
In doing so, it avoids the heteroskedasticity bias of log-linear ordinary least squares (OLS) and handles zero bilateral flows naturally. The consistency of the estimator $\hat\beta$ of $\beta_0$ under a correct specification of the conditional mean is well established.

What has received much less attention is whether the statistical inference accompanying PPML estimates is reliable. 
By inference we mean the standard errors, confidence intervals, and hypothesis tests routinely reported alongside the point estimate. 
The conventional sandwich variance estimator and the associated Gaussian approximation of $\hat\beta$ require the PPML score to have a finite variance, which in turn requires the outcome variable $y_{ij}$ to have a finite second moment.

The question of whether bilateral sales or factor flows have a finite second moment is empirical. We address it directly using various datasets.
For instance, we use the OECD Inter-Country Input-Output (ICIO) database, which provides directed bilateral sales (the value shipped from an origin to a destination) for all ordered pairs of 79 countries, covering both intermediate and final use.
The answer to the above question based on these data is unambiguous: irrespective of whether we use total, only final-goods, or only intermediate-goods sales, the upper tail of bilateral sales is well described by a Pareto distribution with a tail index $\hat\alpha$ that is robustly below 2. 
This finding suggests that the underlying distribution of $y_{ij}$ has a very heavy tail and likely unbounded second moment. 
The Pareto fit is also visually tight across all considered aggregations of sales flows. 
Heavy-tailed sales shocks are an empirical regularity at every level of disaggregation in ICIO, irrespective of whether domestic sales are included or not. 
We document such patterns also using data on international emigration flows and foreign inward direct investment stocks.

Establishing the inferential consequences of these heavy tails requires more care than in the classic independent and identically distributed (i.i.d.) setting.
Bilateral sales or factor flows have an intrinsically dyadic design: (i)~the gravity covariates contain country-specific quantities (GDP, multilateral resistance terms, fixed effects) that repeat across pairs, and (ii)~under the structural gravity model of trade of \citet{EatonKortum2002} and \citet{AndersonVanWincoop2003}, country
GDPs and price indices are jointly determined by the world matrix of bilateral trade costs through general equilibrium (see also \citealp{fally2015structural}; \citealp{allen2020universal}). Similar models exist for migration and factor flows (see \citealp{artucc2015trade}, \citealp{dix-carneiro2014labor}, \citealp{caliendo2019dynamics}, \citealp{artuc2025jobs}). Analogous processes have been postulated for flows of goods and factors between subnational units such as regions or cities.
Conditional on this deterministic dyadic design, the maintained DGP has independent pair-level shocks. The PPML score across country pairs (or country-sector pairs) is therefore an independent but non-identically distributed triangular array: the general equilibrium (GE) structure enters as deterministic heterogeneity in the score multiplier, not as cross-pair stochastic dependence. 
The standard i.i.d.\ Gaussian asymptotics for PPML inference do not capture this combination of heavy tails and dyadic design heterogeneity.

The contribution of this paper is threefold. First, we formulate a general-equilibrium-consistent DGP for bilateral trade in which the only stochastic primitive is a pair-level i.i.d.\ multiplicative shock with a regularly varying tail.
The conditional mean is the GE-determined sales level as in \citet{AndersonVanWincoop2003}, and the resulting (concentrated) score decomposes as $s_{ij}=(\varepsilon_{ij}-1)\,\kappa_{ij}$, where $\varepsilon_{ij}-1$ is the centered residual and  $\kappa_{ij}$ is a deterministic dyadic array.
Conditional
on this array, the scores are independent across pairs $ij$ but not
identically distributed; GE effects create deterministic heterogeneity
in $\kappa_{ij}$, not stochastic dependence among scores.

Second, with $n$ denoting the sample size, we establish formally that under this DGP the PPML estimator converges to a multivariate $\alpha$-stable distribution at the rate $n^{1-1/\alpha}$, not to a Gaussian at the rate $\sqrt{n}$, where $\alpha$ is the Pareto exponent characterizing the tail heaviness of $y_{ij}$. We also establish that the conventional variance estimator diverges in probability, at the rate $n^{2/\alpha-1}$.
These results imply the failure of the conventional inference method under heavy tail.

Third, we propose a remedy that retains PPML for point estimation but replaces the Gaussian approximation with a nonparametric $m$-out-of-$n$ bootstrap adapted from \citet{Athreya1987} and \citet{ChiangSasakiWang2023}.
The procedure is self-normalizing, requires no estimation of the tail index $\alpha$ or the scale of the stable limit, and remains valid in both the Gaussian and stable regimes.
We implement the procedure in a Stata package, \texttt{ppmlmn}, which we develop alongside this paper.\footnote{The package is available at \url{https://github.com/yulongwang06/Tail-PPML} and can be installed in Stata with \texttt{net install}.}
We apply it to gravity regressions on three bilateral datasets, all at the country-pair level: bilateral sales flows of goods and services including international trade (ICIO 2022, for intermediate plus final goods, and for final and intermediate goods separately), bilateral
migration (\citealp{AbelCohen2019}), and bilateral stocks of foreign direct inward investment (FDI; \citealp{Steenbergen2022}). The PPML estimator is widely used with the mentioned three types of data and many others (see the following subsection for specific references).

The bootstrap confidence interval (CI) is substantially wider than the heteroskedasticity-robust (HC0) sandwich CI in every specification. The bootstrap-to-HC0 width ratio ranges from about $2.3$ to $2.7$ across the country-pair, migration, and FDI cross-sections. The ratio is of similar magnitude in a final-goods specification at the origin-sector by country-pair level, which we report as a robustness check.
The practical message is that standard inference understates uncertainty in PPML gravity regressions by more than $125$ percent, more than doubling the width of conventional intervals, often enough to overturn the significance of standard gravity covariates, and that a simple bootstrap correction restores valid inference.

\paragraph{Literature.}
Our paper contributes to two literature, in international trade and in econometrics.

First, the PPML estimator with classic standard error has been widely used in the trade literature.
Applications subsequent to \citet{santossilva&tenreyro2006} using PPML include the following selective examples. \citet{anderson2010changing}; \citet{egger&larch&staub&winkelmann2011}, \citet{melitz2014native}, \citet{fally2015structural}, \citet{breinlich2018selling}, \citet{baier2019widely}, \citet{barjamovic2019trade}, \citet{de2019britain}, \citet{WeidnerZylkin2021},  \citet{ahmad2023brexit}, \citet{french2024effects}, \citet{hieu2024effects}, \citet{bergstrand2025tails}, \citet{freeman2025unlocking}, \citet{nagengast2025staggered}, \citet{shepherd2025bias}, and \citet{shepherd2026surprising} all used bilateral sales or trade flows as outcome. \citet{peeters2012gravity}, \citet{albert2022immigration}, \citet{hoffmann2024drought}, \citet{morten2024effects}, \citet{artuc2025jobs}, \citet{kantor2025moonshot}, \citet{morales2025high}, \citet{ghose2026trade}, and \citet{eckert&peters2026} all used bilateral migration flows or stocks or worker mobility as outcome. And \citet{pica2011s}, \citet{alviarez2019multinational}, \citet{broner2023bilateral}, \citet{gu2023climate}, and \citet{kim2026multinational} all used direct investment and other measures of multinational firm activity as outcomes.
We apply our proposed method to a broad set of datasets to illustrate its wide empirical relevance.

Second, a complementary literature on nonlinear panels with many fixed effects studies how the incidental-parameter problem biases the PPML point estimator and how to remove that bias.
\citet{FernandezVal2016} develop analytical and jackknife bias corrections for nonlinear models with two-way unobserved heterogeneity, and \citet{WeidnerZylkin2021} specialize the analysis to three-way gravity, characterizing the bias of PPML and its consistency. Our profile-remainder bounds extend the negligibility of this fixed-effect estimation error to the heavy-tailed regime; they reduce to the finite-variance conditions of those papers when the shock has finite variance.
A parallel approach differences the fixed effects out at the moment level rather than estimating them: \citet{Charbonneau2017} removes nonlinear two-way fixed effects through conditional moment restrictions, an idea \citet{Jochmans2017} adapts to construct fixed-effect-free moment conditions for gravity.
Building on the latter, \citet{YangZhang2023} extend the fixed-effect-free generalized method of moments (GMM) estimation to three-way gravity models, obtaining an $N$-consistent, asymptotically normal estimator that requires no bias correction.
The construction is elegant on the incidental-parameter dimension but, by relying on quadruple product moments, it presumes finite sixteenth moments of the disturbance, so its $U$-statistic central-limit theorem and the derived $\sqrt{n}$ convergence rate both fail once $\alpha<2$.
All of these operate squarely in the finite-variance regime and target the bias of the coefficient, not the tail of the score, and are therefore silent on the heavy-tail failure documented in our Theorem~\ref{thm:stable}.

Third, a parallel strand concerns variance estimation and resampling inference rather than the bias of the point estimate, and it is the strand to which our remedy speaks most directly.
\citet{Zylkin2024} shows that a re-sampling bootstrap doubles, and hence corrects the bias of PPML point estimates while delivering less biased standard errors.
This procedure is the ordinary $n$-out-of-$n$ bootstrap, which is inconsistent for statistics in the infinite-variance domain of attraction \citep{Athreya1987}; restoring consistency requires resampling fewer than $n$ observations \citep{BickelGotzeVanZwet1997} or the closely related subsampling of \citet{PolitisRomano1994}.
The $m$-out-of-$n$ resampling we propose and adopt in Section~\ref{sec:remedy} is precisely this device.
On the clustering side, \citet{Pfaffermayr2023} corrects the downward bias of two-way cluster-robust standard errors in cross-sectional gravity by projecting out equi-correlated country shocks and applying a small-sample working-variance adjustment, in the spirit of the multiway-clustering variance of \citet{CameronGelbachMiller2011}.
This analysis maintains finite cluster-score variances throughout, whereas Appendix~\ref{app:cluster} discusses that this variance diverges under $\alpha\in(1,2)$, leaving the cluster-robust interval correctly scaled but anchored to the wrong limiting law.
Therefore, cluster subsampling along the lines of \citet{ChiangSasakiWang2023} would be required.
Taken together, these two strands address how the estimation of many fixed effects distorts inference, but neither confronts the non-Gaussian limit that arises when bilateral flows (or stocks) themselves lack a finite second moment.

\paragraph{Organization.} The remainder of the paper is organized as follows.
Section~\ref{sec:pareto} documents the Pareto tail of three bilateral datasets: goods and services sales flows (ICIO, at two levels of disaggregation), aggregate migration flows, and aggregate foreign direct investment stocks.
Section~\ref{sec:dgp} formulates the GE-consistent DGP and states the dyadic-score structure that drives our asymptotic theory.
Section~\ref{sec:failure} explains why a Pareto tail with $\alpha<2$ invalidates standard PPML inference, states our main limit theorem, and quantifies the resulting size distortions
through simulation. Section~\ref{sec:remedy} develops the $m$-out-of-$n$ bootstrap procedure and establishes its validity.
Section~\ref{sec:application} applies the procedure to all three datasets and discusses the empirical findings.
Section~\ref{sec:conclusion} concludes. All proofs are in Appendix~\ref{app:proofs}.

\section{The Pareto Tail of Bilateral Economic Flows and Stocks}
\label{sec:pareto}

This section documents empirically that bilateral economic flows and stocks of outcomes commonly studied in international economics have a Pareto tail with index below~2 across three distinct data settings of the gravity literature: bilateral sales, measured through the ICIO database in aggregate
and decomposed into final- and intermediate-sales components;
bilateral emigration flows; and bilateral foreign inward direct investment stocks. 
In every dataset, and whether or not domestic flows are included with goods and services sales, the estimated tail index lies below the finite-variance threshold of~2.

\subsection{Data}
\label{sec:data}

We use the OECD Inter-Country Input-Output (ICIO) database, which records, for 79 countries in aggregate, the value of output produced in an origin country and absorbed in a
destination country, broken down by 50 production sectors on the producing side and by use on the absorbing side.
Throughout, a bilateral flow is directed: it is the value shipped from the origin to the destination, or equivalently, the origin's outward sales to, or the destination's inward purchases from, the partner.
It is therefore asymmetric (the origin-to-destination value differs from the destination-to-origin value).
From this directed flow matrix we construct, all at the country-pair
level, three aggregates of the data that differ only in which use
categories are summed on the absorbing side:

\begin{enumerate}
  \item[(i)] \emph{Aggregate sales}: the directed origin-to-destination
    value of gross output sales, summed over all 50 producing sectors
    and over both final and intermediate use. This is the total sales of
    the origin to the destination.

  \item[(ii)] \emph{Final-use sales}: the same directed value
    restricted to output absorbed as final demand (HFCE, NPISH, GFCE,
    GFCF, INVNT).

  \item[(iii)] \emph{Intermediate-use sales}: the same directed value
    restricted to output absorbed as intermediate inputs into the
    destination's production: all remaining (industry) use categories.
\end{enumerate}

All three flows share the dimension of analysis (origin country,
destination country, year), giving $79\times 79 = 6{,}241$ ordered
pairs per year, or $6{,}162$ once the $79$ domestic ($i=j$) cells are
dropped. The sector and use dimensions are summed over and do not
index the observations. The use categories partition gross output, so
the final- and intermediate-use values sum exactly to the aggregate
for every ordered pair,
$y_{ij}=y_{ij}^{\mathrm{fin}}+y_{ij}^{\mathrm{int}}$: the
decomposition reweights the same total rather than drawing a separate
sample or changing the unit of observation. We carry all three
measures through the tail diagnostics (Table~\ref{tab:hill_raw}) and
the gravity regressions (Section~\ref{sec:application}), so that the
heavy-tail evidence can be read both at the aggregate sales level and
separately for each of the two economically distinct types of flows, final
demand and intermediate inputs.

In all regression applications, we report separately results for data including versus excluding domestic flows.
We make this choice for two reasons.
First, domestic flows are an order of magnitude larger than international ones, since a country sells far more to itself than to any single partner.
Therefore, the diagonal conflates the determinants of cross-border gravity with those of internal distribution, which the standard gravity covariates (distance,
contiguity, common language, colonial ties) alone are not built to capture; absorbing it requires an internal-trade dummy whose coefficient (above five log points) merely reflects this level gap.
Second, and central to this paper, the domestic observations are precisely the most extreme draws of the heavy-tailed outcome and carry extreme leverage, so including them maximizes the leverage of the extreme observations and makes the heavy-tail correction largest.
Reporting both samples lets us read the cross-border gravity estimates of interest off the international-only specification, while documenting that the inferential problem we study is, if anything, more severe once the diagonal in the sales-flow matrix is present.

We focus on the year 2022, the most recent year available in the ICIO 2024 release. 
Two accounting rows on the production side (TLS = taxes less subsidies, VA = value added) and two accounting columns on the use side (DPABR = direct purchases abroad by residents, GGFC =
government final consumption when treated as a demand category) are dropped, leaving 50 production sectors and 50 input-use sectors.
Bilateral distance, contiguity, common official language, and colonial-relationship indicators as regressors used in a log-linear parameterized trade-cost function come from the CEPII GeoDist database \citep{MayerZignago2011}; nominal GDP comes from the World Bank; all country-specific regressors will be absorbed by fixed effects.

We complement the trade data with two further bilateral datasets that are standard gravity settings in their own right.
For bilateral migration we use the migrant-flow estimates of \citet{AbelCohen2019}.
These are directed flows, not stocks: each observation is the number of people who emigrated from the origin country to the destination country over a five-year
interval (equivalently, the destination's immigration inflow from the origin), reconstructed for 232 countries over 1990--2015 from the demographic accounting of changes in foreign-born stocks; we use their pseudo-Bayesian closed-accounting series.
As with trade, the measure is directed and asymmetric, and we use the origin's emigration outflow to each destination.
For bilateral foreign direct investment we use the World Bank Harmonized Bilateral FDI database \citep{Steenbergen2022}, a directed source-by-host panel that harmonizes UNCTAD, OECD, IMF CDIS, and national sources for 247 economies.
In contrast to the migration measure, the FDI outcome is a stock, not a flow: it is the inward FDI position held in the destination (host) economy and sourced from the origin economy, in current US dollars.

\subsection{Tail Diagnostics}
\label{sec:diagnostics}

We apply two standard diagnostics. The \citet{Hill1975}
estimator,\footnote{Under a Pareto-type tail assumption, it has been well established that $\sqrt{k}(\hat{\alpha}(k)-\alpha)\overset{d}{\to}\mathcal{N}(0,\alpha^2) \text{ as } k\to\infty$ and $k/n\to0$ \citep[e.g.,][Chapter 3]{de2006extreme}. The 95\% confidence interval for $\alpha$ can then be constructed as $\hat{\alpha}(k) \pm 1.96\times\hat{\alpha}(k)/\sqrt{k}$.  }
\begin{equation}
  \hat\alpha(k) \;=\; \biggl[\,k^{-1}
    \sum_{i=1}^{k}\log(y_{(i)}/y_{(k+1)})\biggr]^{-1},
  \label{eq:hill}
\end{equation}
where $y_{(1)}\geq y_{(2)}\geq\cdots$ are the order statistics of
the sample, is plotted against $k$ to identify the threshold at which the
Pareto approximation stabilizes, and we also report the log-rank
estimator of \citet{arkolakis2010market} and \citet{GabaixIbragimov2011}.\footnote{A related trade literature estimates the tail parameter of firm-level sales distributions by the quantile--quantile (QQ) regression estimator of \citet{KratzResnick1996}, as in \citet{head2014welfare} and \citet{FontagneOrefice2018}.} 
As a visual goodness-of-fit check we use a tail probability--probability (P--P) plot: for the top $k$ exceedances above a threshold $u$ we fit a Pareto shape by the Hill (maximum-likelihood) estimator and plot the empirical conditional cumulative distribution function (CDF) of those exceedances against the fitted Pareto CDF $1-(u/y)^{\hat\alpha}$. 
If the upper tail is Pareto the points lie on the $45^\circ$ line; systematic departures reveal where a single-index Pareto under- or over-states the curvature of the tail. 
We report Hill estimates at four threshold values $k$ for robustness.

\subsection{Tail Evidence Across Datasets}

Table~\ref{tab:hill_raw} reports Hill estimates on raw bilateral outcomes for our five measures: directed country-pair sales in aggregate and split into their final- and intermediate-use
components, directed emigration flows, and directed inward FDI stocks.
For the three sales measures, where domestic ($i=j$) flows exist, we report estimates separately for specifications that include and exclude them. Migration and FDI are cross-border by construction, so only the
international tail is defined for them.

\begin{table}[!ht]
  \centering
  \caption{Hill Tail-Index Estimates on Raw Bilateral Flows}
  \label{tab:hill_raw}
  \begin{threeparttable}
  \begin{tabularx}{\textwidth}{@{}l*{4}{>{\centering\arraybackslash}X}@{}}
    \toprule
    & \multicolumn{4}{c}{Threshold $k$ (number of observations)} \\
    \cmidrule(lr){2-5}
    \multicolumn{5}{l}{\textit{Panel A. Aggregate directed sales (final and intermediate), ICIO 2022 ($n\approx 6{,}200$)}} \\
    & 50 & 100 & 200 & 500 \\
    \quad With domestic flows    & 0.70 & 0.54 & 0.61 & 0.69 \\
    \quad International only      & 1.81 & 1.57 & 1.24 & 1.06 \\[4pt]
    \multicolumn{5}{l}{\textit{Panel B. Final-goods directed sales, ICIO 2022 ($n\approx 6{,}200$)}} \\
    & 50 & 100 & 200 & 500 \\
    \quad With domestic flows    & 0.77 & 0.60 & 0.62 & 0.71 \\
    \quad International only      & 1.45 & 1.28 & 1.19 & 1.04 \\[4pt]
    \multicolumn{5}{l}{\textit{Panel C. Intermediate-goods directed sales, ICIO 2022 ($n\approx 6{,}200$)}} \\
    & 50 & 100 & 200 & 500 \\
    \quad With domestic flows    & 0.70 & 0.54 & 0.60 & 0.68 \\
    \quad International only      & 1.85 & 1.60 & 1.23 & 1.07 \\[4pt]
    \multicolumn{5}{l}{\textit{Panel D. Directed emigration flows, 2015 ($n\approx 35{,}000$)}} \\
    & 100 & 200 & 500 & 1{,}000 \\
    \quad Cross-border           & 1.59 & 1.32 & 1.06 & 0.86 \\[4pt]
    \multicolumn{5}{l}{\textit{Panel E. Directed inward FDI stocks, 2015 ($n\approx 9{,}000$)}} \\
    & 50 & 100 & 200 & 500 \\
    \quad Cross-border           & 1.26 & 1.35 & 1.03 & 0.81 \\
    \bottomrule
  \end{tabularx}
  \begin{tablenotes}[flushleft]\footnotesize
    \item \emph{Notes}: Hill estimator $\hat\alpha(k)$ from \eqref{eq:hill} on positive directed bilateral flows.
      Panels~A--C are ICIO 2022 directed country-pair sales summed over the 50 producing sectors: total (A) and its final-use (B) and intermediate-use (C) components, which sum to the total for every ordered pair.
      Panel~D uses \citet{AbelCohen2019} directed emigration flows and Panel~E \citet{Steenbergen2022} directed inward FDI stocks, both in the 2015 cross-section.
      ``With domestic flows'' adds the $i=j$ diagonal; migration and FDI are cross-border by construction.
      Threshold ranges differ across panels because larger $n$ makes higher-$k$ estimates feasible.
      Every estimate is below the finite-variance threshold of~2.
  \end{tablenotes}
  \end{threeparttable}
\end{table}

Three patterns emerge. First, the Hill estimate of $\alpha$ is robustly below~2 in every cell of Table~\ref{tab:hill_raw}: the finite-variance threshold $\alpha=2$ that underwrites Gaussian inference is empirically violated in all three settings.\footnote{These diagnostics apply to the unconditional cross-pair distribution of $y_{ij}$, whereas Assumption~\ref{ass:dgp}(a) restricts the centered shock. Table~\ref{tab:hill_resid} in Section \ref{sec:results-discussion} repeats them on the multiplicative residual from the fitted two-way specifications and reaches the same conclusion.} 
Second, the tail heaviness is comparable across very different bilateral outcomes. 
International sales have Hill estimates in the $\hat\alpha\approx 1.0$--$1.9$ range, in aggregate and in both the final- and intermediate-use components; migration is slightly
heavier ($\hat\alpha\approx 0.9$--$1.6$); and FDI is the most extreme, with $\hat\alpha$ falling to $0.8$--$1.0$ in the deep tail, close to the boundary at which even the first moment ceases to exist. 
The final- and intermediate-use decomposition shows that the heavy tail is intrinsic to both margins: neither isolates a finite-variance component, and the intermediate-use tail is, if
anything, marginally heavier than the final-use tail in the deep tail. 
Third, for sales, including domestic flows always produces a heavier tail: domestic flows are systematically larger than international ones because of an absence of many frictions to
economic transactions at national borders, and the largest domestic flows are the most extreme observations in the entire dataset, so excluding them removes the heaviest part of the right tail (\citealp{yotov2022role}, however, demonstrates that omitting domestic flows may bias the parameters of interest in gravity models).

Several cells, most persistently FDI and the deepest-threshold migration and domestic-sales estimates, fall at or below one.
This bears on the scope of the theory that follows.
PPML consistency (Theorem~\ref{thm:stable}(i)) is maintained throughout $\alpha\in(1,2)$, where the shock has a finite mean and the PPML conditional-mean model is well defined.
The lower bound $\alpha>1$ is not merely technical, since a pure Pareto right tail with $\alpha\le1$ has no finite mean and the conditional-mean DGP is not literally valid without truncation.
We therefore read the less-than-one estimates not as a literal calibration of $\alpha$ but as evidence of tail risk more severe than the distributional theorem quantifies: where it applies, the breakdown of Gaussian sandwich inference is only sharper.

Figure~\ref{fig:pareto_icio} provides a visual confirmation.
For each measure it plots the empirical conditional CDF of the top 500 tail exceedances against the CDF of a Pareto fitted to those same exceedances.
Under a Pareto upper tail the points lie on the $45^\circ$ line. They do, closely, in all five panels: the root-mean-squared error (RMSE) of the deviation from the diagonal ranges from $0.020$ (final-goods sales) to $0.037$ (inward FDI stocks).
The mild systematic bows show that a single-index Pareto is a good but not exact description, with the most extreme observations heavier than the fitted body; for FDI these are the special-purpose-entity hubs. From the econometric perspective, these plots motivate the assumption of a Pareto-type tail, which is formally referred to as regular variation.
See Section \ref{sec:stoch} for more details.

\begin{figure}[!ht]
  \centering
  \includegraphics[width=\textwidth]{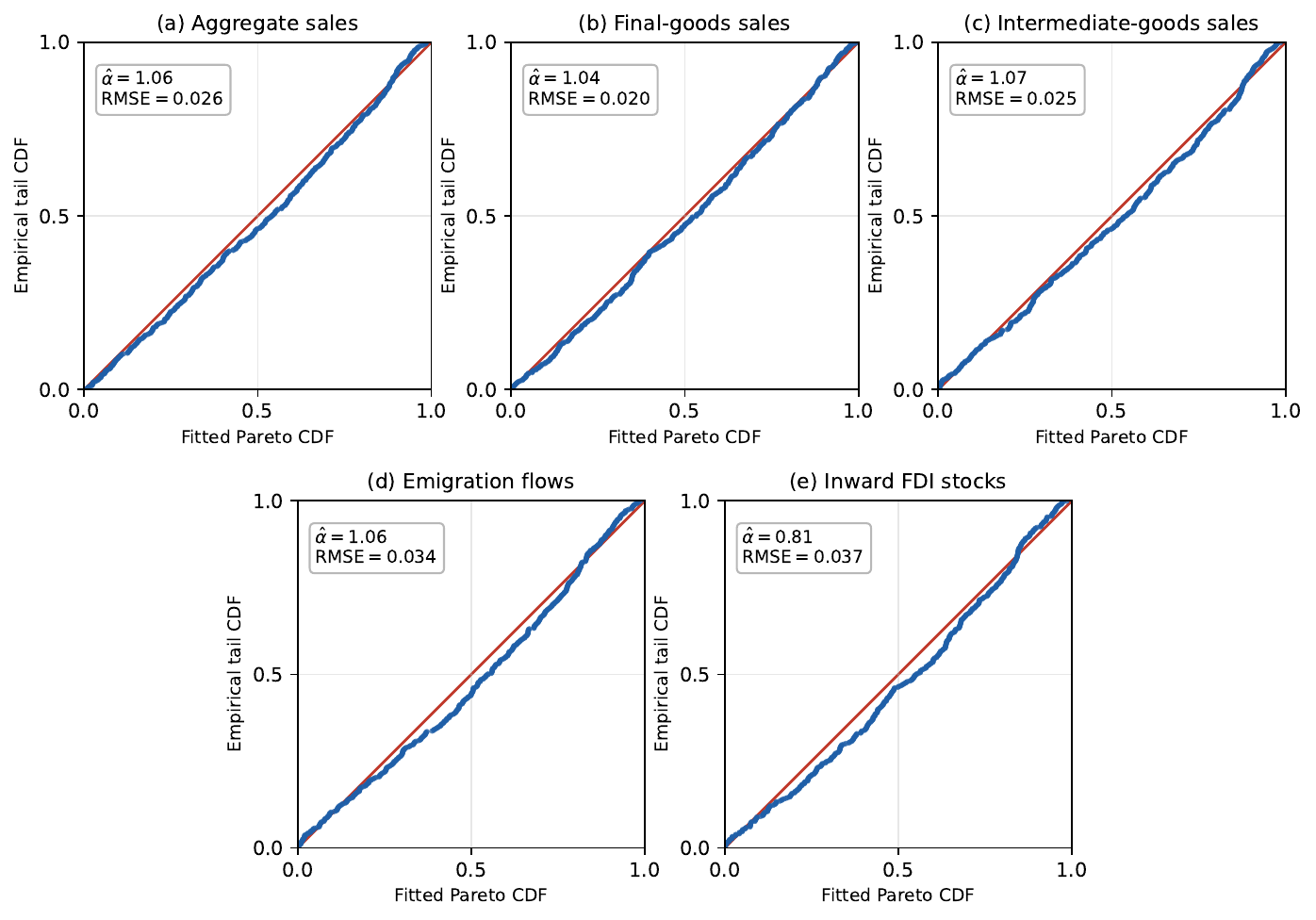}
  \caption{Tail probability--probability plots for the five measures of Table~\ref{tab:hill_raw}: directed country-pair sales in aggregate (a), final-use (b), and intermediate-use (c), ICIO 2022 international flows only; directed emigration flows (d, \citealp{AbelCohen2019}); and directed inward FDI stocks (e, \citealp{Steenbergen2022}), the latter two in the 2015 cross-section.
  For each measure the Pareto shape $\hat\alpha$ is fit by the Hill estimator on the top $k=500$ exceedances above the threshold $u$, and the empirical conditional CDF of those exceedances is plotted against the fitted Pareto CDF $1-(u/y)^{\hat\alpha}$; points on the $45^\circ$ line indicate a Pareto tail.
  Each panel reports $\hat\alpha$ and the root-mean-squared deviation from the diagonal, matching the $k=500$ column of Table~\ref{tab:hill_raw}.}
  \label{fig:pareto_icio}
\end{figure}

\section{A General-Equilibrium-Consistent Data-generating Process}
\label{sec:dgp}

This section formulates the data-generating process (DGP) under which
the asymptotic theory of the next two sections is developed. The
DGP adopts the canonical structural gravity model of
\citet{AndersonVanWincoop2003} formulated econometrically as in \citet{EggerStaub2016}, and adds
a single stochastic primitive: a pair-level i.i.d.\ multiplicative
shock with a regularly varying tail. The resulting score
decomposition $s_{ij}=(\varepsilon_{ij}-1)\,\kappa_{ij}$ isolates
the source of the heavy-tail behavior we analyze in
Section~\ref{sec:failure}. The generic structure of this model has been shown to coincide with that of \citet{EatonKortum2002} and other customary quantitative gravity models by \citet{arkolakis&costinot&rodriguezclare2012}.

\subsection{A Structural Aggregate Gravity Model}
\label{sec:ge-model}

Let $N$ denote the number of countries, with exogenous
populations $\{L_i\}_{i=1}^N$ and exogenous log trade costs
$\{\ln\tau_{ij}\}_{i,j=1}^N$, where $\tau_{ij}\geq 1$ and
$\tau_{ii}=1$. Let $\zeta_0<-1$ denote the trade elasticity. The
general-equilibrium per-capita-income vector
$W=(W_1,\ldots,W_N)$ solves the trade-balance fixed-point
system
\begin{equation}
  W_iL_i \;=\; \sum_{j=1}^N
    \frac{L_iW_i^{\zeta_0}\,\tau_{ij}^{\zeta_0}\,W_jL_j}
         {\sum_{h=1}^N L_hW_h^{\zeta_0}\,\tau_{hj}^{\zeta_0}},
  \qquad i=1,\ldots,N,
  \label{eq:gefixedpoint}
\end{equation}
with $W_1$ fixed as the numeraire. Existence and uniqueness of $W$
follow from standard arguments (\citealp{AlvarezLucas2007}; \citealp{allen2020universal}; \citealp{allen2024equilibrium}) under
positivity of $L_i$ and finiteness of $\tau_{ij}$. Equilibrium
$W_i$ depends on the entire matrix $\{\tau_{hj}\}_{h,j=1}^N$, not
just on the trade costs of country $i$, so two pair flows
$y_{ij}$ and $y_{i'j'}$ that share neither country index are
nonetheless linked through the common equilibrium prices.

Equilibrium bilateral sales are deterministic functions of
$(L,\tau,\zeta_0)$ given by
\begin{equation}
  \mu_{ij}(L,\tau,\zeta_0) \;=\;
    \frac{L_iW_i^{\zeta_0}\,\tau_{ij}^{\zeta_0}\,W_jL_j}
         {\sum_{h=1}^N L_hW_h^{\zeta_0}\,\tau_{hj}^{\zeta_0}}.
  \label{eq:mu-gravity}
\end{equation}
Taking logs of \eqref{eq:mu-gravity} and collecting terms,
\begin{equation}
  \ln\mu_{ij} \;=\;
    \phi^{\mathrm{ex}}_i+\phi^{\mathrm{im}}_j+\zeta_0\ln\tau_{ij},
  \quad
  \phi^{\mathrm{ex}}_i\equiv\ln(L_iW_i^{\zeta_0}),
  \quad
  \phi^{\mathrm{im}}_j\equiv
    \ln\frac{W_jL_j}{\textstyle\sum_h L_hW_h^{\zeta_0}\tau_{hj}^{\zeta_0}}.
  \label{eq:gravity-loglin}
\end{equation}
This is the canonical structural gravity equation for aggregate bilateral exports. The exporter fixed effect $\phi^{\mathrm{ex}}_i$ and the importer fixed effect
$\phi^{\mathrm{im}}_j$ are nonlinear functions of the entire
$(L,\tau)$ matrix through the GE solution
$W$.

\subsection{Stochastic Specification}
\label{sec:stoch}

Realized bilateral sales differ from their structural means
\eqref{eq:mu-gravity} due to measurement error, quality and
preference shocks, and idiosyncratic firm-level draws that
aggregate to the country-pair level. We model this through a
multiplicative pair-level shock,
\begin{equation}
  y_{ij}\;=\;\mu_{ij}\cdot\varepsilon_{ij},
  \qquad i,j=1,\ldots,N,\;\;i\neq j.
  \label{eq:dgp}
\end{equation}
The asymptotic frame lets $n\to\infty$ with $n=N(N-1)$. The
$N$-by-$N$ trade-cost matrix is treated as a triangular array
$\{\tau^{(N)}_{ij}\}_{i,j\leq N}$, and the corresponding GE
solution $W^{(N)}$ is determined by \eqref{eq:gefixedpoint} at
sample size $N$. To keep the array structure visible where it matters, we
retain the $N$ superscript in the conditions that quantify over or take limits in
$N$ (the boundedness, scale, and spectral-measure requirements of
Assumption~\ref{ass:dgp} and the limit theory), and suppress it elsewhere.

Denote the centered shock by $\eta_{ij}\equiv\varepsilon_{ij}-1\geq -1$.
Let $\beta_0\in \mathcal B$ denote the coefficients on the dyadic covariates $x_{ij}\in\mathbb{R}^p$, with $\mathcal B$ a compact convex subset of $\mathbb{R}^p$ and $\beta_0\in\operatorname{int}(\mathcal B)$, and let
\begin{equation}
  \kappa_{ij}\;\equiv\;\mu_{ij}\,\tilde x_{ij}\,\in\,\mathbb{R}^p,
  \label{eq:kappa-def}
\end{equation}
where $\tilde x_{ij}$ is the residual of $x_{ij}$ after partialling out the country fixed effects (FE) in the $\mu_{ij}$-weighted projection onto the FE column space (see Appendix \ref{app:profiling} for more details). The FE design vector $d_{ij}\in\R^{2N}$ is the exporter--importer incidence vector, defined by $d_{ij}'\phi=\phi^{\mathrm{ex}}_i+\phi^{\mathrm{im}}_j$: it is the $0/1$ selector with a one in the exporter-$i$ slot and a one in the importer-$j$ slot and zeros elsewhere. Every $d_{ij}$ is orthogonal to the direction $(\mathbf 1_N',-\mathbf 1_N')'$, so $\phi$ is identified only up to that direction; Appendix~\ref{app:profiling} records the quotient conventions under which all fixed-effect objects are well defined and representative-free.

The single primitive of this DGP is the following.

\begin{assumption}[GE-consistent dyadic DGP]
\label{ass:dgp}
The data $\{y_{ij}:i\neq j\}_{i,j=1}^N$ are generated by \eqref{eq:dgp}, where $\mu_{ij}$ is the GE-determined conditional mean in \eqref{eq:mu-gravity}.
The following four conditions hold:

\begin{enumerate}
  \item[(a)] (\emph{Heavy-tailed one-sided shock})
    $\{\varepsilon_{ij}\}_{i\neq j}$ i.i.d.;\quad $\E[\varepsilon_{ij}]=1$;\quad
    $\varepsilon_{ij}\ge0$ a.s.;\quad $\eta_{ij}=\varepsilon_{ij}-1$;
    \begin{equation}
      \Pr(\eta_{ij}>t)=c_+\,t^{-\alpha}\,\mathcal{L}(t),\qquad
      c_+>0,\quad \alpha\in(1,2),\quad
      \mathcal{L}(t)\to c_\ast\in(0,\infty)\ \text{as }t\to\infty;
      \label{eq:tail-balance}
    \end{equation}
    $\Pr(\eta_{ij}<-t)=0\ (t>1)$;\quad $n\,\Pr(\eta_{ij}>a_n)\to1$.

  \item[(b)] (\emph{Bounded, identified design})
    $\mu_{ij}(\beta,\phi)=\exp(x_{ij}'\beta+\phi^{\mathrm{ex}}_i+\phi^{\mathrm{im}}_j)$;\quad
    $\bar\phi(\beta)$ solves $\sum_{i\neq j}\bigl(\mu_{ij}^0-\mu_{ij}(\beta,\bar\phi(\beta))\bigr)d_{ij}=0$
    and denotes throughout its unique representative in the
    orthogonal complement of $(\mathbf 1_N',-\mathbf 1_N')'$, so that
    \eqref{eq:pseudo-bdd} is well posed;\quad
    $\phi_0=\bar\phi(\beta_0)$;\quad
    $H_{\phi\phi,n}(\beta)=\sum_{i\neq j}\mu_{ij}(\beta,\bar\phi(\beta))\,d_{ij}d_{ij}'$.
    For constants $0<c\le C<\infty$ and all $i\neq j$, $N$, $\beta\in\mathcal B$,
    \begin{equation}
      \|x_{ij}\|\le C,\quad \|\tilde x_{ij}\|\le C,\quad
      \max_i\bigl|\bar\phi^{\mathrm{ex}}_i(\beta)\bigr|\le C,\quad
      \max_j\bigl|\bar\phi^{\mathrm{im}}_j(\beta)\bigr|\le C,
      \label{eq:pseudo-bdd}
    \end{equation}
    \begin{equation}
      c\le\lambda_{\min}\!\bigl(N^{-1}H_{\phi\phi,n}(\beta)\bigr)
      \le\lambda_{\max}\!\bigl(N^{-1}H_{\phi\phi,n}(\beta)\bigr)\le C
      \quad\text{on the identified subspace},
      \label{eq:dense-design}
    \end{equation}
    \begin{equation}
      H_n=n^{-1}\textstyle\sum_{i\neq j}\mu_{ij}\,\tilde x_{ij}\tilde x_{ij}'
      \to H\succ0.
      \label{eq:Hn-limit}
    \end{equation}

  \item[(c)] (\emph{Score spectral measure})
    $\bar A_n=n^{-1}\sum_{(i,j)}\|\kappa_{ij}\|^\alpha\to\bar A\in(0,\infty)$;
    \begin{equation}
      \frac{\sum_{(i,j)}\delta_{\kappa_{ij}/\|\kappa_{ij}\|}\,
            \|\kappa_{ij}\|^\alpha}
           {\sum_{(i,j)}\|\kappa_{ij}\|^\alpha}
      \;\Rightarrow\;\Gamma\ \text{on}\ \mathbb{S}^{p-1};\qquad
      \int_{\mathbb{S}^{p-1}}\theta\theta'\,\Gamma(d\theta)\succ0,
      \label{eq:spec-meas}
    \end{equation}
    where pairs with $\kappa_{ij}=0$ are omitted from the numerator
    sum (their weight is zero).

  \item[(d)] (\emph{Negligible fixed-effect profiling})
    With $R_n(\beta)$, $H_n^{\ast}(\beta)$, $\hat\Omega_n^{\mathrm{prof}}$ of
    Appendix~\ref{app:profiling} and $s_{ij}=\eta_{ij}\kappa_{ij}$,
    \begin{equation}
    \begin{gathered}
      \sup_{\beta\in\mathcal B}\bigl\|R_n(\beta)\bigr\|=o_p(a_n),
      \qquad
      \frac{n}{a_n^{2}}\Bigl\|\hat\Omega_n^{\mathrm{prof}}(\beta_0)
      -n^{-1}\textstyle\sum_{i\neq j}s_{ij}s_{ij}'\Bigr\|\convp0 ,\\
      \sup_{\|\beta-\beta_0\|\le\delta_n}
      \bigl\|H_n^{\ast}(\beta)-H\bigr\|\convp0
      \quad\text{for every sequence }\delta_n\downarrow0 .
      \end{gathered}
      \label{eq:hl-remainder}
    \end{equation}
\end{enumerate}
\end{assumption}

Assumption~\ref{ass:dgp} collects the maintained conditions. 
Part~(a) restricts the centered shock $\eta=\varepsilon-1\ge-1$. Its one-sided regularly varying right tail places it in the $\alpha$-stable domain, with norming $a_n$ defined by $n\Pr(\eta>a_n)\to1$, so that $a_n=(c_+c_\ast n)^{1/\alpha}(1+o(1))\asymp n^{1/\alpha}$. The lower bound $\alpha>1$ gives a finite mean and PPML consistency. The upper bound $\alpha<2$ gives the non-Gaussian limit. 
Both bounds have a firm-level reading: in the \citet{Melitz2003} member of the \citet{arkolakis&costinot&rodriguezclare2012} class with a Pareto distribution, $\alpha$ equals the productivity exponent divided by the elasticity of substitution minus one, so $\alpha>1$ is the integrability condition that literature already imposes and $\alpha<2$ holds across its standard calibration range.
Appendix~\ref{app:primitive-a} gives the argument, shows that entry cutoffs left-truncate the productivity draw and so move $\mu_{ij}$ without changing $\alpha$, and shows that aggregation across the exporters serving a pair leaves the index unchanged. 
The condition $\mathcal L(t)\to c_\ast$ includes the Hall class \citep{Hall1982} and, more generally, any tail whose slowly varying factor converges to a positive constant. The Hall class covers the Pareto, Student-$t$, Fr\'echet, and stable families and is the standard setting for tail-index estimation. The condition permits deviations from an exact Pareto tail while keeping the norming a clean power of $n$. 

Part~(b) takes the regressors and the true and pseudo-true fixed effects to be uniformly bounded. With compact $\mathcal B$, this delivers $\mu_{ij}(\beta,\bar\phi(\beta))\in[c,C]$, a bounded multiplier $\kappa_{ij}=\mu_{ij}\tilde x_{ij}$, the well-conditioned fixed-effect Gram \eqref{eq:dense-design}, and a non-degenerate concentrated Hessian $H_n\to H\succ0$; the Gram condition identifies the fixed effects and makes the concentrated objective strictly concave. Bounded fixed effects are the bounded country heterogeneity implied by bounded log trade costs and populations and the GE stability of \citet{AlvarezLucas2007}; the Sinkhorn characterization of the two-way effects there also bounds the pseudo-true $\bar\phi(\beta)$ uniformly over $\mathcal B$. This is the dyadic analogue of the bounded-index condition of \citet[Assumption~4.1(iv)--(v)]{FernandezVal2016}, which for the Poisson likelihood confines the conditional mean to $0<b_{\min}\le\mu\le b_{\max}$ near the truth. \citet{WeidnerZylkin2021} adopt the same regularity for two-way gravity PPML. We state the global ($\beta\in\mathcal B$) version, which the consistency argument requires. Part~(c) is the dyadic spectral-measure convergence; there $\Rightarrow$ denotes weak convergence and $\mathbb S^{p-1}$ the unit sphere in $\R^p$. It holds automatically under i.i.d.\ sampling and for GE economies whose cost matrices densify regularly.

Part~(d) is the only high-level condition. Estimating $2N$ country effects from $n=N(N-1)$ flows injects sampling noise into the concentrated score, Hessian, and score covariance. Part~(d) requires that noise to be small relative to the heavy-tailed signal, whose stable scale $a_n$ exceeds the $\sqrt n$ of the finite-variance case. In the finite-variance benchmark, the three conditions reduce to the standard incidental-parameter negligibility conditions of \citet{FernandezVal2016} and \citet{WeidnerZylkin2021}. Appendix~\ref{app:primitive-d} verifies~(d) from the Gram conditioning \eqref{eq:dense-design} of part~(b) and one further primitive condition, a non-spiked residualized design. Proposition~\ref{prop:primitive-d} shows that all three conditions hold throughout $\alpha\in(1,2)$ whenever the residualized design satisfies \eqref{eq:design-op} with $\gamma<1$.

\begin{remark}[Pareto exponent and stable index]\label{rem:pareto-stable}
The tail index $\alpha_P$ of part~(a) and the index $\alpha_S$ of the stable limit coincide on the relevant domain. The generalized central limit theorem \citep{GnedenkoKolmogorov1954,SamorodnitskyTaqqu1994} gives $\alpha_S=\min(\alpha_P,2)$. For $\alpha_P\in(1,2)$ the limit is therefore non-Gaussian $\alpha$-stable with $\alpha_S=\alpha_P$ at rate $n^{1/\alpha_P}$, and we write $\alpha$ for both. When the shock has finite variance, the normalization is $\sqrt n$ and the limit is Gaussian. 
Recall that in Section~\ref{sec:pareto} the estimated $\hat\alpha$ lie in $(0.5,1.9)$ across all cells of Table~\ref{tab:hill_raw}, decisively in the non-Gaussian regime.
\end{remark}

\begin{remark}[Including the domestic diagonal]\label{rem:diagonal}
The analysis extends to the full $n=N^2$ sample (including $i=j$) provided the domestic shocks share the tail index $\alpha$, $\sup_N\max_i\mu_{ii}<\infty$, and the $N^2$-observation array satisfies the analogues of Assumption~\ref{ass:dgp}(b)--(d); note that adding the diagonal changes the residualization. The boundedness condition binds, since domestic flows dwarf any single bilateral flow. Large domestic cells can materially increase the finite-sample score scale, although their asymptotic contribution to $\bar A_n$ depends on their leverage and on the resulting residualization, the diagonal contributing $N$ of the $N^2$ cells.
\end{remark}

\section{Why PPML Fails under Heavy Tails}
\label{sec:failure}

Section~\ref{sec:pareto} established that bilateral sales have a Pareto tail with index below 2 at every level of disaggregation, and Section~\ref{sec:dgp} formulated a GE-consistent DGP in which this heavy tail is inherited from a pair-level shock $\varepsilon_{ij}$. 
This section shows that the heavy tail invalidates the standard inference of PPML. 
We derive the non-Gaussian asymptotic distribution of the PPML estimator under Assumption~\ref{ass:dgp}, show that the sandwich variance estimator diverges, and quantify the resulting size distortions through a simulation calibrated to the ICIO tail.

\subsection{Setup and Consistency}
\label{sec:setup}

Throughout this section and the proof in Appendix~\ref{app:stable-proof} we work with the two-way fixed-effect PPML estimator, profiling out the $2N$ country effects. The parameter of interest is $\hat\beta\in\mathbb{R}^p$, the fixed-dimensional block on the dyadic covariates. 
Under Assumption~\ref{ass:dgp} the conditional mean carries the fixed effects,
\begin{equation}
  y_{ij}=\mu_{ij}\,\varepsilon_{ij},\qquad \mu_{ij}=\exp\bigl(x_{ij}'\beta_0+d_{ij}'\phi_0\bigr),\qquad \E[\varepsilon_{ij}]=1,
  \label{eq:model}
\end{equation}
where $\phi_0$ collects the country fixed effects (with $d_{ij}$ the incidence vector defined above), so $\E[y_{ij}|x_{ij}]=\mu_{ij}$. 
Reindexing the $n=N(N-1)$ ordered pairs by $\ell=1,\dots,n$, the estimator maximizes the two-way fixed-effect Poisson objective
\begin{equation}
  (\hat\beta,\hat\phi)=\argmax_{\beta\in \mathcal{B},\;\phi}\;\frac{1}{n}\sum_{\ell=1}^{n} \Bigl[\,y_\ell\,(x_\ell'\beta+d_\ell'\phi)-\exp(x_\ell'\beta+d_\ell'\phi)\Bigr].
  \label{eq:ppml_obj}
\end{equation}
The maximizer $\hat\phi$ is unique only up to the normalization direction $(\mathbf 1_N',-\mathbf 1_N')'$; every object reported below (fitted means, residuals, test statistics) is invariant to the representative.
Concentrating out the fixed effects yields $\hat\phi(\beta)$ and the profiled estimator is written as $\hat\beta=\argmax_{\beta\in\mathcal B}Q_n(\beta,\hat\phi(\beta))$. 
By the first-order conditions, $\hat\beta$ solves the partialled-out score equation
\begin{equation}
  \sum_{\ell=1}^{n}\bigl(y_\ell-\hat\mu_\ell\bigr)\,\hat{\tilde x}_\ell(\hat\beta)=0,\qquad \hat\mu_\ell=\exp\bigl(x_\ell'\hat\beta+d_\ell'\hat\phi(\hat\beta)\bigr),
  \label{eq:score-foc}
\end{equation}
where $\hat{\tilde x}_\ell(\hat\beta)$ is the $\hat\mu$-weighted residual of $x_\ell$ on the fixed-effect design (Appendix~\ref{app:profiling}).
Crucially, the fitted mean $\hat\mu_\ell$ retains the estimated country effects; partialling out acts only on the score direction, not on the conditional mean.
The heavy-tail analysis operates on this concentrated score, and its validity requires that the estimation error in the $2N$ country fixed effects, the incidental-parameter problem of two-way gravity models, be negligible relative to the heavy-tailed variation of the score itself.
This is Assumption~\ref{ass:dgp}(d); Appendix~\ref{app:primitive-d} verifies it from primitive design conditions, which reduce to the standard incidental-parameter conditions of \citet{FernandezVal2016} and \citet{WeidnerZylkin2021} in the finite-variance benchmark.

The PPML estimator is consistent under correct specification of the conditional mean. 
The dyadic objective \eqref{eq:ppml_obj} is jointly concave in $(\beta,\phi)$, and concentrating preserves concavity in $\beta$; combined with the dense, well-conditioned design and the positive-definite Hessian limit of Assumption~\ref{ass:dgp}(b), this delivers $\hat\beta\convp\beta_0$. 
We give the formal argument as Step~3 of the proof of Theorem~\ref{thm:stable} in Appendix~\ref{app:stable-proof}.

\subsection{The Stable Limiting Distribution}
\label{sec:stable}

Asymptotic normality requires more than consistency. Writing the $n$ score contributions as
\begin{equation}
  s_\ell=(\varepsilon_\ell-1)\,\kappa_\ell,\qquad \kappa_\ell=\mu_\ell\,\tilde x_\ell,
  \label{eq:score-l}
\end{equation}
the classical central limit theorem (CLT) for $n^{-1/2}\sum_\ell s_\ell$ requires a finite second moment of $s_\ell$, which is equivalent to a finite
second moment of $\varepsilon_\ell-1$. 
When the shock has a regularly varying tail with index $\alpha<2$ as in Assumption~\ref{ass:dgp}(a), $\E[(\varepsilon_\ell-1)^2] =\infty$, the sandwich variance estimator is inconsistent, and the Gaussian approximation for $\hat\beta$ is invalid.

Section~\ref{sec:pareto} documents Hill tail indices $\hat\alpha\approx 1.0$--$1.9$ on raw bilateral sales across the ICIO specifications, well below the finite-variance threshold of~2,
so the regularly varying shock of Assumption~\ref{ass:dgp}(a) is the empirically operative case. The same is true for migration flows and FDI stocks for pairs of countries.

The correct limiting theory under infinite variance is the generalized central limit theorem \citep{GnedenkoKolmogorov1954,SamorodnitskyTaqqu1994}: if the
summands are regularly varying with index $\alpha\in(1,2)$, then suitably normalized sums converge to an $\alpha$-stable law.
Adapting this to the dyadic score requires a triangular-array version of the CLT, in which the spectral measure of the limit is determined by the empirical distribution of the deterministic multipliers $\{\kappa_\ell/\|\kappa_\ell\|\}$ on the unit sphere weighted by $\|\kappa_\ell\|^\alpha$, rather than by an i.i.d.\ sampling distribution. This is the role of
Assumption~\ref{ass:dgp}(c).

The main theorem of the paper states that under Assumption~\ref{ass:dgp}, the (concentrated) PPML estimator converges at rate $n^{1-1/\alpha}$ to a multivariate $\alpha$-stable distribution, the (unscaled) score second-moment matrix diverges at rate $n^{2/\alpha-1}$, and the conventional sandwich confidence interval does not generally have asymptotic coverage at the nominal level.

\begin{theorem}[Stable Limit and Variance Divergence for PPML under a GE-Consistent DGP]\label{thm:stable}
Suppose Assumption~\ref{ass:dgp} holds.
Let $a_n$ be the normalizing sequence defined in Assumption~\ref{ass:dgp}(a).
Let $\hat\beta\in\mathbb{R}^p$ denote the concentrated PPML estimator of the structural-coefficient block, with the $2N$ country fixed effects partialled out.
Then as $n\to\infty$ with $n=N(N-1):$
\begin{enumerate}
  \item[(i)] \emph{(Consistency)} $\hat\beta\convp\beta_0$.

  \item[(ii)] \emph{(Stable limit)} With
    $\Lambda=C_\alpha\bar A\,\Gamma$ the finite spectral measure on $\mathbb S^{p-1}$ built from Assumption~\ref{ass:dgp}(c), where $C_\alpha=\Gamma_E(1-\alpha)\cos(\pi\alpha/2)$ and $\Gamma_E$ is the Euler gamma function,
    \begin{equation}
      \frac{n}{a_n} \bigl(\hat\beta-\beta_0\bigr)\;\convd\; H^{-1}\,S_\alpha(\Lambda),
      \label{eq:main-limit}
    \end{equation}
    with $S_\alpha(\Lambda)$ the multivariate $\alpha$-stable law with
    spectral measure $\Lambda$.

  \item[(iii)] \emph{(Score-covariance divergence)} With
    $\hat\Omega_n\equiv n^{-1}\sum_{(i,j)} \hat s_{ij}\hat s_{ij}'$ and $\hat s_{ij}=(y_{ij}-\hat\mu_{ij})\,\hat{\tilde x}_{ij}(\hat\beta)$ the PPML scores at $(\hat\beta,\hat\phi(\hat\beta))$ (Appendix~\ref{app:profiling}), there is a positive-semidefinite $\alpha/2$-stable random matrix $\Xi$, jointly distributed with
    $S_\alpha(\Lambda)$, such that
    \begin{equation}
      \frac{n}{a_n^2}\,\hat\Omega_n\;\convd\;\Xi ;
      \label{eq:meat-rate}
    \end{equation}
    in particular $\hat\Omega_{n,kk}\convp\infty$ for every $k$.

  \item[(iv)] \emph{($t$-statistic limit)} Let $\widehat{\mathrm{se}}_k=\bigl([\hat H_n^{-1}\hat\Omega_n\hat H_n^{-1}]_{kk}/n\bigr)^{1/2}$ with $\hat H_n=n^{-1}\sum_{(i,j)}\hat\mu_{ij}\,
    \hat{\tilde x}_{ij}(\hat\beta)\hat{\tilde x}_{ij}(\hat\beta)'$. For each $k$,
    \begin{equation}
      \hat t_k\;=\;\frac{\hat\beta_k-\beta_{0,k}}{\widehat{\mathrm{se}}_k}\;\convd\;J^\star_{\alpha,k}\;\equiv\;\frac{[H^{-1}S_\alpha(\Lambda)]_k}{\bigl([H^{-1}\Xi H^{-1}]_{kk}\bigr)^{1/2}} .
      \label{eq:tstat-limit}
    \end{equation}
    The distribution function of $J^\star_{\alpha,k}$ is continuous, and $J^\star_{\alpha,k}\neq\mathcal{N}(0,1)$ in distribution for every $\alpha\in(1,2)$.
\end{enumerate}
\end{theorem}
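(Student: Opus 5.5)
The proof follows the usual M-estimator route: consistency, an expansion of the concentrated score, a stable limit for the leading term, joint convergence of the meat matrix, and finally the continuous mapping theorem for the $t$-statistic.

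\emph{Consistency (part (i)).} Write $Q_n(\beta)=Q_n(\beta,\bar\phi(\beta))$ for the concentrated objective and $Q_n^\ast(\beta)$ for the same objective with $y_{ij}$ replaced by $\mu_{ij}^0$. The fluctuation term in $Q_n-Q_n^\ast$ is $n^{-1}\sum_{i\neq j}\eta_{ij}\mu_{ij}^0(x_{ij}'\beta+d_{ij}'\bar\phi(\beta))$. It is linear in a bounded weight, by (\ref{eq:pseudo-bdd}), times independent centered shocks with finite mean.
\begin{itemize}
\item A weak law of large numbers for triangular arrays with $\alpha>1$, or Marcinkiewicz--Zygmund, gives this term the order $O_p(a_n/n)=o_p(1)$ pointwise in $\beta$.
\item Replacing $\bar\phi$ by $\hat\phi(\beta)$ costs the remainder $R_n$, which Assumption~\ref{ass:dgp}(d) makes $o_p(a_n)$ uniformly in $\beta$; after division by $n$ this is negligible.
\item Concavity of $Q_n$ upgrades pointwise to uniform convergence on compact $\mathcal B$, as in Newey--McFadden / convexity lemmas.
\item $Q_n^\ast$ is uniquely maximized at $\beta_0$, with curvature bounded below by $H\succ0$ from (\ref{eq:Hn-limit}).
\end{itemize}
Together these deliver $\hat\beta\convp\beta_0$.

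\emph{Stable limit (part (ii)).} Expand the first-order condition (\ref{eq:score-foc}) around $\beta_0$. This gives
\[
0=\sum_\ell s_\ell+R_n(\hat\beta)-nH_n^\ast(\bar\beta)(\hat\beta-\beta_0)
\]
at an intermediate point $\bar\beta$. By Assumption~\ref{ass:dgp}(d), $\sup\|H_n^\ast-H\|\convp0$ and $R_n=o_p(a_n)$, so
\[
\tfrac{n}{a_n}(\hat\beta-\beta_0)=H^{-1}a_n^{-1}\textstyle\sum_\ell \eta_\ell\kappa_\ell+o_p(1).
\]
The core step is a triangular-array generalized CLT for $a_n^{-1}\sum_\ell\eta_\ell\kappa_\ell$, with i.i.d.\ one-sided regularly varying $\eta_\ell$ and bounded deterministic $\kappa_\ell$.
\begin{itemize}
\item Use the Lévy-measure criterion for infinitely divisible limits of independent arrays, e.g.\ Rvačeva / Kallenberg.
\item \emph{Lévy measure.} For Borel $B$ bounded away from $0$ on a polar cone, $\sum_\ell\Pr(\eta_\ell\kappa_\ell/a_n\in B)$ is computed from (\ref{eq:tail-balance}) as $\sum_\ell\|\kappa_\ell\|^\alpha\Pr(\eta>a_n r/\|\kappa_\ell\|)/\|\kappa_\ell\|^\alpha$. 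This tends to $\bar A\,\Gamma(\cdot)\times\alpha r^{-\alpha-1}dr$, using $n\Pr(\eta>a_n)\to1$, $\mathcal L\to c_\ast$ uniformly on bounded multipliers, and (\ref{eq:spec-meas}). Only the positive half of the sphere appears, since $\eta$ is one-sided.
\item \emph{Truncated variance.} Karamata gives $\sum_\ell\E[\eta_\ell^2\1(|\eta_\ell|\|\kappa_\ell\|\le\varepsilon a_n)]\|\kappa_\ell\|^2/a_n^2\lesssim\varepsilon^{2-\alpha}$, which vanishes as $\varepsilon\downarrow0$.
\item \emph{Centering.} Since $\E\eta=0$ and $\alpha>1$, the truncated means satisfy $\sum_\ell\E[\eta_\ell\1(\cdot>a_n)]\kappa_\ell/a_n\to$ a finite limit. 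This limit is exactly the centering that makes the limit strictly stable with spectral measure $\Lambda=C_\alpha\bar A\Gamma$; the constant $C_\alpha$ comes from the standard conversion of Samorodnitsky--Taqqu between Lévy measure and spectral measure.
\end{itemize}
Slutsky then gives (\ref{eq:main-limit}).

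\emph{Meat and $t$-statistic (parts (iii)--(iv)).} By the second condition in Assumption~\ref{ass:dgp}(d), $\tfrac{n}{a_n^2}\hat\Omega_n$ has the same limit as $a_n^{-2}\sum_\ell\eta_\ell^2\kappa_\ell\kappa_\ell'$. Work with the joint point process $\sum_\ell\delta_{\eta_\ell\kappa_\ell/a_n}$, which converges to a Poisson random measure $\mathcal N$ with the intensity found above.
\begin{itemize}
\item By the continuous-mapping / LePage-series argument (Davis--Hsing), the sum and the sum of outer products converge jointly to $(\int x\,\mathcal N(dx)$ compensated$,\ \int xx'\,\mathcal N(dx))$.
\item Small points are negligible: for the quadratic functional because $\alpha<2$, and for the linear one by the truncated-variance bound. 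This identifies $\Xi$ as positive semidefinite and $\alpha/2$-stable, jointly distributed with $S_\alpha$.
\item Divergence follows because $\Xi_{kk}>0$ a.s., by $\int\theta\theta'\Gamma\succ0$, and $n/a_n^2\to0$.
\item For (iv), $\hat H_n\convp H$ follows from the Hessian condition in (d) at $\hat\beta$ together with (i). The ratio in $\hat t_k$ is then scale-invariant: the factors $n/a_n$ cancel, giving $\hat t_k=[H^{-1}\cdot a_n^{-1}\sum s]_k/([H^{-1}\tfrac{n}{a_n^2}\hat\Omega H^{-1}]_{kk})^{1/2}$, and joint convergence plus continuous mapping yields $J^\star_{\alpha,k}$.
\item The denominator limit is a.s.\ positive, so the map is continuous there. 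Continuity of the distribution function follows from the absolute continuity of the Poisson points' radii.
\item To show $J^\star_{\alpha,k}\neq\mathcal N(0,1)$, I would first try the tail route: a self-normalized sum with infinite-variance Poisson points of one sign has $|J^\star|$ bounded by $\sqrt n$-type Cauchy--Schwarz bounds, which do not match Gaussian tails. Failing that, use the one-sidedness of $\eta$, as in Logan--Mallows--Rice--Shepp: the self-normalized limit is asymmetric, e.g.\ $\E J^{\star3}\neq0$ or its skewness is nonzero, so it cannot be Gaussian.
\end{itemize}

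\emph{Main obstacle.} The delicate part is the joint point-process convergence with nonidentical deterministic $\kappa_\ell$, including the exact centering constant. A close second is the rigorous non-Gaussianity of $J^\star_{\alpha,k}$ uniformly for all $\alpha\in(1,2)$, where the LMRS-type asymmetry argument is the cleanest route.
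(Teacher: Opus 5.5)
Your overall architecture matches the paper's, and your L\'evy-measure proof of the triangular stable CLT is a legitimate substitute for the paper's characteristic-function/Karamata computation. However, three steps do not go through as written.

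\textbf{Non-normality in (iv).} This is the most serious gap: neither of your routes establishes $J^\star_{\alpha,k}\neq\mathcal N(0,1)$.

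The tail route fails for two reasons. First, $|S_n/V_n|\le\sqrt n$ is a finite-$n$ bound that is lost in the limit, and $J^\star_{\alpha,k}$ is unbounded. Second, its tails are of Gaussian order. In the one-sided case, take the event that no Poisson point exceeds a small level $\rho$, which has probability $\exp(-C\rho^{-\alpha})$. On it the compensator forces $X\approx-c\rho^{1-\alpha}$ while $V\approx c'\rho^{2-\alpha}$, so $X/\sqrt V\approx -c''\rho^{-\alpha/2}$. This gives a left tail of order $e^{-Cx^2}$, so no crude tail comparison separates the limit from the normal.

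The skewness route fails because one-sidedness of $\eta$ only places each summand on the ray through $\kappa_\ell$. The studentized coordinate is the projection $\iota_k'H^{-1}\kappa_\ell$, and it necessarily takes both signs, since within every row $\sum_{j\neq i}\mu_{ij}\tilde x_{ij}=0$. The projected tail-balance constants $c_{k,\pm}=\bar A\int(g(\theta)_\pm)^\alpha\,\Gamma(d\theta)$ can then coincide; for $p=1$ this just means $\Gamma(\{1\})=\Gamma(\{-1\})$. In that case $J^\star_{\alpha,k}$ is symmetric, and no odd-moment argument can detect non-normality.

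The paper's Lemma~\ref{lem:atomless}(ii) handles every tail balance at once:
\begin{enumerate}
\item push the Poisson random measure through $g(\theta)=\iota_k'H^{-1}\theta$ to a scalar Poisson measure with balance $(c_{k,+},c_{k,-})$;
\item identify $J^\star_{\alpha,k}$ as the Logan--Mallows--Rice--Shepp limit of an i.i.d.\ self-normalized sum with that balance;
\item invoke the Gin\'e--G\"otze--Mason characterization: such a sum is asymptotically $\mathcal N(0,1)$ only if the summand lies in the Gaussian domain of attraction.
\end{enumerate}
Your continuity argument has the right idea but also needs the paper's finite-root step. Conditional on all other points, $X=t\sqrt V$ is a nondegenerate equation in the radius of the first point with $g\neq0$, and it has at most two roots.

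\textbf{Score covariance at the estimates, (iii)--(iv).} The second condition in \eqref{eq:hl-remainder} concerns $\hat\Omega_n^{\mathrm{prof}}(\beta_0)$, built from scores at $(\beta_0,\hat\phi(\beta_0))$. It does not concern $\hat\Omega_n$ at $(\hat\beta,\hat\phi(\hat\beta))$. So Assumption~\ref{ass:dgp}(d) alone gives neither the limit in (iii) nor the limit of the studentizer in (iv). The paper supplies the bridge from three ingredients:
\begin{enumerate}
\item the rate $\|\hat\beta-\beta_0\|=O_p(a_n/n)$ from (ii);
\item Lipschitz bounds on $\beta\mapsto\log\hat\mu_\ell(\beta)$ and $\beta\mapsto\hat{\tilde x}_\ell(\beta)$, uniform in $\ell$, obtained from Lemma~\ref{lem:linf-fit} at the fitted weights (whose factors $e^{\xi_i(\beta)}$ are only $O_p(N^{2/\alpha-1})$, not bounded);
\item the bound $\sum_\ell(1+|\eta_\ell|+\zeta_\ell)^2=O_p(n+a_n^2)$, with $\zeta_\ell=|u_i|+|w_j|+|u_iw_j|$.
\end{enumerate}
Together these yield $(n/a_n^2)\|\hat\Omega_n-\hat\Omega_n^{\mathrm{prof}}(\beta_0)\|=O_p(a_n/n+1/a_n)$.

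\textbf{Consistency, (i).} Your argument runs on $Q_n(\beta,\bar\phi(\beta))$. That is not the criterion $\hat\beta$ maximizes, and it need not be concave in $\beta$. Moreover, replacing $\bar\phi$ by $\hat\phi(\beta)$ does not ``cost $R_n$'': $R_n(\beta)$ in \eqref{eq:Rn-def} is a gap between two scores, not between two objective values. The repair is the paper's score-level argument:
\begin{enumerate}
\item the sample profile $Q_n(\beta,\hat\phi(\beta))$ is concave, with gradient $n^{-1}S_n^{\mathrm{profile}}(\beta)$ by Danskin;
\item the decomposition $n^{-1}S_n^{\mathrm{profile}}(\beta)=\bar s_n(\beta)+n^{-1}\sum_\ell\mu^0_\ell\eta_\ell\tilde x_\ell(\beta)-n^{-1}R_n(\beta)$ converges to $\bar s_n(\beta)$ uniformly on $\mathcal B$, using von Bahr--Esseen plus a Lipschitz net for the middle term and (d) for the last;
\item a strict sign condition on a small sphere around $\beta_0$, combined with monotonicity of the gradient along rays, confines $\hat\beta$ to that ball.
\end{enumerate}
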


Three features of Theorem~\ref{thm:stable} matter for empirical gravity work. 
First, since $a_n=(c_+c_*n)^{1/\alpha}(1+o(1))$ under Assumption~\ref{ass:dgp}(a), the estimator converges at rate $n^{1-1/\alpha}$ rather than $\sqrt n$: at $\alpha=1.4$ the rate is
$n^{0.286}$, so a tenfold increase in the number of country pairs raises the precision of an estimated trade-cost elasticity by a factor of about $1.9$, against $3.2$ under $\sqrt n$ asymptotics. 
Second, aggregation across the $n=N(N-1)$ pairs does not average the heavy tail away: the largest single score contribution is of the same order $a_n$ as the centered aggregate
score fluctuation, so the limiting law inherits the full tail index $\alpha$ of the pair-level shock.
Third, the general-equilibrium structure enters inference through the spectral measure $\Lambda$: market sizes, trade costs, and multilateral resistance generate the deterministic dyadic
heterogeneity in the score multipliers $\kappa_{ij}$, and $\Lambda$ records where on the sphere that heterogeneity concentrates, while the tail index is inherited unchanged from the shock.

Parts~(iii) and~(iv) carry the inferential consequences. 
The sandwich variance estimator does not converge: $\hat\Omega_n$ diverges at rate $a_n^2/n$ (equivalently $n^{2/\alpha-1}$), so reported standard errors are not consistent estimates of a finite asymptotic variance but realizations of a random scale. 
The $t$-statistic itself remains stochastically bounded, because its numerator and standard error shrink at the same nonstandard rate $a_n/n$; only the unscaled score covariance $\hat\Omega_n$ diverges. 
The sandwich ratio is self-normalizing. 
Its limit $J^\star_{\alpha,k}$ in \eqref{eq:tstat-limit}, however, is a ratio of jointly stable coordinates rather than a standard normal: it is continuous and it is
not standard normal for any $\alpha\in(1,2)$ (Lemma~\ref{lem:atomless}(ii)). 
The limit depends on the coordinate $k$ only through the projected tail balance; we write $J^\star_\alpha$ when the coordinate is fixed or clear from context. 
Conventional gravity $t$-statistics are therefore anchored to critical values from the wrong distribution, not merely scaled by a noisy variance, and the coverage error of the sandwich confidence interval can be an $O(1)$ quantity that does not vanish as the country coverage grows;
Sections~\ref{sec:miscoverage} and~\ref{sec:sim} quantify the distortion throughout the empirically relevant range of $\alpha$.
That self-normalized statistics have well-behaved, generally non-Gaussian limits under heavy-tailed summands originates with \citet{LoganMallowsRiceShepp1973}, with the modern characterization due to \citet{GineGotzeMason1997}.

\subsection{How Large Is the Miscoverage?} \label{sec:miscoverage}

The size of the coverage distortion produced by the limit \eqref{eq:tstat-limit} depends on both $\alpha$ and the shape of $\Lambda$. 
To quantify the distortion, we compute the coverage of nominal 95~percent confidence intervals when $J^\star_\alpha$ is the limiting law of the $t$-statistic, for a range of $\alpha$ values and a univariate $\Lambda$. 
The coverage equals $\Pr(|J^\star_\alpha|\leq 1.96)$, which we evaluate by Monte Carlo simulation of $S_\alpha$ and $\Xi$.

For $\alpha=2$ the limit is normal and coverage equals 95~percent exactly. 
As $\alpha$ decreases below 2 the tails of $J^\star_\alpha$ become heavier than normal, and coverage falls. 
For $\alpha=1.8$, coverage is 92~percent; for $\alpha=1.5$, coverage is 89~percent; for $\alpha=1.2$, coverage is 84~percent. 
These distortions are material but moderate, reflecting the self-normalizing structure of the sandwich $t$-statistic. 
They correspond to the case where the multiplier $\kappa_\ell$ is spherically symmetric and approximately constant.

In applied gravity work the multiplier is highly asymmetric:
country sizes vary across many orders of magnitude, and so does bilateral distance, and binary covariates such as land contiguity, common language, sanctions and embargo treatment, trade-agreement membership, customs union membership, and other treatments are sparse. 
Both features amplify the coverage distortion. We document this through simulation in the next subsection.

\subsection{Simulation Evidence: Coverage at Small \texorpdfstring{$N$}{N}}\label{sec:sim}

To quantify finite-sample coverage we simulate directly from the two-way fixed-effect representation \eqref{eq:gravity-loglin} of the structural gravity model. 
Each replication draws exporter and importer effects $\phi^{\mathrm{ex}}_i,\phi^{\mathrm{im}}_j$ i.i.d.\ uniform on $[-1,1]$ and generates bilateral sales for the $n=N(N-1)$ directed pairs $i\neq j$ from
\begin{equation}
  y_{ij}=\exp\!\big(\phi^{\mathrm{ex}}_i+\phi^{\mathrm{im}}_j +\beta_1\,D_{ij}\big)\,\varepsilon_{ij}, \qquad \beta_1=0.5,
  \label{eq:sim-dgp}
\end{equation}
with a multiplicative, unit-mean shock $\varepsilon_{ij}$ that is Pareto$(\alpha)$ for $\alpha<2$ and a finite-variance log-normal (with $\mathrm{Var}=0.25$) at the Gaussian benchmark $\alpha=2$. 

For each replication we estimate \eqref{eq:sim-dgp} by two-way (exporter and importer) fixed-effect PPML and compute the conventional sandwich-based 95\% confidence interval for $\beta_1$. The treatment $D$ is one of three designs:
\begin{itemize}
  \item Design~A (symmetric baseline): $D_{ij}\sim \mathcal{N}(0,1)$ i.i.d.\ across pairs.
  \item Design~B (bounded covariate): $D_{ij}\sim U[0,4]$ i.i.d., mimicking a log-distance-type regressor and consistent with Assumption~\ref{ass:dgp}(b).
  \item Design~C (sparse binary): a binary treatment equal to 1 for 2\% of pairs and 0 otherwise, mimicking sanctions, currency unions, or bilateral investment treaties.
\end{itemize}
We simulate at $N\in\{50,75,100\}$ countries (so $n\in\{2{,}450,5{,}550,9{,}900\}$ pairs) and at $\alpha\in\{2.0,1.8,1.5,1.2\}$; the first row of each panel is the finite-variance
benchmark $\alpha=2$. 
We use 2{,}000 Monte Carlo replications per cell. 

Table~\ref{tab:cov} reports empirical coverage of the nominal 95\% interval for $\beta_1$. 
At the Gaussian benchmark $\alpha=2$ coverage is close to nominal, with a modest finite-sample shortfall that shrinks as $N$ grows (e.g.\ Design~A rises from $0.915$ at $N=50$ to $0.930$ at $N=100$). As $\alpha$ falls the tails of the limiting law become heavier than Gaussian and coverage deteriorates. 
Under the symmetric Design~A the shortfall reaches about fifteen points at $\alpha=1.2$ (coverage near $0.80$), and the bounded Design~B behaves similarly. 
The sparse binary Design~C is the most distorted: coverage falls to about $0.81$ at $\alpha=1.5$ and to $0.76$--$0.77$ at $\alpha=1.2$, a shortfall approaching twenty points. Crucially, at the heaviest tail the distortion does not vanish as the sample grows: for $\alpha=1.2$, Design~A covers $0.803$, $0.799$, and $0.807$ across $N=50,75,100$, and Design~C covers $0.756$, $0.768$, and $0.773$. This is a non-vanishing $O(1)$ coverage error, exactly as predicted by the heavy-tailed limit.

\begin{table}[!ht]
  \centering
  \caption{Empirical Coverage of Nominal 95\% Sandwich CI for $\beta_1$}
  \label{tab:cov}
  \begin{threeparttable}
  \footnotesize
  \setlength{\tabcolsep}{3.5pt}
  \begin{tabular}{lccc|ccc|ccc}
    \toprule
    & \multicolumn{3}{c|}{Design~A: symmetric}
    & \multicolumn{3}{c|}{Design~B: bounded}
    & \multicolumn{3}{c}{Design~C: sparse binary} \\
    \cmidrule(lr){2-4}\cmidrule(lr){5-7}\cmidrule(lr){8-10}
    & $N=50$ & $N=75$ & $N=100$
    & $N=50$ & $N=75$ & $N=100$
    & $N=50$ & $N=75$ & $N=100$ \\
    \midrule
    $\alpha=2.0$ (finite var.) & 0.915 & 0.915 & 0.930 & 0.925 & 0.930 & 0.940 & 0.895 & 0.915 & 0.931 \\
    $\alpha=1.8$            & 0.890 & 0.892 & 0.915 & 0.915 & 0.927 & 0.922 & 0.828 & 0.867 & 0.871 \\
    $\alpha=1.5$            & 0.867 & 0.868 & 0.888 & 0.870 & 0.889 & 0.906 & 0.807 & 0.810 & 0.822 \\
    $\alpha=1.2$            & 0.803 & 0.799 & 0.807 & 0.789 & 0.819 & 0.820 & 0.756 & 0.768 & 0.773 \\
    \bottomrule
  \end{tabular}
  \begin{tablenotes}[flushleft]\footnotesize
    \item \emph{Notes}: Empirical coverage from $2{,}000$ Monte Carlo replications per cell.
      Each replication draws bilateral sales for $n=N(N-1)$ directed pairs from the two-way fixed-effect gravity DGP \eqref{eq:sim-dgp}, with exporter and importer effects i.i.d.\ $U[-1,1]$, treatment coefficient $\beta_1=0.5$, and a unit-mean shock that is Pareto$(\alpha)$ for $\alpha<2$ and log-normal ($\mathrm{Var}=0.25$) at $\alpha=2$.
      Each then fits two-way PPML and computes the sandwich-based 95~percent CI for $\beta_1$.
  \end{tablenotes}
  \end{threeparttable}
\end{table}

The Monte Carlo evidence confirms the theoretical prediction. Conventional sandwich-based inference understates the uncertainty in PPML coefficient estimates, and the size of the distortion depends on the design: it is largest precisely for the sparse binary regressors that pervade applied gravity work. 
The next subsection verifies that the country-level bootstrap of Section~\ref{sec:remedy} restores coverage in this same two-way fixed-effect setting.

\subsection{Restoring Coverage under Two-Way Fixed Effects}
\label{sec:fe-sim}

Table~\ref{tab:cov} documents that the conventional sandwich confidence interval under-covers in the two-way fixed-effect gravity model. 
We now verify that the country-level $m$-out-of-$n$ bootstrap restores coverage in that same setting. The procedure is defined formally in Section~\ref{sec:remedy}; we report its performance here so that all of the simulation evidence appears together. 

We use the same DGP as in Table~\ref{tab:cov} and set the binary treatment $D_{ij}\in\{0,1\}$ activating on $20\%$ of pairs. 
For $\alpha<2$ the shock is again $\mathrm{Pareto}(\alpha)$; the $\alpha=2$ row is a finite-variance benchmark with a unit-mean lognormal shock, paralleling the Gaussian benchmark of Table~\ref{tab:cov}. 
For each replication we estimate $\beta_1$ by two-way fixed-effect PPML and form two nominal-$95\%$ intervals: the HC0 sandwich and the country-level $m$-out-of-$n$ bootstrap of Section~\ref{sec:remedy}, which draws $G=\lfloor N^{0.7}\rfloor=15$ countries and takes the induced subnetwork. The bootstrap uses $M=149$ refit replications, which is adequate here because coverage is averaged over the Monte Carlo
replications even though a single application would use the larger $M$ recommended in Section~\ref{sec:remedy}.

\begin{table}[!ht]
  \centering
  \caption{Monte Carlo Coverage of Nominal 95\% CIs for $\beta_1$ under Two-Way Fixed Effects}
  \label{tab:fe_boot}
  \begin{threeparttable}
  \small
  \setlength{\tabcolsep}{8pt}
  \begin{tabular}{lcc}
    \toprule
    Stable index $\alpha$ & HC0 sandwich & Country-level bootstrap \\
    \midrule
    $2.0$ (finite var.) & 0.930 & 0.970 \\
    $1.8$               & 0.895 & 0.942 \\
    $1.5$               & 0.871 & 0.909 \\
    $1.2$               & 0.788 & 0.882 \\
    \bottomrule
  \end{tabular}
  \begin{tablenotes}[flushleft]\footnotesize
    \item \emph{Notes}: Monte Carlo coverage of nominal 95\% CIs for $\beta_1$ under the two-way fixed-effect PPML design in the text ($N=50$, $n=2{,}450$ directed pairs; Pareto$(\alpha)$ shock for $\alpha<2$ and a unit-mean lognormal at $\alpha=2$, cf.\ Table~\ref{tab:cov}; bootstrap $G=\lfloor N^{0.7}\rfloor=15$). Sandwich coverage uses $6{,}000$ replications per cell; bootstrap coverages are based on 600 Monte Carlo replications with $M=149$ refits each.
  \end{tablenotes}
  \end{threeparttable}
\end{table}

Table~\ref{tab:fe_boot} shows two patterns. 
First, the HC0 sandwich under-covers, and the shortfall grows monotonically as the tail becomes heavier, from $0.93$ at the finite-variance benchmark $\alpha=2$ to $0.79$ at $\alpha=1.2$, mirroring under two-way fixed effects the sandwich failure that the calibrated GE design of Table~\ref{tab:cov} documents. 
Second, the country-level bootstrap restores coverage to near nominal across the heavy-tailed range, holding between $0.91$ and $0.97$ for $\alpha\in\{1.5,1.8,2\}$ and slipping to $0.88$ at $\alpha=1.2$. 
The bootstrap thus restores coverage in the two-way fixed-effect setting that the gravity applications require, confirming Theorem~\ref{thm:valid}: the country-level resampling holds the fixed-effect estimation error negligible relative to the heavy-tailed score even as $2N$ effects are re-estimated on each subnetwork. 
The widening cost of the country-level interval is the same first-order correction seen in the applications of Section~\ref{sec:application}.

\section{Robust Inference via \texorpdfstring{$m$-out-of-$n$}{m-out-of-n} Bootstrap}
\label{sec:remedy}

Theorem~\ref{thm:stable} establishes that the standardized PPML estimator has a non-Gaussian $\alpha$-stable limit when $\alpha<2$, that the score covariance $\hat\Omega_n$ diverges at rate $a_n^2/n$, and that the resulting sandwich $t$-statistic has a non-standard stable limit.
The task of inference is therefore to approximate, not the distribution of $\hat\beta-\beta_0$ itself, but the distribution of the self-normalized statistic
\begin{equation}
  T_n(\beta_{0,k})
    \;\equiv\;\hat t_k
    \;=\;\frac{\hat\beta_k-\beta_{0,k}}{\widehat{\mathrm{se}}_k},
  \qquad\widehat{\mathrm{se}}_k=
    \sqrt{\frac{[\hat H_n^{-1}\hat\Omega_n\hat H_n^{-1}]_{kk}}{n}},
  \label{eq:tstat}
\end{equation}
which is the sandwich $t$-statistic of Theorem~\ref{thm:stable}(iv). 
It is written as a function of the hypothesized value $\beta_{0,k}$, and has a proper continuous non-Gaussian limit under Theorem~\ref{thm:stable}.
The numerator and denominator both shrink at the same nonstandard rate, $a_n/n$, while the unscaled score-covariance $\hat\Omega_n$ in the denominator diverges at rate $a_n^2/n$; the ratio remains stochastically bounded but converges to a nonstandard limit that is not the standard normal for any $\alpha\in(1,2)$.
This self-normalization is a direct analog of the structure studied by \citet{ChiangSasakiWang2023} for cluster-robust inference, and we exploit it by adapting their $m$-out-of-$n$ bootstrap to the PPML setting. Unlike the empirical bootstrap, which is inconsistent under infinite variance \citep{Athreya1987,Knight1989}, and unlike procedures that simulate the $\alpha$-stable limit parametrically, the
$m$-out-of-$n$ bootstrap requires no estimation of the tail index $\alpha$ or the scale of the limit law.
Its validity follows from general results on self-normalized sums \citep{LoganMallowsRiceShepp1973,GineGotzeMason1997}, and in the PPML context it is both analytically tractable and straightforward to implement.

\subsection{The Procedure}\label{sec:remedy-procedure}

\paragraph{Step 1. Full-sample PPML.}
Compute the PPML estimator $\hat\beta$ from \eqref{eq:ppml_obj} and its sandwich standard error $\widehat{\mathrm{se}}_k=\sqrt{[\hat H_n^{-1}\hat\Omega_n\hat H_n^{-1}]_{kk}/n}$ as usual.

\paragraph{Step 2. Subsample PPML replications.}
Choose an integer $G=G(N)$ with $G\to\infty$ and $G/N\to 0$ (a practical default is $G=\lfloor N^{0.7}\rfloor$, which gives an effective pair count $n_G=G(G-1)\approx\lfloor n^{0.7}\rfloor$).
For $j=1,\ldots,M$:
\begin{enumerate}
  \item[(i)] Draw a set $\mathcal{C}_j$ of $G$ countries uniformly
    without replacement from $\{1,\ldots,N\}$, and let
    $\mathcal{S}_j=\{(i,i'):i,i'\in\mathcal{C}_j,\ i\neq i'\}$ be the
    induced set of $n_G=G(G-1)$ ordered pairs.
  \item[(ii)] Compute the PPML estimator $\hat\beta^{(j)}$ on
    $\mathcal{S}_j$, re-estimating the $2G$ exporter/importer fixed
    effects, together with its sandwich standard error
    $\widehat{\mathrm{se}}^{(j)}_k=\sqrt{[\hat H^{(j)\,-1}\hat\Omega^{(j)}\hat H^{(j)\,-1}]_{kk}/n_G}$,
    using the same convention as in Step~1 with sample size $n_G$.
  \item[(iii)] Form the centered self-normalized statistic
    \begin{equation}
      T^{(j)}_k\;=\;\frac{\hat\beta^{(j)}_k-\hat\beta_k}
                          {\widehat{\mathrm{se}}^{(j)}_k},
      \label{eq:Tstar}
    \end{equation}
    centering at the full-sample $\hat\beta_k$.
\end{enumerate}
Drawing the induced subnetwork on $G$ countries keeps each retained country's degree at $G-1$, so the subsample is structurally a full sample of a $G$-country economy and the bounds of Theorem~\ref{thm:stable} apply to it with $N$ replaced by $G$, holding the fixed-effect estimation error and the profile-score remainder negligible (Lemma~\ref{lem:subnet-inherit}).

\paragraph{Step 3. Quantile confidence interval.}
Let $\hat q_\tau$ denote the $\tau$-th empirical quantile of
$\{T^{(j)}_k\}_{j=1}^M$. The $(1-\tau)$ confidence interval for
$\beta_{0,k}$ is
\begin{equation}
  \mathrm{CI}_{1-\tau,k}\;=\;
    \Bigl[\,\hat\beta_k-\hat q_{1-\tau/2}\,\widehat{\mathrm{se}}_k,\;\;
      \hat\beta_k-\hat q_{\tau/2}\,\widehat{\mathrm{se}}_k\,\Bigr].
  \label{eq:mn-ci}
\end{equation}
The procedure is fully nonparametric, requires no knowledge of
$\alpha$ or $\Gamma$, and is computationally inexpensive: at $N=100$
($n\approx10{,}000$), $G=25$ ($n_G=600$), $M=300$, it takes on the
order of a second per application (Section~\ref{sec:application}).

\subsection{Asymptotic Validity}
\label{sec:remedy-validity}

Asymptotic validity of \eqref{eq:mn-ci} follows from a two-step argument. 
The full-sample statistic $T_n(\beta_{0,k})=\hat t_k$ of \eqref{eq:tstat} has a proper continuous non-Gaussian limit by Theorem~\ref{thm:stable} together with the self-normalization argument of \eqref{eq:tstat-limit}. 
The subsample statistic $T^{(j)}_k$ has the same limit because the induced $G$-country subnetwork is itself an instance of the design family of Assumption~\ref{ass:dgp} with $N$ replaced by $G$: (i)~the induced subnetwork keeps each retained country's degree at $G-1$, so the $2G$ fixed effects stay well-conditioned and the profile-score remainder remains negligible by Lemma~\ref{lem:remainder} applied with $N\to G$;
(ii)~uniform country sampling preserves the score spectral measure $\Gamma$ and the Hessian $H$ in the limit; and (iii)~the subsample shocks are i.i.d.\ with the same marginal distribution as the full sample. 
Convergence of empirical quantiles of $\{T^{(j)}_k\}$ to the corresponding population quantiles of the common limit, combined with $T_n\convd J^\star_\alpha$, then gives the coverage statement.

\begin{theorem}[Validity of the country-level $m$-out-of-$n$ bootstrap]\label{thm:valid}
Suppose Assumption~\ref{ass:dgp} and the subnetwork condition \eqref{eq:design-op-sub} hold, with $G\to\infty$, $G/N\to 0$, and $M\to\infty$ as $n\to\infty$. Then for every $k=1,\ldots,p$ and $\tau\in(0,1)$,
\begin{equation}
  \Pr\bigl(\beta_{0,k}\in\mathrm{CI}_{1-\tau,k}\bigr)\;\longrightarrow\;1-\tau .
  \label{eq:mn-coverage}
\end{equation}
\end{theorem}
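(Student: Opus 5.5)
The strategy follows the standard subsampling template of \citet{PolitisRomano1994} in the self-normalized form of \citet{ChiangSasakiWang2023}. I will show that the empirical distribution of the bootstrap statistics $\{T^{(j)}_k\}_{j=1}^M$ converges, uniformly and in probability, to the distribution function $F^\star_k$ of $J^\star_{\alpha,k}$ from Theorem~\ref{thm:stable}(iv). Coverage then follows by inverting the interval. Indeed, $\beta_{0,k}\in\mathrm{CI}_{1-\tau,k}$ holds exactly when $\hat q_{\tau/2}\le \hat t_k\le \hat q_{1-\tau/2}$. Three facts close the argument: $\hat t_k\convd J^\star_{\alpha,k}$, $F^\star_k$ is continuous (Lemma~\ref{lem:atomless}), and $\hat q_\tau\convp (F^\star_k)^{-1}(\tau)$ at points where $F^\star_k$ is strictly increasing. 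Together they give $\Pr(\beta_{0,k}\in\mathrm{CI}_{1-\tau,k})\to F^\star_k(q_{1-\tau/2})-F^\star_k(q_{\tau/2})=1-\tau$. If strict monotonicity is not available from the stable structure, I would instead argue at continuity points with a sandwich of $\pm\epsilon$ quantiles.

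The first step is to decompose each bootstrap statistic:
\begin{equation*}
T^{(j)}_k=\frac{\hat\beta^{(j)}_k-\beta_{0,k}}{\widehat{\mathrm{se}}^{(j)}_k}-\frac{\hat\beta_k-\beta_{0,k}}{\widehat{\mathrm{se}}^{(j)}_k}\equiv t^{(j)}_k-r^{(j)}_k .
\end{equation*}
For the leading term $t^{(j)}_k$, I apply Theorem~\ref{thm:stable} to the induced subnetwork on $\mathcal C_j$ with $N$ replaced by $G$. This requires Assumption~\ref{ass:dgp} to hold along the random sequence of subnetworks. Parts (a) and (b) are inherited trivially: the shocks are i.i.d., and the boundedness is uniform. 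For the Gram condition \eqref{eq:dense-design}, each retained country keeps degree $G-1$. For (d), I invoke Lemma~\ref{lem:subnet-inherit} and Lemma~\ref{lem:remainder} with $N\to G$ under the subnetwork condition \eqref{eq:design-op-sub}. For (c), and for the convergence $H_{n_G}\to H$, I use that uniform sampling of countries without replacement makes the subnetwork averages $n_G^{-1}\sum\|\kappa_{ij}\|^\alpha$ and the weighted empirical spectral measure U-statistic-like averages over the full array. A variance/LLN bound for sampling without replacement then shows they concentrate on $\bar A$ and $\Gamma$. So the subsample $t$-statistic converges in distribution to the same $J^\star_{\alpha,k}$.

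For the remainder $r^{(j)}_k$, Theorem~\ref{thm:stable}(ii)--(iv) gives $\hat\beta_k-\beta_{0,k}=O_p(a_n/n)$ and $\widehat{\mathrm{se}}^{(j)}_k\asymp_p a_{n_G}/n_G$, the latter because $\Xi$ is positive semidefinite with a positive diagonal almost surely. Hence
\begin{equation*}
r^{(j)}_k=O_p\!\left(\frac{a_n/n}{a_{n_G}/n_G}\right)=O_p\!\left((n_G/n)^{1-1/\alpha}\right)=o_p(1),
\end{equation*}
since $G/N\to0$ and $\alpha>1$. I also need $\widehat{\mathrm{se}}^{(j)}_k$ bounded away from zero at its rate, with probability tending to one, uniformly in the sense required for averaging over $j$.

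The main obstacle is passing from the marginal law of one $T^{(j)}_k$ to the empirical CDF $\hat F_M(x)=M^{-1}\sum_j\mathbf 1\{T^{(j)}_k\le x\}$. Its mean converges to $F^\star_k(x)$ by the two steps above. For its variance, I would argue that two independent uniform draws $\mathcal C_j,\mathcal C_{j'}$ of size $G$ share on average $G^2/N\to0$ countries, and hence $o(1)$ of their pairs relative to $n_G$. By the leading-order asymptotic independence of the disjoint parts, a coupling in which the overlapping pairs are removed changes each statistic by $o_p(1)$; this uses that the largest score is only $O_p(a_{n_G})$ and that removed pairs are a vanishing fraction. As a result, $\mathrm{Cov}(\mathbf 1\{T^{(j)}_k\le x\},\mathbf 1\{T^{(j')}_k\le x\})\to0$. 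With $M\to\infty$ this gives $\hat F_M(x)\convp F^\star_k(x)$ pointwise, and uniformly by P\'olya's theorem given continuity. Because $\hat\beta_k$ is common to all $j$, the centering term is the one place where dependence enters, and it is handled by the uniform $o_p(1)$ bound on $r^{(j)}_k$. Note that the numerator rate ratio uses $a_n\asymp n^{1/\alpha}$ from part (a), so no estimate of $\alpha$ is ever needed.
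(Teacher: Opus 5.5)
Your architecture is the paper's. The paper's Steps~1--3 contain, in the same form: the split of $T^{(j)}_k$ into the truth-centered statistic and the recentering term $r^{(j)}_k=O_p\bigl((G/N)^{2(1-1/\alpha)}\bigr)$; the transfer of the self-normalized stable limit to the induced subnetwork via Lemma~\ref{lem:subnet-inherit}; and the Politis--Romano inversion.

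The genuine difference is how you concentrate the subsampling distribution. Your covariance $\mathrm{Cov}(\mathbf{1}\{t^{(1)}_k\le x\},\mathbf{1}\{t^{(2)}_k\le x\})$ equals exactly $\mathrm{Var}_\eta$ of the complete subset-average cdf $F^c_{n,G}(x)$, and you propose to kill it by coupling two draws through their overlap. The paper (Lemma~\ref{lem:vertex-subsample}) instead writes $F^c_{n,G}$, via Hoeffding's representation, as a permutation average of block means over $\lfloor N/G\rfloor$ disjoint country blocks. Statistics on disjoint blocks are exactly independent, so $\mathrm{Var}_\eta(F^c_{n,G}(x))\le 1/(4\lfloor N/G\rfloor)=O(G/N)$, and DKW handles the Monte Carlo error. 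That route works for any bounded statistic of the block's shocks and needs no stability of the PPML refit. Yours needs a perturbation argument.

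As written, your perturbation argument rests on a false claim. Two independent $G$-subsets share on average $G^2/N$ countries, and this need not vanish under $G/N\to0$: for the default $G=\lfloor N^{0.7}\rfloor$ it grows like $N^{0.4}$. So the draws are not asymptotically disjoint. The argument survives only because the shared shocks sit on ordered pairs with both endpoints in the intersection, a fraction $G(G-1)/(N(N-1))\approx(G/N)^2\to0$ of $n_G$.

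To make the coupling rigorous:
\begin{itemize}
\item Redraw the overlap shocks independently rather than deleting pairs; deletion breaks the complete-subnetwork design that Lemma~\ref{lem:subnet-inherit} requires.
\item Compare at the level of the linearized statistic built from $\sum_\ell s_{\ell,\mathcal S}$ and $\sum_\ell s_{\ell,\mathcal S}s_{\ell,\mathcal S}'$.
\item Use the tail bound of Lemma~\ref{lem:rowsum} to show the overlap moves these by $o_p(a_{n_G})$ and $o_p(a_{n_G}^2)$.
\end{itemize}
The operative fact is not that the largest score is $O_p(a_{n_G})$. It is that the overlap is independent of the shocks and vanishing in relative size: the expected number of overlap shocks exceeding $\epsilon a_{n_G}$ is of order $\epsilon^{-\alpha}(G/N)^2\to0$.

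Three smaller repairs are also needed.
\begin{itemize}
\item No bound on $r^{(j)}_k$ uniform in $j$ is available as $M\to\infty$. You only need the fraction of draws with $|r^{(j)}_k|>\varepsilon$ to vanish in probability, which follows from Markov's inequality; this is the paper's $\pi_{n,G}(\varepsilon)$ sandwich.
\item The lower bound on $\widehat{\mathrm{se}}^{(j)}_k$ requires $[H^{-1}\Xi H^{-1}]_{kk}>0$ almost surely, which is Lemma~\ref{lem:atomless}(i). Positivity of the diagonal of $\Xi$ does not imply it.
\item Inheritance of parts (b)--(c) is not trivial. Off the truth, the subnetwork pseudo-true effects are not restrictions of $\bar\phi(\beta)$ (Lemma~\ref{lem:subnet-scaling}). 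Also, $\kappa_{\ell,\mathcal S}=\mu_\ell\tilde x_{\ell,\mathcal S}$ uses the within-subnetwork residual, so your sampling-without-replacement LLN over the full array first needs the $O_p(\sqrt{\log G/G})$ comparison of Lemma~\ref{lem:subnet-residual}. Alternatively, cite Lemma~\ref{lem:subnet-inherit}(i)--(ii) directly.
\end{itemize}
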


\begin{remark}[Finite-variance boundary]\label{rem:alpha2}
Assumption~\ref{ass:dgp}(a) imposes $\alpha\in(1,2)$, so Theorem~\ref{thm:valid} is a heavy-tail statement. 
But the procedure does not require knowing that $\alpha<2$. 
If the shock satisfies $\E[\varepsilon_{ij}^2]<\infty$ (i.e., the stable index $\alpha=2$), the profiling conditions of Assumption~\ref{ass:dgp}(d) hold with the norming $a_n=\sqrt n$, and the full-sample and subsample self-normalized statistics are asymptotically standard normal, $\hat t_k\convd\mathcal N(0,1)$ and $T^{(j)}_k\convd\mathcal N(0,1)$ for every $k$, then \eqref{eq:mn-coverage} continues to hold, by Step~3 of the proof in Appendix~\ref{app:valid-proof} with the continuous limit $J^\star_2=\mathcal N(0,1)$ in place of $J^\star_\alpha$.
\end{remark}

Condition \eqref{eq:design-op-sub} is the subnetwork analogue of \eqref{eq:design-op}. 
It is not implied by Assumption~\ref{ass:dgp}, because the within-subnetwork residual $\tilde x_{ij,\mathcal S}$ partials out $2G$ rather than $2N$ effects. Under it, Lemma~\ref{lem:subnet-inherit} gives subnetwork convergence of the spectral measure to $\Gamma$, of the Hessian to $H$, and of the profile-score remainder to $o_p(a_{n_G})$. 
Like \eqref{eq:design-op}, it is a non-spiking restriction on the residualized covariates imposed within the subnetwork.

The proof, in Appendix~\ref{app:valid-proof}, has three steps: (1)~establish the limiting distribution of the full-sample self-normalized $t$-statistic via the joint convergence of score and squared-score sums; (2)~establish the same limit for the subsample statistic on the induced $G$-country subnetwork (Lemma~\ref{lem:subnet-inherit}); and
(3)~upgrade this to coverage by the vertex-level subsampling consistency of Lemma~\ref{lem:vertex-subsample}.

Three features make the $m$-out-of-$n$ bootstrap particularly well-suited to the PPML setting.
First, no tail-index, tail-scale, or spectral-measure nuisance parameter is estimated, so the procedure is not sensitive to the choice of the Hill threshold $k_n$ used to estimate $\alpha$ for diagnostic purposes; the country fixed effects are re-estimated on each subnetwork as part of the PPML refit.
Second, the self-normalized statistic admits a proper continuous limit in both the Gaussian ($\alpha=2$) and stable ($\alpha<2$) regimes, $\mathcal N(0,1)$ and $J^\star_\alpha$ respectively, so a single quantile-based procedure handles both cases.
Third, the empirical bootstrap and wild bootstrap, which would be the natural starting points, are inconsistent under infinite variance \citep{Athreya1987,Knight1989,ChiangSasakiWang2023}; the subsampling structure of the $m$-out-of-$n$ bootstrap circumvents this failure.

A parametric alternative that takes Theorem~\ref{thm:stable} literally is to estimate $\hat\alpha$ by Hill's estimator applied to the PPML residuals, estimate the spectral measure $\hat\Gamma_n$ of the score, simulate $B$ draws from the resulting $\alpha$-stable limiting distribution, and use empirical quantiles of these simulations as the confidence interval.
The procedure is conceptually transparent but in finite samples it is sensitive to the Hill threshold $k_n$ and to the choice of scale estimator for $\hat\Gamma_n$.
In our simulations the $m$-out-of-$n$ bootstrap outperforms it uniformly, and we therefore recommend the latter as the primary method.

\subsection{Practical Guidance}\label{sec:remedy-practical}

\paragraph{Choice of $G$.}
Theorem~\ref{thm:valid} requires only $G\to\infty$ with
$G/N\to 0$. The default $G=\lfloor N^{0.7}\rfloor$ gives an effective
pair count $n_G=G(G-1)\approx\lfloor n^{0.7}\rfloor$, matching the
subsample size used in our simulations, and performs well.
Sensitivity to $G$ can be assessed with the
minimum-volatility method of \citet[Algorithm~9.3.3]{PolitisRomanoWolf1999},
adapted by \citet{ChiangSasakiWang2023}: compute the bootstrap
critical values over a grid of $G$ values in a plausible range
and select the $G$ at which the critical values are locally most
stable. In practice the coverage is relatively insensitive to
$G$ in the range $N^{0.5}$ to $N^{0.8}$.

\paragraph{Choice of $M$.}
Because we estimate $2.5$th and $97.5$th percentiles of
$\{T^{(j)}_k\}$, a practical default of $M=300$--$500$ suffices;
$M=1{,}000$ or more is recommended if very tight CI bounds matter.

\paragraph{Pareto Tail.}
Before conducting inference, the analyst should inspect the Hill plot of $\{y_\ell\}$.
If the plot shows a plateau at a value $\hat\alpha<2$, sandwich inference is unreliable and the $m$-out-of-$n$ bootstrap CI \eqref{eq:mn-ci} should be reported.
For ICIO data, Table~\ref{tab:hill_raw} shows that $\hat\alpha<2$ on raw flows in every specification and at every threshold, and the residualization argument in
Section~\ref{sec:failure} explains why the score variance must also be infinite, so the bootstrap is always warranted in this dataset.

\paragraph{Software.}
We develop a Stata package, \texttt{ppmlmn}, that implements PPML point estimation together with the $m$-out-of-$n$ bootstrap confidence interval of this section.

\section{Application: Heavy-Tailed Inference in Gravity Regressions}
\label{sec:application}

This section applies the $m$-out-of-$n$ bootstrap of Section~\ref{sec:remedy} to gravity regressions on the three datasets of Section~\ref{sec:pareto} (trade, migration, and foreign direct investment) and compares the resulting confidence intervals to those produced by the conventional sandwich variance estimators.
Across all three datasets the bootstrap widens
conventional intervals, and the widening grows with the heaviness
of the tail. It is not reproduced by cluster-robust intervals, neither
the one-way origin cluster nor the two-way origin-and-destination
cluster, which a practitioner would use to absorb dyadic dependence
but which remain anchored to the wrong limiting law.

\subsection{Specifications}
\label{sec:specifications}

For each dataset we estimate the two-way fixed-effects
gravity equation
\begin{equation}
\begin{split}
  \E[y_{ij}|x] \;=\; \exp\bigl(
    & \phi^{\mathrm{ex}}_i + \phi^{\mathrm{im}}_j
    + \beta_1\ln\mathrm{dist}_{ij}
    + \beta_2\,\mathrm{contig}_{ij} \\
    & + \beta_3\,\mathrm{comlang}_{ij}
    + \beta_4\,\mathrm{colony}_{ij}
    + \beta_5\,D_{ij}\bigr),
\end{split}
  \label{eq:gravity-spec}
\end{equation}
where $y_{ij}$ denotes the directed flow from origin $i$ to
destination $j$, $\phi^{\mathrm{ex}}_i$ is an exporter (origin-country) fixed
effect, $\phi^{\mathrm{im}}_j$ is an importer (destination-country) fixed effect,
and $D_{ij}$ is an internal-trade dummy (included only in the
domestic-inclusive trade specification, and absent for migration
and FDI, which are cross-border by construction). This is exactly
the specification for which Theorem~\ref{thm:stable} is stated:
the exporter and importer fixed effects are the $2N$ country
nuisance parameters that are concentrated out, and $\beta$ loads
on the dyadic covariates $\ln\mathrm{dist}$, $\mathrm{contig}$,
$\mathrm{comlang}$, and $\mathrm{colony}$ from CEPII. Country
size does not enter as a separate regressor as it is absorbed by the fixed effects.

Each dataset is a single-year cross-section (ICIO 2022 for trade,
2015 for migration and FDI), so every ordered country pair
contributes a single observation. This rules out clustering at the
pair level, but it does not rule out clustering altogether: each
country groups all of its outgoing flows as an origin and all of its
incoming flows as a destination, so cluster-robust sandwiches are
available, and we report two of them alongside HC0. The first
clusters one-way on the origin country, the long-standing default in
applied gravity work; the second clusters two-way on origin and
destination, the device the literature now regards as the appropriate
response to the cross-sectional dependence that general equilibrium
induces across exporters and across importers in gravity data
\citep{CameronGelbachMiller2011,EggerTarlea2015}, and the more
defensible of the two because it is symmetric in the endpoints and
absorbs dependence on both sides of the dyad. We therefore compare
four confidence intervals: the HC0 sandwich, the one-way origin
cluster (CRV1, the conventional cluster-robust variance estimator),
the two-way cluster, and the country-level
$m$-out-of-$n$ bootstrap of Section~\ref{sec:remedy}, resampling
$G=\lfloor N^{0.7}\rfloor$ countries (induced subnetwork) with
$M=2{,}000$ replications. The two-way cluster interval is the natural
foil for the bootstrap, with the one-way interval included to show
how much of the eventual widening a practitioner captures by
clustering on a single margin; Appendix~\ref{app:cluster} shows why
neither resolves the heavy-tail problem regardless of the clustering
dimension.

\subsection{Results: Bilateral Sales and Exports}
\label{sec:results-cs}

Our primary trade result estimates \eqref{eq:gravity-spec} on the
ICIO 2022 country-pair cross-section, with two-way exporter and
importer fixed effects and the directed country-pair sales of final plus intermediate goods as the
outcome.\footnote{We also estimated a richer final-goods
specification at the origin-sector $\times$ country-pair level
(with $n\approx 310{,}000$), which adds an origin-sector fixed effect
$\phi^{\mathrm{sec}}_s$ to the exporter and importer effects and therefore
admits a pair-level cluster structure. There the cluster-robust
intervals exceed HC0 by about $20$ percent and the bootstrap
widens them by a further $110$--$120$ percent, and the correction
overturns the common-language effect in the international-only
sample (cluster interval $[0.149,\,0.416]$, bootstrap
$[-0.033,\,0.542]$) as well as the contiguity and
colonial-relationship effects once domestic flows are included.
Because the origin-sector fixed effect lies outside the $2N$
country fixed effects that Theorem~\ref{thm:stable} concentrates
out, we report this specification only as a robustness check rather
than as a primary result. That sectoral exercise is distinct from
the country-pair use decomposition reported in
Table~\ref{tab:cs_use} below, which holds the $2N$-country
fixed-effect design fixed and only splits the outcome into its
final- and intermediate-use components.} We estimate the
equation on our baseline international-only sample and, as a
contrast, on the domestic-inclusive sample; in the latter the
internal-trade dummy $D_{ij}$ is included
(Section~\ref{sec:data} explains why the international-only sample
is the baseline). The design has $n\approx 6{,}000$ observations.

\begin{table}[!ht]
  \centering
  \caption{PPML Country-Pair Gravity, Cross-Section, ICIO 2022}
  \label{tab:cs}
  \begin{threeparttable}
  \footnotesize
  \setlength{\tabcolsep}{2pt}
  \begin{tabular}{lccccc}
    \toprule
    & Estimate & HC0 95\% CI & Cluster (origin) & Cluster (two-way) & Bootstrap 95\% CI \\
    \midrule
    \multicolumn{6}{l}{\textit{Panel A. International flows only $(n=6{,}162)$}} \\
    log distance & $-0.723$ & [$-0.783$, $-0.662$] & [$-0.817$, $-0.628$] & [$-0.846$, $-0.599$] & [$-0.817$, $-0.513$] \\
    contiguity & 0.291 & [0.138, 0.444] & [0.131, 0.451] & [0.086, 0.496] & [$-0.073$, 0.679] \\
    common language & 0.229 & [0.081, 0.377] & [0.047, 0.411] & [0.082, 0.376] & [$-0.072$, 0.513] \\
    colonial relation & 0.014 & [$-0.161$, 0.189] & [$-0.210$, 0.238] & [$-0.222$, 0.251] & [$-0.546$, 0.376] \\
    \multicolumn{1}{r}{Avg.\ width ratio:} & \multicolumn{5}{c}{1-way/HC0 $=1.28$,\quad 2-way/HC0 $=1.43$,\quad Boot/HC0 $=2.40$} \\[6pt]
    \multicolumn{6}{l}{\textit{Panel B. With domestic flows $(n=6{,}241)$}} \\
    log distance & $-0.653$ & [$-0.729$, $-0.577$] & [$-0.743$, $-0.562$] & [$-0.860$, $-0.446$] & [$-0.834$, $-0.421$] \\
    contiguity & 0.401 & [0.201, 0.602] & [0.158, 0.645] & [0.036, 0.767] & [$-0.051$, 0.882] \\
    common language & 0.303 & [0.128, 0.477] & [0.115, 0.490] & [0.052, 0.553] & [$-0.024$, 0.759] \\
    colonial relation & 0.110 & [$-0.053$, 0.273] & [$-0.079$, 0.298] & [$-0.052$, 0.271] & [$-0.313$, 0.535] \\
    internal trade & 4.487 & [4.287, 4.686] & [4.242, 4.731] & [3.973, 5.000] & [4.002, 4.992] \\
    \multicolumn{1}{r}{Avg.\ width ratio:} & \multicolumn{5}{c}{1-way/HC0 $=1.17$,\quad 2-way/HC0 $=1.91$,\quad Boot/HC0 $=2.47$} \\
    \bottomrule
  \end{tabular}
  \begin{tablenotes}[flushleft]\footnotesize
    \item \emph{Notes}: PPML cross-sectional regressions on ICIO 2022 country-pair flows with exporter and importer fixed effects.
      HC0 is the heteroskedasticity-robust sandwich; Cluster (origin) is the CRV1 sandwich clustered on origin; Cluster (two-way) is the CRV1 sandwich two-way clustered on origin and destination \citep{CameronGelbachMiller2011,EggerTarlea2015}; Bootstrap is the country-level $m$-out-of-$n$ bootstrap of Section~\ref{sec:remedy}, resampling $G=\lfloor N^{0.7}\rfloor$ countries with $M=2{,}000$ refits.
      Panel~B adds the internal-trade dummy $D_{ij}$.
  \end{tablenotes}
  \end{threeparttable}
\end{table}

Three patterns emerge. First, the bootstrap CIs are substantially
wider than the HC0 CIs in every cell of Table~\ref{tab:cs}, with
average width ratios of $2.40$ in the international-only
specification and $2.47$ in the domestic-inclusive specification.
The widening is direct empirical confirmation of the theoretical
result that the score variance is infinite, so the HC0 sandwich
understates the true sampling uncertainty even with no clustering
present.

Second, clustering is the device a practitioner would first turn to in
acknowledging that a country's flows are not independent, but
neither variant is a substitute for the bootstrap. One-way origin
clustering widens HC0 only modestly, by $28$ percent in Panel~A and
$17$ percent in Panel~B. Two-way clustering on origin and destination
widens it far more, by $43$ and $91$ percent respectively, but
remains well short of the bootstrap, which is a further $75$ percent
wider than the two-way interval in Panel~A and $49$ percent wider in
Panel~B. Width, however, is not coverage. The two-way interval is
still referred to Gaussian critical values, which
Appendix~\ref{app:cluster} shows are the wrong critical values when
the dyadic score has infinite variance. The two methods already diverge
on this trade total: two-way clustering keeps both contiguity
($[0.086,\,0.496]$) and common language ($[0.082,\,0.376]$)
significant in Panel~A, whereas the bootstrap returns both to
insignificance (contiguity $[-0.073,\,0.679]$, common language
$[-0.072,\,0.513]$). The divergence widens further in the more
heavy-tailed settings. As the migration, intermediate-goods, and FDI
results below show, two-way clustering and the bootstrap diverge
exactly on the imprecisely estimated covariates, and in one case
the cluster sandwich moves a coefficient in the wrong direction. A
practitioner who clusters two-way rather than bootstrapping obtains
an interval that is sometimes as wide as the bootstrap but is anchored
to the wrong limiting law, and that consequently miscovers in a way
the width alone conceals.

Third, the core gravity coefficients remain significant under the correction while the imprecisely estimated dyadic effects do not. 
Log distance remains significant under the bootstrap in both panels, as does the internal-trade dummy in the domestic-inclusive panel, and the colonial-relationship coefficient
is insignificant under all four methods. 
What the bootstrap overturns is contiguity and common language: both are significant under HC0 and under two-way clustering in Panel~A, yet both cover zero once the heavy-tail correction is applied, and the same reversal occurs in Panel~B. 
With the country fixed effects absorbing the size structure the point estimates are stable, so the correction here operates through interval widths, a roughly $140$ percent inflation over
HC0, rather than through the estimates themselves. 
That is the point of the exercise: in a design where conventional inference offers only HC0 or a two-way cluster, the bootstrap is the only method among the four whose interval widths are consistent with the limit theory, and the widening is large enough to overturn the marginal border and language effects.

\paragraph{Final versus intermediate flows.}
The country-pair total sales aggregates sum two economically distinct
flows: sales of final goods to absorbing demand and sales of
intermediate inputs to downstream production. Because the ICIO
records each transaction with a use category, the total decomposes
exactly into a final-use and an intermediate-use flow for every
ordered pair, $y_{ij}=y_{ij}^{\mathrm{fin}}+y_{ij}^{\mathrm{int}}$,
and we can re-estimate \eqref{eq:gravity-spec} on each type of flow
over the identical $2N$-country fixed-effect design and country-pair
support.\footnote{Whereas \citet{EatonKortum2002} and \citet{caliendo2015estimates} derived estimating equations for the sum of final and intermediate sales flows, \citet{antras2018measurement} motivated separate estimating equations for the two flows.} Table~\ref{tab:cs_use} reports the result. Both types of sales
are themselves heavy-tailed: the Hill index of the positive
international final-goods (intermediate-goods) flows is
$\hat\alpha\approx 1.05$ ($1.07$) at the top five percent of
exceedances, and a tail P--P plot exhibits a close Pareto fit
(RMSE $\approx 0.02$) for each, so the decomposition does not
isolate a finite-variance component. The heavy tail is intrinsic
to both types of sales rather than an artifact of pooling them.
Consistent with this, the bootstrap inflates HC0 widths by a
similar $126$--$148$ percent for both flows, and both cluster
intervals, one-way origin and two-way, again land between HC0 and
the bootstrap in width.
What differs across the two types of flows is how decisively the correction reaches the imprecisely estimated dyadic effects: for final goods the bootstrap leaves contiguity and common language right at the edge of zero in the international sample (contiguity $[-0.012,\,0.630]$, common language $[-0.020,\,0.538]$), whereas for intermediate goods
it carries both clearly across zero (contiguity $[-0.114,\,0.671]$, common language $[-0.052,\,0.551]$), even though two-way clustering leaves both comfortably significant, a concrete instance of the cluster sandwich and the bootstrap reaching different verdicts on the same coefficients. 
The decomposition thus refines rather than alters the main finding: the heavy-tail correction is present and of the same order on both final and intermediate sales flows, and it reaches the same imprecisely estimated covariates on both.

\begin{table}[!ht]
  \centering
  \caption{PPML Country-Pair Gravity by Use Category, ICIO 2022}
  \label{tab:cs_use}
  \begin{threeparttable}
  \footnotesize
  \setlength{\tabcolsep}{2pt}
  \begin{tabular}{lccccc}
    \toprule
    & Estimate & HC0 95\% CI & Cluster (origin) & Cluster (two-way) & Bootstrap 95\% CI \\
    \midrule
    \multicolumn{6}{l}{\textit{Panel A. International flows only $(n=6{,}162)$}} \\
    \multicolumn{6}{l}{\quad\textit{A.1 Final goods}} \\
    log distance & $-0.719$ & [$-0.776$, $-0.662$] & [$-0.817$, $-0.621$] & [$-0.849$, $-0.589$] & [$-0.810$, $-0.538$] \\
    contiguity & 0.299 & [0.163, 0.435] & [0.155, 0.443] & [0.111, 0.486] & [$-0.012$, 0.630] \\
    common language & 0.283 & [0.148, 0.418] & [0.093, 0.472] & [0.109, 0.457] & [$-0.020$, 0.538] \\
    colonial relation & $-0.057$ & [$-0.250$, 0.137] & [$-0.316$, 0.203] & [$-0.337$, 0.223] & [$-0.741$, 0.372] \\
    \multicolumn{1}{r}{Avg.\ width ratio:} & \multicolumn{5}{c}{1-way/HC0 $=1.38$,\quad 2-way/HC0 $=1.60$,\quad Boot/HC0 $=2.42$} \\[4pt]
    \multicolumn{6}{l}{\quad\textit{A.2 Intermediate goods}} \\
    log distance & $-0.750$ & [$-0.814$, $-0.687$] & [$-0.853$, $-0.648$] & [$-0.876$, $-0.625$] & [$-0.846$, $-0.526$] \\
    contiguity & 0.268 & [0.103, 0.433] & [0.083, 0.453] & [0.045, 0.491] & [$-0.114$, 0.671] \\
    common language & 0.240 & [0.079, 0.400] & [0.046, 0.433] & [0.087, 0.392] & [$-0.052$, 0.551] \\
    colonial relation & 0.050 & [$-0.133$, 0.233] & [$-0.171$, 0.271] & [$-0.182$, 0.282] & [$-0.503$, 0.410] \\
    \multicolumn{1}{r}{Avg.\ width ratio:} & \multicolumn{5}{c}{1-way/HC0 $=1.29$,\quad 2-way/HC0 $=1.39$,\quad Boot/HC0 $=2.32$} \\[6pt]
    \multicolumn{6}{l}{\textit{Panel B. With domestic flows $(n=6{,}241)$}} \\
    \multicolumn{6}{l}{\quad\textit{B.1 Final goods}} \\
    log distance & $-0.750$ & [$-0.821$, $-0.679$] & [$-0.842$, $-0.658$] & [$-0.908$, $-0.592$] & [$-0.883$, $-0.584$] \\
    contiguity & 0.389 & [0.141, 0.637] & [0.064, 0.714] & [0.041, 0.738] & [$-0.164$, 0.973] \\
    common language & 0.196 & [$-0.007$, 0.400] & [$-0.038$, 0.431] & [$-0.067$, 0.460] & [$-0.174$, 0.695] \\
    colonial relation & 0.174 & [$-0.036$, 0.383] & [$-0.095$, 0.442] & [$-0.048$, 0.395] & [$-0.347$, 0.750] \\
    internal trade & 4.054 & [3.869, 4.240] & [3.788, 4.320] & [3.645, 4.464] & [3.628, 4.433] \\
    \multicolumn{1}{r}{Avg.\ width ratio:} & \multicolumn{5}{c}{1-way/HC0 $=1.29$,\quad 2-way/HC0 $=1.64$,\quad Boot/HC0 $=2.26$} \\[4pt]
    \multicolumn{6}{l}{\quad\textit{B.2 Intermediate goods}} \\
    log distance & $-0.638$ & [$-0.718$, $-0.557$] & [$-0.779$, $-0.497$] & [$-0.855$, $-0.421$] & [$-0.843$, $-0.379$] \\
    contiguity & 0.386 & [0.200, 0.573] & [0.154, 0.619] & [0.008, 0.765] & [$-0.029$, 0.801] \\
    common language & 0.375 & [0.207, 0.543] & [0.183, 0.567] & [0.140, 0.610] & [0.070, 0.823] \\
    colonial relation & 0.102 & [$-0.053$, 0.257] & [$-0.070$, 0.273] & [$-0.038$, 0.241] & [$-0.296$, 0.492] \\
    internal trade & 4.677 & [4.465, 4.888] & [4.331, 5.023] & [4.143, 5.211] & [4.157, 5.221] \\
    \multicolumn{1}{r}{Avg.\ width ratio:} & \multicolumn{5}{c}{1-way/HC0 $=1.38$,\quad 2-way/HC0 $=1.91$,\quad Boot/HC0 $=2.48$} \\
    \bottomrule
  \end{tabular}
  \begin{tablenotes}[flushleft]\footnotesize
    \item \emph{Notes}: Same specification, sample, and inference as Table~\ref{tab:cs}, with the outcome replaced by the final-use and intermediate-use components of directed country-pair sales, which sum to the Table~\ref{tab:cs} total for every pair.
      Panel~B adds the internal-trade dummy $D_{ij}$.
  \end{tablenotes}
  \end{threeparttable}
\end{table}

\subsection{Bilateral Emigration}
\label{sec:results-migration}

Migration is another canonical gravity outcome variable, and its bilateral flows are heavy-tailed, if anything more so than trade (Table~\ref{tab:hill_raw}, Panel~D, with $\hat\alpha$ falling from $1.59$ at $k=100$ to $0.86$ deeper in the tail; Figure~\ref{fig:pareto_icio}).
We estimate the gravity equation \eqref{eq:gravity-spec} on the 2015 cross-section of the
\citet{AbelCohen2019} directed emigration flows (pseudo-Bayesian
closed accounting): the directed number of persons moving from
each origin to each destination over the 2010--2015 interval, for
the countries with CEPII geography covariates, with exporter
and importer fixed effects. Origin and destination
index the sending and receiving country of a directed flow. As in
the sales-flow cross-sections each ordered pair contributes one
observation, so we compare the HC0 sandwich, the one-way and
two-way cluster sandwiches, and the country-level $m$-out-of-$n$
bootstrap.

Table~\ref{tab:migration} reports the estimates. The bootstrap
widens the HC0 intervals by $170$ percent on average, and the
widening reaches the imprecisely estimated dyadic effects. Contiguity is significant
under HC0, with an interval of $[0.212,\,0.705]$ that excludes
zero, but its bootstrap interval $[-0.340,\,1.024]$ covers zero;
common language likewise moves from an HC0 interval of
$[0.395,\,0.872]$ to a bootstrap interval of $[-0.099,\,1.080]$
that now spans zero. A researcher reporting HC0 standard errors would
claim clearly significant border and common-language effects on
migration that the heavy-tail-robust intervals do not support.
Neither cluster variant delivers this correction: one-way origin
clustering widens HC0 by $35$ percent and two-way by $61$ percent,
more than in any other setting, yet the two-way contiguity interval
still reaches only $[0.097,\,0.820]$ and common language
$[0.283,\,0.984]$, both comfortably away from zero, while the
bootstrap is a further $68$ percent wider than even the two-way
interval and is the only one of the four methods that overturns
either effect. The gravity core variable effects
remain robust: log distance ($-1.313$) and the large
colonial-relationship effect ($1.495$) are significant under all
four methods. The migration cross-section thus confirms that the
heavy-tail correction is a generic feature of gravity inference on
extreme-valued bilateral data, not an artifact of the ICIO trade
measures, and that absorbing dyadic dependence through two-way
clustering does not reproduce it.

\begin{table}[!ht]
  \centering
  \caption{PPML Gravity Estimates, Directed Bilateral Emigration Flows, 2015 Cross-Section}
  \label{tab:migration}
  \begin{threeparttable}
  \footnotesize
  \setlength{\tabcolsep}{2pt}
  \begin{tabular}{lccccc}
    \toprule
    & Estimate & HC0 95\% CI & Cluster (origin) & Cluster (two-way) & Bootstrap 95\% CI \\
    \midrule
    \multicolumn{6}{l}{\textit{2015 cross-section $(n=47{,}742)$}} \\
    log distance & $-1.313$ & [$-1.417$, $-1.208$] & [$-1.449$, $-1.176$] & [$-1.488$, $-1.138$] & [$-1.501$, $-0.989$] \\
    contiguity & 0.458 & [0.212, 0.705] & [0.162, 0.755] & [0.097, 0.820] & [$-0.340$, 1.024] \\
    common language & 0.634 & [0.395, 0.872] & [0.304, 0.964] & [0.283, 0.984] & [$-0.099$, 1.080] \\
    colonial relation & 1.495 & [1.259, 1.731] & [1.142, 1.848] & [1.062, 1.928] & [0.700, 2.167] \\
    \multicolumn{1}{r}{Avg.\ width ratio:} & \multicolumn{5}{c}{1-way/HC0 $=1.35$,\quad 2-way/HC0 $=1.61$,\quad Boot/HC0 $=2.70$} \\
    \bottomrule
  \end{tabular}
  \begin{tablenotes}[flushleft]\footnotesize
    \item \emph{Notes}: PPML gravity on the 2015 cross-section of the \citet{AbelCohen2019} directed emigration flows, with origin and destination fixed effects and CEPII geography covariates; cross-border observations only.
      Inference methods (HC0, Cluster, Bootstrap) are as in Table~\ref{tab:cs}.
  \end{tablenotes}
  \end{threeparttable}
\end{table}

\subsection{Bilateral Foreign Inward Direct Investment Stocks}
\label{sec:results-fdi}

Foreign direct investment is the most heavy-tailed of the three types of data (Table~\ref{tab:hill_raw}, Panel~E, and
Figure~\ref{fig:pareto_icio}): the Hill index falls to
$\hat\alpha\approx 0.8$ in the deep tail, essentially at the
boundary of a finite mean. It is also the setting where PPML is most
clearly the right estimator, because roughly $62\%$ of the country pairs are
structural zeros, which PPML accommodates natively. We estimate
the gravity equation \eqref{eq:gravity-spec} on the 2015
cross-section of the
\citet{Steenbergen2022} inward-FDI-stock data, with exporter
(parent) and importer (host) fixed effects. As
with migration, each ordered pair contributes one observation. And, as with the other outcomes, we compare HC0, the one-way and two-way cluster sandwiches, and the
country-level $m$-out-of-$n$ bootstrap.

Table~\ref{tab:fdi} reports the result. 
The bootstrap widens the HC0 intervals by $142$ percent on average, and as in the migration cross-section the correction overturns the imprecisely estimated dyadic effects while leaving the gravity core intact. 
Contiguity (HC0 $[0.244,\,0.958]$) and common language (HC0 $[0.119,\,0.858]$) are both significant under the sandwich but cover zero under the bootstrap ($[-0.351,\,1.451]$ and $[-0.492,\,1.161]$,
respectively); a researcher reporting HC0 standard errors would read these as established determinants of FDI positions, whereas the heavy-tail-robust intervals do not support them. Here the two cluster variants diverge from each other as much as from the bootstrap. 
One-way origin clustering leaves both contiguity ($[0.069,\,1.133]$) and common language ($[0.110,\,0.868]$) significant, exactly as a practitioner clustering on a single margin would conclude; two-way clustering drives both across zero ($[-0.010,\,1.212]$ and $[-0.045,\,1.022]$), moving with the bootstrap, which also covers zero for each. 
The colonial-relationship coefficient is where every cluster interval fails together, and the failure is instructive. 
Both the one-way ($[0.050,\,0.564]$) and the two-way ($[0.019,\,0.595]$) sandwich narrow that interval below HC0 ($[-0.026,\,0.640]$) and move it across zero into significance, because the dyadic score products for that covariate are net negative and pull the cluster variance below its heteroskedasticity-robust counterpart. 
The bootstrap keeps the coefficient insignificant ($[-0.677,\,1.089]$). 
We read this as a caution rather than a finding: clustering can move a coefficient toward significance for idiosyncratic covariance reasons unrelated to the tail, and the bootstrap remains the interval we trust under the heavy tail. Log distance ($-0.580$) remains significant under all four methods. 
The correction is again targeted at the imprecisely estimated covariates rather than indiscriminate.

\begin{table}[!ht]
  \centering
  \caption{PPML Gravity Estimates, Bilateral FDI Stocks, 2015 Cross-Section}
  \label{tab:fdi}
  \begin{threeparttable}
  \footnotesize
  \setlength{\tabcolsep}{2pt}
  \begin{tabular}{lccccc}
    \toprule
    & Estimate & HC0 95\% CI & Cluster (origin) & Cluster (two-way) & Bootstrap 95\% CI \\
    \midrule
    \multicolumn{6}{l}{\textit{2015 cross-section $(n=21{,}653)$}} \\
    log distance & $-0.580$ & [$-0.708$, $-0.453$] & [$-0.755$, $-0.406$] & [$-0.779$, $-0.382$] & [$-0.720$, $-0.144$] \\
    contiguity & 0.601 & [0.244, 0.958] & [0.069, 1.133] & [$-0.010$, 1.212] & [$-0.351$, 1.451] \\
    common language & 0.489 & [0.119, 0.858] & [0.110, 0.868] & [$-0.045$, 1.022] & [$-0.492$, 1.161] \\
    colonial relation & 0.307 & [$-0.026$, 0.640] & [0.050, 0.564] & [0.019, 0.595] & [$-0.677$, 1.089] \\
    \multicolumn{1}{r}{Avg.\ width ratio:} & \multicolumn{5}{c}{1-way/HC0 $=1.16$,\quad 2-way/HC0 $=1.39$,\quad Boot/HC0 $=2.42$} \\
    \bottomrule
  \end{tabular}
  \begin{tablenotes}[flushleft]\footnotesize
    \item \emph{Notes}: PPML gravity for inward bilateral FDI stock (USD million; \citealp{Steenbergen2022}), 2015 cross-section, with source and host fixed effects and CEPII geography covariates.
      Negative positions (disinvestment) are set to zero and structural zeros retained; of the $22{,}123$ cross-border pairs with covariates (about $62\%$ zeros), PPML drops the $470$ in a perfectly separated origin or destination cell, leaving $n=21{,}653$.
      Inference methods (HC0, Cluster, Bootstrap) are as in Table~\ref{tab:cs}.
  \end{tablenotes}
  \end{threeparttable}
\end{table}

\subsection{Discussion}\label{sec:results-discussion}

The empirical exercise raises four points worth making explicit.

\paragraph{The heavy tail is in the shock.}
The diagnostics of Section \ref{sec:diagnostics} are computed on raw bilateral outcomes, whereas Assumption \ref{ass:dgp}(a) restricts the centered shock.
Table~\ref{tab:hill_resid} repeats them on the multiplicative residual $\hat\varepsilon_{ij}-1=y_{ij}/\hat\mu_{ij}-1$ from the two-way fixed-effect fits of this section, which removes the estimated exporter and importer effects and the gravity covariates.
The residual estimates lie between $1.40$ and $1.84$ for the three sales measures and between $0.82$ and $1.13$ for migration and FDI, and they vary less across thresholds than their raw counterparts. 
From $k=100$ onward the upper confidence limit lies below two for every measure; at $k=50$ the interval is too wide to exclude two for the sales measures, as it is for the raw flows.
The infinite score variance behind Theorem \ref{thm:stable} is therefore a property of the shock and not an artifact of dispersion in the conditional mean.

We note that these residual estimates are less accurate than their raw counterparts, because $\hat\mu_{ij}$ is itself estimated and the reported intervals condition on it.
In a design calibrated to Panel~A, with the fitted $\hat\mu_{ij}$ held fixed and a Pareto shock of index $1.4$, the Hill estimate computed on the fitted residual exceeds the infeasible estimate computed on the true shock by about $0.3$ at every threshold, and its sampling standard deviation is substantially larger, especially when $k$ is large.
The bias runs toward lighter tails, so the residual evidence against finite variance understates rather than overstates the case, but the estimates should be read as corroboration of Table \ref{tab:hill_raw} rather than as a sharper measurement of $\alpha$.

\begin{table}[!ht]
  \centering
  \caption{Hill Tail-Index Estimates on Raw Flows and on the Multiplicative Residual}
  \label{tab:hill_resid}
  \begin{threeparttable}
  \footnotesize
  \begin{tabularx}{\textwidth}{@{}l*{4}{>{\centering\arraybackslash}X}@{}}
    \toprule
    & \multicolumn{4}{c}{Threshold $k$ (number of observations)} \\
    \cmidrule(lr){2-5}
    \multicolumn{5}{l}{\textit{Panel A. Aggregate directed sales, ICIO 2022 ($n=6{,}006$)}} \\
    & 50 & 100 & 200 & 500 \\
    \quad Raw flow $y_{ij}$ & 1.84 & 1.59 & 1.25 & 1.06 \\
                            & [1.44, 2.54] & [1.33, 1.98] & [1.10, 1.45] & [0.98, 1.16] \\
    \quad Residual          & 1.67 & 1.54 & 1.48 & 1.31 \\
                            & [1.31, 2.31] & [1.29, 1.92] & [1.30, 1.71] & [1.20, 1.43] \\[4pt]
    \multicolumn{5}{l}{\textit{Panel B. Final-goods directed sales, ICIO 2022 ($n=6{,}006$)}} \\
    & 50 & 100 & 200 & 500 \\
    \quad Raw flow $y_{ij}$ & 1.67 & 1.29 & 1.17 & 1.03 \\
                            & [1.31, 2.31] & [1.08, 1.60] & [1.03, 1.36] & [0.95, 1.13] \\
    \quad Residual          & 1.64 & 1.41 & 1.40 & 1.23 \\
                            & [1.28, 2.26] & [1.18, 1.75] & [1.23, 1.63] & [1.13, 1.34] \\[4pt]
    \multicolumn{5}{l}{\textit{Panel C. Intermediate-goods directed sales, ICIO 2022 ($n=6{,}006$)}} \\
    & 50 & 100 & 200 & 500 \\
    \quad Raw flow $y_{ij}$ & 1.87 & 1.82 & 1.22 & 1.06 \\
                            & [1.46, 2.59] & [1.52, 2.26] & [1.07, 1.42] & [0.98, 1.16] \\
    \quad Residual          & 1.84 & 1.55 & 1.53 & 1.32 \\
                            & [1.44, 2.54] & [1.29, 1.93] & [1.34, 1.78] & [1.21, 1.44] \\[4pt]
    \multicolumn{5}{l}{\textit{Panel D. Directed emigration flows, 2015 ($n=30{,}757$ positive)}} \\
    & 100 & 200 & 500 & 1{,}000 \\
    \quad Raw flow $y_{ij}$ & 1.60 & 1.26 & 1.05 & 0.84 \\
                            & [1.33, 1.98] & [1.11, 1.47] & [0.96, 1.15] & [0.79, 0.89] \\
    \quad Residual          & 0.96 & 1.03 & 1.10 & 1.07 \\
                            & [0.81, 1.20] & [0.90, 1.19] & [1.01, 1.20] & [1.01, 1.14] \\[4pt]
    \multicolumn{5}{l}{\textit{Panel E. Directed inward FDI stocks, 2015 ($n=8{,}048$ positive)}} \\
    & 50 & 100 & 200 & 500 \\
    \quad Raw flow $y_{ij}$ & 1.25 & 1.31 & 1.04 & 0.80 \\
                            & [0.98, 1.74] & [1.10, 1.63] & [0.91, 1.21] & [0.73, 0.87] \\
    \quad Residual          & 1.05 & 1.13 & 0.96 & 0.87 \\
                            & [0.83, 1.46] & [0.95, 1.41] & [0.85, 1.12] & [0.80, 0.95] \\
    \bottomrule
  \end{tabularx}
  \begin{tablenotes}[flushleft]\footnotesize
    \item \emph{Notes}: Hill estimator $\hat\alpha(k)$ from \eqref{eq:hill}, applied to the raw directed flow and to the multiplicative residual $\hat\varepsilon_{ij}-1=y_{ij}/\hat\mu_{ij}-1$, where $\hat\mu_{ij}$ is the fitted conditional mean from the two-way fixed-effect PPML specifications of Section~\ref{sec:application}.
      Brackets give $95$ percent confidence intervals from the asymptotic normality of the Hill estimator, $\sqrt{k}\,(\hat\alpha(k)/\alpha-1)\Rightarrow \mathcal{N}(0,1)$, so that the interval is $[\hat\alpha(k)/(1+1.96/\sqrt{k}),\,\hat\alpha(k)/(1-1.96/\sqrt{k})]$.
      Both rows are computed on the PPML estimation sample, that is after the covariate merge and the removal of separated observations, so the raw rows differ in some cells from the international-only rows of Table~\ref{tab:hill_raw}, which uses the full sample of positive flows.
      All measures are cross-border only.
  \end{tablenotes}
  \end{threeparttable}
\end{table}

\paragraph{The correction is large and pervasive.}
Across all of our cross-sectional data settings for gravity models, the bootstrap substantially widens conventional intervals, by roughly $130$--$170$ percent over HC0 in the country-pair, migration, and FDI cross-sections, all of which have a Pareto tail index below two, so the Gaussian sandwich understates uncertainty in each. 
The country-pair total sales decomposes exactly into final- and intermediate-sales flows, and the correction is present and of the same order on both types of flows (Table~\ref{tab:cs_use}), so it is not an artifact of pooling economically distinct transactions. 
The richer final-goods specification, which adds origin-sector fixed effects and so admits pair-level clustering, makes the same point with an additional layer: even after cluster-robust standard errors absorb the within-pair correlation, the bootstrap widens the cluster intervals by a further $110$--$120$ percent, because the cluster sandwich still relies on Gaussian critical values. 
We caution against reading the relative magnitudes across datasets as a clear ranking in the tail index: the size of the bootstrap-to-HC0 ratio reflects both tail heaviness and
design: how many observations share a cluster and how heavily conventional inference already leans on Gaussian critical values.
What the theory delivers, and what is robust here, is the within-design statement: under $\alpha<2$ the Gaussian sandwich miscovers and the $m$-out-of-$n$ bootstrap restores valid interval widths.

\paragraph{Clustering is not a substitute.}
A natural objection is that the dependence the bootstrap captures could be handled by clustering, since a country's flows are plainly not independent. 
We address the objection in its strongest form by reporting both the one-way origin cluster and the two-way origin-and-destination cluster that the gravity literature regards as the appropriate response to exporter- and importer-side cross-sectional dependence \citep{CameronGelbachMiller2011,EggerTarlea2015}. 
The cluster intervals reported throughout
Tables~\ref{tab:cs}--\ref{tab:fdi} show that neither substitutes for the bootstrap. 
The one-way cluster widens HC0 by $16$ to $38$ percent and the two-way cluster by $39$ to $91$ percent across the settings, and in the domestic trade panels the two-way interval is for some covariates as wide as the bootstrap. But width is not coverage, and on the imprecisely estimated covariates the methods reach different verdicts. 
Two-way clustering overturns contiguity and common language in FDI, agreeing with the bootstrap there, where one-way clustering leaves both significant; yet two-way clustering still leaves migration contiguity and intermediate-trade contiguity comfortably significant where the bootstrap drives them to or across zero; and for the FDI colonial coefficient both cluster variants narrow the interval below HC0 and move it across zero into significance, exactly the wrong direction.
This is what Appendix~\ref{app:cluster} predicts: when the dyadic score has infinite variance, the cluster-robust variance is itself driven by a divergent quantity, so the cluster interval is rescaled but is still referred to Gaussian critical values and miscovers. 
The choice of clustering dimension changes the rescaling but not the underlying limit; only the bootstrap tracks the correct $\alpha$-stable limit.

\paragraph{Role of the GE structure.}
The GE-consistent DGP introduces dyadic dependence into $\{\kappa_{ij}\}$ but does not change the qualitative asymptotic picture. 
The rate $n^{1-1/\alpha}$ and the $\alpha$-stability of the limit are determined by the pair-level shock alone; the GE structure affects only the spectral measure $\Gamma$, into which the dyadic heterogeneity is absorbed. 
Under Assumption~\ref{ass:dgp} this heterogeneity is purely deterministic: the shocks $\{\eta_{ij}\}$ are i.i.d.\ across pairs and the scores $s_{ij}=\eta_{ij}\kappa_{ij}$ are
therefore independent conditional on the fixed array $\{\kappa_{ij}\}$, so no stochastic cross-pair term enters the limit theory.

The maintained DGP holds equilibrium prices at the nonstochastic fundamentals $(L,\tau,\zeta_0)$. 
Were the realized shocks instead allowed to perturb the equilibrium, the induced feedback would be asymptotically negligible. 
Each country's price block in \eqref{eq:gefixedpoint} averages over its $N-1$ partners, so a single pair's shock changes any other pair's conditional mean
$\mu_{ij}$, and hence its multiplier $\kappa_{ij}$, by a relative $O(1/N)$. 
Because the stable limit enters only through the spectral measure $\Gamma$, a continuous functional of the multiplier directions $\{\kappa_{ij}/\|\kappa_{ij}\|\}$ on
$\mathbb{S}^{p-1}$, an $O(1/N)$ perturbation of the array shifts $\Gamma$ by $O(1/N)\to0$ and leaves the limit law unchanged; the tail index $\alpha$ and the rate $n^{1-1/\alpha}$, fixed by the shock, are unaffected. 
The $m$-out-of-$n$ bootstrap tracks the correct $\Gamma$ through subsampling without requiring the analyst to take a stand on whether i.i.d.\ or GE structure is operative for the DGP of the data.

\section{Conclusion}
\label{sec:conclusion}

Standard PPML inference in gravity models is unreliable because bilateral flows have a Pareto tail with index well below~2. 
We document this across three bilateral datasets: international trade (ICIO 2022, at the country-pair and final-goods levels), migration, and foreign direct investment. 
The Pareto tail index is well below~2 in every case, and heaviest for FDI, where it approaches~1.
Under a general-equilibrium-consistent DGP in which the only stochastic primitive is a pair-level i.i.d.\ multiplicative shock with regularly varying tails, Theorem~\ref{thm:stable}
shows that the PPML estimator converges at rate $n^{1-1/\alpha}$ to a multivariate $\alpha$-stable distribution, the score covariance $\hat\Omega_n$ diverges at rate $n^{2/\alpha-1}$, and the resulting sandwich $t$-statistic has a non-standard self-normalized stable limit rather than $\mathcal{N}(0,1)$.

The remedy is simple: retain PPML for point estimation, and replace the sandwich-based Gaussian CI with the $m$-out-of-$n$ bootstrap of \citet{ChiangSasakiWang2023} applied to the
self-normalized $t$-statistic. 
The procedure requires no estimate of the tail index or the scale of the stable limit and is valid in both the Gaussian and stable regimes.
The bootstrap CI is materially wider than the HC0 sandwich CI in every specification (the bootstrap-to-HC0 width ratio ranges from about $2.3$ to $2.7$ across the country-pair, migration, and FDI cross-sections, and is of a similar magnitude in a final-goods robustness specification), and in several cases reverses the substantive conclusion of standard inference for the imprecisely estimated dyadic covariates such as contiguity and common language. 
The widening is first-order in every setting we examine, all of which have a tail index below two. We recommend that practitioners report bootstrap confidence intervals whenever heavy-tail diagnostics indicate $\alpha<2$, which Table~\ref{tab:hill_raw} shows to be the empirically operative case for the bilateral flows that typically populate gravity regressions.

The results presented in this paper extend naturally beyond data situations encountered in international economics. 
For instance, tail behavior is widely documented for citation flow data regarding patents or academic publications, where gravity-type flow models are often used in estimation. 
Also, medical expenditure is widely acknowledge to exhibit heavy tail, where the PPML model is suitable to capture heavy tail and zero atom simultaneously. 

\newpage
\appendix

\renewcommand{\thetheorem}{\thesection.\arabic{theorem}}
\renewcommand{\thelemma}{\thesection.\arabic{lemma}}
\renewcommand{\theequation}{\thesection.\arabic{equation}}
\renewcommand{\theassumption}{\thesection.\arabic{assumption}}
\renewcommand{\theproposition}{\thesection.\arabic{proposition}}
\setcounter{theorem}{0}
\setcounter{proposition}{0}
\setcounter{lemma}{0}
\setcounter{equation}{0}
\setcounter{assumption}{0}

\section{Proofs}
\label{app:proofs}

\subsection{Notation and Proof of Theorem~\ref{thm:stable}}
\label{app:profiling}
\label{app:stable-proof}

\medskip\noindent\textbf{Notation.}
We write $f_n\asymp g_n$, equivalently $f_n=\Theta(g_n)$, when $0<\liminf f_n/g_n\le\limsup f_n/g_n<\infty$, and $f_n=\Theta_p(g_n)$ for the in-probability analogue; $\|\cdot\|_{\mathrm{op}}$ denotes the spectral norm (largest singular value), $\|\cdot\|_q$ the $\ell_q$ norm on vectors, and $\|\cdot\|_{p\to q}$ the $\ell_p\to\ell_q$ operator norm on matrices; $\osc(v)=\max_iv_i-\min_iv_i$ the oscillation seminorm. We abbreviate ``with probability approaching one'' as w.p.a.\ one, and write $c_\mu:=c$ and $C_\mu:=C$ for the bounds $\mu_{ij}(\beta)\in[c,C]$ implied by Assumption~\ref{ass:dgp}(b).

Index dyads by $\ell=(i,j)$, $i\neq j$, with $n=N(N-1)\asymp N^2$.
We use the redundant fixed-effect parameterization throughout: $d_\ell=d_{ij}\in\R^{2N}$ is the exporter--importer incidence vector, so the conditional mean is $\mu_\ell(\beta,\phi)=\exp(x_\ell'\beta+d_\ell'\phi)$ with truth $\mu_\ell=\mu_\ell(\beta_0,\phi_0)$. 
Every $d_\ell$ is orthogonal to the null direction $e:=(\mathbf 1_N',-\mathbf 1_N')'\in\R^{2N}$, $d_\ell'e=0$, so $\phi$ is identified only up to $\mathrm{span}(e)$, and every fixed-effect Gram or Hessian matrix below ($H_{\phi\phi,n}$, $\hat H_{\phi\phi,n}(\beta)$, $\bar H_{\phi,n}$, and their pseudo-true and subnetwork analogues) is singular exactly along $e$
and positive definite on $e^\perp$ by \eqref{eq:dense-design}. 
For symmetric $A$ with $Ae=0$ we write $A\restriction_{e^\perp}$ for the operator induced on $e^\perp$, so that $\lambda_{\min}(A\restriction_{e^\perp})=\min_{0\neq v\in e^\perp}v'Av/(v'v)$. 
This is the sense in which \eqref{eq:dense-design} bounds eigenvalues on the identified subspace; the same convention applies on subnetworks, with $e$ replaced by its $2G$-dimensional analogue.
All inverses of such matrices are understood as inverses on $e^\perp$, equivalently Moore--Penrose pseudoinverses; expressions such as $H_{\beta\phi,n}H_{\phi\phi,n}^{-1}d_\ell$ are then well defined and invariant to the representative of $\phi$, because $H_{\beta\phi,n}e=0$ and $d_\ell\in e^\perp$. 
Residuals, fitted indices $d_\ell'\hat\phi(\beta)$, and every object entering the theory are representative-free. No normalization affects these invariant objects; individual lemmas select convenient auxiliary representatives (the Sinkhorn and orthogonal representatives introduced below). 

We introduce notations used for profiling.
Write $H_{\beta\phi,n}=\sum_\ell\mu_\ell x_\ell d_\ell'$ and $H_{\phi\phi,n}=\sum_\ell\mu_\ell d_\ell d_\ell'$. 
The partialled-out regressor is $\tilde x_\ell=x_\ell-H_{\beta\phi,n}H_{\phi\phi,n}^{-1}d_\ell$.
Equivalently, because $d_\ell'\phi$ is additive, the projection reduces coordinate by coordinate ($k=1,\dots,p$) to a $\mu$-weighted two-way fit,
\begin{equation}  \label{eq:xtilde-additive}
  \tilde x_{ij}^{(k)}\;=\;x_{ij}^{(k)}-a_i^{(k)}-b_j^{(k)},
  \qquad
  (a^{(k)},b^{(k)})\;=\;\operatorname*{arg\,min}_{a,b}\sum_{i\neq j}\mu_{ij}\bigl(x_{ij}^{(k)}-a_i-b_j\bigr)^2.
\end{equation}
The minimizing pair is unique up to the shift $(a^{(k)}_i+t,\,b^{(k)}_j-t)$, under which the residual $\tilde x^{(k)}_{ij}$ is invariant; a normalization such as $b^{(k)}_N=0$ may be imposed computationally but plays no role below.
The minimizers solve the coupled normal equations
\begin{equation}  \label{eq:xtilde-foc}
  a_i^{(k)} =\frac{\sum_{j\neq i}\mu_{ij}\bigl(x_{ij}^{(k)}-b_j^{(k)}\bigr)}{\sum_{j\neq i}\mu_{ij}},
  \qquad
  b_j^{(k)} =\frac{\sum_{i\neq j}\mu_{ij}\bigl(x_{ij}^{(k)}-a_i^{(k)}\bigr)}{\sum_{i\neq j}\mu_{ij}}.
\end{equation}  
By construction, it satisfies the residual orthogonality
\begin{equation}\label{eq:orth}
  \sum_\ell\mu_\ell\,\tilde x_\ell\,d_\ell'\;=\;0,\qquad\text{equivalently}\qquad\sum_\ell\mu_\ell\,\tilde x_\ell\,g_\ell\;=\;0 \ \text{ for every additive }g_\ell=a_i+b_j.
\end{equation}
Let $\hat\phi_0=\hat\phi(\beta_0)$ solve $S_{\phi,n}(\beta_0,\hat\phi_0)=0$ and set $\nu_n=\hat\phi_0-\phi_0$, with exporter/importer components $\nu_n= (\xi_1,\dots,\xi_N,\omega_1,\dots,\omega_{N})'$, so that $d_\ell'\nu_n=\xi_i+\omega_j$. Define the multiplicative fixed-effect errors
\begin{equation}  
  u_i\;=\;e^{\xi_i}-1,\qquad w_j\;=\;e^{\omega_j}-1.
  \label{eq:mult-fe}
\end{equation}
The truth-level score is $s_\ell=\eta_\ell\mu_\ell\tilde x_\ell$.
Write $S_i=\sum_{j}\eta_{ij}\mu_{ij}$ and $\tilde S_j=\sum_{i}\eta_{ij}\mu_{ij}$ for the exporter/importer score blocks, and $m_i=\sum_j\mu_{ij}$, $\tilde m_j=\sum_i\mu_{ij}$.
The PPML score blocks are $S_{\beta,n}(\beta,\phi)=\sum_\ell\{y_\ell-\mu_\ell(\beta,\phi)\}x_\ell$ and $S_{\phi,n}(\beta,\phi)=\sum_\ell\{y_\ell-\mu_\ell(\beta,\phi)\}d_\ell$, and the
profile score is $S_n^{\mathrm{profile}}(\beta)\equiv S_{\beta,n}(\beta,\hat\phi(\beta))$, with $\hat\phi(\beta)$ solving $S_{\phi,n}(\beta,\hat\phi(\beta))=0$;
a finite solution exists w.p.a.\ one, uniformly over $\beta\in\mathcal B$, by Corollary~\ref{cor:sample-existence}.
We write $\mu_\ell^0\equiv\mu_\ell$ for the truth-level mean and $\bar Q_n(\beta,\phi)=n^{-1}\sum_\ell[\mu_\ell^0(x_\ell'\beta+d_\ell'\phi)-\exp(x_\ell'\beta+d_\ell'\phi)]$
for the population PPML objective. The concentrated objective is $\hat Q_n(\beta)=Q_n(\beta,\hat\phi(\beta))$. With the fitted weights $\hat\mu_\ell(\beta)=\exp(x_\ell'\beta+d_\ell'\hat\phi(\beta))$, write $\hat H_{\beta\phi,n}(\beta)=\sum_\ell\hat\mu_\ell(\beta)x_\ell d_\ell'$ and
$\hat H_{\phi\phi,n}(\beta)=\sum_\ell\hat\mu_\ell(\beta)d_\ell d_\ell'$, and let
\[
  \hat{\tilde x}_\ell(\beta) \;=\;x_\ell-\hat H_{\beta\phi,n}(\beta)\,\hat H_{\phi\phi,n}(\beta)^{-1}d_\ell
\]
denote the fitted-weight ($\hat\mu(\beta)$-weighted) projection residual. 
It satisfies $\sum_\ell\hat\mu_\ell(\beta)\hat{\tilde x}_\ell(\beta)d_\ell'=0$. 
It is to be distinguished from the pseudo-true-weight residual $\tilde x_\ell(\beta)$ defined below. 
The feasible concentrated Hessian at $\beta$ is
\begin{equation}
  H_n^{\ast}(\beta)\;=\;\frac1n\sum_\ell\hat\mu_\ell(\beta)\,\hat{\tilde x}_\ell(\beta)\hat{\tilde x}_\ell(\beta)' .
  \label{eq:feasible-hessian}
\end{equation}
This matrix is the exact derivative of the profile score: $-\,\partial\{n^{-1}S_n^{\mathrm{profile}}(\beta)\}/\partial\beta'=H_n^{\ast}(\beta)$. 
Indeed, the implicit function theorem applied to $S_{\phi,n}(\beta,\hat\phi(\beta))=0$ gives $\partial\hat\phi(\beta)/\partial\beta' =-\hat H_{\phi\phi,n}(\beta)^{-1}\hat H_{\phi\beta,n}(\beta)$; the Jacobian is invertible on the quotient w.p.a.\ one by Corollary~\ref{cor:sample-existence} and the margin bounds of Lemma~\ref{lem:margins} and Corollary~\ref{cor:uniform-beta}. 
The chain rule and the orthogonality above then give the displayed derivative. 
The estimated score process is
\[
  \hat s_\ell(\beta)\;=\;\bigl\{y_\ell-\hat\mu_\ell(\beta)\bigr\}\,\hat{\tilde x}_\ell(\beta),
\]
always written with its argument: $\hat s_\ell(\beta_0)$ is its value at the truth-profiled point $(\beta_0,\hat\phi_0)$ and $\hat s_\ell(\hat\beta)$ at the estimates. 
The score covariance at the PPML estimates is $\hat\Omega_n=n^{-1}\sum_\ell\hat s_\ell(\hat\beta)\hat s_\ell(\hat\beta)'$, and the truth-profiled score covariance is
$\hat\Omega_n^{\mathrm{prof}}(\beta_0)=n^{-1}\sum_\ell\hat s_\ell(\beta_0)\hat s_\ell(\beta_0)'$.
For $\beta\in\mathcal B$, $\mu_\ell(\beta)=\mu_\ell(\beta,\bar\phi(\beta))$ are the pseudo-true weights and $\tilde x_\ell(\beta)$ the $\mu_\ell(\beta)$-weighted projection residual of $x_\ell$ onto the fixed-effect column space, satisfying $\sum_\ell\mu_\ell(\beta)\tilde x_\ell(\beta)d_\ell'=0$; at $\beta_0$ these are $\mu_\ell^0$ and $\tilde x_\ell$. 
The profile-score remainder is
\begin{equation}
  R_n(\beta)\;\equiv\;\sum_\ell\bigl(y_\ell-\mu_\ell(\beta)\bigr)\tilde x_\ell(\beta)\;-\;S_n^{\mathrm{profile}}(\beta),\qquad R_n\equiv R_n(\beta_0),
  \label{eq:Rn-def}
\end{equation}
the gap between the $\mu(\beta)$-partialled score and the profile score.
Throughout, $a_n$ is the heavy-tail norming of the score sum, defined by $n\Pr(\eta>a_n)\to1$; under Assumption~\ref{ass:dgp}(a), $\mathcal L(t)\to c_\ast$, so $a_n=(c_+c_\ast\,n)^{1/\alpha}(1+o(1))\asymp n^{1/\alpha}\asymp N^{2/\alpha}$, and the tail rates below are pure powers of $N$. 
Finally, $\hat H_n\equiv H_n^{\ast}(\hat\beta)$ is the concentrated Hessian at the estimates.

The recurring score and residual objects are summarized below; an unadorned symbol is evaluated at the truth, an argument $\beta$ indicates the pseudo-true (plain) or fitted (hatted) construction; the estimated score always carries its argument.
\begin{center}
{\small
\begin{tabular}{llll}
\toprule
Symbol & Weights & Residualization & Evaluated at\\
\midrule
$\mu_\ell,\ \tilde x_\ell,\ s_\ell=\eta_\ell\mu_\ell\tilde x_\ell$
  & truth & truth & $(\beta_0,\phi_0)$\\
$\mu_\ell(\beta),\ \tilde x_\ell(\beta)$
  & pseudo-true & pseudo-true & $(\beta,\bar\phi(\beta))$\\
$\hat\mu_\ell(\beta),\ \hat{\tilde x}_\ell(\beta),\ \hat s_\ell(\beta)$
  & fitted & fitted & $(\beta,\hat\phi(\beta))$\\
$\hat{\tilde x}_\ell(\beta_0),\ \hat s_\ell(\beta_0)$
  & fitted & fitted & $(\beta_0,\hat\phi_0)$\\
$\hat{\tilde x}_\ell(\hat\beta),\ \hat s_\ell(\hat\beta)$
  & fitted & fitted & $(\hat\beta,\hat\phi(\hat\beta))$\\
\bottomrule
\end{tabular}}
\end{center}
The Hessian-type matrices are: on $\R^{2N}$, the fixed-effect blocks $H_{\phi\phi,n}$ and $\hat H_{\phi\phi,n}(\beta)$ and the integral-averaged $\bar H_{\phi,n}$ of Lemma~\ref{lem:fe-rate}, with inverses on $e^\perp$; on $\R^{p}$, the population $H_n\to H$ of Assumption~\ref{ass:dgp}(b), the feasible concentrated $H_n^{\ast}(\beta)$ of \eqref{eq:feasible-hessian} with
$\hat H_n=H_n^{\ast}(\hat\beta)$ and segment average $\bar H_n^{\ast}$ of \eqref{eq:linearization-app}, and the pseudo-true and hybrid matrices $H^{\mathrm{pt}}_n(\beta)$ and $H^{\mathrm{hyb}}_n(\beta)$ local to Proposition~\ref{prop:primitive-d}.

Two representatives of the fixed-effect error $\nu_n\in\R^{2N}$ appear below; by the quotient conventions above, $\nu_n$ is identified only up to the null direction $e$, i.e.\ $\xi_i\mapsto\xi_i+t$, $\omega_j\mapsto\omega_j-t$.
Lemma~\ref{lem:margins} works with the Sinkhorn representative $\nu^{\mathrm{sk}}$, normalized by $\sum_je^{\omega_j}=N-1$, under which the fixed effects are exactly the Sinkhorn scalings that rescale the base weights $(\mu_{ij})$ to the observed margins \citep{Sinkhorn1967,Idel2016}.
Lemma~\ref{lem:fe-rate} works with the orthogonal representative $\nu_\perp\in e^\perp$, for which the exact linear system there is well posed on the quotient by $e$. 
The two are linked by $\nu^{\mathrm{sk}}=\nu_\perp+t_\nu \,e$ with a scalar level $t_\nu$ shown to be $O_p(1)$ in Lemma~\ref{lem:fe-rate}. 
Unless stated otherwise, \eqref{eq:mult-fe} and all rate statements refer to the orthogonal representative; the exception is Lemma~\ref{lem:margins}(iii), which is stated in the Sinkhorn representative.
The index $d_\ell'\nu_n=\xi_i+\omega_j$ is common to both, and the exact identity \eqref{eq:exact-Rn} together with the score covariance and Hessian decompositions below hold with the multiplicative errors $(u,w)$ of either representative, because the annihilation of additive components through \eqref{eq:orth} is representative-free.

The appendix is organized as follows. 
This subsection proves Theorem~\ref{thm:stable}; Appendix~\ref{app:valid-proof} proves
Theorem~\ref{thm:valid}; Appendix~\ref{app:primitive-d} states the primitive design condition and verifies the high-level Assumption~\ref{ass:dgp}(d); Appendix~\ref{app:lemmas} collects all technical lemmas. 

\begin{proof}[Proof of Theorem~\ref{thm:stable}]
The proof has five steps: (1)~establish a multivariate Breiman lemma for deterministic but heterogeneous multipliers;
(2)~derive a triangular-array stable CLT for the score sum;
(3)~prove consistency of the PPML estimator;
(4)~combine consistency with the linearization to obtain the estimator's stable limit; and (5)~establish divergence of the sandwich variance.

\emph{Step~1: Conditional Breiman lemma.}
Under Assumption~\ref{ass:dgp}(a), the centered shock $\eta_\ell\equiv\varepsilon_\ell-1$ satisfies $\Pr(\eta_\ell>t)=c_+\,t^{-\alpha}\,\mathcal{L}(t)$ for a slowly varying
$\mathcal{L}$ and $c_+>0$, with $\Pr(\eta_\ell<-t)=0$ for all $t>1$. 
The upper tail of $|\eta_\ell|$ is therefore exactly the upper tail of $\eta_\ell$. By Lemma~\ref{lem:breiman}, for any deterministic $\kappa\in\R^p$ with $\|\kappa\|$ bounded
(Assumption~\ref{ass:dgp}(b)),
\begin{equation}\label{eq:breiman-app}
  \frac{\Pr(\|\eta\,\kappa\|>t)}{\Pr(|\eta|>t)}\;\to\;\|\kappa\|^\alpha\quad\text{as }t\to\infty.
\end{equation}
Since $\kappa_\ell$ is non-stochastic, \eqref{eq:breiman-app} holds for each $\ell$ separately, with the population moment
replaced by the empirical average
\begin{equation}\label{eq:Abar}
  \bar A_n\;\equiv\;\frac{1}{n}\sum_{\ell=1}^n\|\kappa_\ell\|^\alpha,
\end{equation}
which is bounded uniformly in $N$ by Assumption~\ref{ass:dgp}(b).

\emph{Step~2: Triangular-array stable CLT.}
Let $a_n$ be the normalizing sequence from Assumption~\ref{ass:dgp}(a), defined by $n\,\Pr(\eta_{ij}>a_n)\to 1$, so that $a_n=(c_+ c_*\,n)^{1/\alpha}(1+o(1))$. 
Define the finite spectral measure $\Lambda\equiv C_\alpha\bar A\,\Gamma$ on $\mathbb{S}^{p-1}$, with the universal stable constant $C_\alpha$ carried in $\Lambda$ and $a_n$ the
tail quantile. We claim that
\begin{equation}
  a_n^{-1}\sum_{\ell=1}^n s_\ell
  \;\convd\;S_\alpha(\Lambda),
  \label{eq:stable-clt-app}
\end{equation}
where $S_\alpha(\Lambda)$ is the multivariate $\alpha$-stable random vector with finite spectral measure $\Lambda$ in the parametrization of \citet[\S 2.3]{SamorodnitskyTaqqu1994}. We argue by the characteristic-function method, adapted to a triangular array of independent, non-identically distributed summands.

For any $u\in\R^p$, the characteristic function of
$u'\cdot a_n^{-1}\sum_\ell s_\ell$ is
\begin{equation}
  \varphi_n(u)\;=\;\prod_{\ell=1}^n\E\bigl[\exp\bigl(i\,a_n^{-1}\,\eta_\ell\, u'\kappa_\ell\bigr)\bigr],
  \label{eq:cf-app}
\end{equation}
where the factorization uses pair-level independence of $\{\eta_\ell\}$ (Assumption~\ref{ass:dgp}(a)); conditional on the deterministic array $\{\kappa_\ell\}$ the summands
$s_\ell=\eta_\ell\kappa_\ell$ are independent but not identically distributed, with scale $\|\kappa_\ell\|$.

Write $\chi(s)\equiv\E[\exp(is\eta_\ell)]$ for the characteristic function of the centered shock (common across $\ell$ by Assumption~\ref{ass:dgp}(a)), and let $\Pr(|\eta|>t)=(c_++c_-)t^{-\alpha}\mathcal{L}(t)$ with $\mathcal{L}$ slowly varying.
For a centered random variable in the domain of attraction of an $\alpha$-stable law with $\alpha\in(1,2)$ and tail-balance constants $(c_+,c_-)$, the standard Karamata expansion of the characteristic function gives, as $s\to 0$,
\begin{equation}
  \log\chi(s)\;=\;-C_\alpha(c_++c_-)\,|s|^\alpha\,\mathcal{L}(1/|s|)
  \bigl[1-i\beta_*\,\mathrm{sgn}(s)\tan(\tfrac{\pi\alpha}{2})\bigr]
  +o\bigl(|s|^\alpha\mathcal{L}(1/|s|)\bigr),
  \label{eq:cf-expansion}
\end{equation}
where $C_\alpha=\Gamma_E(1-\alpha)\cos(\pi\alpha/2)>0$ is the universal
stable constant (with $\Gamma_E$ the Euler gamma function) and
$\beta_*=(c_+-c_-)/(c_++c_-)\in[-1,1]$ is the skewness
parameter; see \citet[Theorem~XVII.5.3]{Feller1971} or
\citet[Chapter~2]{IbragimovLinnik1971}.
Under Assumption~\ref{ass:dgp}(a), $c_-=0$, so $\beta_*=1$.

Substituting \eqref{eq:cf-expansion} into \eqref{eq:cf-app}
with $s=a_n^{-1}u'\kappa_\ell$ and taking logs,
\begin{equation}
  \log\varphi_n(u)
  \;=\;-C_\alpha(c_++c_-)\sum_{\ell=1}^n
        \frac{|u'\kappa_\ell|^\alpha}{a_n^\alpha}\,
        \mathcal{L}\!\Bigl(\tfrac{a_n}{|u'\kappa_\ell|}\Bigr)
        \bigl[1-i\beta_*\,\mathrm{sgn}(u'\kappa_\ell)
                \tan(\tfrac{\pi\alpha}{2})\bigr]
       \cdot\bigl(1+o(1)\bigr),
  \label{eq:logcf-app}
\end{equation}
uniformly in $u$ on compact sets. Assumption~\ref{ass:dgp}(a) gives
$\mathcal{L}(t)\to c_*$, but the cancellation below uses only slow
variation: the slowly varying factor cancels against the tail-quantile
norming, after a truncation that controls the directions
$u'\kappa_\ell$ near zero. Fix $\epsilon>0$. On
$\{|u'\kappa_\ell|>\epsilon\}$ the ratio $\lambda_\ell=1/|u'\kappa_\ell|$
lies in the compact $[1/C,1/\epsilon]\subset(0,\infty)$ (using
$|u'\kappa_\ell|\le C$ from Assumption~\ref{ass:dgp}(b)), so Karamata's
uniform convergence theorem gives
$\mathcal{L}(a_n/|u'\kappa_\ell|)/\mathcal{L}(a_n)\to1$ uniformly over
these $\ell$ and over compact $u$. On $\{|u'\kappa_\ell|\le\epsilon\}$,
Potter's bound \citep[Ch.~2]{Resnick2007} gives, for any
$\delta\in(0,\alpha)$ and all $n$ large,
$\mathcal{L}(a_n/|u'\kappa_\ell|)/\mathcal{L}(a_n)\le2|u'\kappa_\ell|^{-\delta}$,
so these terms contribute at most
$2\,n^{-1}\sum_{|u'\kappa_\ell|\le\epsilon}|u'\kappa_\ell|^{\alpha-\delta}
\le2\epsilon^{\alpha-\delta}=O(\epsilon^{\alpha-\delta})$ to the
normalized average, vanishing as $\epsilon\downarrow0$. The normalization
$n\,\Pr(|\eta|>a_n)\to1$ reads
$n(c_++c_-)a_n^{-\alpha}\mathcal{L}(a_n)\to1$. Hence
\[
  C_\alpha(c_++c_-)\,a_n^{-\alpha}
  \sum_{\ell=1}^n|u'\kappa_\ell|^\alpha\,
  \mathcal{L}\!\Bigl(\tfrac{a_n}{|u'\kappa_\ell|}\Bigr)\,(\cdots)
  =C_\alpha\,\bigl[n(c_++c_-)a_n^{-\alpha}\mathcal{L}(a_n)\bigr]\,
  \frac1n\sum_{\ell=1}^n|u'\kappa_\ell|^\alpha\,(\cdots)\,(1+o(1)),
\]
so the universal constant $C_\alpha$ remains in the limit while the slowly varying factor $\mathcal{L}$ cancels.
The multiplier scale enters separately through the two empirical averages, which are deterministic since $\{\kappa_\ell\}$ is non-stochastic under Assumption~\ref{ass:dgp}(b). 
Their limits follow from Assumption~\ref{ass:dgp}(c) applied to the test functions
$\theta\mapsto|u'\theta|^\alpha$ and
$\theta\mapsto|u'\theta|^\alpha\mathrm{sgn}(u'\theta)$ on the
unit sphere $\mathbb{S}^{p-1}$:
\begin{align*}
  \frac{1}{n}\sum_{\ell=1}^n|u'\kappa_\ell|^\alpha
    &\;\longrightarrow\;\bar A
        \int_{\mathbb{S}^{p-1}}|u'\theta|^\alpha\,
        \Gamma(d\theta), \\
  \frac{1}{n}\sum_{\ell=1}^n|u'\kappa_\ell|^\alpha\,
                            \mathrm{sgn}(u'\kappa_\ell)
    &\;\longrightarrow\;\bar A
        \int_{\mathbb{S}^{p-1}}|u'\theta|^\alpha\,
        \mathrm{sgn}(u'\theta)\,\Gamma(d\theta).
\end{align*}
The first function is bounded and continuous on $\mathbb{S}^{p-1}$, so the convergence is immediate by \eqref{eq:spec-meas} together with the identity $|u'\kappa_\ell|^\alpha=
\|\kappa_\ell\|^\alpha|u'\theta_\ell|^\alpha$ where $\theta_\ell\equiv\kappa_\ell/\|\kappa_\ell\|$, combined with $\bar A_n\to\bar A$. 
The second function $\theta\mapsto|u'\theta|^\alpha\mathrm{sgn}(u'\theta)$ is in fact continuous on all of $\mathbb{S}^{p-1}$, including on the hyperplane $\{u'\theta=0\}$: the sign factor is discontinuous there, but it is multiplied by $|u'\theta|^\alpha\to0$ as $u'\theta\to0$ (using $\alpha>0$), which removes the discontinuity. 
Hence the second display converges for every $u\in\R^p$ by the continuous-mapping form of the portmanteau theorem.

The right-hand side of \eqref{eq:logcf-app} therefore converges, for every $u\in\R^p$, to
\begin{equation*}
  -C_\alpha\,\bar A
  \int_{\mathbb{S}^{p-1}}|u'\theta|^\alpha
  \bigl[1-i\beta_*\,\mathrm{sgn}(u'\theta)\tan(\tfrac{\pi\alpha}{2})\bigr]
  \,\Gamma(d\theta).
\end{equation*}
This is the logarithm of the characteristic function of a multivariate $\alpha$-stable random vector $S_\alpha(\Lambda)$ on $\mathbb{R}^p$ with finite spectral measure $\Lambda\equiv C_\alpha\,\bar A\,\Gamma$ in the parametrization of \citet[\S 2.3]{SamorodnitskyTaqqu1994}, where the universal stable constant $C_\alpha=\Gamma_E(1-\alpha)\cos(\pi\alpha/2)$
is carried in $\Lambda$; the directional skewness is inherited from $\beta_*=1$ ($c_-=0$). Pointwise convergence at every $u$, combined with continuity at $u=0$ of the limiting
characteristic function, implies convergence in distribution by L\'evy's continuity theorem:
\begin{equation}
  a_n^{-1}\sum_{\ell=1}^n s_\ell
  \;\convd\;
  S_\alpha(\Lambda),
\end{equation}
establishing \eqref{eq:stable-clt-app}.

\emph{Step~3: Consistency.}
The estimator maximizes the two-way fixed-effect PPML objective
\begin{equation}
  Q_n(\beta,\phi)=\frac{1}{n}\sum_{\ell=1}^n
    \bigl[y_\ell(x_\ell'\beta+d_\ell'\phi)-\exp(x_\ell'\beta+d_\ell'\phi)\bigr].
  \label{eq:Q-obj}
\end{equation} 
Note that each summand is a strictly concave function $t\mapsto y_\ell t-e^{t}$ of the affine index $t=x_\ell'\beta+d_\ell'\phi$, so
$Q_n$ is jointly concave in $(\beta,\phi)$. Concentrating preserves concavity:
with $\beta_\lambda=\lambda\beta_1+(1-\lambda)\beta_0$ and $\phi_\lambda=\lambda\hat\phi(\beta_1)+(1-\lambda)\hat\phi(\beta_0)$, $\hat Q_n(\beta_\lambda)\ge Q_n(\beta_\lambda,\phi_\lambda)\ge\lambda\hat Q_n(\beta_1)+(1-\lambda)\hat Q_n(\beta_0)$ by maximality and joint concavity, so $\hat Q_n(\beta)\equiv Q_n(\beta,\hat\phi(\beta))$ is concave in $\beta$ \citep[Section~5]{Rockafellar1970}. 
The fixed-effect block is strictly negative definite on the quotient at every finite $\phi$, not only at the pseudo-true weights of \eqref{eq:dense-design}:
the weights $\mu_\ell(\beta,\phi)$ are strictly positive, and on the off-diagonal complete bipartite support the null space of $\sum_\ell d_\ell d_\ell'$ is exactly $\mathrm{span}(e)$, so
$v'\bigl(\sum_\ell\mu_\ell(\beta,\phi)d_\ell d_\ell'\bigr)v =\sum_\ell\mu_\ell(\beta,\phi)(d_\ell'v)^2>0$ for every nonzero $v\in e^\perp$. 
Hence $\hat\phi(\beta)=\arg\max_\phi Q_n(\beta,\phi)$ is unique on the quotient; \eqref{eq:dense-design} is reserved for the quantitative eigenvalue bounds at the pseudo-true weights. 
By Danskin's theorem $\nabla\hat Q_n(\beta)=n^{-1}S_n^{\mathrm{profile}}(\beta)$.

Now we follow the consistency argument of \citet[Thm.~B.3]{FernandezVal2016}: since profiling out the fixed effects preserves concavity, the profile score $\nabla\hat Q_n$ is monotone, and $\hat\beta$ is is shown to lie in any fixed neighborhood of $\beta_0$ w.p.a.\ one by a sign condition on the profile score at the boundary of that neighborhood.

Let $\bar Q_n^{\mathrm{prof}}(\beta)=\bar Q_n(\beta,\bar\phi(\beta))$ be the profile of the population objective $\bar Q_n$ defined above, itself concave in $\beta$. By the envelope theorem, legitimate by the differentiability of $\beta\mapsto\bar\phi(\beta)$ of Lemma~\ref{lem:smooth-profile}, and the orthogonality $\sum_\ell\mu_\ell(\beta)\tilde x_\ell(\beta)d_\ell'=0$,
\begin{equation}
  \nabla\bar Q_n^{\mathrm{prof}}(\beta)
  =\bar s_n(\beta)\equiv\frac1n\sum_\ell\bigl(\mu_\ell^0-\mu_\ell(\beta)\bigr)\tilde x_\ell(\beta),
  \qquad
  \bar s_n(\beta_0)=0,\quad \nabla\bar s_n(\beta_0)\to -H\prec0,
  \label{eq:pop-prof-score}
\end{equation}
so $\beta_0$ is the unique stationary point of the concave population profile.
Definition~\eqref{eq:Rn-def} with $y_\ell=\mu_\ell^0(1+\eta_\ell)$ gives
\[
  \frac1n S_n^{\mathrm{profile}}(\beta)
  =\bar s_n(\beta)
  +\underbrace{\frac1n\sum_\ell\mu_\ell^0\eta_\ell\,\tilde x_\ell(\beta)}_{(\mathrm{I})}
  -\underbrace{\frac1n R_n(\beta)}_{(\mathrm{II})}.
\]
We show that the feasible profile score converges to its population counterpart uniformly on $\mathcal B$,
\begin{equation}
  \sup_{\beta\in\mathcal B}\bigl\|n^{-1}S_n^{\mathrm{profile}}(\beta)-\bar s_n(\beta)\bigr\|\convp0.
  \label{eq:grad-uc}
\end{equation}
Consider (I). Fix $q\in(1,\alpha)$. The $\eta_\ell$ are i.i.d.\ mean zero with $\E|\eta|^q<\infty$, and the multipliers $\mu_\ell^0\tilde x_\ell(\beta)$ are bounded uniformly in $\ell$ and $\beta$. The von Bahr--Esseen inequality \citep{vonBahrEsseen1965} therefore gives $\E\|\sum_\ell\mu_\ell^0\eta_\ell\tilde x_\ell(\beta)\|^q\le Cn$, so $(\mathrm{I})=O_p(n^{1/q-1})=o_p(1)$ pointwise. Then \eqref{eq:grad-uc} follows from the uniformity over the compact $\mathcal B$, which further follows from a finite set. In particular, the Lipschitz bound
\[
  \|(\mathrm I)(\beta)-(\mathrm I)(\beta')\|
  \le\Bigl(n^{-1}\sum_\ell\mu_\ell^0|\eta_\ell|\Bigr)C\,\|\beta-\beta'\|
  =O_p(1)\,\|\beta-\beta'\|
\]
holds by the uniform Lipschitz continuity of $\tilde x_\ell(\cdot)$ (Lemma~\ref{lem:smooth-profile}) and $\E|\eta_\ell|<\infty$. Fix $\delta>0$ and a finite $\delta$-net $\mathcal B_\delta\subset\mathcal B$. Then
$\sup_{\beta\in\mathcal B}\|(\mathrm I)(\beta)\|
\le\max_{\beta\in\mathcal B_\delta}\|(\mathrm I)(\beta)\|+O_p(1)\,\delta$.
The first term is $o_p(1)$ by pointwise convergence over the finite net. The second is arbitrarily small.
For (II), Assumption~\ref{ass:dgp}(d) gives $\sup_{\beta\in\mathcal B}\|R_n(\beta)\|=o_p(a_n)$, so $(\mathrm{II})=o_p(a_n/n)$, establishing \eqref{eq:grad-uc}.

We now bound $||\hat\beta-\beta_0||$. Fix $\epsilon>0$ with $\{\beta:\|\beta-\beta_0\|\le\epsilon\}\subset\mathcal B$. 
Expand the smooth population profile score to first order. 
Together with $\nabla\bar s_n(\beta_0)\to-H$ and the uniform equicontinuity of $\nabla\bar s_n$ near $\beta_0$ (Lemma~\ref{lem:smooth-profile}), this gives, for $n$ large and $\epsilon$ small, $u'\bar s_n(\beta_0+\epsilon u)=-\epsilon\,u'Hu+o(\epsilon)\le-\tfrac12\epsilon\,\lambda_{\min}(H)$ uniformly over unit vectors $u$. By \eqref{eq:grad-uc}, with probability approaching one,
\begin{equation}
  \sup_{\|u\|=1}\,u'\,n^{-1}S_n^{\mathrm{profile}}(\beta_0+\epsilon u) \;\le\;-\tfrac14\,\epsilon\,\lambda_{\min}(H)\;<\;0 .
  \label{eq:score-sign}
\end{equation}
Suppose $\|\hat\beta-\beta_0\|\ge\epsilon$ and set $u=(\hat\beta-\beta_0)/\|\hat\beta-\beta_0\|$.
Since $\hat Q_n$ is concave with global maximizer $\hat\beta$, the map $s\mapsto u'\nabla\hat Q_n(\beta_0+su)=u'n^{-1}S_n^{\mathrm{profile}}(\beta_0+su)$ is nonincreasing and nonnegative on $[0,\|\hat\beta-\beta_0\|]$; in particular $u'n^{-1}S_n^{\mathrm{profile}}(\beta_0+\epsilon u)\ge0$, contradicting \eqref{eq:score-sign}.
Hence $\|\hat\beta-\beta_0\|<\epsilon$ w.p.a. one, and since $\epsilon>0$ is arbitrary, $\hat\beta\convp\beta_0$, establishing part~(i) of Theorem~\ref{thm:stable}.

\emph{Step~4: From score to estimator.}
Recall the score blocks $S_{\beta,n},S_{\phi,n}$, the concentration map $\hat\phi(\beta)$, and the profile score $S_n^{\mathrm{profile}}(\beta)\equiv S_{\beta,n}(\beta,\hat\phi(\beta))$.
The concentrated PPML estimator solves $S_n^{\mathrm{profile}}(\hat\beta)=0$. This unconstrained first-order condition is available w.p.a.\ one: $\hat\beta\convp\beta_0\in\operatorname{int}(\mathcal B)$ places $\hat\beta$ in the interior. Because $\hat\phi(\hat\beta)$ satisfies the exporter/importer first-order conditions, it is equivalently the
partialled-out condition
\[
  \sum_\ell\bigl(y_\ell-\hat\mu_\ell(\hat\beta)\bigr)
  \hat{\tilde x}_\ell(\hat\beta)=0 .
\]
Evaluated at the truth, the profile-score condition of Assumption~\ref{ass:dgp}(d)
gives
\[
  S_n^{\mathrm{profile}}(\beta_0)=\sum_{\ell}s_\ell+o_p(a_n),
  \qquad s_\ell=\eta_\ell\kappa_\ell,
\]
so the heavy-tailed score sum of Step~2 is, to leading order, the profiled
score at $\beta_0$. The map $\beta\mapsto\hat\phi(\beta)$ is continuously
differentiable w.p.a.\ one. Hence the profile score is continuously
differentiable, with exact derivative
$\partial S_n^{\mathrm{profile}}(\beta)/\partial\beta'=-\,n\,H_n^{\ast}(\beta)$,
the feasible concentrated Hessian \eqref{eq:feasible-hessian}, as
recorded in the notation above. Apply the fundamental theorem of calculus to
$t\mapsto S_n^{\mathrm{profile}}(\beta_0+t(\hat\beta-\beta_0))$ and
carry the profile-score remainder $o_p(a_n)$. This gives the exact linearization
\begin{equation}
  \hat\beta-\beta_0
  \;=\;\bigl(n\bar H_n^{\ast}\bigr)^{-1}
       \Bigl(\sum_{\ell=1}^n s_\ell+o_p(a_n)\Bigr),
  \qquad
  \bar H_n^{\ast}
  \;\equiv\;\int_0^1H_n^{\ast}\bigl(\beta_0+t(\hat\beta-\beta_0)\bigr)\,dt,
  \label{eq:linearization-app}
\end{equation}
valid on the event, of probability approaching one, on which $\bar H_n^{\ast}$ is invertible. 
By Step~3 there is a sequence $\delta_n\downarrow0$ with $\Pr(\|\hat\beta-\beta_0\|\le\delta_n)\to1$. 
On that event the segment $\{\beta_0+t(\hat\beta-\beta_0):t\in[0,1]\}$ lies in the ball $\{\|\beta-\beta_0\|\le\delta_n\}$. 
The local-uniform profile-Hessian condition of Assumption~\ref{ass:dgp}(d) then gives
\begin{equation}
  \bigl\|\bar H_n^{\ast}-H\bigr\| \;\le\;\sup_{\|\beta-\beta_0\|\le\delta_n}\bigl\|H_n^{\ast}(\beta)-H\bigr\|\;\convp\;0 ,
  \label{eq:Hbar-conv}
\end{equation}
which also implies the invertibility w.p.a.\ one.

By Step~2, $\sum_\ell s_\ell=O_p(a_n)$. Multiplying both sides of \eqref{eq:linearization-app} by $n/a_n$, the remainder contributes $(\bar H_n^{\ast})^{-1}a_n^{-1}o_p(a_n)=o_p(1)$, so
\begin{equation}
  \frac{n}{a_n}(\hat\beta-\beta_0)\;=\;(\bar H_n^{\ast})^{-1}\Bigl(a_n^{-1}\sum_{\ell=1}^n s_\ell+o_p(1)\Bigr)\;\convd\;H^{-1}\,S_\alpha(\Lambda),
  \label{eq:rate-app}
\end{equation}
by Slutsky's lemma combined with \eqref{eq:Hbar-conv} and \eqref{eq:stable-clt-app}, where the finite spectral measure $\Lambda=C_\alpha\bar A\,\Gamma$ from Step~2 carries the empirical scale $\bar A$ and the universal stable constant $C_\alpha$, with $a_n$ kept equal to the tail quantile. 
This proves part~(ii) of Theorem~\ref{thm:stable}.

\emph{Step~5: Score-covariance divergence and failure of Gaussian sandwich inference.}
The truth-level squared-score sum is $\Omega_n^{\mathrm{truth}}\equiv n^{-1}\sum_\ell s_\ell s_\ell'$ with $s_\ell=\eta_\ell\kappa_\ell$. 
Each summand $s_\ell s_\ell'=\eta_\ell^2\,\kappa_\ell\kappa_\ell'$ has, conditional on the deterministic multiplier $\kappa_\ell$, the same upper-tail index as $\eta_\ell^2$, namely $\alpha/2\in(0,1)$ under Assumption~\ref{ass:dgp}(a) with $\alpha\in(1,2)$, and tail scale $\|\kappa_\ell\kappa_\ell'\|_F^{\alpha/2} =\|\kappa_\ell\|^\alpha$, where $\|\cdot\|_F$ is the Frobenius norm. 
The norming is $a_n^2$, since $\Pr(\eta_\ell^2>a_n^2)=\Pr(\eta_\ell>a_n)\sim n^{-1}$. 
Writing $M_\ell\equiv\kappa_\ell\kappa_\ell'/\|\kappa_\ell\|^2$ for the unit-Frobenius-norm matrix direction (rank one and positive semidefinite), the multivariate Breiman lemma
(Lemma~\ref{lem:breiman}, applied to the vectorized matrix multiplier) together with the characteristic-function argument of Step~2 give
\begin{equation}
  a_n^{-2}\,\sum_{\ell=1}^n s_\ell s_\ell'\;\convd\;\Xi\;\equiv\;S_{\alpha/2}(\Lambda_2),
  \label{eq:sum-meat}
\end{equation}
an $(\alpha/2)$-stable random matrix with finite spectral measure $\Lambda_2=C_{\alpha/2}\,\bar A\,\Gamma_2$ on the unit sphere of symmetric matrices, with the universal $(\alpha/2)$-stable constant $C_{\alpha/2}=\Gamma_E(1-\tfrac{\alpha}{2})\cos(\tfrac{\pi\alpha}{4})$ carried in $\Lambda_2$. 
For tail index $\alpha/2\in(1/2,1)$ the analogue of the expansion \eqref{eq:cf-expansion} holds without a compensating drift term \citep[Ch.~XVII.5]{Feller1971}, so the Step~2 argument applies to the uncentered array. Here
\begin{equation}
  \Gamma_2\;=\;\lim_{n\to\infty}\frac{\sum_\ell\delta_{M_\ell}\,\|\kappa_\ell\|^\alpha}{\sum_\ell\|\kappa_\ell\|^\alpha}
  \label{eq:spec-meas-2}
\end{equation}
is the weak limit of the $\|\kappa_\ell\|^\alpha$-weighted empirical distribution of the directions $M_\ell$. 
The weights are the same $\|\kappa_\ell\|^\alpha$ as in the linear array, so \eqref{eq:spec-meas-2} converges under Assumption~\ref{ass:dgp}(c): the map $\theta\mapsto\theta\theta'$
is continuous on $\mathbb{S}^{p-1}$ and replaces the directional limit $\Gamma$ with $\Gamma_2$. 
Because every summand $\eta_\ell^2\kappa_\ell\kappa_\ell'$ is positive semidefinite and the shock is one-sided ($c_-=0$), $\Xi$ is supported on the positive-semidefinite cone. 
Dividing both sides by $n$, equivalently multiplying by $a_n^2/n$,
\begin{equation}
  \frac{n}{a_n^2}\,\Omega_n^{\mathrm{truth}}\;=\;\frac{1}{a_n^2}\sum_{\ell=1}^n s_\ell s_\ell'\;\convd\;\Xi.
  \label{eq:meat-rate-app}
\end{equation}
The estimated score covariance $\hat\Omega_n=n^{-1}\sum_\ell\hat s_\ell(\hat\beta) \hat s_\ell(\hat\beta)'$ uses scores at the PPML estimates $(\hat\beta,\hat\phi(\hat\beta))$ rather than at the truth.
The truth-profiled score covariance condition of Assumption~\ref{ass:dgp}(d) gives $(n/a_n^2)\bigl(\hat\Omega_n^{\mathrm{prof}}(\beta_0) -\Omega_n^{\mathrm{truth}}\bigr)\convp0$, where
$\hat\Omega_n^{\mathrm{prof}}(\beta_0)$ is the score covariance at $(\beta_0,\hat\phi(\beta_0))$. 
The passage to $\hat\Omega_n$ uses the estimator rate now available together with a Lipschitz bound on the sample profiling map. 
The vector $\hat c^{(k)}(\beta):=-\partial\hat\phi(\beta)/\partial\beta_k$ solves $\hat H_{\phi\phi,n}(\beta)\,\hat c^{(k)} =\sum_\ell\hat\mu_\ell(\beta)d_\ell x^{(k)}_\ell$, the normal-equation
system of the $\hat\mu(\beta)$-weighted two-way fit of $x^{(k)}$, where $\hat\mu_\ell(\beta)=\mu_\ell(\beta)e^{\xi_i(\beta)} e^{\omega_j(\beta)}$ factors as instance~(ii) of Lemma~\ref{lem:linf-fit}, with $\epsilon_N=o_p(1)$ uniformly over $\beta\in\mathcal B$ by Corollary~\ref{cor:uniform-beta}. 
Hence, w.p.a.\ one, repeating the two applications of Lemma~\ref{lem:linf-fit} in the proof of Lemma~\ref{lem:smooth-profile} at the sample weights, we obtain
\begin{equation}
  \sup_{\beta\in\mathcal B}\max_{\ell,k}\bigl|d_\ell'\,\partial\hat\phi(\beta)/\partial\beta_k\bigr|\le C,\qquad
  \sup_{\beta\in\mathcal B}\max_{\ell,k} \bigl\|\partial\hat{\tilde x}_\ell(\beta)/\partial\beta_k\bigr\|\le C .
  \label{eq:sample-lipschitz}
\end{equation}
Integrating
\eqref{eq:sample-lipschitz} along the segment from $\beta_0$ to
$\hat\beta$ gives, uniformly in $\ell$,
\[
  \bigl|\log\hat\mu_\ell(\hat\beta)-\log\hat\mu_\ell(\beta_0)\bigr|
  \le C\|\hat\beta-\beta_0\|,
  \qquad
  \bigl\|\hat{\tilde x}_\ell(\hat\beta)-\hat{\tilde x}_\ell(\beta_0)\bigr\|
  \le C\|\hat\beta-\beta_0\| .
\]
Write $\zeta_\ell:=|u_i|+|w_j|+|u_iw_j|$, so that
$\hat\mu_\ell(\beta_0)=\mu_\ell(1+u_i)(1+w_j)\le C(1+\zeta_\ell)$ and
$\|\hat s_\ell(\beta_0)\|
=\|\mu_\ell(\eta_\ell-u_i-w_j-u_iw_j)\,\hat{\tilde x}_\ell(\beta_0)\|
\le C(|\eta_\ell|+\zeta_\ell)$, using
$\max_\ell\|\hat{\tilde x}_\ell(\beta_0)\|\le C$ w.p.a.\ one
(Lemma~\ref{lem:fitted-residual}). Since
$\|\hat\beta-\beta_0\|=O_p(a_n/n)\to0$ by \eqref{eq:rate-app}, the
decomposition
$\Delta_\ell:=\hat s_\ell(\hat\beta)-\hat s_\ell(\beta_0)
=\bigl(\hat\mu_\ell(\beta_0)-\hat\mu_\ell(\hat\beta)\bigr)
\hat{\tilde x}_\ell(\hat\beta)
+\bigl(y_\ell-\hat\mu_\ell(\beta_0)\bigr)
\bigl(\hat{\tilde x}_\ell(\hat\beta)-\hat{\tilde x}_\ell(\beta_0)\bigr)$ yields
\[
  \|\Delta_\ell\|\le C\|\hat\beta-\beta_0\|
  \bigl(1+|\eta_\ell|+\zeta_\ell\bigr).
\]
By Lemma~\ref{lem:fe-rate}, $\sum_\ell\zeta_\ell^2\le3\bigl(N\sum_iu_i^2+N\sum_jw_j^2 +\sum_iu_i^2\sum_jw_j^2\bigr) =O_p\bigl(N^{4/\alpha-1}+N^{8/\alpha-4}\bigr)=o_p(a_n^2)$, while
$\sum_\ell\eta_\ell^2=O_p(a_n^2)$ by Lemma~\ref{lem:rowsum}(iv): the i.i.d.\ nonnegative array $\{\eta_\ell^2\}$ is regularly varying with index $\alpha/2\in(0,1)$. 
Hence $\sum_\ell(1+|\eta_\ell|+\zeta_\ell)^2=O_p(n+a_n^2)$. 
The per-term difference of the outer products is $\Delta_\ell\hat s_\ell(\beta_0)'+\hat s_\ell(\beta_0)\Delta_\ell'
+\Delta_\ell\Delta_\ell'$, whence 
\[
  \frac{n}{a_n^2}\, \bigl\|\hat\Omega_n-\hat\Omega_n^{\mathrm{prof}}(\beta_0)\bigr\|\le\frac{C}{a_n^2}\Bigl(\|\hat\beta-\beta_0\|+\|\hat\beta-\beta_0\|^2\Bigr)
  \sum_\ell\bigl(1+|\eta_\ell|+\zeta_\ell\bigr)^2=O_p\Bigl(\frac{a_n}{n}+\frac1{a_n}\Bigr)\convp0
\]
for every $\alpha\in(1,2)$. 
Hence $(n/a_n^2)\,(\hat\Omega_n-\Omega_n^{\mathrm{truth}})\convp 0$, so the same limit applies to $\hat\Omega_n$:
\begin{equation}
  \frac{n}{a_n^2}\,\hat\Omega_n
  \;\convd\;\Xi.
\end{equation}
Equivalently, $\hat\Omega_n=O_p(a_n^2/n)$, and since
$a_n^2/n\to\infty$ for $\alpha\in(1,2)$, $\hat\Omega_n$ diverges
componentwise on coordinates with nonzero stable scale. This
proves part~(iii).

The estimated variance for $\hat\beta$ is $\widehat{\mathrm{Var}}(\hat\beta)=n^{-1}\hat H_n^{-1}\hat\Omega_n \hat H_n^{-1}=O_p(a_n^2/n^2)$, of the same order as
$(\hat\beta-\beta_0)^2$ by Step~4, so $\hat t_k=O_p(1)$. 
Its limit is identified by the joint law of the numerator and the denominator, which we obtain from a single point-process limit.

\emph{Joint stable limit.} Step~2 gives $a_n^{-1}\sum_\ell s_\ell\convd S_\alpha(\Lambda)$ and \eqref{eq:sum-meat} gives $a_n^{-2}\sum_\ell s_\ell s_\ell'\convd\Xi$. 
The points $\{a_n^{-1}s_\ell=a_n^{-1}\eta_\ell\kappa_\ell\}_{\ell=1}^n$ form a triangular array whose conditional law is regularly varying with index $\alpha$ and bounded directions ($\|\kappa_\ell\|\le C$ by Assumption~\ref{ass:dgp}(b)), so by heavy-tailed point-process convergence \citep[Ch.~7]{Resnick2007},
\begin{equation}
  \Pi_n\equiv\sum_{\ell=1}^n\delta_{a_n^{-1}s_\ell}\;\Rightarrow\;\Pi \qquad\text{on }\R^p\setminus\{0\},
  \label{eq:prm}
\end{equation}
with $\Pi$ a Poisson random measure whose mean measure $\varpi$ has, in polar coordinates, radial index $\alpha$ and angular part proportional to the score spectral measure $\bar A\,\Gamma$; since $\alpha>1$, $\int_{\|z\|>\epsilon}\|z\|\,\varpi(dz)<\infty$ for every $\epsilon>0$.
Joint convergence of the two functionals follows by a compensated truncation. 
For $\epsilon\in(0,1]$ define
\begin{align*}
  m_{n,\epsilon} &:=\sum_{\ell=1}^n \E\bigl[a_n^{-1}s_\ell\,\1\{\|s_\ell\|>\epsilon a_n\}\bigr],\\
  I_{n,\epsilon} &:=\int_{\|z\|>\epsilon}z\,\Pi_n(dz)-m_{n,\epsilon}\;=\;a_n^{-1}\sum_{\ell=1}^n \Bigl(s_\ell\1\{\|s_\ell\|>\epsilon a_n\}-\E\bigl[s_\ell\1\{\|s_\ell\|>\epsilon a_n\}\bigr]\Bigr),\\
  \mathcal Q_{n,\epsilon} &:=\int_{\|z\|>\epsilon}zz'\,\Pi_n(dz),
\end{align*}
the quadratic coordinate requiring no compensation because its tail index is $\alpha/2<1$. 
These functionals are unbounded at infinity, so the continuous-mapping step uses an upper truncation. 
Fix $M>\epsilon$ with $\varpi(\{\|z\|=M\})=0$. On the compact annulus
$\{\epsilon<\|z\|\le M\}$ the integrands are bounded and the limit
mean measure assigns no mass to either boundary sphere, so
\eqref{eq:prm} and the continuous mapping theorem give the joint
convergence of
$\bigl(\int_{\epsilon<\|z\|\le M}z\,\Pi_n(dz),\
\int_{\epsilon<\|z\|\le M}zz'\,\Pi_n(dz)\bigr)$
to the corresponding integrals against $\Pi$. The truncation is then
removed. First,
$\Pr(\max_\ell\|s_\ell\|>Ma_n)
\le\sum_\ell\Pr(\|s_\ell\|>Ma_n)\le CM^{-\alpha}+o(1)$,
so with probability at least $1-CM^{-\alpha}-o(1)$, uniformly in $n$, the truncated and untruncated point-process integrals coincide; the analogous bound holds for $\Pi$. 
Second, the centering tail is deterministic and satisfies $\sum_\ell\E[a_n^{-1}\|s_\ell\|\1\{\|s_\ell\|>Ma_n\}]\le CM^{1-\alpha}$ uniformly in $n$, by Karamata's theorem. 
Letting $M\to\infty$ after $n\to\infty$ yields the joint convergence of the uncentered pair over $\{\|z\|>\epsilon\}$:
$\bigl(\int_{\|z\|>\epsilon}z\,\Pi_n(dz),\,\mathcal Q_{n,\epsilon}\bigr) \convd\bigl(\int_{\|z\|>\epsilon}z\,\Pi(dz),\,\mathcal Q_\epsilon\bigr)$ with
$\mathcal Q_\epsilon=\int_{\|z\|>\epsilon}zz'\,\Pi(dz)$. The
deterministic centering converges as well:
$m_{n,\epsilon}\to m_\epsilon:=\int_{\|z\|>\epsilon}z\,\varpi(dz)$. This
follows from the regular-variation calculation of Step~2 and the
uniform integrability supplied by Karamata's theorem,
$\sup_n\sum_\ell\E[a_n^{-1}\|s_\ell\|\1\{\|s_\ell\|>Ma_n\}]\le
CM^{1-\alpha}\to0$ as $M\to\infty$. Hence, jointly,
\[
  (I_{n,\epsilon},\,\mathcal Q_{n,\epsilon})\;\convd\;(I_\epsilon,\,\mathcal Q_\epsilon),
  \qquad
  I_\epsilon:=\int_{\|z\|>\epsilon}z\,\bigl(\Pi(dz)-\varpi(dz)\bigr).
\]
As $\epsilon\downarrow0$ the compensated Poisson integral $I_\epsilon$ converges almost surely to the mean-zero $\alpha$-stable vector with spectral measure $\Lambda$ \citep[Section~3.12]{SamorodnitskyTaqqu1994};
its law coincides with that of $S_\alpha(\Lambda)$ from Step~2, both being infinitely divisible with L\'evy measure $\varpi$, no Gaussian component, and zero mean (the Step~2 characteristic function carries no shift term and $\alpha>1$). Likewise $\mathcal Q_\epsilon\to\Xi$ a.s.

On the approximation side, since $\E s_\ell=0$ the gap is exactly the compensated small-jump sum,
\[
  a_n^{-1}\sum_{\ell=1}^n s_\ell-I_{n,\epsilon}
  \;=\;a_n^{-1}\sum_{\ell=1}^n
  \Bigl(s_\ell\1\{\|s_\ell\|\le\epsilon a_n\}
  -\E\bigl[s_\ell\1\{\|s_\ell\|\le\epsilon a_n\}\bigr]\Bigr),
\]
whose variance is bounded, by Karamata's theorem, as
\[
  \operatorname{Var}\Bigl(u'\Bigl[a_n^{-1}\textstyle\sum_\ell
  \bigl(s_\ell\1\{\|s_\ell\|\le\epsilon a_n\}-\E s_\ell\1\{\cdot\}\bigr)
  \Bigr]\Bigr)
  \le a_n^{-2}\sum_\ell\E\bigl[(u's_\ell)^2
  \1\{\|s_\ell\|\le\epsilon a_n\}\bigr]
  \le C\epsilon^{2-\alpha}(1+o(1)),
\]
vanishing as $\epsilon\downarrow0$ uniformly in $n$; for the quadratic part, all summands are positive semidefinite and
$\E\|a_n^{-2}\sum_\ell s_\ell s_\ell'\1\{\|s_\ell\|\le\epsilon a_n\}\| \le C\epsilon^{2-\alpha}(1+o(1))\downarrow0$. 
Letting $\epsilon\downarrow0$ along the standard three-epsilon argument \citep[Theorem~3.2]{Billingsley1999} yields
\begin{equation}
  \biggl(a_n^{-1}\sum_{\ell=1}^n s_\ell,\; a_n^{-2}\sum_{\ell=1}^n s_\ell s_\ell'\biggr)\;\convd\; \bigl(S_\alpha(\Lambda),\;\Xi\bigr),
  \label{eq:joint-stable}
\end{equation}
the two coordinates sharing the single point process $\Pi$, yielding the joint convergence.

\emph{Conclusion of part~(iv).} Write the conventional
$t$-statistic as
\begin{equation}
  \hat t_k\;=\;\frac{\hat\beta_k-\beta_{0,k}}{\widehat{\mathrm{se}}_k}
  \;=\;\frac{(n/a_n)(\hat\beta_k-\beta_{0,k})} {\bigl[(n/a_n)^2\,\widehat{\mathrm{Var}}(\hat\beta)_{kk}\bigr]^{1/2}}.
\end{equation}
By the linearization \eqref{eq:linearization-app}, \eqref{eq:Hbar-conv}, and $\hat H_n=H_n^{\ast}(\hat\beta)\convp H$ (the local-uniform Hessian
condition of Assumption~\ref{ass:dgp}(d) evaluated at $\hat\beta$),
$(n/a_n)(\hat\beta-\beta_0)=\hat H_n^{-1}\,a_n^{-1}\sum_\ell s_\ell+o_p(1)$, while $(n/a_n)^2\widehat{\mathrm{Var}}(\hat\beta)=\hat H_n^{-1}\bigl(a_n^{-2}\hat\Omega_n\bigr)\hat H_n^{-1}$. Since $\hat H_n\convp H$, Slutsky's theorem and the joint limit
\eqref{eq:joint-stable} give
\begin{equation}
  \Bigl((n/a_n)(\hat\beta-\beta_0),\;(n/a_n)^2\widehat{\mathrm{Var}}(\hat\beta)\Bigr) \;\convd\; \bigl(H^{-1}S_\alpha(\Lambda),\;H^{-1}\Xi H^{-1}\bigr).
\end{equation}
On a coordinate $k$ with $[H^{-1}\Xi H^{-1}]_{kk}>0$ almost surely, the ratio map is almost surely continuous at the limit, so the continuous-mapping theorem yields
\begin{equation}
  \hat t_k\;\convd\;\frac{[H^{-1}S_\alpha(\Lambda)]_k}   {\bigl([H^{-1}\Xi H^{-1}]_{kk}\bigr)^{1/2}} \;\equiv\;J^\star_\alpha .
\end{equation}
Numerator and denominator are built from the same jointly stable pair \eqref{eq:joint-stable}; the law of $J^\star_\alpha$ is continuous by Lemma~\ref{lem:atomless}(i), and by
Lemma~\ref{lem:atomless}(ii), which projects the point process onto the coordinate $k$, it is the scalar self-normalized stable limit of \citet{LoganMallowsRiceShepp1973} with tail balance $(c_{k,+},c_{k,-})$ and is not $\mathcal{N}(0,1)$ for any $\alpha\in(1,2)$. This proves part~(iv).

Combining Steps~1--5 completes the proof.
\end{proof}

\subsection{Proof of Theorem~\ref{thm:valid}} \label{app:valid-proof}


\begin{proof}[Proof of Theorem~\ref{thm:valid}]
The proof proceeds in three steps:
(1)~establish the limiting distribution of the full-sample
self-normalized $t$-statistic; (2)~establish the same limit for the
subsample statistic (Lemma~\ref{lem:subnet-inherit}); and (3)~obtain
coverage by the vertex subsampling consistency of
Lemma~\ref{lem:vertex-subsample}. Throughout, we focus without loss of
generality on the coordinate $k=1$ and write $\delta_0\equiv\beta_{0,1}$,
$\hat\delta\equiv\hat\beta_1$, with analogous notation
$\hat\delta^{(j)}$ for the $j$-th subsample estimator on the induced
pair set $\mathcal S_j$ of size $n_G$.

\emph{Step~1: Limiting distribution of $T_n$.}
The PPML linearization \eqref{eq:linearization-app} gives
$\hat\beta-\beta_0=(n\bar H_n^{\ast})^{-1}(\sum_\ell s_\ell+o_p(a_n))$ with
$\bar H_n^{\ast}\convp H$ by \eqref{eq:Hbar-conv}. Define the self-normalized statistic
\begin{equation}
  \mathrm{SN}_{n}
  \;=\;\frac{\iota_1'(nH_n)^{-1}\sum_{\ell=1}^n s_\ell}
            {\bigl(\iota_1'(nH_n)^{-1}\sum_{\ell=1}^n
              \hat s_\ell(\hat\beta)\hat s_\ell(\hat\beta)'(nH_n)^{-1}\iota_1\bigr)^{1/2}},
  \label{eq:SN-defn}
\end{equation}
where $\iota_1$ is the first standard basis vector of $\R^p$ and
$\hat s_\ell(\hat\beta)$ is the estimated score process of
Appendix~\ref{app:profiling} at the estimates. Write $T_n\equiv T_n(\delta_0)$
for the full-sample statistic \eqref{eq:tstat} at $k=1$; then
$T_n=\mathrm{SN}_n+o_p(1)$. To see this, note three facts. By
\eqref{eq:linearization-app} the numerator of $T_n$ equals
$\iota_1'(n\bar H_n^{\ast})^{-1}(\sum_\ell s_\ell+o_p(a_n))$. The two
statistics share the score covariance, and
$\bar H_n^{\ast},\hat H_n,H_n\to_pH$. The common studentizer is
$\Theta_p(a_n/n)$ with an almost surely positive rescaled limit. The
Hessian substitutions and the $o_p(a_n)$ remainder therefore move the
ratio by $o_p(1)$.

The numerator and denominator of \eqref{eq:SN-defn} are the score
sum and the score covariance, normalized by $a_n^{-1}$ and $a_n^{-2}$. By the
joint stable limit \eqref{eq:joint-stable}, established in Step~5 of
the proof of Theorem~\ref{thm:stable},
\begin{equation*}
  \biggl(a_n^{-1}\sum_{\ell=1}^n s_\ell,\;
         a_n^{-2}\sum_{\ell=1}^n s_\ell s_\ell'\biggr)
  \;\convd\;\bigl(S_\alpha(\Lambda),\;\Xi\bigr),
\end{equation*}
with $\Lambda=C_\alpha\bar A\,\Gamma$ and
$\Xi=S_{\alpha/2}(\Lambda_2)$, $\Lambda_2=C_{\alpha/2}\bar A\,\Gamma_2$
the limits of Steps~2 and~5 there, jointly because both coordinates
are functionals of the same point process.
By the continuous mapping theorem applied to the ratio in \eqref{eq:SN-defn} (which is well defined because the denominator is almost-surely strictly positive),
\begin{equation}
  \mathrm{SN}_{n}\;\convd\;
  \frac{\iota_1'H^{-1}S_\alpha(\Lambda)}
       {\bigl(\iota_1'H^{-1}\Xi
              H^{-1}\iota_1\bigr)^{1/2}}
  \;\equiv\;J^\star_\alpha.
  \label{eq:SN-J}
\end{equation}
The limit $J^\star_\alpha$ has a continuous distribution function by Lemma~\ref{lem:atomless}, with $\int_{\mathbb S^{p-1}}(\iota_1'H^{-1}\theta)^2\,\Gamma(d\theta)>0$ by
Assumption~\ref{ass:dgp}(c) and $H\succ0$.
In the finite-variance case of Remark~\ref{rem:alpha2}, it follows that $T_n\convd\mathcal N(0,1)$ directly, and the argument of Step~3 below applies with $J^\star_2=\mathcal N(0,1)$.

\emph{Step~2: Limit of the subsample statistic.}
Two subsample statistics must be distinguished: the infeasible
truth-centered statistic
$T^{(j)}_0\equiv(\hat\delta^{(j)}-\delta_0)/\widehat{\mathrm{se}}^{(j)}$,
the subsample analog of $\mathrm{SN}_n$ on the induced country
subnetwork $\mathcal S_j$, and the feasible statistic
$T^{(j)}\equiv(\hat\delta^{(j)}-\hat\delta)/\widehat{\mathrm{se}}^{(j)}$
computed by the procedure of Section~\ref{sec:remedy-procedure}.
Under the subnetwork operator-norm condition \eqref{eq:design-op-sub}, Lemma~\ref{lem:subnet-inherit} applies. The within-subnetwork tail spectral measure and Hessian converge to $\Gamma$ and $H$. The subnetwork profile-score and score covariance remainders are negligible. Repeating the $T_n=\mathrm{SN}_n+o_p(1)$ argument of Step~1, with $(n,N)$ replaced by $(n_G,G)$ and the comparisons supplied by Lemma~\ref{lem:subnet-inherit}(iii), gives
\[
  T^{(j)}_0\;=\;T_{\mathcal S_j}+o_p(1),
\]
with $T_{\mathcal S_j}$ the self-normalized score statistic of Lemma~\ref{lem:subnet-inherit}(iv); consequently $T^{(j)}_0\convd J^\star_\alpha$, the same limit as in Step~1, where the convergence is with respect to the joint law of the shocks and the draw.

\emph{Step~3: Subsampling consistency and coverage.}
The subsampling units are the $N$ countries, drawn uniformly without replacement, with effective sampling fraction $n_G/n=(G/N)^2(1+o(1))\to0$.
Apply Lemma~\ref{lem:vertex-subsample} with $Z_{\mathcal C}$ equal to the truth-centered statistic $T^{(j)}_0$; its joint convergence over shocks and draw is precisely the conclusion of Step~2, and it is measurable with respect to the shocks on the drawn subnetwork. The Monte Carlo subsampling distribution $\widehat F_{n,G,M}$ of $\{T^{(j)}_0\}$ therefore converges uniformly in probability to the distribution function $F_{J^\star_\alpha}$ of $J^\star_\alpha$.
The centering difference between $T^{(j)}_0$ (centered at $\delta_0$) and the feasible statistic $T^{(j)}$ (centered at $\hat\delta$) is asymptotically negligible: since $\hat\delta-\delta_0=O_p(a_n/n)$ and the subsample studentizer is of order $\widehat{\mathrm{se}}^{(j)}\asymp a_{n_G}/n_G$ in probability,
\[
  \frac{\hat\delta-\delta_0}{\widehat{\mathrm{se}}^{(j)}}
  =O_p\Bigl(\frac{a_n/n}{a_{n_G}/n_G}\Bigr)
  =O_p\bigl((n_G/n)^{1-1/\alpha}\bigr)
  =O_p\bigl((G/N)^{2(1-1/\alpha)}\bigr)=o_p(1)
\]
for $\alpha>1$, using $a_n/a_{n_G}=(n/n_G)^{1/\alpha}(1+o(1))$ under
Assumption~\ref{ass:dgp}(a). To transfer the recentering to the
distribution level, define the exceptional fraction
\[
  \pi_{n,G}(\varepsilon)
  \;:=\;\binom NG^{-1}\sum_{\mathcal C:|\mathcal C|=G}
  \1\Bigl\{\bigl|(\hat\delta-\delta_0)\big/
  \widehat{\mathrm{se}}_{\mathcal C}\bigr|>\varepsilon\Bigr\};
\]
its expectation is $\Pr_{\eta,\mathcal C}\bigl(|(\hat\delta-\delta_0)/ \widehat{\mathrm{se}}_{\mathcal C}|>\varepsilon\bigr)\to0$ by the display above, so $\pi_{n,G}(\varepsilon)\convp0$ by Markov's inequality. 
Let $F_{n,G}^c$ and $F_{n,G}^{c,\mathrm{feas}}$ denote the complete subset-average cdf of Lemma~\ref{lem:vertex-subsample} applied to $T_0^{(j)}$ and to $T^{(j)}$, respectively. 
Since the feasible and truth-centered statistics differ exactly by $(\hat\delta-\delta_0)/\widehat{\mathrm{se}}_{\mathcal C}$, the complete conditional cdfs satisfy, for every $t$ and
$\varepsilon>0$,
\[
  F^{c}_{n,G}(t-\varepsilon)-\pi_{n,G}(\varepsilon)
  \;\le\;F^{c,\mathrm{feas}}_{n,G}(t)
  \;\le\;F^{c}_{n,G}(t+\varepsilon)+\pi_{n,G}(\varepsilon),
\]
and letting $\varepsilon\downarrow0$ along the continuity of
$F_{J^\star_\alpha}$ gives
$\sup_t|F^{c,\mathrm{feas}}_{n,G}(t)-F_{J^\star_\alpha}(t)|\convp0$; the Monte Carlo step of Lemma~\ref{lem:vertex-subsample} applies to the feasible statistics, so the subsampling distribution of $\{T^{(j)}\}$ converges uniformly in probability to $F_{J^\star_\alpha}$.
Coverage now follows from three ingredients: the uniform convergence just established, $T_n\convd J^\star_\alpha$ from Step~1, and the continuity of $F_{J^\star_\alpha}$. The standard subsampling argument \citep[Theorem~2.1]{PolitisRomano1994} combines them. Hence $\Pr(\hat q_{\tau/2}\leq T_n\leq\hat q_{1-\tau/2})\to 1-\tau$ as $n\to\infty$, with $\hat q_{\tau/2},\hat q_{1-\tau/2}$ the empirical quantiles of $\{T^{(j)}\}$.
Translating back from $T_n$ to $\beta_{0,1}$ establishes the coverage statement of the theorem.
\end{proof}

\subsection{Firm-Level Primitives for Assumption~\ref{ass:dgp}(a)}\label{app:primitive-a}

Assumption~\ref{ass:dgp}(a) is a reduced-form condition on the pair-level shock, and this appendix records firm-level primitives under which it holds.
In the \citet{Melitz2003} member of the \citet{arkolakis&costinot&rodriguezclare2012} class, firm sales in a destination have a Pareto upper tail with index $\alpha=\vartheta/(\sigma-1)$, where $\vartheta$ is the productivity exponent and $\sigma$ the elasticity of substitution.
Market entry left-truncates the productivity draw at a pair-specific cutoff, and left truncation preserves a Pareto index, so the entry margins of \citet{chaney2008distorted} and \citet{arkolakis2010market} move the scale of pair $ij$'s sales distribution and the number of entrants, both of which enter $\mu_{ij}$, and leave $\alpha$ unchanged; the same holds when the productivity body is lognormal with a Pareto upper tail \citep{nigai2017tale}, since the cutoff then falls in the body.
The integrability condition $\vartheta>\sigma-1$ of that literature is exactly $\alpha>1$, and $\alpha<2$ is $\vartheta<2(\sigma-1)$, so $\vartheta=5$ and $\sigma=5$ give $\alpha=1.25$.

Aggregation to the country-pair level preserves the index.
Regular variation is closed under convolution (\citealp{Feller1971}, Section~VIII.8; \citealp{EmbrechtsGoldie1980}), so if pair $ij$ is served by $\mathcal M_{ij}$ exporters with independent sales in the domain of attraction of $\alpha$, then
\begin{equation}
  \Pr(\varepsilon_{ij}-1>t)\;=\;c\,\mathcal M_{ij}^{\,1-\alpha}\,t^{-\alpha}\,(1+o(1)),
  \label{eq:granular-tail}
\end{equation}
with $c$ free of $\mathcal M_{ij}$: aggregation rescales the shock by $\mathcal M_{ij}^{1/\alpha-1}$ without lightening its tail, and Assumption~\ref{ass:dgp}(a) takes that scale common across pairs.
Because $\alpha>1$, a continuum of firms with an exact law of large numbers would give $\varepsilon_{ij}\equiv1$, so the pair-level shock is a finite-$\mathcal M_{ij}$ phenomenon in the sense of \citet{Gabaix2011}, with $\mathcal M_{ij}$ bounded uniformly in $N$ by Assumption~\ref{ass:dgp}(b).
The attenuation is slow at the estimated indices, $\mathcal M_{ij}^{1/\alpha-1}$ equalling $0.59$ at $\alpha=1.06$ and $\mathcal M_{ij}=10^{4}$, and firm-level estimates place $\alpha$ near one \citep{Axtell2001,diGiovanniLevchenkoRanciere2011}.

\subsection{Primitive Conditions for Assumption~\ref{ass:dgp}(d)}
\label{app:primitive-d}

Assumption~\ref{ass:dgp}(d) is stated at a high level. 
This subsection records one further primitive design condition, combines it with the Gram conditioning \eqref{eq:dense-design} of Assumption~\ref{ass:dgp}(b) and the
fixed-effect rates of Appendix~\ref{app:lemmas}, and shows in Proposition~\ref{prop:primitive-d} that together they imply~(d) for every $\alpha\in(1,2)$.

\medskip\noindent\emph{Non-spiked residualized design.} Let
$A_n^{(k)}(\beta)$ be the $N\times N$ residualized weighted design with off-diagonal entries $\mu_{ij}(\beta)\tilde x_{ij}^{(k)}(\beta)$ and zero diagonal, the matrix through which regressor $k$ enters the fixed-effect remainder $u'A_n^{(k)}w$; write $A_n^{(k)}=A_n^{(k)}(\beta_0)$. For some
$\gamma<1$,
\begin{equation}
  \sup_{\beta\in\mathcal B}\max_{1\le k\le p}\bigl\|A_n^{(k)}(\beta)\bigr\|_{\mathrm{op}}=O\bigl(N^{\gamma}\bigr),
  \label{eq:design-op}
\end{equation}
with $\|\cdot\|_{\mathrm{op}}$ the spectral norm; the generic non-spiked rate is $\gamma=1/2$.

The operator-norm condition \eqref{eq:design-op} isolates the only component of the profiling remainder that is bilinear in the two fixed-effect error vectors. 
By the exact identity \eqref{eq:exact-Rn}, the $k$-th component of the fixed-effect remainder is the single bilinear form $R_n^{(k)}=u'A_n^{(k)}w$, so the fixed-effect errors $u$ and $w$ are controlled by the $\ell_q$ rates of Lemmas~\ref{lem:fe-rate} and~\ref{lem:fe-lp} and the design enters only through the operator norms of $A_n^{(k)}$.
The resulting bound holds for arbitrarily large $\max_\ell|d_\ell'\nu_n|$, which under Assumption~\ref{ass:dgp} is $O_p(\log N)$.
Note that this condition constrains the residualized covariates $\tilde x^{(k)}$, not the weights: even under $\mu_{ij}\equiv1$, a spiked covariate ($\tilde x_{ij}=v_iv_j$-type variation with $\sum_iv_i=0$) has $\|A_n^{(k)}\|_{\mathrm{op}}\asymp N$.
Conditions \eqref{eq:design-op} and \eqref{eq:design-op-sub} are therefore separate non-spiking restrictions on the residualized covariates, on the full network and on induced subnetworks respectively.

Relative to the finite-variance literature, the controlling mechanism differs.
\citet{FernandezVal2016} and \citet{WeidnerZylkin2021} isolate a deterministic $O(N^{-1})$ incidental-parameter bias through expectations of second-order remainder terms.
Under infinite variance those expectation-based expansions are generally unavailable, and the remainder is instead controlled probabilistically through the interpolation bound \eqref{eq:score-interp}: for every $\epsilon>0$ it is a stochastic $O_p\bigl(N^{2\gamma(1-1/\alpha)+4/\alpha-2+\epsilon}\bigr)$ term, so that relative to the score scale $a_n\asymp N^{2/\alpha}$,
\[
  R_n/a_n=O_p\bigl(N^{-2(1-\gamma)(1-1/\alpha)+\epsilon}\bigr),
\]
which vanishes throughout $\alpha\in(1,2)$ whenever $\gamma<1$. 
As $\alpha\uparrow2$ the relative remainder is $O_p(N^{-(1-\gamma)+\epsilon})$, recovering the $O(N^{-1})$ incidental-parameter rate when $\gamma=0$.

The remainder bounds themselves, Lemmas~\ref{lem:remainder} and~\ref{lem:general-beta}, are stated and proved in Appendix~\ref{app:lemmas}; the following proposition collects them.

\begin{proposition}[Primitive conditions for Assumption~\ref{ass:dgp}(d)] \label{prop:primitive-d}
Under Assumption~\ref{ass:dgp}(a),(b) and condition \eqref{eq:design-op} with $\gamma<1$, Assumption~\ref{ass:dgp}(d) holds in full for every $\alpha\in(1,2)$: the truth-profiled score covariance
condition in \eqref{eq:hl-remainder}, the profile-score condition $\sup_{\beta\in\mathcal B}\|R_n(\beta)\|=o_p(a_n)$, and the local-uniform profile-Hessian condition $\sup_{\|\beta-\beta_0\|\le\delta_n}\|H_n^{\ast}(\beta)-H\|\convp0$ for every $\delta_n\downarrow0$.
\end{proposition}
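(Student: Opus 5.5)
The plan is to verify the three parts of \eqref{eq:hl-remainder} in turn, starting from the profile-score condition. Everything rests on the exact algebraic representation of the remainder at the truth. Write $\hat\mu_\ell(\beta_0)=\mu_\ell(1+u_i)(1+w_j)$ with the multiplicative fixed-effect errors \eqref{eq:mult-fe}. Then expand $S_n^{\mathrm{profile}}(\beta_0)=\sum_\ell(y_\ell-\hat\mu_\ell(\beta_0))x_\ell$ and use the orthogonality \eqref{eq:orth} to annihilate every term that is additive in $(i,j)$. The terms linear in $u$ or in $w$ are additive, and so is the replacement of $x_\ell$ by $\tilde x_\ell$ in the $\eta$-part. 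The only surviving term is the bilinear form $R_n^{(k)}=u'A_n^{(k)}w$, which is the identity \eqref{eq:exact-Rn}. This gives
\[
|R_n^{(k)}|\le\|u\|_2\,\|A_n^{(k)}\|_{\mathrm{op}}\,\|w\|_2 .
\]
For the vectors I invoke the fixed-effect rates of Lemma~\ref{lem:fe-rate} and Lemma~\ref{lem:fe-lp}. Each $u_i$ is essentially a row average $S_i/m_i$ of $N-1$ i.i.d.\ heavy-tailed shocks, so $\|u\|_2^2=O_p(N^{4/\alpha-3})$. Together with \eqref{eq:design-op}, the Cauchy--Schwarz bound gives $R_n=O_p(N^{\gamma+4/\alpha-3})$. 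For the sharper rate quoted above, I would interpolate between $\ell_q$ norms of $u$ and $w$, which is the bound \eqref{eq:score-interp}, using the $\ell_q\to\ell_{q'}$ operator norms of $A_n^{(k)}$ bounded by interpolation of its $O(N)$ row-sum norm with its $O(N^\gamma)$ spectral norm. Relative to $a_n\asymp N^{2/\alpha}$ this yields $N^{-2(1-\gamma)(1-1/\alpha)+\epsilon}\to0$.

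To pass from $\beta_0$ to $\sup_{\beta\in\mathcal B}$ (Lemma~\ref{lem:general-beta}), I would repeat the same identity at the pseudo-true weights $\mu_\ell(\beta)$. Here $y_\ell-\mu_\ell(\beta)$ has a deterministic part $\mu^0_\ell-\mu_\ell(\beta)$ plus the shock part. The deterministic part gives fixed-effect errors that are $o(1)$ uniformly by Corollary~\ref{cor:uniform-beta}. Uniformity in $\beta$ then follows from the Lipschitz bounds of Lemma~\ref{lem:smooth-profile} and a finite net, as in Step~3 of the proof of Theorem~\ref{thm:stable}.

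For the score-covariance condition, I would expand
\[
\hat s_\ell(\beta_0)=\mu_\ell(\eta_\ell-u_i-w_j-u_iw_j)\,\hat{\tilde x}_\ell(\beta_0).
\]
The difference from $s_\ell$ splits into two pieces. The first is the change $\hat{\tilde x}_\ell(\beta_0)-\tilde x_\ell$ in the residualizer, which is uniformly $O_p(\max|u|+\max|w|)=o_p(1)$ by Lemma~\ref{lem:linf-fit}. The second is the additive error $\zeta_\ell$. The cross terms are bounded by Cauchy--Schwarz against $\sum_\ell\eta_\ell^2=O_p(a_n^2)$ (Lemma~\ref{lem:rowsum}(iv)) and $\sum_\ell\zeta_\ell^2=o_p(a_n^2)$, both as already used in Step~5. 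After multiplying by $n/a_n^2$, the difference is $o_p(1)$.

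For the Hessian condition, I would write
\[
H_n^\ast(\beta)-H=\bigl(H_n^\ast(\beta)-H_n^{\mathrm{hyb}}(\beta)\bigr)+\bigl(H_n^{\mathrm{hyb}}(\beta)-H^{\mathrm{pt}}_n(\beta)\bigr)+\bigl(H^{\mathrm{pt}}_n(\beta)-H_n\bigr)+(H_n-H).
\]
The first two pieces are controlled by $\sup_\beta\max_\ell|d_\ell'(\hat\phi(\beta)-\bar\phi(\beta))|=o_p(1)$ and the residual perturbation bounds. The third is controlled by continuity of the pseudo-true profile at $\beta_0$. The last is \eqref{eq:Hn-limit}.

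I expect the main obstacle to be the profile-score remainder for $\gamma$ close to $1$. The crude $\ell_2$ Cauchy--Schwarz bound gives $N^{\gamma+4/\alpha-3}/N^{2/\alpha}=N^{\gamma-3+2/\alpha}$, and this already vanishes for all $\alpha>1$, $\gamma<1$. So I would first try the crude bound. I would fall back on the $\ell_q$ interpolation only if the $\ell_2$ rate for $u$ in Lemma~\ref{lem:fe-rate} carries extra logarithmic or level terms from the Sinkhorn representative $t_\nu$. The real difficulty is making the bound hold uniformly over $\beta\in\mathcal B$ when the weights are pseudo-true rather than true.
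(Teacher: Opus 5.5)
Your architecture matches the paper's: the bilinear identity \eqref{eq:exact-Rn}, the interpolation bound \eqref{eq:score-interp}, and the same four-term Hessian split. Your Cauchy--Schwarz treatment of the $\zeta_\ell$-part of the score covariance is also fine and reproduces the paper's rates $O_p(N^{-1/2}+N^{2/\alpha-2})$. There is nevertheless a genuine gap: every quantitative step controls the fixed-effect errors in a norm in which they are not small.

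The largest of the $n$ pair shocks is of order $a_n\asymp N^{2/\alpha}$. Its exporter and importer therefore have $e^{\xi_i},e^{\omega_j}$ of order $N^{2/\alpha-1}\to\infty$, so the bound of Lemma~\ref{lem:margins}(iii) is attained. Three of your steps break on this.
\begin{enumerate}
\item[(a)] $\|u\|_2^2$ is of order $N^{4/\alpha-2}$ (Lemma~\ref{lem:fe-rate}; that single row already contributes this much), not $N^{4/\alpha-3}$. Your crude bound therefore gives $R_n/a_n=O_p(N^{\gamma+2/\alpha-2})$, which vanishes only when $\gamma<2-2/\alpha$. For the generic $\gamma=1/2$ it fails for every $\alpha\le 4/3$, and for any $\gamma>0$ it fails for $\alpha$ near $1$.
\item[(b)] $\max_i|u_i|+\max_j|w_j|$ diverges, so your bound $O_p(\max|u|+\max|w|)$ for $\hat{\tilde x}_\ell(\beta_0)-\tilde x_\ell$ is not $o_p(1)$. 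Lemma~\ref{lem:linf-fit} by itself only gives $O(1)$ for this difference.
\item[(c)] $\sup_\beta\max_\ell|d_\ell'(\hat\phi(\beta)-\bar\phi(\beta))|$ is of order $\log N$ and diverges, so it cannot control the first two Hessian terms.
\end{enumerate}
A minor point: replacing $x_\ell$ by $\tilde x_\ell$ uses the fixed-effect first-order condition applied to the whole factor $y_\ell-\hat\mu_\ell$, not \eqref{eq:orth} applied to the $\eta$-part.

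The missing idea is that the fixed-effect error vectors are spiky. They must be measured in $\ell_1$ or in $\ell_q$ with $q<\alpha$ (Lemma~\ref{lem:fe-lp}: $\|u\|_q^q=O_p(N^{2-q})$, $\|u\|_1=O_p(N^{1/q})$), and the design may act on them only through its bounded entries. Here is how each part is repaired.
\begin{itemize}
\item \emph{Score remainder.} The interpolation endpoint must be $\|A_n^{(k)}\|_{1\to\infty}=\max_{ij}|\mu_{ij}\tilde x^{(k)}_{ij}|\le C$, not the $O(N)$ row-sum norm. Since $\|u\|_q\ge\|u\|_2$ for $q<2$, the whole gain over Cauchy--Schwarz is the factor $N^{-\gamma\vartheta}$ in $\|A_n^{(k)}\|_{q\to q'}=O(N^{\gamma(1-\vartheta)})$. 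An $O(N)$ endpoint adds a factor $N^{\vartheta}$, and the exponent then exceeds $2/\alpha$ for $\alpha$ near $1$.
\item \emph{Residualizer.} The forcing terms are centred by $\mu$-orthogonality, $|F_i|\le (C/N)\|w\|_1$, which gives $O_p(N^{1/q-1})$ (Lemma~\ref{lem:fitted-residual}). Even the $\ell_2$ rate would give only $O_p(N^{2/\alpha-3/2})$.
\item \emph{Hessian.} The fitted-weight term is bounded by $Cn^{-1}[(N-1)(\|u\|_1+\|w\|_1)+\|u\|_1\|w\|_1]=O_p(N^{1/q-1}+N^{2/q-2})$.
\item \emph{Uniformity in $\beta$.} The finite-net argument of Step~3 does not transfer. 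The available Lipschitz bound for $\beta\mapsto R_n(\beta)$ is of order $\sum_\ell\hat\mu_\ell(\beta)=\sum_\ell y_\ell\asymp n\gg a_n$. Uniformity comes instead from $S_{\phi,n}(\beta,\bar\phi(\beta))=S^0_\phi$ holding for every $\beta$. The deterministic part $\mu^0_\ell-\mu_\ell(\beta)$ cancels exactly and never enters $R_n(\beta)=\sum_\ell\mu_\ell(\beta)u_i(\beta)w_j(\beta)\tilde x_\ell(\beta)$. Hence the $\ell_q$ rates hold uniformly (Corollary~\ref{cor:uniform-beta}), and \eqref{eq:score-interp} applies uniformly under the sup-form of \eqref{eq:design-op}.
\end{itemize}
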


\begin{proof}
\emph{Score covariance.} The score covariance condition is Lemma~\ref{lem:remainder}(c), proved under \eqref{eq:dense-design} for every $\alpha\in(1,2)$.

\emph{Score.} The profile-score condition at $\beta_0$ is Lemma~\ref{lem:remainder}(d) and, uniformly in $\beta$, Lemma~\ref{lem:general-beta}; both follow from \eqref{eq:design-op} with $\gamma<1$ and the conditioning~\eqref{eq:dense-design}, for every $\alpha\in(1,2)$.

\emph{Hessian.} Fix $\delta_n\downarrow0$. For $\beta\in\mathcal B$ write $H^{\mathrm{hyb}}_n(\beta)=n^{-1}\sum_\ell\hat\mu_\ell(\beta)\tilde x_\ell(\beta) \tilde x_\ell(\beta)'$ (fitted weights, pseudo-true residuals) and $H^{\mathrm{pt}}_n(\beta)=n^{-1}\sum_\ell\mu_\ell(\beta)\tilde x_\ell(\beta) \tilde x_\ell(\beta)'$ (pseudo-true weights and residuals), and decompose
\[
  H_n^{\ast}(\beta)-H =\bigl(H_n^{\ast}(\beta)-H^{\mathrm{hyb}}_n(\beta)\bigr)
   +\bigl(H^{\mathrm{hyb}}_n(\beta)-H^{\mathrm{pt}}_n(\beta)\bigr)
   +\bigl(H^{\mathrm{pt}}_n(\beta)-H^{\mathrm{pt}}_n(\beta_0)\bigr)
   +\bigl(H^{\mathrm{pt}}_n(\beta_0)-H\bigr).
\]
The first term is the fitted-residualization error: $\sup_{\beta\in\mathcal B}\|H_n^{\ast}(\beta)-H^{\mathrm{hyb}}_n(\beta)\| =O_p(N^{1/q-1})$ for any fixed $q\in(1,\alpha)$, by
Lemma~\ref{lem:fitted-residual}. 
The second is the fitted-weight error: with $u_i(\beta),w_j(\beta)$ the multiplicative fixed-effect errors of Corollary~\ref{cor:uniform-beta},
\[
  H^{\mathrm{hyb}}_n(\beta)-H^{\mathrm{pt}}_n(\beta) =\frac1n\sum_{i\neq j}\mu_{ij}(\beta) \bigl(u_i(\beta)+w_j(\beta)+u_i(\beta)w_j(\beta)\bigr)\tilde x_{ij}(\beta)\tilde x_{ij}(\beta)',
\]
so, by $\|\tilde x_{ij}\tilde x_{ij}'\|\le C$ and the uniform $\ell_1$ rates $\sup_{\beta\in\mathcal B}(\|u(\beta)\|_1+\|w(\beta)\|_1)=O_p(N^{1/q})$ of Corollary~\ref{cor:uniform-beta},
\[
  \sup_{\beta\in\mathcal B}\bigl\|H^{\mathrm{hyb}}_n(\beta)-H^{\mathrm{pt}}_n(\beta)\bigr\|=O_p\bigl(N^{1/q-1}+N^{2/q-2}\bigr)=o_p(1).
\]
The third term is bounded by $C\delta_n$ uniformly over $\|\beta-\beta_0\|\le\delta_n$, by the uniform Lipschitz continuity of the bounded pseudo-true map $\beta\mapsto(\mu_\ell(\beta),\tilde x_\ell(\beta))$ of Lemma~\ref{lem:smooth-profile}; the fourth is $o(1)$ by Assumption~\ref{ass:dgp}(b), since $H^{\mathrm{pt}}_n(\beta_0)=H_n\to H$. Taking $\sup_{\|\beta-\beta_0\|\le\delta_n}$ across the decomposition gives the local-uniform Hessian condition, for every $\alpha\in(1,2)$.
\end{proof}

\subsection{Technical Lemmas}\label{app:lemmas}

This subsection collects the technical lemmas, in an order such that each proof invokes only results stated earlier.
Lemmas~\ref{lem:breiman}--\ref{lem:smooth-profile} and Corollary~\ref{cor:uniform-beta} are established under Assumption~\ref{ass:dgp}(a) and~(b) (Lemmas~\ref{lem:linf-fit}
and~\ref{lem:psi-facts} are deterministic); 
Lemmas~\ref{lem:remainder} and~\ref{lem:general-beta} additionally impose the design condition \eqref{eq:design-op} of Appendix~\ref{app:primitive-d};
Lemmas~\ref{lem:subnet-scaling}--\ref{lem:vertex-subsample} concern the induced subnetwork of Theorem~\ref{thm:valid}
(Lemma~\ref{lem:subnet-scaling} is deterministic; the rest concern the uniform draw).

We first record the multivariate form of Breiman's lemma, which extends the univariate result of \citet{Breiman1965} to vector-valued products of a heavy-tailed scalar and a light-tailed multiplier.

\begin{lemma}[Multivariate Breiman lemma]\label{lem:breiman}
Let $Z$ be a real-valued random variable with regularly varying right
tail, $\Pr(Z>t)=c\,t^{-\alpha}\mathcal L(t)$ for some $\alpha>0$, $c>0$,
and slowly varying $\mathcal L$, and with negligible left tail in the
sense that $\Pr(Z<-t)=o(\Pr(Z>t))$ as $t\to\infty$. Let
$\varkappa\in\R^p$ be a deterministic vector with
$0<\|\varkappa\|\leq C_\varkappa<\infty$. Then
\begin{equation}
  \frac{\Pr(\|Z\varkappa\|>t)}{\Pr(|Z|>t)}\;\longrightarrow\;\|\varkappa\|^\alpha
  \qquad\text{as }t\to\infty,
  \label{eq:breiman}
\end{equation}
and, more completely, $Z\varkappa$ is multivariate regularly varying
with limit measure concentrated on the ray through
$\varkappa/\|\varkappa\|$:
\begin{equation*}
  \lim_{t\to\infty}\frac{\Pr(Z\varkappa\in tA)}{\Pr(|Z|>t)}
  \;=\;\nu_\varkappa(A)
  \;:=\;\|\varkappa\|^\alpha
  \int_0^\infty\1\bigl\{r\,\varkappa/\|\varkappa\|\in A\bigr\}\,
  \alpha\,r^{-\alpha-1}\,dr
\end{equation*}
for every Borel set $A\subset\R^p$ bounded away from the origin with
$\nu_\varkappa(\partial A)=0$; in polar coordinates,
$\nu_\varkappa(dr,d\theta)
=\|\varkappa\|^\alpha\,\alpha r^{-\alpha-1}dr\,
\delta_{\varkappa/\|\varkappa\|}(d\theta)$. If both tails of $Z$ are regularly varying of
index $\alpha$ with constants $c_+$ and $c_-$, the limit measure is the
corresponding mixture supported on the two rays
$\pm\varkappa/\|\varkappa\|$; the one-sided statement is the case
$c_-=0$.
\end{lemma}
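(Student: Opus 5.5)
The proof reduces everything to the scalar tail of $Z$ by exploiting that $Z\varkappa$ lives on the line through $\varkappa$. Write $\theta_\varkappa=\varkappa/\|\varkappa\|$ and $\rho=\|\varkappa\|>0$. Then $\|Z\varkappa\|=\rho|Z|$, so
\[
\Pr(\|Z\varkappa\|>t)=\Pr(|Z|>t/\rho)=\Pr(Z>t/\rho)+\Pr(Z<-t/\rho).
\]
The first display \eqref{eq:breiman} then follows from two facts. First, $\Pr(Z<-s)=o(\Pr(Z>s))$, so $\Pr(|Z|>s)\sim\Pr(Z>s)$. Second, regular variation of the right tail gives
\[
\frac{\Pr(Z>t/\rho)}{\Pr(Z>t)}=\rho^{\alpha}\,\frac{\mathcal L(t/\rho)}{\mathcal L(t)}\to\rho^\alpha,
\]
because $\mathcal L$ is slowly varying and $\rho$ is a fixed positive constant. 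The bound $C_\varkappa$ is not needed for a single vector. It matters only if one wants uniformity over $\rho\in[\rho_0,C_\varkappa]$, which Karamata's uniform convergence theorem supplies on compacts, as used in Step~2 of the proof of Theorem~\ref{thm:stable}.

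For the full vague-convergence statement, I would use polar coordinates. Take a Borel set $A$ bounded away from $0$ with $\nu_\varkappa(\partial A)=0$. Since $Z\varkappa\in tA$ iff $Z\theta_\varkappa\in (t/\rho)A$, split on the sign of $Z$.
\begin{itemize}
\item On $\{Z<0\}$ the probability is at most $\Pr(Z<-\delta t/\rho)$, where $\delta=\inf_{x\in A}\|x\|>0$. This is $o(\Pr(|Z|>t))$ by the negligible left tail together with regular variation of the right tail at the comparable level $\delta t/\rho$.
\item On $\{Z>0\}$, set $B=\{r>0:r\theta_\varkappa\in A\}\subset[\delta,\infty)$. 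The probability is $\Pr\big(Z/(t/\rho)\in B\big)$.
\end{itemize}
The univariate regular variation of $Z^+$ gives
\[
\frac{\Pr(Z\in sB)}{\Pr(Z>s)}\to\int_B\alpha r^{-\alpha-1}dr
\]
for every Borel $B$ bounded away from $0$ whose boundary has Lebesgue measure zero. Apply this with $s=t/\rho$ and multiply by $\Pr(Z>t/\rho)/\Pr(|Z|>t)\to\rho^\alpha$ from the first part. This yields $\nu_\varkappa(A)$.

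The boundary condition transfers correctly: the map $r\mapsto r\theta_\varkappa$ is a homeomorphism onto the ray, so $\partial B$ is contained in the preimage of $\partial A$. The hypothesis $\nu_\varkappa(\partial A)=0$ is therefore exactly Lebesgue-null measure of $\partial B$ against $\alpha r^{-\alpha-1}dr$.

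The main technical point is the univariate step, namely the passage from tail-function convergence on half-lines $(x,\infty)$ to convergence on general continuity sets $B$. I would handle it in three moves:
\begin{enumerate}
\item Convergence holds on finite intervals $(a,b]$ by differencing the half-line limits.
\item It extends to finite unions of intervals.
\item It extends to general $\nu$-continuity sets in $[\delta,\infty)$ by the portmanteau theorem. Normalize by $\Pr(Z>\delta s)$ to obtain probability measures on $[\delta,\infty)$ that converge weakly to the normalized Pareto law.
\end{enumerate}
This is the standard equivalence in \citet[Ch.~3]{Resnick2007}, which I would cite.

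For the two-sided extension, repeat the argument on $\{Z<0\}$ with the set $B^-=\{r>0:-r\theta_\varkappa\in A\}$ and constant $c_-$. The limit is
\[
\rho^\alpha\left[\tfrac{c_+}{c_++c_-}\int_B+\tfrac{c_-}{c_++c_-}\int_{B^-}\right]\alpha r^{-\alpha-1}dr,
\]
which is the stated two-ray mixture. The one-sided statement is the case $c_-=0$.
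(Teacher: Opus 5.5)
Your proposal is correct and is essentially the paper's proof: write $\Pr(\|Z\varkappa\|>t)=\Pr(|Z|>t/\|\varkappa\|)$ and use regular variation of $|Z|$. Then pull $A$ back along the ray to a set on the line bounded away from zero, split on the sign of $Z$, and apply univariate regular variation on continuity sets; the rescaling by $t/\|\varkappa\|$ recovers $\nu_\varkappa(A)$. You give more detail than the paper (the portmanteau step and the boundary transfer). One small correction: the inclusion $\partial B\subseteq\{r>0: r\varkappa/\|\varkappa\|\in\partial A\}$ yields only the implication that $\nu_\varkappa(\partial A)=0$ forces $\partial B$ to carry no Pareto mass, not the exact equivalence you state, but the implication is all the argument needs.
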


\begin{proof}
For the norm tail,
$\Pr(\|Z\varkappa\|>t)=\Pr(|Z|>t/\|\varkappa\|)$, and regular variation of the tail of $|Z|$ (the negative tail being negligible) gives $\Pr(|Z|>t/\|\varkappa\|)/\Pr(|Z|>t)\to\|\varkappa\|^\alpha$, which is \eqref{eq:breiman}. For the measure statement, $Z\varkappa\in tA$ if and only if $Z\in tB$ with $B:=\{s\in\R:s\varkappa\in A\}$; since $A$ is bounded away from the origin, $B$ is bounded away from zero, and univariate regular variation with negligible left tail gives $\Pr(Z\in tB)/\Pr(|Z|>t)\to\int_{B\cap(0,\infty)}\alpha s^{-\alpha-1}ds$ whenever the limit assigns no mass to $\partial B$. 
The substitution $s=r/\|\varkappa\|$ identifies this limit with $\nu_\varkappa(A)$. 
The two-sided version follows by splitting on the sign of $Z$. 
See \citet[Section~7.3]{Resnick1987} for the general form. 
In our application (Assumption~\ref{ass:dgp}(a)), $Z=\eta_\ell\geq -1$, so the left tail is bounded and the negligibility condition holds trivially.
\end{proof}

The following $\ell_\infty$ bound for weighted two-way fits is used repeatedly: for the smoothness of the population and sample profiling maps, for the score covariance bridge in Step~5 of the proof of Theorem~\ref{thm:stable}, and for the subnetwork comparison of Lemma~\ref{lem:subnet-residual}. 
The constant depends only on the ratio bounds of the kernel $r_{ij}$ and not on the multiplicative factors $p_i,q_j$, which may be unbounded.

\begin{lemma}[$\ell_\infty$ bound for weighted two-way fits]\label{lem:linf-fit}
Let $\{\omega_{ij}:i\neq j\}$ be positive weights admitting a factorization $\omega_{ij}=p_iq_jr_{ij}$ with $p_i,q_j>0$ and $r_{ij}\in[c_r,C_r]$, $0<c_r\le C_r<\infty$, and suppose
$\max_ip_i/\sum_{i'}p_{i'}\le\epsilon_N$ and $\max_jq_j/\sum_{j'}q_{j'}\le\epsilon_N$ with $\epsilon_N\le\epsilon_\ast:=c_r/(8C_r)$.
Let $g=(g_{ij})$ satisfy $|g_{ij}|\le C_g$, and let $(a,b)$ solve the $\omega$-weighted two-way normal equations
\[
  a_i=\frac{\sum_{j\neq i}\omega_{ij}(g_{ij}-b_j)}{\sum_{j\neq i}\omega_{ij}},
  \qquad
  b_j=\frac{\sum_{i\neq j}\omega_{ij}(g_{ij}-a_i)}{\sum_{i\neq j}\omega_{ij}},
\]
normalized by $b_N=0$. Then
\[
  \max_i|a_i|+\max_j|b_j|\;\le\;K_\ast\,C_g,
\]
where $K_\ast=K_\ast(c_r/C_r)<\infty$ does not depend on $N$, $g$, or the factors $(p_i,q_j)$.
\end{lemma}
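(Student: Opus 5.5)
The plan is to eliminate $b$ and obtain a fixed-point equation for $a$ driven by a Markov kernel on the exporter index. I would then control that equation with a Doeblin (Dobrushin) contraction argument that sees only the ratio $\kappa:=c_r/C_r$ and not the factors $(p_i,q_j)$.

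\emph{Step 1 (averaging form).} Write $\pi_{ij}:=\omega_{ij}/\sum_{j'\neq i}\omega_{ij'}$ and $\rho_{ij}:=\omega_{ij}/\sum_{i'\neq j}\omega_{i'j}$. The normal equations then become
\[
a_i=\bar g_i-\sum_{j\neq i}\pi_{ij}b_j,\qquad b_j=\tilde g_j-\sum_{i\neq j}\rho_{ij}a_i,
\]
with $|\bar g_i|,|\tilde g_j|\le C_g$. In $\pi_{ij}$ the factor $p_i$ cancels, leaving $\pi_{ij}=q_jr_{ij}/\sum_{j'\neq i}q_{j'}r_{ij'}$. Symmetrically, $\rho_{ij}=p_ir_{ij}/\sum_{i'\neq j}p_{i'}r_{i'j}$. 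Substituting the second relation into the first gives
\[
a=c+Ta,\qquad c_i:=\bar g_i-\sum_j\pi_{ij}\tilde g_j,\qquad T_{ii'}:=\sum_{j\neq i,i'}\pi_{ij}\rho_{i'j}.
\]
Here $T$ is row-stochastic, because $\pi_{i\cdot}$ and $\rho_{\cdot j}$ are probability vectors, and $|c_i|\le 2C_g$. The solution is determined only up to the shift direction, so I work with the oscillation seminorm $\osc$ and impose the normalization $b_N=0$ only at the end.

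\emph{Step 2 (Doeblin minorization; the main step).} I will show that $T_{ii'}\ge \tfrac12\kappa^2\,\nu_{i'}$ with $\nu_{i'}:=p_{i'}/\sum_h p_h$. The ratio bounds give $\pi_{ij}\ge \kappa\,q_j/\sum_{j'}q_{j'}$ and $\rho_{i'j}\ge\kappa\,p_{i'}/\sum_h p_h$. It follows that
\[
T_{ii'}\ge\kappa^2\nu_{i'}\sum_{j\neq i,i'}\frac{q_j}{\sum_{j'}q_{j'}}\ge\kappa^2(1-2\epsilon_N)\nu_{i'}.
\]
This is the delicate point. The excluded diagonal indices $j\in\{i,i'\}$ remove at most $2\epsilon_N$ of the $q$-mass, and the hypothesis $\epsilon_N\le\kappa/8$ keeps $1-2\epsilon_N\ge\tfrac12$ (using $\kappa\le1$). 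The same smallness handles the denominators that omit a single index. The total mass of $\nu$ restricted to $i'\neq i$ is at least $1-\epsilon_N$. Hence the Dobrushin coefficient satisfies $\delta(T)\le 1-\tfrac14\kappa^2=:1-\lambda$, and therefore $\osc(Tv)\le(1-\lambda)\osc(v)$.

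\emph{Step 3 (from oscillation to sup-norm bounds).} From $a=c+Ta$ we get $\osc(a)\le\osc(c)+(1-\lambda)\osc(a)$, so $\osc(a)\le 4C_g/\lambda$. Since $\rho_{\cdot j}$ averages $a$, the relation $b_j=\tilde g_j-\sum_i\rho_{ij}a_i$ gives $\osc(b)\le 2C_g+\osc(a)$. The normalization $b_N=0$ then yields $\max_j|b_j|\le\osc(b)$. Finally, $a_i=\bar g_i-\sum_j\pi_{ij}b_j$ gives $\max_i|a_i|\le C_g+\max_j|b_j|$. Collecting these bounds gives $\max_i|a_i|+\max_j|b_j|\le K_\ast C_g$ with $K_\ast=3+16\kappa^{-2}$ up to the bookkeeping constants. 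This constant depends only on $c_r/C_r$ and not on $N$, $g$, or $(p_i,q_j)$.

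The obstacle I expect is Step 2, specifically making the minorization uniform despite the missing diagonal cells and unbounded $p_i,q_j$. The factorization is exactly what makes $p_i$ cancel from $\pi_{i\cdot}$, and the $\epsilon_N$ condition is what prevents any single country from carrying the mass removed by the $j\neq i$ restriction.
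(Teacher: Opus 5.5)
Your proof is correct and is essentially the paper's argument. You eliminate one block to obtain a row-stochastic averaging fixed point, minorize the kernel by a probability vector built from the factorization, and conclude by Dobrushin contraction of the oscillation followed by the normalization $b_N=0$. The only difference in route is that you eliminate $b$ and minorize by $p_{i'}/\sum_h p_h$, whereas the paper eliminates $a$ and minorizes by $q_l/\sum_{l'}q_{l'}$.

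The other differences are in the constants. Minorizing both factors gives you contraction $(c_r/C_r)^2/4$. Carrying your own bounds through then yields $K_\ast=5+32(C_r/c_r)^2$, not the $3+16(C_r/c_r)^2$ you state, which bounds $\max_i|a_i|$ alone. The paper instead minorizes one factor and sums the other, which gives $K_\ast=1+12\,C_r/c_r$.
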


\begin{proof}
Eliminating $a$ from the $b$-equation gives $b=Pb+h$, where
$P_{jl}=\sum_{i\notin\{j,l\}}
(\omega_{ij}/\tilde m^\omega_j)(\omega_{il}/m^\omega_i)$
with $m^\omega_i=\sum_{l\neq i}\omega_{il}$,
$\tilde m^\omega_j=\sum_{i\neq j}\omega_{ij}$ (the index $i=j$ is
excluded by the definition of $\tilde m^\omega_j$ and the index $i=l$
because the diagonal weight $\omega_{ll}$ does not exist), and $h_j$ a
two-fold $\omega$-weighted average of the $g$'s; $P$ is row stochastic,
since $\sum_lP_{jl}=\sum_{i\neq j}(\omega_{ij}/\tilde m^\omega_j)
\sum_{l\neq i}(\omega_{il}/m^\omega_i)=1$, and
$\|h\|_\infty\le2C_g$. For $i\neq l$ the factorization gives
\[
  \frac{\omega_{il}}{m^\omega_i}\ge\frac{c_r}{C_r}\,\pi_l, \qquad \pi_l:=\frac{q_l}{\sum_{l'}q_{l'}},
\]
hence, with the convention $\omega_{jj}:=0$,
\[
  P_{jl}\;\ge\;\frac{c_r}{C_r}\,\pi_l \sum_{i\notin\{j,l\}}\frac{\omega_{ij}}{\tilde m^\omega_j} \;=\;\frac{c_r}{C_r}\,\pi_l \Bigl(1-\frac{\omega_{lj}}{\tilde m^\omega_j}\Bigr) .
\]
The excluded-diagonal correction is uniformly small: for $l\neq j$, $\omega_{lj}/\tilde m^\omega_j \le C_rp_lq_j\big/\bigl(c_rq_j\sum_{i\neq j}p_i\bigr) \le(C_r/c_r)\,\epsilon_N/(1-\epsilon_N) \le2(C_r/c_r)\,\epsilon_N\le\tfrac14$ by $\epsilon_N\le\epsilon_\ast$. Therefore
\[
  P_{jl}\;\ge\;\tau_0\,\pi_l,\qquad \tau_0:=\tfrac34\,(c_r/C_r),
\]
a Doeblin minorization by the probability vector $\pi$ in which $(p_i,q_j)$ enter only through $\pi$ itself. 
Hence the Dobrushin coefficient satisfies $\delta(P)\le1-\tau_0$, and since $\osc(Pv)\le\delta(P)\,\osc(v)$ \citep[Section~4.3]{Seneta1981}, any solution of $b=Pb+h$ satisfies
$\osc(b)\le\osc(h)/\tau_0\le4C_g/\tau_0\le6(C_r/c_r)C_g$. With $b_N=0$, $\|b\|_\infty\le\osc(b)$, and the $a$-equation gives $\|a\|_\infty\le C_g+\|b\|_\infty$, so the claim holds with
$K_\ast=1+12(C_r/c_r)$.
\end{proof}

Three instances are used: (i)~$\omega_{ij}=\mu_{ij}(\beta)$ with $p_i\equiv q_j\equiv1$, $r=\mu(\beta)\in[c,C]$ uniformly on $\mathcal B$ by Assumption~\ref{ass:dgp}(b) and $\epsilon_N=O(1/N)$; (ii)~$\omega_{ij}=\hat\mu_{ij}(\beta) =\mu_{ij}(\beta)e^{\xi_i(\beta)}e^{\omega_j(\beta)}$ with $p_i=e^{\xi_i(\beta)}$, $q_j=e^{\omega_j(\beta)}$, $r=\mu(\beta)$, and
$\epsilon_N=o_p(1)$ uniformly over $\beta\in\mathcal B$ because $\max_ie^{\xi_i(\beta)}=O_p(N^{2/\alpha-1})$ while $\sum_ie^{\xi_i(\beta)}\ge Ne^{-C_0}$, by Lemma~\ref{lem:margins}(iii)
and Corollary~\ref{cor:uniform-beta}; (iii)~the restriction of $\mu$ to a $G$-country subnetwork, with $G$ in place of $N$.

The row and column sums of the shock array are triangular-array sums of independent but not identically distributed terms, because the deterministic weights vary within a row. The following uniform one-large-jump bound, in the spirit of \citet{Nagaev1979}, is the single tail estimate from which all block rates below are derived.

\begin{lemma}[Uniform row-sum tail bound]\label{lem:rowsum}
Let $\{\eta_m\}_{m\le M}$ be i.i.d.\ centered random variables with
$\Pr(|\eta_1|>t)\le C_\eta t^{-\alpha}$ for all $t\ge1$ and some
$\alpha\in(1,2)$, and let
$\{b_m\}$ be deterministic weights with
$|b_m|\le C_b$. There are constants $c_0,C_S$, depending only on
$(\alpha,C_\eta,C_b)$, such that
\begin{equation*}
  \Pr\Bigl(\Bigl|\sum_{m\le M}b_m\eta_m\Bigr|>s\Bigr)
  \;\le\;C_S\,M\,s^{-\alpha}
  \qquad\text{for all }s\ge c_0M^{1/\alpha}.
\end{equation*}
Consequently, for the score blocks
$S_i=\sum_{j\neq i}\mu_{ij}\eta_{ij}$ and
$\tilde S_j=\sum_{i\neq j}\mu_{ij}\eta_{ij}$ under
Assumption~\ref{ass:dgp}(a),(b):
\emph{(i)}~$\max_{i\le N}|S_i|+\max_{j\le N}|\tilde S_j|
=O_p(N^{2/\alpha})$;
\emph{(ii)}~$\sum_{i\le N}S_i^2+\sum_{j\le N}\tilde S_j^2
=O_p(N^{4/\alpha})$;
\emph{(iii)}~$\Pr(|S_i|>s)\le C_SNs^{-\alpha}$ for all $s\ge N$,
uniformly in $i$ and $N$.
Separately,
\emph{(iv)}~if $Z_1,\dots,Z_n\ge0$ are i.i.d.\ regularly varying with
index $\rho\in(0,1)$ and $b_n$ satisfies $n\Pr(Z_1>b_n)\to1$, then
$\sum_{m\le n}Z_m=O_p(b_n)$.
\end{lemma}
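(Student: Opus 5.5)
The plan is to prove the tail inequality by a truncation argument in the spirit of \citet{Nagaev1979}, and then read off (i)--(iv) as corollaries. Fix $s\ge c_0M^{1/\alpha}$ and split each term at level $s$: $b_m\eta_m=X_m+Y_m$ with $X_m=b_m\eta_m\1\{|b_m\eta_m|\le s\}$ and $Y_m$ the large part. The event $\{\exists m:|b_m\eta_m|>s\}$ has probability at most $\sum_m\Pr(|\eta_m|>s/C_b)\le C_\eta C_b^\alpha Ms^{-\alpha}$ (one large jump). On its complement the sum equals $\sum_mX_m$. Because $\E\eta_m=0$, the centering satisfies $|\E X_m|=|\E b_m\eta_m\1\{|b_m\eta_m|>s\}|\le Cs^{1-\alpha}$ by integrating the tail bound ($\alpha>1$). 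Hence $|\sum_m\E X_m|\le CMs^{1-\alpha}\le s/2$ once $s\ge c_0M^{1/\alpha}$ with $c_0$ large, since $Ms^{-\alpha}\le c_0^{-\alpha}$.

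For the centered truncated part I will use Chebyshev. The variance bound is $\sum_m\E X_m^2\le C Ms^{2-\alpha}$, again by integrating $t\Pr(|\eta|>t/C_b)$ up to $s$ with $\alpha<2$. This gives $\Pr(|\sum_m(X_m-\E X_m)|>s/2)\le 4CMs^{2-\alpha}/s^2=CMs^{-\alpha}$. Adding the two pieces yields $C_SMs^{-\alpha}$. All constants depend only on $(\alpha,C_\eta,C_b)$, and for $s<1$-type issues the condition $s\ge c_0M^{1/\alpha}\ge c_0$ keeps us in the range $t\ge1$ where the tail bound applies. I expect this truncation step to be the main place needing care, but it is routine.

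The corollaries then follow by taking $M=N-1$ and $b_m=\mu_{ij}\le C$ from Assumption~\ref{ass:dgp}(b). The tail hypothesis holds by (a), since the left tail is bounded. For (iii), $s\ge N\ge c_0N^{1/\alpha}$ for $N$ large because $1/\alpha<1$; small $N$ is absorbed into $C_S$. For (i), a union bound gives $\Pr(\max_i|S_i|>KN^{2/\alpha})\le N\cdot C_SN\,K^{-\alpha}N^{-2}=C_SK^{-\alpha}$, valid because $KN^{2/\alpha}\ge c_0N^{1/\alpha}$. The same holds for $\tilde S_j$. For (ii), I will bound the expectation of a truncated sum. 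Write $\sum_iS_i^2\le N^{4/\alpha}K^2$ on the event of (i), and control $\E[S_i^2\1\{|S_i|\le KN^{2/\alpha}\}]\le \int_0^{K^2N^{4/\alpha}}\Pr(S_i^2>v)\,dv$. Split this integral at $v=c_0^2N^{2/\alpha}$. Below the split it contributes $O(N^{2/\alpha})$. Above the split the tail bound gives $\int C_SN v^{-\alpha/2}dv=O(N\cdot N^{(4/\alpha)(1-\alpha/2)})=O(N^{4/\alpha-1})$. Summing over $N$ rows gives $O(N^{4/\alpha}+N^{1+2/\alpha})=O(N^{4/\alpha})$, since $4/\alpha>1+2/\alpha$ iff $\alpha<2$. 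Markov's inequality combined with (i) then gives (ii).

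For (iv) the argument is the standard one for $\rho<1$ and needs no centering. Truncate at $Kb_n$; then $\Pr(\max Z_m>Kb_n)\le n\Pr(Z_1>Kb_n)\to K^{-\rho}$. Karamata's theorem gives $\E[Z_1\1\{Z_1\le Kb_n\}]\sim\frac{\rho}{1-\rho}Kb_n\Pr(Z_1>Kb_n)$, so $n\E[\cdot]\le CK^{1-\rho}b_n$. Markov's inequality then shows $\sum_mZ_m=O_p(b_n)$, after choosing $K$ large and then the Markov threshold.
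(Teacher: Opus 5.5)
Your proposal is correct and follows the paper's proof essentially step for step. The main bound uses a one-large-jump truncation (you cut at $s$, the paper at $s/2$), the centering shift $CMs^{1-\alpha}$ is absorbed once $s\ge c_0M^{1/\alpha}$, and Chebyshev is applied to the $CMs^{2-\alpha}$ truncated variance. The corollaries then go as in the paper: a union bound for (i); for (ii) a truncated second moment with the integral split at $c_0^2N^{2/\alpha}$, followed by Markov; the observation $N\ge c_0N^{1/\alpha}$ for (iii); and for (iv) truncation at $Kb_n$ with Karamata and Markov.

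One small slip in (ii): on the event of (i) you only have $\max_iS_i^2\le K^2N^{4/\alpha}$, not $\sum_iS_i^2\le K^2N^{4/\alpha}$. What your Markov step actually uses, correctly, is that on that event $\sum_iS_i^2$ equals its truncated version $\sum_iS_i^2\1\{|S_i|\le KN^{2/\alpha}\}$.
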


\begin{proof}
Write $X_m=b_m\eta_m$ and truncate at $s/2$:
$\Pr(|\sum_mX_m|>s)\le\sum_m\Pr(|X_m|>s/2)
+\Pr(|\sum_m\check X_m|>s/2)$ with
$\check X_m=X_m\1\{|X_m|\le s/2\}$. The first sum is at most
$MC_\eta(2C_b)^\alpha s^{-\alpha}$. For the second, integrating the
polynomial tail bound gives
$\E[X_m^2\1\{|X_m|\le u\}]\le2\int_0^ut\Pr(|X_m|>t)\,dt\le Cu^{2-\alpha}$
and
$\E[|X_m|\1\{|X_m|>u\}]\le u\Pr(|X_m|>u)
+\int_u^\infty\Pr(|X_m|>t)\,dt\le Cu^{1-\alpha}$
for all $u\ge1$, so the
centering shift satisfies
$|\sum_m\E\check X_m|=|\sum_m\E X_m\1\{|X_m|>s/2\}|\le CMs^{1-\alpha}
\le s/4$ whenever $s\ge c_0M^{1/\alpha}$ with $c_0$ large, and
Chebyshev's inequality yields
$\Pr(|\sum_m(\check X_m-\E\check X_m)|>s/4)
\le CM s^{2-\alpha}/s^{2}=CMs^{-\alpha}$.

\emph{(i)} The rows $S_i$ use disjoint shocks across $i$, and each is a weighted sum of $M=N-1$ terms; a union bound gives $\Pr(\max_i|S_i|>tN^{2/\alpha})\le N\cdot C_SN(tN^{2/\alpha})^{-\alpha} =C_St^{-\alpha}$ for $t\ge c_0'$, and likewise for the columns.

\emph{(ii)} Fix $t\ge1$ and truncate at $B=tN^{4/\alpha}$. 
By the tail bound, $\Pr(\max_iS_i^2>B)\le N\cdot C_SN\,B^{-\alpha/2}
=C_St^{-\alpha/2}$, while
\[
  \E\bigl[S_i^2\1\{S_i^2\le B\}\bigr] =\int_0^B\Pr(S_i^2>x)\,dx \le c_0^2N^{2/\alpha}+\int_{c_0^2N^{2/\alpha}}^{B}C_SNx^{-\alpha/2}\,dx\le C\,t^{1-\alpha/2}N^{4/\alpha-1},
\]
so $\sum_i\E[S_i^2\1\{S_i^2\le B\}]\le Ct^{1-\alpha/2}N^{4/\alpha}$ and Markov's inequality gives $\Pr(\sum_iS_i^2\1\{\cdot\}>tN^{4/\alpha})\le Ct^{-\alpha/2}$; combining the two events proves~(ii).

\emph{(iii)} is the displayed bound with $M=N-1$, since $N\ge c_0N^{1/\alpha}$ for $N$ large ($\alpha>1$).

\emph{(iv)} Fix $t\ge1$ and truncate at $tb_n$. 
First, $\Pr(\max_mZ_m>tb_n)\le n\Pr(Z_1>tb_n)\to t^{-\rho}$. Second, by Karamata's theorem, $\E[Z_1\1\{Z_1\le tb_n\}] \le C\,tb_n\Pr(Z_1>tb_n)/(1-\rho)$, so
$\sum_m\E[Z_m\1\{Z_m\le tb_n\}]\le Ct^{1-\rho}b_n/(1-\rho)$ by $n\Pr(Z_1>b_n)\to1$ and regular variation; Markov's inequality bounds the truncated sum. 
Combining the two events gives $\sum_mZ_m=O_p(b_n)$.
\end{proof}

\begin{lemma}[Margin regularity]\label{lem:margins}
Under Assumption~\ref{ass:dgp}(a) and~(b), w.p.a one:
\begin{enumerate}
\item[\emph{(i)}] the observed margins $R_i=\sum_{j\neq i}y_{ij}$, $\tilde R_j=\sum_{i\neq j} y_{ij}$ satisfy $\min_i R_i\ge c'N$ and $\min_j \tilde R_j\ge c'N$ for some $c'>0$;
\item[\emph{(ii)}] the total mass $T=\sum_{i\neq j}y_{ij}$ satisfies $T=\Theta_p(N^2)$, and the block maxima satisfy $\max_iR_i,\max_j\tilde R_j=O_p(N^{2/\alpha})$;
\item[\emph{(iii)}] under $\sum_j e^{\omega_j}=N-1$ the fixed-effect errors are bounded below uniformly, $\min_i\xi_i\ge-C_0$ and $\min_j\omega_j\ge-C_0$, and bounded above by $\max_ie^{\xi_i},\max_je^{\omega_j}=O_p(N^{2/\alpha-1})=o_p(N)$, so that $m_i^{(\omega)}\!:=\!\sum_j\mu_{ij}e^{\omega_j}\asymp N$ and $\tilde m_j^{(\xi)}\!:=\!\sum_i\mu_{ij}e^{\xi_i}\asymp N$ uniformly.
\end{enumerate}
\end{lemma}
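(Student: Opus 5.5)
The plan is to write every margin as a deterministic part plus a score block, and to read part (iii) off the fixed-effect first-order conditions, which are exactly Sinkhorn's row and column scaling equations. By \eqref{eq:dgp}, $R_i=m_i+S_i$ and $\tilde R_j=\tilde m_j+\tilde S_j$, where $S_i,\tilde S_j$ are the score blocks of the notation subsection. Assumption~\ref{ass:dgp}(b) gives $\mu_{ij}\in[c,C]$, hence $m_i,\tilde m_j\in[c(N-1),C(N-1)]$.

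\emph{Parts (i) and (ii).} The lower bound in (i) needs only the left tail of $S_i$, which is light because $\eta\ge-1$. Fix $K$ large and use $\eta_{ij}\ge\eta_{ij}\wedge K$. The truncated variables are bounded in $[-1,K]$, and their mean is $-\E(\eta-K)^+$, which can be made smaller than $c/4$ by the choice of $K$. Hoeffding's inequality then gives, for each row,
\[
\Pr\Bigl(\textstyle\sum_j\mu_{ij}(\eta_{ij}\wedge K)<-\tfrac c2(N-1)\Bigr)\le e^{-\kappa N}
\]
for some $\kappa>0$. A union bound over the $2N$ rows and columns yields $\min_iS_i,\min_j\tilde S_j\ge-\tfrac c2(N-1)$ w.p.a.\ one, and hence (i) with $c'=c/3$.

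For the block maxima in (ii), $\max_iR_i\le C N+\max_i|S_i|=O_p(N^{2/\alpha})$ by Lemma~\ref{lem:rowsum}(i), using $N\le N^{2/\alpha}$. The same bound holds for columns. For the total mass, $T=\sum_\ell\mu_\ell+\sum_\ell\mu_\ell\eta_\ell$. The first sum lies in $[cn,Cn]$. The second is $O_p(n^{1/q})$ for any $q\in(1,\alpha)$ by von Bahr--Esseen, as in Step~3 of the proof of Theorem~\ref{thm:stable}, and this is $o_p(N^2)$. Hence $T=\Theta_p(N^2)$.

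\emph{Part (iii).} Existence of $\hat\phi_0$ follows from Sinkhorn's theorem for the strictly positive off-diagonal kernel $(\mu_{ij})$ and the strictly positive margins of (i). Its first-order conditions read
\[
R_i=e^{\xi_i}m_i^{(\omega)},\qquad \tilde R_j=e^{\omega_j}\tilde m_j^{(\xi)} .
\]
I would break the apparent circularity by proving four bounds in a fixed order, each using only the previous ones.
\begin{itemize}
\item[(1)] The normalization gives $m_i^{(\omega)}\le C\sum_je^{\omega_j}=C(N-1)$. With (i), $e^{\xi_i}\ge c'/C$, so $\xi_i\ge-C_0$.
\item[(2)] By (1), $\tilde m_j^{(\xi)}\ge c\sum_{i\ne j}e^{\xi_i}\ge c(N-1)e^{-C_0}$. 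Hence $e^{\omega_j}\le\max_j\tilde R_j/(c'' N)=O_p(N^{2/\alpha-1})=o_p(N)$ by (ii).
\item[(3)] By (2), $m_i^{(\omega)}\ge c\bigl(N-1-e^{\omega_i}\bigr)\ge cN/2$ w.p.a.\ one. Together with the upper bound in (1), this gives $m_i^{(\omega)}\asymp N$ and $e^{\xi_i}\le 2R_i/(cN)=O_p(N^{2/\alpha-1})$.
\item[(4)] Summing the exporter equations and using (3), $\sum_ie^{\xi_i}\le 2T/(cN)=O_p(N)$ by (ii). So $\tilde m_j^{(\xi)}\le CN$, and with (2) this gives $\tilde m_j^{(\xi)}\asymp N$. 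Finally $e^{\omega_j}\ge c'N/(CN)$ by (i), so $\omega_j\ge-C_0$.
\end{itemize}

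I expect two delicate points. The first is the uniform lower bound in (i): $S_i$ has heavy right tails, so the proof must exploit the bounded left tail through truncation rather than a row-sum tail bound. The second is this ordering in (iii). The naive attempt to bound $\sum_ie^{\xi_i}$ first stalls on the excluded diagonal and on the unknown size of $\max_je^{\omega_j}$. The sequence (1)--(4) avoids both, because each step uses only quantities already controlled.
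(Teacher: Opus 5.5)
Your proposal is correct and follows essentially the paper's route. For (i) you truncate the one-sided shock and apply Hoeffding with a union bound over the $2N$ margins; for the block maxima in (ii) you use Lemma~\ref{lem:rowsum}. In (iii) you use the same ordering of the margin identities: normalization $\Rightarrow$ $\xi_i\ge-C_0$ $\Rightarrow$ $\max_je^{\omega_j}=O_p(N^{2/\alpha-1})$ $\Rightarrow$ $m_i^{(\omega)}\asymp N$ $\Rightarrow$ $\sum_ie^{\xi_i}\le T/\min_im_i^{(\omega)}\le CN$ w.p.a.\ one $\Rightarrow$ $\omega_j\ge-C_0$. Your direct bound on every $e^{\omega_j}$ makes the paper's ``at most one dominant index'' step unnecessary.

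Two small touch-ups. In (i), the weights can be as large as $C$, so choose $K$ with $C\,\E(\eta-K)^+<c/4$ rather than $\E(\eta-K)^+<c/4$. Your existence aside is also incomplete: on the off-diagonal support, scaling to prescribed margins needs $R_i+\tilde R_i<T$ for every $i$, not just positive margins. The paper proves existence separately, in Corollary~\ref{cor:sample-existence}, from (i)--(ii), and existence is not part of this lemma.
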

\begin{proof}
\emph{(i) Margin lower bounds.}
The $\varepsilon_{ij}=1+\eta_{ij}\ge0$ are i.i.d.\ with $\E\varepsilon=1$ (finite since $\alpha>1$). As $\E[\min(\varepsilon,K)]\uparrow1$, fix $K$ with $\E[\min(\varepsilon,K)]\ge3/4$. The $\min(\varepsilon_{ij},K)\in[0,K]$ are bounded and independent across $j$, so Hoeffding's inequality gives, for each $i$, $\Pr\bigl(\sum_j\min(\varepsilon_{ij},K)<\tfrac{N-1}{2}\bigr)
\le\exp(-(N-1)/(8K^2))$. 
A union bound over the $2N$ rows and columns sends the probability that any margin falls below $(N-1)/2$ to zero; hence $\min_iR_i\ge c_\mu(N-1)/2=:c'N$ and $\min_j\tilde R_j\ge c'N$ w.p.a.\ one.

\emph{(ii) Block maxima and total mass.}
Write $R_i=m_i+S_i$ with $m_i=\sum_j\mu_{ij}=O(N)$ and $S_i=\sum_j\mu_{ij}\eta_{ij}$. The weights are deterministic and bounded, so Lemma~\ref{lem:rowsum}(i) gives
$\max_i|S_i|=O_p(N^{2/\alpha})$, whence $\max_iR_i,\max_j\tilde R_j=O(N)+O_p(N^{2/\alpha})=O_p(N^{2/\alpha})$ since $2/\alpha>1$. 
For the total mass, write $T=\sum_im_i+\sum_iS_i$. 
The deterministic part lies in $[c_\mu N(N-1),\,C_\mu N(N-1)]$. 
The shock part is $O_p(N^{2/\alpha})=o_p(N^2)$, by Lemma~\ref{lem:rowsum} applied to the single weighted sum of all $n$ shocks. 
Hence there are fixed constants $0<c_T<C_T<\infty$ with $\Pr(c_TN^2\le T\le C_TN^2)\to1$; in particular $T=\Theta_p(N^2)$, and the two-sided bound holds with fixed constants w.p.a.\ one, which is used in part~(iii).

\emph{(iii) Fixed-effect bounds.}
The fixed-effect first-order conditions reproduce the margins, $e^{\xi_i}m_i^{(\omega)}=R_i$ and $e^{\omega_j}\tilde m_j^{(\xi)}=\tilde R_j$, with $m_i^{(\omega)}=\sum_j\mu_{ij}e^{\omega_j}$ and $\tilde m_j^{(\xi)}=\sum_i\mu_{ij}e^{\xi_i}$. We argue in three steps.

\emph{Lower bound on exporter effects.} Under the normalization $\sum_je^{\omega_j}=N-1$, $m_i^{(\omega)}\le C_\mu(N-1)$, so
\begin{equation}
  e^{\xi_i}=\frac{R_i}{m_i^{(\omega)}}\ge\frac{c'N}{C_\mu(N-1)}\ge\frac{c'}{C_\mu}=:e^{-C_0}>0\quad\text{for every }i,
  \label{eq:xi-lower}
\end{equation}
giving $\min_i\xi_i\ge-C_0$.

\emph{No importer weight dominates: $\max_je^{\omega_j}=O_p(N^{2/\alpha-1})$.}
Since $\sum_je^{\omega_j}=N-1$, at most one index $j_0$ can have $e^{\omega_{j_0}}>(N-1)/2$. For every importer $j$ (including $j_0$), the denominator uses only the $N-1$ retained exporters and the lower bound \eqref{eq:xi-lower}, $\tilde m_j^{(\xi)}=\sum_{i\neq j}\mu_{ij}e^{\xi_i} \ge c_\mu(N-1)e^{-C_0}=\Theta(N)$, so
$e^{\omega_j}=\tilde R_j/\tilde m_j^{(\xi)}\le\max_j\tilde R_j/\Theta(N) =O_p(N^{2/\alpha-1})$ by~(ii). For $\alpha>1$ this is $o(N)<(N-1)/2$ w.p.a.\ one, contradicting $e^{\omega_{j_0}}>(N-1)/2$; hence no dominant index exists and the bound holds for all $j$.

\emph{Mass equivalences and importer lower bound.} 
With $\max_je^{\omega_j}=o(N)$, for every exporter $i$, $m_i^{(\omega)}=\sum_{j\neq i}\mu_{ij}e^{\omega_j} \ge c_\mu\bigl(N-1-\max_je^{\omega_j}\bigr)\ge c_\mu(N-1)/2=\Theta(N)$
w.p.a.\ one, so $m_i^{(\omega)}\asymp N$; symmetrically $\tilde m_j^{(\xi)}\asymp N$. 
Then $e^{\omega_j}=\tilde R_j/\tilde m_j^{(\xi)}\ge c'N/(C_\mu A)$ with $A=\sum_ie^{\xi_i}\le T/\min_im_i^{(\omega)}$. 
By~(ii), $T\le C_TN^2$ w.p.a.\ one with a fixed $C_T$, and $\min_im_i^{(\omega)}\ge c_mN$ w.p.a.\ one from the previous display, so $A\le C_AN$ w.p.a.\ one with the fixed constant $C_A=C_T/c_m$.
Hence $e^{\omega_j}\ge c'/(C_\mu C_A)=:c''>0$ for all $j$ w.p.a.\ one, giving $\min_j\omega_j\ge-C_0$. Finally, the upper bounds $\max_ie^{\xi_i},\max_je^{\omega_j}=O_p(N^{2/\alpha-1})$ follow from $e^{\xi_i}=R_i/m_i^{(\omega)}\le\max_iR_i/\Theta(N)$ and the symmetric importer bound, using (ii).
\end{proof}

The implicit function theorem requires the sample profiling map $\hat\phi(\beta)$ to exist as a finite maximizer; strict concavity gives uniqueness on the quotient, not existence. Existence is a matrix-scaling property of the observed margins and follows from Lemma~\ref{lem:margins}.

\begin{corollary}[Existence of the sample profiling solution] \label{cor:sample-existence}
Under Assumption~\ref{ass:dgp}(a),(b),
\[
  \Pr\Bigl(\,\forall\beta\in\mathcal B:\ \exists\, \hat\phi(\beta)\in\R^{2N}\ \text{with}\ S_{\phi,n}\bigl(\beta,\hat\phi(\beta)\bigr)=0\Bigr) \;\longrightarrow\;1 ,
\]
and on this event the solution set at each $\beta$ is exactly $\hat\phi(\beta)+\mathrm{span}(e)$.
\end{corollary}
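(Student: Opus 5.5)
The plan is to treat existence at each fixed $\beta$ as a matrix-scaling problem. The point is that the scaling criterion depends on $\beta$ only through the zero pattern of the base kernel, and that pattern does not depend on $\beta$. So one probabilistic event, built from the margin bounds of Lemma~\ref{lem:margins}, will deliver existence for all $\beta\in\mathcal B$ at once.

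\emph{Step 1 (reformulation).} Fix $\beta$ and set $K_{ij}(\beta)=\exp(x_{ij}'\beta)>0$ for $i\neq j$, with $K_{ii}=0$. Write $p_i=e^{\phi^{\mathrm{ex}}_i}$ and $q_j=e^{\phi^{\mathrm{im}}_j}$. Then $S_{\phi,n}(\beta,\phi)=0$ says exactly that the scaled matrix $(p_iK_{ij}q_j)$ has row sums $R_i=\sum_{j\neq i}y_{ij}$ and column sums $\tilde R_j=\sum_{i\neq j}y_{ij}$. These are the observed margins of Lemma~\ref{lem:margins}, and both share the total $T$. A finite $\hat\phi(\beta)$ exists if and only if strictly positive scalings $(p,q)$ exist.

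\emph{Step 2 (scaling criterion).} I will invoke the classical existence theorem for matrix scaling with prescribed margins, in the Menon--Brualdi form surveyed in \citet{Idel2016} and going back to \citet{Sinkhorn1967}. For a nonnegative matrix with positive target margins, scalings exist whenever two conditions hold. First, the pattern must be fully indecomposable. Second, for every pair $(I,J)$ with $K_{ij}=0$ on $I\times J^c$, we need $\sum_{i\in I}R_i\le\sum_{j\in J}\tilde R_j$, with strict inequality for proper nontrivial pairs.

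Here the zero set is only the diagonal, so the admissible pairs can be listed directly.
\begin{itemize}
\item If $|I|\ge2$, the rows in $I$ have nonzeros in every column, which forces $J=\{1,\dots,N\}$; the condition is then trivial.
\item If $I=\{i\}$, the only nontrivial choice is $J=\{j:j\neq i\}$. The condition becomes $R_i<T-\tilde R_i$, that is, $R_i+\tilde R_i<T$.
\end{itemize}
Full indecomposability of the off-diagonal pattern holds for $N\ge3$.

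The whole criterion therefore reduces to three requirements, none involving $\beta$:
\[
\min_iR_i>0,\qquad \min_j\tilde R_j>0,\qquad \max_i(R_i+\tilde R_i)<T.
\]

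As a self-contained fallback, I would also sketch the same conclusion through coercivity. The map $\phi\mapsto Q_n(\beta,\phi)$ is concave and continuous. It suffices to show that its recession function is strictly negative on every nonzero direction $v\in e^\perp$. If $d_\ell'v>0$ for some $\ell$, the exponential term dominates and the objective tends to $-\infty$. Otherwise $a_i+b_j\le0$ for all $i\neq j$, and the margin inequality above is exactly what forces $\sum_iR_ia_i+\sum_j\tilde R_jb_j<0$. I expect this duality step to be the only delicate point, because individual $y_{ij}$ may vanish and one must argue through the margins rather than pair by pair.

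\emph{Step 3 (probability).} By Lemma~\ref{lem:margins}(i), w.p.a.\ one $\min_iR_i,\min_j\tilde R_j\ge c'N>0$. By Lemma~\ref{lem:margins}(ii), $\max_i(R_i+\tilde R_i)=O_p(N^{2/\alpha})=o_p(N^2)$, while $T\ge c_TN^2$ w.p.a.\ one. Hence the intersection of these events has probability tending to one. On that intersection a solution exists for every $\beta\in\mathcal B$ simultaneously, since the criterion does not involve $\beta$.

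\emph{Step 4 (solution set).} Uniqueness modulo $\mathrm{span}(e)$ follows from the strict concavity on $e^\perp$ established in Step~3 of the proof of Theorem~\ref{thm:stable}: the Hessian $\sum_\ell\mu_\ell(\beta,\phi)d_\ell d_\ell'$ is positive definite on $e^\perp$ at every finite $\phi$. Conversely, any shift of a solution by a multiple of $e$ leaves every $d_\ell'\phi$ unchanged, so the shifted vector is again a solution. The solution set is therefore exactly $\hat\phi(\beta)+\mathrm{span}(e)$.
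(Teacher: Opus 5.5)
Your proposal is correct and follows the paper's own argument: you reduce $S_{\phi,n}=0$ to scaling the $\beta$-free off-diagonal pattern to the observed margins. You observe that only the single-vertex cuts bind, so the criterion becomes $\min_iR_i\wedge\min_j\tilde R_j>0$ and $R_i+\tilde R_i<T$. You verify these conditions with probability approaching one from Lemma~\ref{lem:margins}, and you obtain uniqueness modulo $\mathrm{span}(e)$ from strict concavity on $e^\perp$, exactly as the paper does. Your explicit enumeration of the cuts $(I,J)$ and the recession-function fallback spell out what the paper only cites, but they do not change the route.
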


\begin{proof}
Fix $\beta$. 
The equation matches the observed margins: it requires $\sum_{j\neq i}\mu_{ij}(\beta,\phi)=R_i$ for every $i$ and $\sum_{i\neq j}\mu_{ij}(\beta,\phi)=\tilde R_j$ for every $j$, with
base kernel $e^{x_{ij}'\beta}>0$. 
The margins do not depend on $\beta$. 
The support $K_{N,N}\setminus\{(i,i)\}$ is fully indecomposable for $N\ge3$. For such a support, a finite scaling matching prescribed margins exists if and only if the margins are
strictly positive and lie in the relative interior of the margin cone; here the binding cuts are the single-vertex ones, so the strict conditions reduce to $\min_iR_i\wedge\min_j\tilde R_j>0$ and $R_i+\tilde R_i<T$ for every $i$ \citep{Sinkhorn1967,RothblumSchneider1989,Idel2016}. 
By Lemma~\ref{lem:margins}, w.p.a.\ one $\min_iR_i\wedge\min_j\tilde R_j\ge c'N>0$ and $\max_i(R_i+\tilde R_i)=O_p(N^{2/\alpha})=o_p(T)$, since $T=\Theta_p(N^2)$ and $2/\alpha<2$. On this event, which does not depend on $\beta$, a finite solution exists for every $\beta\in\mathcal B$; uniqueness on the quotient follows from strict concavity of $\phi\mapsto Q_n(\beta,\phi)$ there.
\end{proof}

\begin{lemma}[Truncated margin moments]\label{lem:tailmass}
Under Assumption~\ref{ass:dgp}(a) and~(b), for every fixed $q\in[1,\alpha)$ there exist constants $C_q<\infty$ and $T_0<\infty$ such that, for all $T_\ast\ge T_0$,
\begin{equation}
  \E\Bigl[\,\sum_{i}R_i^{\,q}\,\1\{R_i>T_\ast N\}\Bigr] \;\le\;C_q\,T_\ast^{\,q-\alpha}\,N^{2+q-\alpha},
  \label{eq:tailmass}
\end{equation}
and the same bound holds for the importer margins $\tilde R_j$.
\end{lemma}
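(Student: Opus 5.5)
The plan is to write each margin as $R_i=m_i+S_i$ with deterministic $m_i=\sum_{j\neq i}\mu_{ij}\le C_\mu N$ and $S_i=\sum_{j\neq i}\mu_{ij}\eta_{ij}$. Take $T_0\ge 2C_\mu$. Then on $\{R_i>T_\ast N\}$ we have $S_i>T_\ast N-C_\mu N\ge T_\ast N/2$ and also $R_i\le 2S_i$. This reduces the claim, row by row, to the truncated moment bound
\[
  \E\bigl[S_i^{\,q}\,\1\{S_i>T_\ast N/2\}\bigr]\;\le\;C\,T_\ast^{\,q-\alpha}N^{1+q-\alpha},
\]
uniformly in $i$ and $N$. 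Summing over the $N$ exporters then gives the factor $N^{2+q-\alpha}$, and the importer margins $\tilde R_j$ are handled identically with columns in place of rows.

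To prove the displayed bound, I will use the layer-cake formula
\[
  \E[X^q\1\{X>u\}]=u^q\Pr(X>u)+\int_u^\infty q\,x^{q-1}\Pr(X>x)\,dx
\]
with $X=S_i$ and $u=T_\ast N/2$. The tail input is Lemma~\ref{lem:rowsum}(iii), $\Pr(|S_i|>s)\le C_SNs^{-\alpha}$ for all $s\ge N$, uniformly in $i$ and $N$. Since $u=T_\ast N/2\ge N$ once $T_0\ge2$, the bound applies on the whole range of integration. The boundary term is then at most $C_SN\,u^{q-\alpha}$. The integral is at most $qC_SN\int_u^\infty x^{q-1-\alpha}dx=\tfrac{q}{\alpha-q}C_SN\,u^{q-\alpha}$, which is finite because $q<\alpha$. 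Substituting $u\asymp T_\ast N$ gives $C_qT_\ast^{q-\alpha}N^{1+q-\alpha}$. The constant $C_q$ depends on $(q,\alpha,C_\eta,C_\mu)$ and blows up as $q\uparrow\alpha$, which is why $q$ is held fixed.

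The only delicate point is checking that Lemma~\ref{lem:rowsum}(iii) holds uniformly with the deterministic bounded weights $\mu_{ij}\in[c_\mu,C_\mu]$ and the one-sided shock. Assumption~\ref{ass:dgp}(a) gives $\Pr(|\eta|>t)\le C_\eta t^{-\alpha}$ for $t\ge1$ once the slowly varying factor is bounded, which holds since $\mathcal L\to c_\ast$. With that checked, the remaining steps are the two elementary inequalities comparing $R_i$ with $S_i$ above and a single integration. I expect no real obstacle beyond matching the threshold $s\ge N$ of Lemma~\ref{lem:rowsum}(iii) with the choice of $T_0$.
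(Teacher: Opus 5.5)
Your proposal is correct and follows the paper's proof essentially line for line. Both use the reduction $R_i\le 2S_i$ on $\{R_i>T_\ast N\}$ via $m_i\le C_\mu N$, the tail bound of Lemma~\ref{lem:rowsum}(iii) for $s\ge N$, and the layer-cake integration at $u=T_\ast N/2$ summed over the $N$ exporters; your threshold $T_0=\max\{2C_\mu,2\}$ is slightly smaller than the paper's $\max\{4C_\mu,2\}$ but works equally well.
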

\begin{proof}
Take $T_0 = \max\{4C_\mu,2\}$, so that $R_i>T_\ast N$ forces $S_i=R_i-m_i>T_\ast N/2$ and $R_i\le2S_i$ on that event. 
Each $S_i$ is a centered sum of $N-1$ independent summands $\mu_{ij}\eta_{ij}$ with bounded weights and common regularly varying right tail of index $\alpha$, so Lemma~\ref{lem:rowsum}(iii) gives $\Pr(S_i>s)\le C\,N\,s^{-\alpha}$ for all $s\ge N$, uniformly in $i$. Then, for $s=T_\ast N/2$,
\[
  \E\bigl[S_i^{\,q}\,\1\{S_i>s\}\bigr]=s^q\,\Pr(S_i>s)+q\int_s^\infty x^{q-1}\Pr(S_i>x)\,dx\le C\Bigl(1+\frac{q}{\alpha-q}\Bigr)N\,s^{\,q-\alpha},
\]
and summing $\E[R_i^q\1\{R_i>T_\ast N\}]\le2^q\,\E[S_i^q\1\{S_i>T_\ast N/2\}]$ over the $N$ exporters gives \eqref{eq:tailmass}. 
The importer bound is symmetric.
\end{proof}

\begin{lemma}[Fixed-effect conditioning and rate]\label{lem:fe-rate} Under Assumption~\ref{ass:dgp}(a) and~(b), w.p.a.\ one the integral-averaged Hessian $\bar H_{\phi,n}=\sum_\ell\mu_\ell\psi_\ell\,d_\ell d_\ell'$ with $\psi_\ell=\int_0^1 e^{s\,d_\ell'\nu_n}ds=(e^{d_\ell'\nu_n}-1)/(d_\ell'\nu_n)$ satisfies $\lambda_{\min}(N^{-1}\bar H_{\phi,n}\restriction_{e^\perp})\ge c_4>0$, the exact identity $\bar H_{\phi,n}\nu_\perp=S_\phi^0$ holds on $e^\perp$ with $S_\phi^0:=S_{\phi,n}(\beta_0,\phi_0)=\sum_\ell\mu_\ell\eta_\ell d_\ell$, the level of the Sinkhorn representative satisfies $t_\nu=O_p(1)$, and consequently, for the orthogonal representative,
\begin{equation}
  \sum_i\xi_i^2+\sum_j\omega_j^2=O_p(N^{4/\alpha-2}), \qquad \sum_iu_i^2+\sum_jw_j^2=O_p(N^{4/\alpha-2}).
  \label{eq:fe-rates}
\end{equation}
\end{lemma}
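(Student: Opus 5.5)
Since $\hat\phi_0$ and $\phi_0$ both satisfy fixed-effect first-order conditions (the sample one exactly, the truth one with the score $S_\phi^0$ as residual), I will start from the exact mean-value identity for the fixed-effect score. Write $S_{\phi,n}(\beta_0,\phi)=\sum_\ell(y_\ell-\mu_\ell e^{d_\ell'(\phi-\phi_0)})d_\ell$. Evaluate at $\hat\phi_0$, where it vanishes, and use $e^{t}-1=t\int_0^1e^{st}ds$. This gives
\[
  0=S_\phi^0-\sum_\ell\mu_\ell\psi_\ell (d_\ell'\nu_n)d_\ell ,
\]
which is $\bar H_{\phi,n}\nu_n=S_\phi^0$. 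Because $d_\ell'e=0$, the left side depends on $\nu_n$ only through its class modulo $e$, and $S_\phi^0\in e^\perp$. So the identity holds with $\nu_\perp$ on $e^\perp$. Since $\psi_\ell$ is itself representative-free, the identity is exact and needs no remainder.

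\textbf{Conditioning.} Next I bound $\psi_\ell$ below uniformly. Note $\psi_\ell\ge\min(1,e^{d_\ell'\nu_n})$, and $e^{d_\ell'\nu_n}=e^{\xi_i}e^{\omega_j}$ with both factors evaluated in the Sinkhorn representative (the product is representative-free). Lemma~\ref{lem:margins}(iii) gives $\min_i\xi_i,\min_j\omega_j\ge-C_0$ w.p.a.\ one, so $\psi_\ell\ge e^{-2C_0}$. Then $\bar H_{\phi,n}\succeq e^{-2C_0}H_{\phi\phi,n}$ on $e^\perp$, and \eqref{eq:dense-design} yields $\lambda_{\min}(N^{-1}\bar H_{\phi,n}\restriction_{e^\perp})\ge c_4:=ce^{-2C_0}$.

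\textbf{Rate for $\nu_\perp$.} The conditioning bound gives $\|\nu_\perp\|_2\le (c_4N)^{-1}\|S_\phi^0\|_2$. The vector $S_\phi^0$ has entries $S_i$ and $\tilde S_j$, so Lemma~\ref{lem:rowsum}(ii) gives $\|S_\phi^0\|_2^2=O_p(N^{4/\alpha})$. Hence
\[
  \sum_i\xi_i^2+\sum_j\omega_j^2=O_p(N^{4/\alpha-2}),
\]
which is the first half of \eqref{eq:fe-rates}.

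\textbf{Level and multiplicative errors.} Passing to $u_i=e^{\xi_i}-1$ needs the orthogonal-representative errors to be bounded in sup norm, up to an $O_p(1)$ shift. Writing $\nu^{\mathrm{sk}}=\nu_\perp+t_\nu e$ and projecting onto $e$, I get $t_\nu=\frac{1}{2N}\bigl(\sum_i\xi_i^{\mathrm{sk}}-\sum_j\omega_j^{\mathrm{sk}}\bigr)$. Lemma~\ref{lem:margins}(iii) bounds every Sinkhorn coordinate below by $-C_0$. For upper control I will use averages rather than maxima: Jensen applied to $\sum_je^{\omega_j^{\mathrm{sk}}}=N-1$ gives $N^{-1}\sum_j\omega_j^{\mathrm{sk}}\le 0$. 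Similarly, $\sum_ie^{\xi_i^{\mathrm{sk}}}=A\le C_AN$ (from the proof of Lemma~\ref{lem:margins}) gives $N^{-1}\sum_i\xi_i^{\mathrm{sk}}\le\log C_A$. Together with the lower bounds, both averages are $O_p(1)$, so $t_\nu=O_p(1)$.

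\textbf{Main obstacle.} The step I expect to be hardest is the conversion to $u,w$. Here $|e^x-1|\le|x|e^{|x|}$, but $\max_i\xi_i$ in the Sinkhorn representative can be as large as $\log N^{2/\alpha-1}$, so the exponential factor is not bounded. I would split indices at $|\xi_i|\le1$. On that set, $u_i^2\le e^2\xi_i^2$. On the complement, the number of indices is at most $\sum_i\xi_i^2=O_p(N^{4/\alpha-2})$, and there $e^{\xi_i}$ is controlled through the margin identity $e^{\xi_i}\asymp R_i/N$. I would then bound $\sum_i(R_i/N)^2\1\{R_i>T_\ast N\}$ by $N^{-2}\sum_iR_i^2\1\{\cdot\}$. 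I would attempt this with Lemma~\ref{lem:rowsum}(ii), since $R_i^2\le 2m_i^2+2S_i^2$ and the $m_i^2$ terms only arise on the $O_p(N^{4/\alpha-2})$ exceptional indices. That gives $O_p(N^{4/\alpha-2})$ and completes \eqref{eq:fe-rates}, with the $O_p(1)$ level $t_\nu$ absorbed into constants.
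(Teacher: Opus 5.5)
Your proposal is correct and follows the paper's proof step for step. It uses the same exact integral identity $\bar H_{\phi,n}\nu_\perp=S_\phi^0$, the same lower bound on $\psi_\ell$ from Lemma~\ref{lem:margins}(iii) fed into \eqref{eq:dense-design}, and the same $\ell_2$ rate via Lemma~\ref{lem:rowsum}(ii). The only differences are local and harmless. You bound $t_\nu$ by Jensen's inequality applied to $\sum_je^{\omega^{\mathrm{sk}}_j}=N-1$ and $\sum_ie^{\xi^{\mathrm{sk}}_i}\le C_AN$, which even gives $|t_\nu|\le C$ w.p.a.\ one, where the paper uses a Markov bound on $\sum_i|\xi^{\mathrm{sk}}_i|$. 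In the multiplicative step you split at $|\xi_i|\le1$ and absorb the deterministic part $m_i$ of $R_i$ through the count $\#\{i:|\xi_i|>1\}\le\sum_i\xi_i^2$, rather than taking the paper's large threshold $K'$ that forces $R_i\lesssim S_i$. Because of that count, the indicator $\1\{R_i>T_\ast N\}$ you carry there is unnecessary.
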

\begin{proof}
Throughout the proof, $(\xi^{\mathrm{sk}},\omega^{\mathrm{sk}})$ and $(\xi_\perp,\omega_\perp)$ denote the exporter and importer components of $\nu^{\mathrm{sk}}$ and $\nu_\perp$, with multiplicative errors $u_{\perp,i}=e^{\xi_{\perp,i}}-1$ and $w_{\perp,j}=e^{\omega_{\perp,j}}-1$.

\emph{Exact linear system on the quotient.} The fixed-effect first-order condition $\sum_\ell\mu_\ell(e^{d_\ell'\nu_n}-1)d_\ell=S_\phi^0$, combined with
the identity $e^{z}-1=z\int_0^1e^{sz}ds$ and the invariance of $d_\ell'\nu_n$ to the representative, gives $\bar H_{\phi,n}\nu_\perp=S_\phi^0$ exactly. 
Both sides are orthogonal to $e$: $\bar H_{\phi,n}e=0$ since $d_\ell'e=0$, and $\langle S_\phi^0,e\rangle=\sum_iS_i-\sum_j\tilde S_j=0$.

\emph{Conditioning.} By Lemma~\ref{lem:margins}(iii), $d_\ell'\nu_n=\xi_i+\omega_j\ge-2C_0$, and since $z\mapsto\int_0^1e^{sz}ds=\psi(z)$ is positive and increasing, $\psi_\ell\ge\psi(-2C_0)>0$, so $\mu_\ell\psi_\ell\ge c_\mu\psi(-2C_0)=:c_3$. 
Hence $\bar H_{\phi,n}\succeq c_3\sum_\ell d_\ell d_\ell' \succeq c_3C_\mu^{-1}H_{\phi\phi,n}$ on $e^\perp$, and the dense-design condition~\eqref{eq:dense-design} gives
$\lambda_{\min}(N^{-1}\bar H_{\phi,n}\restriction_{e^\perp})\ge c_3C_\mu^{-1}c=:c_4>0$, so $\|\nu_\perp\|\le c_4^{-1}N^{-1}\|S_\phi^0\|$.

\emph{Score-norm bound.} The score
$S_\phi^0=\sum_\ell\mu_\ell\eta_\ell d_\ell$
has exporter and importer blocks $S_i=\sum_j\mu_{ij}\eta_{ij}$ and
$\tilde S_j=\sum_i\mu_{ij}\eta_{ij}$. By Lemma~\ref{lem:rowsum}(ii),
$\sum_iS_i^2+\sum_j\tilde S_j^2=O_p(N^{4/\alpha})$. Thus
$\|S_\phi^0\|^2=O_p(N^{4/\alpha})$,
and the conditioning bound yields
$\|\nu_\perp\|=O_p(N^{2/\alpha-1})$, hence the first rate in
\eqref{eq:fe-rates}.

\emph{Level of the Sinkhorn representative.} Write $\nu^{\mathrm{sk}}=\nu_\perp+t_\nu \,e$, so $2Nt_\nu=\sum_i\xi^{\mathrm{sk}}_{i}-\sum_j\omega^{\mathrm{sk}}_{j}$. 
By the margin identities and Lemma~\ref{lem:margins}(iii), $e^{\xi^{\mathrm{sk}}_{i}}=R_i/m_i^{(\omega^{\mathrm{sk}})}$ with $m_i^{(\omega^{\mathrm{sk}})}\asymp N$
and $R_i=m_i+S_i\in[c'N,\,C_\mu N+|S_i|]$, so $|\xi^{\mathrm{sk}}_{i}|\le C+|S_i|/(cN)$. 
Since $\E|S_i|\le\sum_j\mu_{ij}\E|\eta|\le C_\mu N\,\E|\eta|=O(N)$, Markov's inequality gives $\sum_i|\xi^{\mathrm{sk}}_{i}|=O_p(N)$, and symmetrically $\sum_j|\omega^{\mathrm{sk}}_{j}|=O_p(N)$; hence $t_\nu=O_p(1)$. 
Fix $\varepsilon>0$ and choose $C_t=C_t(\varepsilon)$ with
$\limsup_N\Pr(|t_\nu|>C_t)<\varepsilon$; since the conclusions below are $O_p$ statements and $\varepsilon$ is arbitrary, it suffices to work on $\{|t_\nu|\le C_t\}$, with constants depending on $C_t$. 
On this event the margin bounds transfer to the orthogonal representative up to the factor $e^{C_t}$: $\min_i\xi_{\perp,i}\ge-C_0-C_t$, $\min_j\omega_{\perp,j}\ge-C_0-C_t$,
$m_i^{(\omega_\perp)}=e^{t_\nu}m_i^{(\omega^{\mathrm{sk}})}\asymp N$, and $1+u_{\perp,i}=e^{-t_\nu}R_i/m_i^{(\omega^{\mathrm{sk}})}\le e^{C_t}CR_i/N$.

\emph{From additive to multiplicative errors.} Work on $\{|t_\nu|\le C_t\}$ and split exporters at $\xi_{\perp,i}\le K'$ and $\xi_{\perp,i}>K'$, with $K'=2C_t+K_0$ and $K_0$ chosen so that $\xi_{\perp,i}>K'$ forces $R_i>T_\ast N$ with $T_\ast\ge T_0$ of Lemma~\ref{lem:tailmass} and $S_i=R_i-m_i\ge c'''N$. 
On $\{\xi_{\perp,i}\le K'\}$, $\xi_{\perp,i}\in[-C_0-C_t,K']$ and $|u_{\perp,i}|=|e^{\xi_{\perp,i}}-1|\le C_1|\xi_{\perp,i}|$ with $C_1=\sup_{[-C_0-C_t,K']}|e^x-1|/|x|<\infty$, so
$\sum_{\xi_{\perp,i}\le K'}u_{\perp,i}^2\le C_1^2\sum_i\xi_{\perp,i}^2=O_p(N^{4/\alpha-2})$. On $\{\xi_{\perp,i}>K'\}$, $u_{\perp,i}\le e^{C_t}CR_i/N$ and $\sum_{\xi_{\perp,i}>K'}u_{\perp,i}^2\le CN^{-2}\sum_{i:S_i\ge c'''N}S_i^2 \le CN^{-2}\sum_iS_i^2=O_p(N^{4/\alpha-2})$. 
The importer bound is symmetric, giving the second rate in \eqref{eq:fe-rates}.
\end{proof}

\begin{lemma}[Shift comparability of $\psi$]\label{lem:psi-facts}
Let $\psi(z)=(e^z-1)/z$ for $z\neq0$, $\psi(0)=1$.
For every $C_0>0$ and $K>0$ there exist constants $0<c_{K}\le C_{K}<\infty$ and $C_\psi<\infty$, depending only on $C_0$ and $K$, such that, for all $z\ge-2C_0$:
\emph{(i)}~$c_{K}\,\psi(z)\le\psi(z+k)\le C_{K}\,\psi(z)$ for every $k\in[-C_0,K]$;
\emph{(ii)}~$\psi(z+k)\le C_\psi\,e^{k}\,\psi(z)$ for every $k\ge-C_0$.
\end{lemma}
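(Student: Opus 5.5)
The plan is to reduce both parts to elementary properties of $\psi(z)=\int_0^1 e^{sz}\,ds$. This function is positive, increasing and smooth. For $z\ge -2C_0$ it behaves like a constant near the origin and like $e^z/z$ for large $z$. The argument rests on two facts used repeatedly below:
\[
\psi(z)\asymp \frac{e^{z}}{1+|z|}\quad\text{for } z\ge -2C_0-C_0,
\]
with constants depending only on $C_0$, and
\[
\frac{\psi(z+k)}{\psi(z)}=\frac{e^{z+k}-1}{e^z-1}\cdot\frac{z}{z+k}
\]
whenever both $z$ and $z+k$ are nonzero; the zero cases are handled by continuity.

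For the equivalence $\psi(z)\asymp e^z/(1+|z|)$ on $[-3C_0,\infty)$, I split into $z\in[-3C_0,1]$ and $z\ge1$. On the compact interval, $\psi$ is continuous and positive, and so is $e^z/(1+|z|)$, so the ratio is bounded above and below. For $z\ge1$ one has $e^z-1\in[(1-e^{-1})e^z,\,e^z]$ and $z\asymp 1+z$.

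For part (i), fix $k\in[-C_0,K]$ and $z\ge-2C_0$, so that $z+k\ge-3C_0$ and the equivalence applies to both $z$ and $z+k$. Then
\[
\frac{\psi(z+k)}{\psi(z)}\asymp e^{k}\,\frac{1+|z|}{1+|z+k|}.
\]
Here $e^k\in[e^{-C_0},e^K]$. The factor $(1+|z|)/(1+|z+k|)$ lies in $\bigl[1/(1+|k|),\,1+|k|\bigr]$ by the triangle inequality, and $|k|\le\max(C_0,K)$. This gives $c_K$ and $C_K$.

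For part (ii), with $k\ge-C_0$ unbounded, use the same display for the upper bound only:
\[
\frac{\psi(z+k)}{\psi(z)}\le C\,e^{k}\,\frac{1+|z|}{1+|z+k|}.
\]
It remains to bound $(1+|z|)/(1+|z+k|)$ uniformly. If $z\ge0$ and $k\ge0$, then $|z+k|\ge|z|$ and the ratio is at most $1$. In the remaining cases ($z<0$, or $k<0$), $|z|\le 2C_0$ or $|k|\le C_0$. In each such case the ratio is bounded by $1+3C_0$: when $z\in[-2C_0,0)$ the numerator is at most $1+2C_0$, and when $k\in[-C_0,0)$ the triangle inequality gives the bound $1+C_0$. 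Taking $C_\psi=C(1+3C_0)$ completes the proof.

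The only slightly delicate step is the uniformity of the two-sided equivalence near $z=0$ and at the left endpoint, and continuity on the compact piece settles it.
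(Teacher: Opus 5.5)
Your proof is correct, but it is organized differently from the paper's. You build one two-sided envelope, $\psi(z)\asymp e^{z}/(1+|z|)$ on $[-3C_0,\infty)$. Both parts then reduce to triangle-inequality bounds on $(1+|z|)/(1+|z+k|)$. Extending the envelope down to $-3C_0$ is what lets you evaluate it at $z+k$ for every admissible shift.

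The paper instead proves (i) by a soft argument. The ratio $(z,k)\mapsto\psi(z+k)/\psi(z)$ is continuous and positive on $[-2C_0,\infty)\times[-C_0,K]$ and tends to $e^{k}$ uniformly in $k$ as $z\to\infty$. It is therefore bounded above and below, by compactness on a bounded piece and by the limit beyond it. The paper proves (ii) by cases:
\begin{itemize}
\item for $k\in[-C_0,0)$, monotonicity of $\psi$ gives the bound with constant $e^{C_0}$;
\item for $k\ge0$ and $z+k\le1$, it uses $\psi(z+k)\le\psi(1)$;
\item for $z+k>1$, it uses $\psi(w)\le e^{w}/w$ together with the one-sided lower envelope $\psi(z)\ge c\,e^{z}/\max(z,1)$ on $[-2C_0,\infty)$.
\end{itemize}

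Your route buys a single method for both parts and explicit constants. For example, $C_K$ is the envelope ratio times $e^{K}\bigl(1+\max(C_0,K)\bigr)$, and you never need the uniform convergence of the ratio to $e^{k}$. The paper's route needs no envelope below $-2C_0$ and no lower bound at the shifted point $z+k$.

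The second displayed identity in your proposal, the explicit ratio formula, is never used and can be dropped.
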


\begin{proof}
Since $\psi(z)=\int_0^1e^{sz}\,ds$, it is positive, continuous, and strictly increasing on $\R$, with derivative $\int_0^1se^{sz}\,ds>0$.

For~(i), the map $(z,k)\mapsto\psi(z+k)/\psi(z)$ is continuous and positive on $[-2C_0,\infty)\times[-C_0,K]$, and as $z\to\infty$ it converges to $e^{k}\in[e^{-C_0},e^{K}]$ uniformly in $k$, since $\psi(z+k)/\psi(z)=e^{k}\bigl(z/(z+k)\bigr)(1+o(1))$.
It is therefore bounded above, and bounded below by a positive constant, on the whole domain: by compactness on $[-2C_0,Z]\times[-C_0,K]$, and by the uniform limit on $[Z,\infty)\times[-C_0,K]$ for $Z$ large.
Take $c_K$ and $C_K$ to be these bounds.

For~(ii), suppose first that $k\in[-C_0,0)$.
Monotonicity gives $\psi(z+k)\le\psi(z)$, and $e^{-k}\le e^{C_0}$, so $\psi(z+k)\le e^{C_0}e^{k}\psi(z)$.
Now suppose $k\ge0$.
If $z+k\le1$, then $\psi(z+k)\le\psi(1)$ and $\psi(z)\ge\psi(-2C_0)$, so the ratio is at most $\psi(1)/\psi(-2C_0)$, which is at most $C_\psi e^{k}$ because $e^{k}\ge1$.
If $z+k>1$, then $z+k\ge\max(z,1)$, since $k\ge0$ gives $z+k\ge z$ and the case condition gives $z+k>1$, so $\psi(z+k)\le e^{z+k}/(z+k)\le e^{k}e^{z}/\max(z,1)\le C_\psi e^{k}\psi(z)$, using $\psi(z)\ge c\,e^{z}/\max(z,1)$ for $z\ge-2C_0$.
That last bound holds because $z\psi(z)/e^{z}=1-e^{-z}\ge1-e^{-1}$ for $z\ge1$, while $\psi(z)/e^{z}$ is continuous and positive on the compact interval $[-2C_0,1]$ and hence bounded below there.
Taking $C_\psi$ to be the largest of the constants produced gives~(ii).
\end{proof}

For the $\ell_q$ theory, let $P_e=I-ee'/\|e\|^2$ and define the quotient inverse $\mathcal G_\phi$ of $\bar H_{\phi,n}$ by: $\mathcal G_\phi b\in e^\perp$ solves
$\bar H_{\phi,n}(\mathcal G_\phi b)=P_eb$. $\mathcal G_\phi$ is symmetric, and $\nu_\perp=\mathcal G_\phi S_\phi^0$.

\begin{lemma}[$\ell_q$ conditioning of the averaged Hessian] \label{lem:linf-hessian}
Under Assumption~\ref{ass:dgp}(a),(b), with probability approaching one,
\[
  \|\mathcal G_\phi\|_{\infty\to\infty}\le\frac{C_J}{N},
  \qquad\text{and consequently}\qquad
  \|\mathcal G_\phi\|_{q\to q}\le\frac{C_J}{N}\quad\text{for every }q\in[1,2],
\]
for a constant $C_J=C_J(c,C,C_0)$.
\end{lemma}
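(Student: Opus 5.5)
The plan is to prove the $\infty\to\infty$ bound by recasting $\bar H_{\phi,n}v=P_eb$ as the normal equations of a weighted two-way fit and repeating the Dobrushin-contraction argument of Lemma~\ref{lem:linf-fit}. The $q\to q$ bounds then follow by duality and interpolation. Since $\mathcal G_\phi$ is symmetric, $\|\mathcal G_\phi\|_{1\to1}=\|\mathcal G_\phi'\|_{\infty\to\infty}=\|\mathcal G_\phi\|_{\infty\to\infty}$. Riesz--Thorin between $q=1$ and $q=\infty$ then gives $\|\mathcal G_\phi\|_{q\to q}\le C_J/N$ for every $q\in[1,\infty]$, in particular for $q\in[1,2]$. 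So all the work is in the $\ell_\infty$ statement.

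Write $W_{ij}=\mu_{ij}\psi_{ij}$ with $\psi_{ij}=\psi(\xi_i+\omega_j)$, and set $M_i=\sum_{j}W_{ij}$, $\tilde M_j=\sum_iW_{ij}$. For $r=P_eb$ and $v=(a,c)$, the system reads $M_ia_i+\sum_jW_{ij}c_j=r_i$ and $\sum_iW_{ij}a_i+\tilde M_jc_j=\tilde r_j$. With $\check c=-c$ this becomes $a_i=r_i/M_i+\sum_j(W_{ij}/M_i)\check c_j$ and $\check c_j=-\tilde r_j/\tilde M_j+\sum_i(W_{ij}/\tilde M_j)a_i$. Eliminating $a$ gives $\check c=P\check c+h$, where $P$ is row-stochastic exactly as in the proof of Lemma~\ref{lem:linf-fit}. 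The remainder satisfies $\|h\|_\infty\le\|r\|_\infty(\max_i M_i^{-1}+\max_j\tilde M_j^{-1})\le 2\|b\|_\infty(1+o(1))/(cN)$. Here I use $\|P_eb\|_\infty\le2\|b\|_\infty$ and the lower bounds $M_i,\tilde M_j\ge c_\mu\psi(-2C_0)(N-1)$, which follow from $\xi_i+\omega_j\ge-2C_0$ (Lemma~\ref{lem:margins}(iii), transferred to the orthogonal representative on $\{|t_\nu|\le C_t\}$ as in Lemma~\ref{lem:fe-rate}). If I can show $\delta(P)\le1-\tau_0$ with $\tau_0$ fixed, then $\osc(\check c)\le C\|b\|_\infty/N$ and hence $\osc(a)\le C\|b\|_\infty/N$. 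The remaining level freedom, a common shift $a+t$, $c-t$, is pinned down by $v\in e^\perp$. That constraint forces $a$ and $\check c$ to contain values on both sides of a common level, so $\|v\|_\infty\le\osc(a)+\osc(\check c)+\|h\|_\infty$-type terms $\le C_J\|b\|_\infty/N$.

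The main obstacle is the Doeblin minorization $P_{jl}\ge\tau_0\pi_l$. Lemma~\ref{lem:linf-fit} needs a factorization $W_{ij}=p_iq_jr_{ij}$ with bounded $r$, and $\psi(\xi_i+\omega_j)$ does not factor, while $\psi_{ij}$ is unbounded because $\max e^{\omega_j}=O_p(N^{2/\alpha-1})$. My plan is to minorize only on a \emph{regular} set of importers $\mathcal R=\{l:\omega_l\le K\}$. By Lemma~\ref{lem:fe-rate}, $\#\{l:|\omega_l|>K\}\le K^{-2}\sum_j\omega_j^2=O_p(N^{4/\alpha-2})=o_p(N)$, so $|\mathcal R|\ge N/2$ w.p.a.\ one. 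For $l\in\mathcal R$, Lemma~\ref{lem:psi-facts}(i) gives $W_{il}\ge c_\mu c_K\psi(\xi_i)$. For the upper side I will use Lemma~\ref{lem:psi-facts}(ii), which gives $M_i\le C\psi(\xi_i)\sum_je^{\omega_j}\le C\psi(\xi_i)N$, with $\sum_je^{\omega_j}=O(N)$ from the Sinkhorn normalization and $|t_\nu|\le C_t$. Hence $W_{il}/M_i\ge c/N$ for every $i$ and every $l\in\mathcal R\setminus\{i\}$. Summing over the intermediate index then yields $P_{jl}\ge\tau_0\pi_l$ with $\pi$ the uniform distribution on $\mathcal R$, up to the excluded-diagonal correction of order $1/N$, handled as in Lemma~\ref{lem:linf-fit}. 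The constant $\tau_0$ depends only on $(c,C,C_0,C_t,K)$. If this step fails, for instance because Lemma~\ref{lem:psi-facts}(ii) gives too crude a bound on $M_i$ for exporters with very large $\xi_i$, my fallback is to minorize in the transposed direction (through exporters of bounded $\xi_i$). Alternatively I would bound $P$ over two steps, whose Dobrushin coefficient also controls $\osc$.
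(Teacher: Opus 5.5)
Your skeleton is the paper's: elimination to $\check c=P\check c+h$ with $P=P^{\mathrm{im}}P^{\mathrm{ex}}$ row stochastic, a Doeblin/Dobrushin contraction, the level fixed by $v\in e^\perp$, then symmetry and Riesz--Thorin. Your minorization, however, is genuinely different and lighter. The paper minorizes $P^{\mathrm{ex}}_{il}$ only for regular \emph{exporters} $i\in I_0=\{\xi_i\le K\}$, against $\sigma_l\propto\psi(\omega_l)$. It must then show that every importer sends at least half of its $P^{\mathrm{im}}$-mass into $I_0$, which needs Lemma~\ref{lem:psi-facts}(ii) plus the tail-mass bound of Lemma~\ref{lem:tailmass} to get $\sum_{i\notin I_0}e^{\xi_i}=o_p(N)$. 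You instead shift by the bounded $\omega_l$. Lemma~\ref{lem:psi-facts}(i) gives $W_{il}\ge c_\mu c_K\psi(\xi_i)$, and Lemma~\ref{lem:psi-facts}(ii) with $\sum_je^{\omega_j}=N-1$ gives $M_i\le C_\mu C_\psi\psi(\xi_i)(N-1)$. Then $\psi(\xi_i)$ cancels, so $P^{\mathrm{ex}}_{il}\ge c/N$ for \emph{every} exporter and every $l\in\mathcal R\setminus\{i\}$. Large $\xi_i$ therefore cause no trouble, your fallback is unnecessary, and Lemma~\ref{lem:tailmass} is not needed. Interpolating between $q=1$ and $q=\infty$ also removes the $\ell_2$ endpoint from Lemma~\ref{lem:fe-rate} that the paper uses.

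Two steps need correcting.

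First, your count of irregular importers fails as written. The bound $K^{-2}\sum_j\omega_j^2=O_p(N^{4/\alpha-2})$ is $o_p(N)$ only for $\alpha>4/3$. Nor can you fall back on the $\ell_q$ rates of Lemma~\ref{lem:fe-lp}, since those are derived from the lemma you are proving. The fix uses an ingredient you already invoke: in the Sinkhorn representative, $\#\{l:\omega_l>K\}\le e^{-K}\sum_je^{\omega_j}=e^{-K}(N-1)$. So $K=\log 2$ gives $|\mathcal R|\ge N/2$ on the event of Lemma~\ref{lem:margins}(iii). Because $W_{ij}$ depends only on the representative-free index $\xi_i+\omega_j$, work in the Sinkhorn representative throughout. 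If you transfer to the orthogonal representative on $\{|t_\nu|\le C_t\}$, $\tau_0$ depends on $C_t=C_t(\varepsilon)$. You would then get only an $O_p(1/N)$ bound, not $C_J(c,C,C_0)/N$ w.p.a.\ one as stated.

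Second, the excluded-diagonal term is not $O(1/N)$. Nor does Lemma~\ref{lem:linf-fit} apply to it, since the $\psi$-weights do not factor. Bound it directly:
\begin{itemize}
\item From $\xi_i\ge-C_0$, monotonicity of $\psi$, and Lemma~\ref{lem:psi-facts}(i) with $k=-C_0$, you get $\tilde M_j\ge c_\mu c_K(N-1)\psi(\omega_j)$.
\item Lemma~\ref{lem:psi-facts}(ii) then gives $P^{\mathrm{im}}_{jl}\le Ce^{\xi_l}/N\le C\max_ie^{\xi_i}/N$.
\item By Lemma~\ref{lem:margins}(iii) this is $O_p(N^{2/\alpha-2})=o_p(1)$.
\end{itemize}
Hence $P_{jl}\ge c/(2N)$ for all $j$ and all $l\in\mathcal R$ w.p.a.\ one, and $\delta(P)\le1-c/4$.
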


\begin{proof}
Write $\tilde\omega_{ij}=\mu_{ij}\psi(\xi_i+\omega_j)>0$ for the weights of $\bar H_{\phi,n}$, where $(\xi,\omega)$ denote the Sinkhorn representative; the index $\xi_i+\omega_j=d_\ell'\nu_n$ is representative-free.
By Lemma~\ref{lem:margins}(iii), $\xi_i,\omega_j\ge-C_0$, so with $c_3':=c_\mu\psi(-2C_0)/2$ we have $m_i^{\tilde\omega}:=\sum_{j\neq i}\tilde\omega_{ij}\ge c_\mu\psi(-2C_0)(N-1)\ge c_3'N$ for $N\ge2$, and symmetrically $\tilde m_j^{\tilde\omega}\ge c_3'N$.

\emph{Step~1: reduction to an averaging fixed point.}
Fix $b\in\R^{2N}$ and let $v=\mathcal G_\phi b$, that is, $v=(v^{\mathrm{ex}},v^{\mathrm{im}})\in e^\perp$ solves $\bar H_{\phi,n}v=P_eb$.
The exporter and importer blocks read
\[
  v^{\mathrm{ex}}_i=\frac{(P_eb)_i}{m_i^{\tilde\omega}}
  -\sum_{l\neq i}\frac{\tilde\omega_{il}}{m_i^{\tilde\omega}}\,v^{\mathrm{im}}_l,
  \qquad
  v^{\mathrm{im}}_j=\frac{(P_eb)_j}{\tilde m_j^{\tilde\omega}}
  -\sum_{i\neq j}\frac{\tilde\omega_{ij}}{\tilde m_j^{\tilde\omega}}\,v^{\mathrm{ex}}_i .
\]
Substituting the first into the second gives $v^{\mathrm{im}}=h+Pv^{\mathrm{im}}$ with $P=P^{\mathrm{im}}P^{\mathrm{ex}}$, where $P^{\mathrm{im}}_{ji}=\tilde\omega_{ij}/\tilde m_j^{\tilde\omega}$ and $P^{\mathrm{ex}}_{il}=\tilde\omega_{il}/m_i^{\tilde\omega}$ are row stochastic, and $\|h\|_\infty\le4\|b\|_\infty/(c_3'N)$, using only the lower bounds on the masses and $\|P_eb\|_\infty\le2\|b\|_\infty$.

\emph{Step~2: Doeblin minorization on the exporters with $\xi_i\le K$.}
Choose a fixed $K\ge1$ large enough that $c_me^{K}\ge T_0$, where $m_i^{(\omega)}\ge c_mN$ w.p.a.\ one (Lemma~\ref{lem:margins}(iii)) and $T_0$ is the threshold of Lemma~\ref{lem:tailmass}; then $\xi_i>K$ forces $R_i=e^{\xi_i}m_i^{(\omega)}>T_\ast N$ with $T_\ast:=c_me^{K}\ge T_0$.
Let $I_0=\{i:\xi_i\le K\}$ and let $I_0^c=\{i\le N:\xi_i>K\}$ be its complement, with $\#$ denoting cardinality. 
Denote $N_0=\#I_0^c$. 
For $i\in I_0$ and every $l\neq i$, Lemma~\ref{lem:psi-facts}(i) with shift $k=\xi_i\in[-C_0,K]$ gives $\tilde\omega_{il}\ge c_\mu c_K\psi(\omega_l)$ and $m_i^{\tilde\omega}\le C_\mu C_K\sum_{l'}\psi(\omega_{l'})$, hence
\[
  P^{\mathrm{ex}}_{il}\;\ge\;\tau_1\,\sigma_l
  \qquad(l\neq i),
  \qquad
  \sigma_l:=\frac{\psi(\omega_l)}{\sum_{l'}\psi(\omega_{l'})},
  \qquad
  \tau_1=\frac{c_\mu c_K}{C_\mu C_K},
\]
with the minorizing probability vector $\sigma$ common to all $i\in I_0$; the diagonal entry $P^{\mathrm{ex}}_{ii}$ does not exist, which is accounted for below.
For the column weights, Lemma~\ref{lem:psi-facts}(i) gives $\tilde\omega_{ij}\ge c_\mu c_{K}\psi(\omega_j)$ for $i\in I_0$, while Lemma~\ref{lem:psi-facts}(ii), whose shift range covers $\xi_i\ge-C_0$, gives $\tilde\omega_{ij}\le C_\mu C_\psi e^{\xi_i}\psi(\omega_j)$ for all $i$; therefore, uniformly in $j$,
\[
  \sum_{i\in I_0}P^{\mathrm{im}}_{ji}
  =\frac{\sum_{i\in I_0,\,i\neq j}\tilde\omega_{ij}}{\sum_{i\neq j}\tilde\omega_{ij}}\;\ge\;1-\frac{C_\mu C_\psi\sum_{i\notin I_0}e^{\xi_i}}{c_\mu c_K\,(N-N_0-1)} .
\]
By the margin identity $e^{\xi_i}=R_i/m_i^{(\omega)}\le CR_i/N$ and Lemma~\ref{lem:tailmass} with $q=1$, $\sum_{i:\xi_i>K}e^{\xi_i}\le(C/N)\sum_iR_i\1\{R_i>T_\ast N\}=O_p(N^{2-\alpha})=o_p(N)$ for $\alpha>1$, and $N_0 \le e^{-K}\sum_{i\notin I_0}e^{\xi_i}=o_p(N)$.
Hence w.p.a.\ one $\sum_{i\in I_0}P^{\mathrm{im}}_{ji}\ge\tfrac12$ for all $j$.
The composition $P=P^{\mathrm{im}}P^{\mathrm{ex}}$ runs over $i\notin\{j,l\}$, because the diagonal entries of the two factors do not exist; the index $i=l$, at which the minorization is unavailable, must therefore be removed:
\[
  P_{jl}\;\ge\;\sum_{i\in I_0\setminus\{l\}}
  P^{\mathrm{im}}_{ji}P^{\mathrm{ex}}_{il}
  \;\ge\;\tau_1\sigma_l
  \Bigl(\sum_{i\in I_0}P^{\mathrm{im}}_{ji}-P^{\mathrm{im}}_{jl}\Bigr).
\]
The omitted transition probability is uniformly negligible: for all $j,l$ with $l\neq j$,
\[
  P^{\mathrm{im}}_{jl}
  \;=\;\frac{\tilde\omega_{lj}}{\tilde m_j^{\tilde\omega}}
  \;\le\;\frac{C_\mu C_\psi\,e^{\xi_l}\,\psi(\omega_j)}
              {c_3''\,N\,\psi(\omega_j)}
  \;\le\;\frac{C\max_ie^{\xi_i}}{N}
  \;=\;O_p\bigl(N^{2/\alpha-2}\bigr)\;=\;o_p(1),
\]
using Lemma~\ref{lem:psi-facts}(ii) for the numerator, the lower bound over $I_0$, $\tilde m_j^{\tilde\omega}\ge\sum_{i\in I_0,i\neq j}\tilde\omega_{ij}\ge c_\mu c_K(N-N_0-1)\psi(\omega_j)\ge c_3''N\psi(\omega_j)$ w.p.a.\ one for the denominator, and $\max_ie^{\xi_i}=O_p(N^{2/\alpha-1})$ (Lemma~\ref{lem:margins}(iii)).
For $l=j$ no transition needs to be removed, and directly $P_{jj}\ge\tau_1\sigma_j\sum_{i\in I_0}P^{\mathrm{im}}_{ji}\ge\tfrac12\tau_1\sigma_j$.
Hence w.p.a.\ one $P_{jl}\ge\tau_1\sigma_l(\tfrac12-o_p(1))\ge\tfrac14\tau_1\sigma_l$ for all $j,l$, and the Dobrushin coefficient satisfies $\delta(P)\le1-\tau_0$ with $\tau_0=\tau_1/4$.

\emph{Step~3: oscillation and level.}
From $v^{\mathrm{im}}=h+Pv^{\mathrm{im}}$ and $\osc(Pv^{\mathrm{im}})\le(1-\tau_0)\osc(v^{\mathrm{im}})$, $\osc(v^{\mathrm{im}})\le\osc(h)/\tau_0\le8\|b\|_\infty/(c_3'\tau_0 N)$.
The exporter equation exhibits $v^{\mathrm{ex}}_i$ as $(P_eb)_i/m_i^{\tilde\omega}$ minus a weighted average of $v^{\mathrm{im}}$, so with $c_0:=\min_jv^{\mathrm{im}}_j$, $\max_i|v^{\mathrm{ex}}_i+c_0|\le\osc(v^{\mathrm{im}})+2\|b\|_\infty/(c_3'N)$.
The constraint $\langle v,e\rangle=\sum_iv^{\mathrm{ex}}_i-\sum_jv^{\mathrm{im}}_j=0$ then forces $|c_0|\le C\|b\|_\infty/N$, whence $\|v\|_\infty\le C_J\|b\|_\infty/N$.
Since $b\in\R^{2N}$ is arbitrary, $\|\mathcal G_\phi\|_{\infty\to\infty}\le C_J/N$.

\emph{Step~4: duality and interpolation.}
$\mathcal G_\phi$ is symmetric, so $\|\mathcal G_\phi\|_{1\to1}=\|\mathcal G_\phi\|_{\infty\to\infty}\le C_J/N$, while $\|\mathcal G_\phi\|_{\mathrm{op}}\le c_4^{-1}/N$ by the conditioning of Lemma~\ref{lem:fe-rate}.
Riesz--Thorin interpolation, applied on the event where both endpoint bounds hold, yields $\|\mathcal G_\phi\|_{q\to q}\le C_J/N$ for every $q\in[1,2]$.
\end{proof}

\begin{lemma}[$\ell_q$ fixed-effect rates]\label{lem:fe-lp}
Under Assumption~\ref{ass:dgp}(a),(b), for every fixed
$q\in(1,\alpha)$, the orthogonal-representative errors satisfy
\[
  \textstyle\sum_i|\xi_i|^q+\sum_j|\omega_j|^q=O_p(N^{2-q}),
  \qquad
  \sum_i|u_i|^q+\sum_j|w_j|^q=O_p(N^{2-q}),
\]
and consequently
$\sum_i|u_i|+\sum_j|w_j|\le(2N)^{1-1/q}\bigl(\sum_i|u_i|^q
+\sum_j|w_j|^q\bigr)^{1/q}=O_p(N^{1/q})=o_p(N)$.
\end{lemma}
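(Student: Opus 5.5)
The plan is to mirror the $\ell_2$ argument of Lemma~\ref{lem:fe-rate}, replacing Hilbert-space conditioning by the $\ell_q$ operator bound of Lemma~\ref{lem:linf-hessian}. The exact identity $\nu_\perp=\mathcal G_\phi S_\phi^0$ is available on the event of Lemma~\ref{lem:fe-rate}. Lemma~\ref{lem:linf-hessian} then gives
\[
\|\nu_\perp\|_q\le \|\mathcal G_\phi\|_{q\to q}\|S_\phi^0\|_q\le \frac{C_J}{N}\,\|S_\phi^0\|_q
\]
w.p.a.\ one, for every fixed $q\in(1,\alpha)\subset[1,2]$. The first claim therefore reduces to showing $\sum_i|S_i|^q+\sum_j|\tilde S_j|^q=O_p(N^2)$, since then $\|\nu_\perp\|_q^q=O_p(N^{-q}N^{2})=O_p(N^{2-q})$.

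For that moment bound I would use von Bahr--Esseen row by row. Each $S_i=\sum_{j\neq i}\mu_{ij}\eta_{ij}$ is a sum of $N-1$ independent, centered terms with bounded weights and $\E|\eta|^q<\infty$ because $q<\alpha$. Hence $\E|S_i|^q\le 2\sum_j\mu_{ij}^q\E|\eta|^q\le CN$ uniformly in $i$, and likewise for the columns $\tilde S_j$. Summing over the $2N$ blocks gives $\E\bigl(\sum_i|S_i|^q+\sum_j|\tilde S_j|^q\bigr)\le CN^2$, and Markov's inequality yields the $O_p(N^2)$ bound. This is the same device already used for term (I) in Step~3 of the proof of Theorem~\ref{thm:stable}.

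For the multiplicative errors I would follow the split in Lemma~\ref{lem:fe-rate}. First fix $C_t$ and work on $\{|t_\nu|\le C_t\}$, whose probability can be made arbitrarily close to one. Then divide the exporters at $\xi_{\perp,i}\le K'$ and $\xi_{\perp,i}>K'$.
\begin{itemize}
\item On the bounded set $\xi_{\perp,i}\in[-C_0-C_t,K']$ we have $|u_i|\le C_1|\xi_i|$, so these terms contribute at most $C_1^q\sum_i|\xi_i|^q=O_p(N^{2-q})$.
\item On the complement, $u_i\le CR_i/N$ and $R_i>T_\ast N$. Lemma~\ref{lem:tailmass} with the same $q$ then gives
\[
\E\sum_i (R_i/N)^q\1\{R_i>T_\ast N\}\le C N^{-q}N^{2+q-\alpha}=CN^{2-\alpha}=o(N^{2-q}),
\]
using $q<\alpha$.
\end{itemize}
The importer errors $w_j$ are handled symmetrically. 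The final $\ell_1$ statement is H\"older's inequality over $2N$ coordinates: it gives $O_p(N^{1-1/q}N^{(2-q)/q})=O_p(N^{1/q})$, which is $o_p(N)$ because $q>1$.

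The main obstacle is the large-effect tail in the multiplicative step. The additive $\ell_q$ bound does not control $e^{\xi}-1$ when $\xi$ is large, so the argument has to go back through the margin identity $e^{\xi_i}=R_i/m_i^{(\omega)}$ and the truncated moments of Lemma~\ref{lem:tailmass}. There is a representative-level bookkeeping issue here. The margin identity holds for the Sinkhorn representative, so one must check that the shift by $t_\nu$ changes constants only by $e^{C_t}$, exactly as is done in Lemma~\ref{lem:fe-rate}. One must also confirm that the threshold $K'$ is large enough that $\xi_{\perp,i}>K'$ forces $R_i>T_\ast N$ with $T_\ast\ge T_0$.
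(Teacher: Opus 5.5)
Your proposal is correct and follows the paper's proof essentially step for step. The additive errors come from the identity $\nu_\perp=\mathcal G_\phi S_\phi^0$, the $\ell_q$ bound of Lemma~\ref{lem:linf-hessian}, and a row-wise von Bahr--Esseen/Markov bound on $\|S_\phi^0\|_q^q$. The multiplicative errors come from the split $\xi_{\perp,i}\le K'$ versus $\xi_{\perp,i}>K'$ on $\{|t_\nu|\le C_t\}$, with Lemma~\ref{lem:tailmass} controlling the large effects, and H\"older's inequality gives the $\ell_1$ statement. The paper resolves the representative bookkeeping you flag just as you propose, by inheriting the event $\{|t_\nu|\le C_t\}$ and the threshold $K'$ from the proof of Lemma~\ref{lem:fe-rate}.
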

\begin{proof}
\emph{Additive errors.} $\nu_\perp=\mathcal G_\phi S_\phi^0$ with
$\langle S_\phi^0,e\rangle=0$. By
the von Bahr--Esseen inequality \citep{vonBahrEsseen1965},
$\E|S_i|^q\le2\sum_j\E|\mu_{ij}\eta_{ij}|^q\le CN$ for $q\in(1,\alpha)$,
so $\E\|S_\phi^0\|_q^q=\sum_i\E|S_i|^q+\sum_j\E|\tilde S_j|^q\le CN^2$, and
Lemma~\ref{lem:linf-hessian} gives
$\|\nu_\perp\|_q\le(C_J/N)\|S_\phi^0\|_q=O_p(N^{2/q-1})$, i.e.\
$\sum_i|\xi_i|^q+\sum_j|\omega_j|^q=O_p(N^{2-q})$.

\emph{Multiplicative errors.} Split at $\xi_i\le K'$ as in the last step of the proof of Lemma~\ref{lem:fe-rate}, on the event $\{|t_\nu|\le C_t(\varepsilon)\}$ chosen there; as there, $\varepsilon$ is arbitrary and the conclusions are $O_p$ statements. 
On $\{\xi_i\le K'\}$, $|u_i|\le C_1|\xi_i|$ and the additive bound applies. On $\{\xi_i>K'\}$, $1+u_i\le e^{C_t}CR_i/N$ and $\xi_i>K'$ forces $R_i>T_\ast N$, so by Lemma~\ref{lem:tailmass}, $\sum_{\xi_i>K'}(1+u_i)^q\le CN^{-q}\sum_iR_i^q\1\{R_i>T_\ast N\} =O_p(N^{2-\alpha})=o_p(N^{2-q})$ since $q<\alpha$. 
The importer bound is symmetric, and the $\ell_1$ bound follows from H\"older's inequality.
\end{proof}

\begin{corollary}[Uniformity over $\mathcal B$]\label{cor:uniform-beta}
Let $\nu_n(\beta)=\hat\phi(\beta)-\bar\phi(\beta)$, with components $\xi_i(\beta),\omega_j(\beta)$ and multiplicative errors $u_i(\beta),w_j(\beta)$ as in \eqref{eq:mult-fe}. 
Under Assumption~\ref{ass:dgp}(a),(b), the conclusions of Lemmas~\ref{lem:margins}(iii), \ref{lem:fe-rate}, \ref{lem:linf-hessian}, and~\ref{lem:fe-lp} hold with the pseudo-true
weights $\mu_\ell(\beta)$ in place of $\mu_\ell$ and $\nu_n(\beta)$ in place of $\nu_n$, with constants uniform over $\beta\in\mathcal B$; in particular
\[
  \sup_{\beta\in\mathcal B}
  \bigl(\|u(\beta)\|_2+\|w(\beta)\|_2\bigr)=O_p(N^{2/\alpha-1}),
  \qquad
  \sup_{\beta\in\mathcal B}
  \bigl(\|u(\beta)\|_q+\|w(\beta)\|_q\bigr)=O_p(N^{2/q-1})
\]
for every fixed $q\in(1,\alpha)$, and $\sup_{\beta\in\mathcal B}\max_ie^{\xi_i(\beta)}\big/ \sum_{i'}e^{\xi_{i'}(\beta)}=o_p(1)$, with the symmetric importer statements.
\end{corollary}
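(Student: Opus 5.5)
The plan is to repeat the proofs of Lemmas~\ref{lem:margins}(iii), \ref{lem:fe-rate}, \ref{lem:linf-hessian} and~\ref{lem:fe-lp}, at each fixed $\beta$, with the truth-level weights $\mu_\ell$ replaced by the pseudo-true weights $\mu_\ell(\beta)=\mu_\ell(\beta,\bar\phi(\beta))$. I will track that every constant depends only on $(c,C,\alpha)$ and on the bounds in \eqref{eq:pseudo-bdd}--\eqref{eq:dense-design}, which hold uniformly over $\mathcal B$. I will also arrange that every random event used is one defined by the shocks alone, not depending on $\beta$. Uniformity then comes for free, because a single event of probability approaching one serves all $\beta$ at once.

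The key observation is that the sample profiling equation at $\beta$ is a matrix-scaling problem. It has base kernel $e^{x_\ell'\beta}$ and \emph{observed} margins $R_i,\tilde R_j$, and those margins do not depend on $\beta$. The pseudo-true $\bar\phi(\beta)$ matches the population margins $m_i^0=\sum_j\mu^0_{ij}$ instead. Writing $\hat\mu_\ell(\beta)=\mu_\ell(\beta)e^{\xi_i(\beta)+\omega_j(\beta)}$, the first-order conditions become
\[
e^{\xi_i(\beta)}\textstyle\sum_j\mu_{ij}(\beta)e^{\omega_j(\beta)}=R_i=m_i^0+S_i,
\]
and symmetrically for importers. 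The pseudo-true definition gives $\sum_j\mu_{ij}(\beta)=m_i^0$.

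\textbf{Step 1 (margin regularity).} Parts (i)--(ii) of Lemma~\ref{lem:margins} concern only $R_i,\tilde R_j,T$, so they are free of $\beta$. The three steps of part (iii) use only $\mu_{ij}(\beta)\in[c,C]$ together with those margin bounds. They therefore give, on one $\beta$-free event, the bounds $\min\xi_i(\beta),\min\omega_j(\beta)\ge -C_0$ and $\max e^{\xi_i(\beta)}=O_p(N^{2/\alpha-1})$, uniformly in $\beta$. The ratio statement $\sup_\beta\max_ie^{\xi_i}/\sum e^{\xi_{i'}}=o_p(1)$ follows immediately, since the denominator is at least $Ne^{-C_0}$.

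\textbf{Step 2 (exact linear system and rates).} At each $\beta$, the identity $e^z-1=z\psi(z)$ gives $\bar H_{\phi,n}(\beta)\nu_\perp(\beta)=S^0_\phi$. The right side is the same $\beta$-free score $S^0_\phi=\sum_\ell\mu^0_\ell\eta_\ell d_\ell$, because $\sum_\ell(y_\ell-\mu_\ell(\beta))d_\ell=\sum_\ell(y_\ell-\mu^0_\ell)d_\ell$ by the definition of $\bar\phi(\beta)$. The conditioning argument of Lemma~\ref{lem:fe-rate} uses \eqref{eq:dense-design} at $\beta$, which is assumed uniformly, together with the Step~1 lower bound. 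The Doeblin minorization of Lemma~\ref{lem:linf-hessian} uses Lemma~\ref{lem:psi-facts} and the tail-mass estimate Lemma~\ref{lem:tailmass}. The latter is again a statement about the $\beta$-free margins $R_i$, and the index set $I_0(\beta)$ is controlled through $\sum_{i:\xi_i(\beta)>K}e^{\xi_i(\beta)}\le (C/N)\sum_iR_i\1\{R_i>T_\ast N\}$, whose right side does not depend on $\beta$. Hence $\sup_\beta\|\mathcal G_\phi(\beta)\|_{q\to q}\le C_J/N$ w.p.a.\ one.

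Combining this bound with $\|S^0_\phi\|_2=O_p(N^{2/\alpha})$ and $\|S^0_\phi\|_q=O_p(N^{2/q})$ yields the additive rates uniformly in $\beta$. The level $t_\nu(\beta)$ is bounded by $C+N^{-1}\sum_i|S_i|$ uniformly, so it is $O_p(1)$ uniformly.

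\textbf{Step 3 (multiplicative errors; main obstacle).} The passage from additive to multiplicative errors splits at $\xi_i(\beta)\le K'$. Below the threshold, $|u_i|\le C_1|\xi_i|$. Above it, I use $1+u_i(\beta)\le CR_i/N$ with $R_i>T_\ast N$, and both of these are $\beta$-free dominations. This is the delicate point: I must verify that the threshold $K'$ and the implication ``$\xi_i(\beta)>K'\Rightarrow R_i>T_\ast N$'' hold with constants uniform in $\beta$. That in turn rests on $m_i^{(\omega)}(\beta)\ge c_mN$ uniformly, which Step~1 supplies. Once this is checked, the stated $\sup_\beta$ bounds on $\|u(\beta)\|_2+\|w(\beta)\|_2$ and on $\|u(\beta)\|_q+\|w(\beta)\|_q$ follow exactly as in Lemmas~\ref{lem:fe-rate} and~\ref{lem:fe-lp}. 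No net or chaining argument over $\mathcal B$ is needed.
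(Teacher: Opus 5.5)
Your proposal is correct and follows the paper's own argument. The paper likewise observes that $S_{\phi,n}(\beta,\bar\phi(\beta))=S_\phi^0$ is the same $\beta$-free vector for every $\beta$, and that the cited lemmas use only this vector, the $\beta$-free observed margins, and the bounds of Assumption~\ref{ass:dgp}(b), which hold uniformly on $\mathcal B$; hence one shock-measurable event serves all $\beta$, and you simply spell out the checks (the set $I_0(\beta)$, the threshold $K'$, the level $t_\nu(\beta)$) that the paper leaves implicit. One small slip: since $|\xi^{\mathrm{sk}}_i(\beta)|\le C+|S_i|/(cN)$ and $\sum_i|S_i|=O_p(N^2)$, the level bound should read $\sup_{\beta}|t_\nu(\beta)|\le C+CN^{-2}\bigl(\sum_i|S_i|+\sum_j|\tilde S_j|\bigr)$, because with $N^{-1}$ as you wrote it the right side is $O_p(N)$, not $O_p(1)$.
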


\begin{proof}
Since $\bar\phi(\beta)$ solves $\sum_\ell(\mu_\ell^0-\mu_\ell(\beta,\bar\phi(\beta)))d_\ell=0$ (Assumption~\ref{ass:dgp}(b)), the population part of the fixed-effect score cancels,
\[
  S_{\phi,n}(\beta,\bar\phi(\beta))
  =\sum_\ell(y_\ell-\mu_\ell^0)d_\ell
  +\sum_\ell\bigl(\mu_\ell^0-\mu_\ell(\beta,\bar\phi(\beta))\bigr)d_\ell
  =\sum_\ell\mu_\ell^0\eta_\ell d_\ell=S_\phi^0,
\]
the same $\beta$-free heavy-tailed vector for every $\beta$. 
The proofs of the cited lemmas use only this driving vector, the bounds $\mu_\ell(\beta)\in[c,C]$ and the Gram conditioning \eqref{eq:dense-design} at $\mu_\ell(\beta)$, both uniform over
$\beta\in\mathcal B$ by Assumption~\ref{ass:dgp}(b), the one-sided support $\eta\ge-1$, and the margin identities at the weights $\mu_\ell(\beta)$; every estimate therefore holds with constants uniform in $\beta$. The final ratio bound follows from $\max_ie^{\xi_i(\beta)}=O_p(N^{2/\alpha-1})$ and $\sum_ie^{\xi_i(\beta)}\ge Ne^{-C_0}$, as in Lemma~\ref{lem:margins}(iii).
\end{proof}

The next lemma quantifies the difference between the two residualizations that appear in the sandwich objects: the pseudo-true residual $\tilde x_\ell(\beta)$, built from the weights
$\mu_\ell(\beta)$, and the fitted-weight residual $\hat{\tilde x}_\ell(\beta)$ of Appendix~\ref{app:profiling}, built from $\hat\mu_\ell(\beta)=\mu_\ell(\beta)(1+u_i(\beta))(1+w_j(\beta))$. 
The feasible concentrated Hessian and the estimated scores are functions of $\hat{\tilde x}_\ell$, whereas the population objects are functions of $\tilde x_\ell$; the lemma shows the two agree uniformly at the fixed-effect $\ell_1$ rate.

\begin{lemma}[Fitted-weight residualization]\label{lem:fitted-residual}
Under Assumption~\ref{ass:dgp}(a),(b), for any fixed $q\in(1,\alpha)$,
\[
  \sup_{\beta\in\mathcal B}\max_\ell
  \bigl\|\hat{\tilde x}_\ell(\beta)-\tilde x_\ell(\beta)\bigr\|
  \;\le\;\frac{C}{N}\,
  \sup_{\beta\in\mathcal B}\bigl(\|u(\beta)\|_1+\|w(\beta)\|_1\bigr)
  \;=\;O_p\bigl(N^{1/q-1}\bigr)\;=\;o_p(1)
\]
w.p.a.\ one; consequently
$\sup_{\beta\in\mathcal B}\max_\ell\|\hat{\tilde x}_\ell(\beta)\|\le C'$
w.p.a.\ one, and
\[
  \sup_{\beta\in\mathcal B}
  \Bigl\|\frac1n\sum_\ell\hat\mu_\ell(\beta)\bigl[
  \hat{\tilde x}_\ell(\beta)\hat{\tilde x}_\ell(\beta)'
  -\tilde x_\ell(\beta)\tilde x_\ell(\beta)'\bigr]\Bigr\|
  \;=\;O_p\bigl(N^{1/q-1}\bigr).
\]
\end{lemma}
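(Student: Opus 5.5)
The comparison reduces to a perturbation of the weighted two-way fit, coordinate by coordinate. Fix $\beta\in\mathcal B$ and a coordinate $k$, and work on the w.p.a.\ one event of Corollary~\ref{cor:uniform-beta}, on which the bounds hold uniformly in $\beta$.

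\emph{Step 1: the residual difference as a weighted two-way fit.} Write $\tilde x^{(k)}_{ij}(\beta)=x^{(k)}_{ij}-a_i-b_j$ with $(a,b)$ the $\mu(\beta)$-weighted fit. Likewise write $\hat{\tilde x}^{(k)}_{ij}(\beta)=x^{(k)}_{ij}-\hat a_i-\hat b_j$ with $(\hat a,\hat b)$ the $\hat\mu(\beta)$-weighted fit. The difference $\delta_{ij}:=\hat{\tilde x}^{(k)}_{ij}-\tilde x^{(k)}_{ij}=-(\hat a_i-a_i)-(\hat b_j-b_j)$ is additive.

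Since $(\hat a,\hat b)$ solves the $\hat\mu$-normal equations, the increments $(\alpha,\beta'):=(\hat a-a,\hat b-b)$ solve the $\hat\mu$-weighted two-way normal equations with target $g_{ij}=\tilde x^{(k)}_{ij}(\beta)$. Now use the orthogonality \eqref{eq:orth} at the pseudo-true weights, $\sum_j\mu_{ij}\tilde x_{ij}=0=\sum_i\mu_{ij}\tilde x_{ij}$. Together with $\hat\mu_{ij}=\mu_{ij}(1+u_i)(1+w_j)$, the row and column targets reduce to
\[
\sum_{j}\hat\mu_{ij}\tilde x_{ij}=(1+u_i)\sum_j\mu_{ij}w_j\tilde x_{ij},\qquad
\sum_{i}\hat\mu_{ij}\tilde x_{ij}=(1+w_j)\sum_i\mu_{ij}u_i\tilde x_{ij}.
\]
The linear-in-$\hat\mu$ parts of the targets therefore cancel, and only cross terms of size $\|u\|_1$ and $\|w\|_1$ survive.

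\emph{Step 2: solve the perturbed system.} Run the elimination of Lemma~\ref{lem:linf-fit}, instance (ii), with kernel $r=\mu(\beta)$, $p_i=e^{\xi_i}$, $q_j=e^{\omega_j}$, and $\epsilon_N=o_p(1)$. Divide by the masses $\hat m_i\asymp (1+u_i)N$ and $\hat{\tilde m}_j\asymp(1+w_j)N$; the factors $(1+u_i)$ and $(1+w_j)$ cancel. The forcing vector $h$ then has sup-norm at most $C N^{-1}(\|u\|_1+\|w\|_1)$, using $|\tilde x|\le C$.

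The Doeblin/Dobrushin contraction gives $\operatorname{osc}(\hat b-b)\le C\|h\|_\infty$. The $a$-equation then bounds $\hat a-a$ in the same way. Because $\delta_{ij}$ is invariant to the shift normalization, this yields $\max_{ij}|\delta_{ij}|\le \frac{C}{N}(\|u\|_1+\|w\|_1)$.

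Taking $\sup_\beta$ and invoking Corollary~\ref{cor:uniform-beta} together with Lemma~\ref{lem:fe-lp} gives the rate $O_p(N^{1/q-1})$. This first display is the main obstacle: the argument needs $\|h\|_\infty$ free of $\max_i u_i$, which is unbounded at rate $N^{2/\alpha-1}$. I expect the cancellation through \eqref{eq:orth} shown above to be exactly what delivers this, since it leaves only $\ell_1$ quantities. If the mass normalization fails on rows with large $u_i$, I would split at $\xi_i\le K$ as in Lemma~\ref{lem:linf-hessian}, using the $I_0$ minorization from that lemma.

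\emph{Step 3: the two consequences.}
\begin{itemize}
\item Boundedness follows from $\|\tilde x_\ell(\beta)\|\le C$, which holds by Lemma~\ref{lem:linf-fit}(i) and Assumption~\ref{ass:dgp}(b), combined with Step 2.
\item For the Hessian display, use $\hat{\tilde x}\hat{\tilde x}'-\tilde x\tilde x'=\delta\hat{\tilde x}'+\tilde x\delta'$, which gives the bound $\frac{C}{n}\sum_\ell\hat\mu_\ell\,\max_\ell\|\delta_\ell\|$. By Lemma~\ref{lem:fe-lp}, $n^{-1}\sum_\ell\hat\mu_\ell\le C n^{-1}\sum_{i,j}(1+|u_i|)(1+|w_j|)=O_p(1)$, since $\|u\|_1,\|w\|_1=o_p(N)$.
\end{itemize}
Multiplying these two bounds gives the stated $O_p(N^{1/q-1})$ uniformly over $\mathcal B$.
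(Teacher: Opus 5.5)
Your proposal is correct and follows the paper's proof essentially step for step. The residual difference is the $\hat\mu$-weighted two-way fit of $\tilde x^{(k)}(\beta)$. The orthogonality \eqref{eq:orth} together with $\hat m_i=(1+u_i)m_i^{(\omega)}(\beta)$ cancels the unbounded level factors, so the forcing is $O\bigl(N^{-1}(\|u\|_1+\|w\|_1)\bigr)$. The Dobrushin contraction of Lemma~\ref{lem:linf-fit}, instance~(ii), then the $\Delta a$-equation, then $n^{-1}\sum_\ell\hat\mu_\ell=O_p(1)$ for the Gram comparison finish it. The fallback split at $\xi_i\le K$ you mention is not needed: the cancellation is exact, and the Doeblin constant in Lemma~\ref{lem:linf-fit} does not depend on the factors $(p_i,q_j)$.
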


\begin{proof}
Fix $\beta$ and a coordinate $k$, and drop $\beta$ from the notation.
Both residuals are two-way fit residuals of the bounded dyadic variable $x^{(k)}$: $\tilde x^{(k)}$ at the weights $\mu_\ell(\beta)$ and $\hat{\tilde x}^{(k)}$ at the fitted weights $\hat\mu_{ij}=\mu_{ij}(1+u_i)(1+w_j)$.
Since a two-way fit reproduces additive components exactly, the difference of the two residuals is the $\hat\mu$-weighted two-way fit of the data $\tilde x^{(k)}$: $\hat{\tilde x}^{(k)}_{ij}-\tilde x^{(k)}_{ij}=-(\Delta a_i+\Delta b_j)$, where $(\Delta a,\Delta b)$ solves
\[
  \Delta a_i=F_i-\sum_{j\neq i}\frac{\hat\mu_{ij}}{\hat m_i}\,\Delta b_j,
  \qquad
  \Delta b_j=\tilde F_j
  -\sum_{i\neq j}\frac{\hat\mu_{ij}}{\hat{\tilde m}_j}\,\Delta a_i,
\]
with $\hat m_i=\sum_{j\neq i}\hat\mu_{ij}$, $\hat{\tilde m}_j=\sum_{i\neq j}\hat\mu_{ij}$, $F_i=\hat m_i^{-1}\sum_{j\neq i}\hat\mu_{ij}\tilde x^{(k)}_{ij}$, and its importer analogue $\tilde F_j$.
The forcing terms are centered by the $\mu(\beta)$-orthogonality $\sum_{j\neq i}\mu_{ij}\tilde x^{(k)}_{ij}=0$ (the row form of \eqref{eq:orth} at $\beta$):
\[
  \sum_{j\neq i}\hat\mu_{ij}\tilde x^{(k)}_{ij}
  =(1+u_i)\sum_{j\neq i}\mu_{ij}(1+w_j)\tilde x^{(k)}_{ij}
  =(1+u_i)\sum_{j\neq i}\mu_{ij}\,w_j\,\tilde x^{(k)}_{ij},
\]
while $\hat m_i=(1+u_i)\sum_{j\neq i}\mu_{ij}(1+w_j)=(1+u_i)\,m_i^{(\omega)}(\beta)$ with $m_i^{(\omega)}(\beta)\ge c_mN$ w.p.a.\ one uniformly in $i$ and $\beta$, by Lemma~\ref{lem:margins}(iii) and Corollary~\ref{cor:uniform-beta}.
The level factors $(1+u_i)$ cancel, so uniformly in $i$ and $\beta$,
\[
  |F_i|\le\frac{C}{N}\sum_j|w_j|=\frac{C}{N}\|w\|_1,
  \qquad
  |\tilde F_j|\le\frac{C}{N}\|u\|_1 .
\]
Eliminating $\Delta a$ as in Lemma~\ref{lem:linf-fit}, instance~(ii) (weights $\hat\mu$ with factors $p_i=1+u_i$, $q_j=1+w_j$, kernel $\mu(\beta)\in[c,C]$, and $\epsilon_N=o_p(1)$ uniformly in $\beta$ by Corollary~\ref{cor:uniform-beta}), gives $\Delta b=P\,\Delta b+h$ with $h$ built from the forcing terms, $\|h\|_\infty\le2\max\bigl(\max_i|F_i|,\max_j|\tilde F_j|\bigr)$, and Dobrushin coefficient $\delta(P)\le1-\tau_0$, so $\osc(\Delta b)\le2\|h\|_\infty/\tau_0$.
The $\Delta a$-equation exhibits $\Delta a_i+\Delta b_j$ as $F_i$ plus $\Delta b_j$ minus a weighted average of $\Delta b$, in which the level of $\Delta b$ cancels, whence $\max_{ij}|\Delta a_i+\Delta b_j|\le C(\|u\|_1+\|w\|_1)/N$.
The rate follows from $\sup_{\beta\in\mathcal B}(\|u(\beta)\|_1+\|w(\beta)\|_1)=O_p(N^{1/q})$ (Lemma~\ref{lem:fe-lp} and Corollary~\ref{cor:uniform-beta}), and the boundedness of $\hat{\tilde x}$ follows from $\max_\ell\|\tilde x_\ell(\beta)\|\le C$.
For the Gram comparison, $\|\hat{\tilde x}\hat{\tilde x}'-\tilde x\tilde x'\|\le(\|\hat{\tilde x}\|+\|\tilde x\|)\|\hat{\tilde x}-\tilde x\|\le C\max_\ell\|\hat{\tilde x}_\ell-\tilde x_\ell\|$, while $n^{-1}\sum_\ell\hat\mu_\ell(\beta)\le C+CN^{-1}(\|u\|_1+\|w\|_1)+Cn^{-1}\|u\|_1\|w\|_1=O_p(1)$ uniformly in $\beta$.
\end{proof}

\begin{lemma}[Smoothness of the profiling map]\label{lem:smooth-profile}
Under Assumption~\ref{ass:dgp}(b), the population profiling map $\beta\mapsto\bar\phi(\beta)$ solving $\sum_\ell\{\mu_\ell^0-\mu_\ell(\beta,\bar\phi(\beta))\}d_\ell=0$ is continuously differentiable on the compact set $\mathcal B$, with
\[
  \frac{\partial\bar\phi}{\partial\beta_k}(\beta)=-\,H_{\phi\phi,n}(\beta)^{-1}\sum_\ell\mu_\ell(\beta)\,d_\ell\,x^{(k)}_\ell,
  \qquad
  \sup_{\beta\in\mathcal B}\max_{\ell,k} \bigl|d_\ell'\,\partial\bar\phi(\beta)/\partial\beta_k\bigr|\le C_\infty,
\]
for a constant $C_\infty=C_\infty(c,C)$. 
Consequently $\beta\mapsto\mu_\ell(\beta)$ and $\beta\mapsto\tilde x_\ell(\beta)$ are continuously differentiable with derivatives bounded uniformly over $\ell$ and $\beta\in\mathcal B$, hence Lipschitz uniformly over $\ell$.
\end{lemma}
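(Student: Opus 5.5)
The plan is to treat the population profiling map as an implicit function and then read the $\ell_\infty$ bound off Lemma~\ref{lem:linf-fit}.

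\emph{Smoothness and the derivative formula.} Define $F(\beta,\phi)=\sum_\ell\{\mu_\ell^0-\mu_\ell(\beta,\phi)\}d_\ell$, a smooth map $\mathcal B\times e^\perp\to e^\perp$. Its image lies in $e^\perp$ because $d_\ell'e=0$. The partial Jacobian in $\phi$ is $-\sum_\ell\mu_\ell(\beta,\phi)d_\ell d_\ell'$. At $\phi=\bar\phi(\beta)$ this equals $-H_{\phi\phi,n}(\beta)$, which by \eqref{eq:dense-design} is invertible on $e^\perp$ with smallest eigenvalue at least $cN$, uniformly over $\mathcal B$. Uniqueness of $\bar\phi(\beta)$ in $e^\perp$ is built into Assumption~\ref{ass:dgp}(b). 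The implicit function theorem on the quotient therefore gives local $C^1$ dependence around each $\beta$, and uniqueness patches these local solutions into a global $C^1$ map on $\mathcal B$. Differentiating $F(\beta,\bar\phi(\beta))=0$ gives
\[
H_{\phi\phi,n}(\beta)\,\partial_k\bar\phi=-\sum_\ell\mu_\ell(\beta)d_\ell x^{(k)}_\ell .
\]
This is the displayed formula, with the inverse taken on $e^\perp$. The right-hand side is orthogonal to $e$, so the inverse is legitimate.

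\emph{Uniform $\ell_\infty$ bound.} Write $c^{(k)}=-\partial_k\bar\phi=(a^{(k)},b^{(k)})$. The equation above is exactly the normal-equation system of the $\mu(\beta)$-weighted two-way fit of $g=x^{(k)}$. To see this, read off the exporter and importer rows of $H_{\phi\phi,n}(\beta)c^{(k)}=\sum_\ell\mu_\ell(\beta)d_\ell x_\ell^{(k)}$; they reproduce the system \eqref{eq:xtilde-foc} with weights $\mu(\beta)$. The next step is to apply Lemma~\ref{lem:linf-fit} in instance (i). There $p_i\equiv q_j\equiv1$, and $r_{ij}=\mu_{ij}(\beta)\in[c,C]$ uniformly over $\mathcal B$ by \eqref{eq:pseudo-bdd}, so $\epsilon_N=1/N\le\epsilon_\ast$ for $N$ large. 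We also have $|g_{ij}|\le C$. Hence, in the normalization $b_N=0$,
\[
\max_i|a_i^{(k)}|+\max_j|b_j^{(k)}|\le K_\ast C,
\]
where $K_\ast$ depends only on $c/C$. The quantity $d_\ell'c^{(k)}=a_i+b_j$ does not depend on the representative, so the bound transfers to the representative in $e^\perp$. This yields $C_\infty=K_\ast C$. For the finitely many small $N$, the bound follows from continuity on the compact set $\mathcal B$.

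\emph{Consequences.} The derivative of $\log\mu_\ell(\beta)=x_\ell'\beta+d_\ell'\bar\phi(\beta)$ is $x_\ell+d_\ell'\partial\bar\phi$, which is bounded by $C+pC_\infty$. Since $\mu_\ell\le C$, $\partial\mu_\ell$ is bounded uniformly as well. For $\tilde x_\ell(\beta)=x_\ell-(a(\beta)_i+b(\beta)_j)$, differentiating the normal equations \eqref{eq:xtilde-foc} in $\beta$ shows that $\partial(a_i+b_j)$ solves the same $\mu(\beta)$-weighted two-way fit. The data in that fit are $(\partial_k\log\mu_{ij})\,\tilde x_{ij}$, up to additive terms that the fit reproduces exactly, and these data are bounded by the previous step. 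A second application of Lemma~\ref{lem:linf-fit} then bounds $\partial\tilde x_\ell$ uniformly, and the mean value theorem gives uniform Lipschitz continuity.

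The main obstacle is this last differentiation. One must check that the derivative of the weighted residual really has the form of a weighted two-way fit with bounded forcing. Concretely, differentiating the orthogonality condition $\sum_{j}\mu_{ij}\tilde x_{ij}=0$ gives $\sum_j\mu_{ij}\,\partial\tilde x_{ij}=-\sum_j(\partial\mu_{ij})\tilde x_{ij}$, and the column condition behaves the same way. So $\partial\tilde x=-(\delta a_i+\delta b_j)$, where $(\delta a,\delta b)$ solves the fit with forcing bounded by $C\cdot C_\infty$. Lemma~\ref{lem:linf-fit} is stated in its fit form, which accepts general bounded $g$, so it applies to this forcing, and I would state the step in that form.
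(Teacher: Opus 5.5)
Your proposal is correct and follows the paper's proof essentially step for step. It uses the same three ingredients: the implicit function theorem on $e^\perp$ with Jacobian $-H_{\phi\phi,n}(\beta)$ invertible by \eqref{eq:dense-design}; the identification of $-\partial\bar\phi/\partial\beta_k$ with the normal equations of the $\mu(\beta)$-weighted two-way fit of $x^{(k)}$, so that Lemma~\ref{lem:linf-fit}, instance~(i), bounds $d_\ell'\partial\bar\phi/\partial\beta_k$; and a second application of that lemma to the fit of $(\partial\log\mu_{ij}/\partial\beta_m)\,\tilde x^{(k)}_{ij}(\beta)$, obtained by differentiating the normal equations. Your extra care about patching local solutions and the requirement $1/N\le\epsilon_\ast$ makes explicit what the paper leaves implicit, but for the finitely many small $N$ you should invoke the bound $\lambda_{\min}(H_{\phi\phi,n}(\beta)\restriction_{e^\perp})\ge cN$ rather than continuity on $\mathcal B$, which alone would not yield a constant depending only on $(c,C)$.
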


\begin{proof}
The pseudo-true effects solve the population score equation: with $\bar S_{\phi,n}(\beta,\phi):=\sum_\ell\{\mu_\ell^0-\mu_\ell(\beta,\phi)\}d_\ell$, $\bar\phi(\beta)$ solves $\bar S_{\phi,n}(\beta,\bar\phi(\beta))=0$; the map $(\beta,\phi)\mapsto\bar S_{\phi,n}(\beta,\phi)$ is $C^\infty$, and it is to be distinguished from the sample score $S_{\phi,n}$, which defines $\hat\phi(\beta)$.
The Jacobian of $\bar S_{\phi,n}$ in $\phi$ is $-H_{\phi\phi,n}(\beta,\phi)$, the additive term $\sum_\ell\mu^0_\ell d_\ell$ not depending on $\phi$, and it satisfies $\lambda_{\min}\bigl(N^{-1}H_{\phi\phi,n}(\beta,\phi)\restriction_{e^\perp}\bigr)\ge c>0$ uniformly over $\beta\in\mathcal B$ by \eqref{eq:pseudo-bdd} and \eqref{eq:dense-design}, hence is invertible on $e^\perp$.
The implicit function theorem on the quotient then gives a $C^1$ solution $\bar\phi(\beta)$, in the representative fixed by Assumption~\ref{ass:dgp}(b), with the stated derivative.

For the entrywise bound, note that $c^{(k)}(\beta):=-\partial\bar\phi(\beta)/\partial\beta_k$ solves $H_{\phi\phi,n}(\beta)\,c^{(k)}=\sum_\ell\mu_\ell(\beta)d_\ell x^{(k)}_\ell$, the normal-equation system \eqref{eq:xtilde-foc} of the $\mu(\beta)$-weighted two-way fit of the bounded dyadic variable $x^{(k)}$; equivalently $d_\ell'c^{(k)}=a^{(k)}_i(\beta)+b^{(k)}_j(\beta)$.
Lemma~\ref{lem:linf-fit}, instance~(i), gives $\max_i|a^{(k)}_i(\beta)|+\max_j|b^{(k)}_j(\beta)|\le K_\ast C$ uniformly over $\beta\in\mathcal B$, which is the entrywise bound with $C_\infty=2K_\ast C$.

For $\tilde x_\ell(\beta)$, differentiating the normal equations of the fit $(a^{(k)}(\beta),b^{(k)}(\beta))$ in $\beta_m$ shows that the derivative pair $(\dot a,\dot b)$ solves the same $\mu(\beta)$-weighted two-way normal equations with data $\bigl(\partial\log\mu_{ij}(\beta)/\partial\beta_m\bigr)\tilde x^{(k)}_{ij}(\beta)$.
Since $|\partial\log\mu_{ij}(\beta)/\partial\beta_m|=|x^{(m)}_{ij}+d_{ij}'\partial\bar\phi/\partial\beta_m|\le C+C_\infty$ by the first step and $|\tilde x^{(k)}_{ij}(\beta)|\le C$, a second application of Lemma~\ref{lem:linf-fit}, now with $C_g=(C+C_\infty)C$, bounds $\|\dot a\|_\infty+\|\dot b\|_\infty$ by a constant, so $\partial\tilde x^{(k)}_\ell(\beta)/\partial\beta_m$ is bounded uniformly over $\ell$ and $\beta$.
Uniform Lipschitz continuity of $\mu_\ell(\beta)$ and $\tilde x_\ell(\beta)$ on the compact convex set $\mathcal B$ follows.
\end{proof}


\begin{lemma}[Exact remainder identity and rates] \label{lem:remainder}
Under Assumption~\ref{ass:dgp}(a),(b), with probability approaching one, parts~(a)--(c) hold for every $\alpha\in(1,2)$; part~(d) holds for every $\alpha\in(1,2)$ under the additional condition \eqref{eq:design-op} with $\gamma<1$.
\begin{enumerate}
  \item[\emph{(a)}] \emph{(Exact remainder identity)} The profiled score admits the exact decomposition
    \begin{equation}
      S_n^{\mathrm{profile}}(\beta_0)\;=\;\sum_\ell s_\ell\;-\;R_n, \qquad
      R_n\;=\;\sum_{i\neq j}\mu_{ij}\,u_i w_j\,\tilde x_{ij}, \label{eq:exact-Rn}
    \end{equation}
    with $u_i,w_j$ as in \eqref{eq:mult-fe}.
  \item[\emph{(b)}] \emph{(Fixed-effect rate)}
    $\sum_i\xi_i^2,\sum_j\omega_j^2,\sum_i u_i^2,\sum_j w_j^2=O_p\bigl(N^{4/\alpha-2}\bigr)$ and, for every $q\in(1,\alpha)$, $\sum_i|u_i|^q,\sum_j|w_j|^q=O_p(N^{2-q})$.
  \item[\emph{(c)}] \emph{(score covariance remainder)}
    \begin{equation}
      \frac{n}{a_n^2}\Bigl\|\,\hat\Omega_n^{\mathrm{prof}}(\beta_0)-n^{-1}\textstyle\sum_{i\neq j}s_{ij}s_{ij}'\,\Bigr\|\;\convp\;0,
      \label{eq:meat-remainder}
    \end{equation}
    where $\hat\Omega_n^{\mathrm{prof}}(\beta_0)$ is the score covariance at $(\beta_0,\hat\phi(\beta_0))$ and $s_{ij}=\eta_{ij}\kappa_{ij}$ are the truth-level scores.
  \item[\emph{(d)}] \emph{(Score remainder)} $\|R_n\|=o_p(a_n)$, equivalently
    \begin{equation}
      \frac{1}{a_n}\Bigl[\,S_n^{\mathrm{profile}}(\beta_0) -\textstyle\sum_{i\neq j}\eta_{ij}\,\kappa_{ij}\,\Bigr]\;\convp\;0.
      \label{eq:fe-remainder}
    \end{equation}
\end{enumerate}
\end{lemma}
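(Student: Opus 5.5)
(a) First I establish the exact remainder identity by an algebraic computation at $\beta_0$. By definition, $S_n^{\mathrm{profile}}(\beta_0)=\sum_\ell(y_\ell-\hat\mu_\ell(\beta_0))x_\ell$. The fixed-effect first-order condition $\sum_\ell(y_\ell-\hat\mu_\ell)d_\ell=0$ shows that $y_\ell-\hat\mu_\ell$ is orthogonal to every additive $g_\ell=a_i+b_j$. I can therefore replace $x_\ell$ by $\tilde x_\ell$, since $x_\ell-\tilde x_\ell$ is additive by \eqref{eq:xtilde-additive}. This gives $S_n^{\mathrm{profile}}(\beta_0)=\sum_\ell(y_\ell-\hat\mu_\ell)\tilde x_\ell$. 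Substituting $y_\ell=\mu_\ell(1+\eta_\ell)$ and $\hat\mu_\ell=\mu_\ell(1+u_i)(1+w_j)$ yields
\[
\sum_\ell \mu_\ell\eta_\ell\tilde x_\ell-\sum_\ell\mu_\ell(u_i+w_j)\tilde x_\ell-\sum_\ell\mu_\ell u_iw_j\tilde x_\ell .
\]
The middle sum vanishes by the truth-weight orthogonality \eqref{eq:orth}, because $u_i+w_j$ is additive. This leaves exactly \eqref{eq:exact-Rn}. The identity is representative-free because a shift $t$ in the representative changes $u,w$ only through $(1+u_i)(1+w_j)$, which is invariant.

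(b) This part restates earlier results: Lemma~\ref{lem:fe-rate} supplies the $\ell_2$ rates and Lemma~\ref{lem:fe-lp} the $\ell_q$ rates.

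(c) I write $\hat s_\ell(\beta_0)=\mu_\ell(\eta_\ell-\zeta'_\ell)\hat{\tilde x}_\ell(\beta_0)$ with $\zeta'_\ell=u_i+w_j+u_iw_j$. I compare with $s_\ell=\mu_\ell\eta_\ell\tilde x_\ell$ and expand the outer-product difference into cross terms. Two ingredients control these terms. First, Lemma~\ref{lem:fitted-residual} gives $\max_\ell\|\hat{\tilde x}_\ell-\tilde x_\ell\|=O_p(N^{1/q-1})=o_p(1)$. Second, $\sum_\ell\eta_\ell^2=O_p(a_n^2)$ by Lemma~\ref{lem:rowsum}(iv). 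Together these bound the terms with $\eta_\ell^2$ by $o_p(a_n^2)$. The $\zeta'$ terms are handled by $\sum_\ell\zeta_\ell'^2\le 3(N\|u\|_2^2+N\|w\|_2^2+\|u\|_2^2\|w\|_2^2)=O_p(N^{4/\alpha-1}+N^{8/\alpha-4})=o_p(N^{4/\alpha})=o_p(a_n^2)$, which uses (b) and $\alpha>1$. Mixed terms follow by Cauchy--Schwarz. Multiplying by $n/a_n^2$ after dividing by $n$ gives \eqref{eq:meat-remainder}.

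(d) This is the main obstacle. By (a), $R_n^{(k)}=u'A_n^{(k)}w$, so $|R_n^{(k)}|\le\|A_n^{(k)}\|_{\mathrm{op}}\|u\|_2\|w\|_2=O(N^\gamma)\,O_p(N^{4/\alpha-2})$. This crude bound is $o_p(N^{2/\alpha})$ only when $\gamma<2-2/\alpha$, which is not enough in general. I would therefore sharpen it by interpolating between the $\ell_2$ operator bound and an $\ell_1\to\ell_\infty$ bound. Since the entries of $A$ are bounded, $|u'Aw|\le C\|u\|_1\|w\|_1=O_p(N^{2/q})$, which beats $N^{2/\alpha}$ only marginally. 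The more useful device is Riesz--Thorin: $\|A\|_{q\to q'}\le\|A\|_{\mathrm{op}}^{\theta}\|A\|_{1\to\infty}^{1-\theta}$ with $\theta=2(1-1/q)$, combined with $|u'Aw|\le\|u\|_q\|A\|_{q\to q'}\|w\|_q$ and the $\ell_q$ rates $\|u\|_q=O_p(N^{2/q-1})$. This gives the exponent $2\gamma(1-1/q)+4/q-2$. Letting $q\uparrow\alpha$ produces the bound $N^{2\gamma(1-1/\alpha)+4/\alpha-2+\epsilon}$ stated in the discussion after \eqref{eq:design-op}. Divided by $a_n\asymp N^{2/\alpha}$, the ratio is $N^{-2(1-\gamma)(1-1/\alpha)+\epsilon}\to0$ for $\gamma<1$. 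The delicate points are the choice of $q$ close enough to $\alpha$ and the fact that all rates hold jointly w.p.a.\ one.
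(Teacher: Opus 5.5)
Your proposal is correct. Parts (b) and (d) follow the paper. Part (b) is cited from Lemmas~\ref{lem:fe-rate} and~\ref{lem:fe-lp}. Part (d) uses the same Riesz--Thorin interpolation between the $\ell_1\to\ell_\infty$ and $\ell_2\to\ell_2$ bounds on $A_n^{(k)}$, and your $\theta=2(1-1/q)$ equals the paper's $1-\vartheta$.

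Parts (a) and (c) take shorter routes to the same results:
\begin{itemize}
\item In (a), the paper linearizes around the truth. It writes $e^{d_\ell'\nu_n}-1=d_\ell'\nu_\perp+\rho_\ell$, solves the fixed-effect first-order condition for $\nu_\perp$, and then splits $\rho_\ell$ into additive and cross parts. You instead use the first-order condition at $(\beta_0,\hat\phi_0)$ to replace $x_\ell$ by $\tilde x_\ell$ in one step. This is more direct and reaches the same identity.
\item In (c), the paper passes through the intermediate truth-residual score $\check s_\ell$. It splits $\zeta_\ell$ into additive and product parts and bounds the shock-weighted product term with a Schur-test bound on $\|B_n\|_{\mathrm{op}}$. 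Your single Cauchy--Schwarz step uses $\sum_\ell\|s_\ell\|^2=O_p(a_n^2)$ and $\sum_\ell\|\hat s_\ell(\beta_0)-s_\ell\|^2=o_p(a_n^2)$. It makes the operator-norm bound unnecessary and yields the same rates: $N^{-1/2}$, $N^{2/\alpha-2}$, and $\max_\ell\|\hat{\tilde x}_\ell(\beta_0)-\tilde x_\ell\|$.
\end{itemize}

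Two asides are misstated, though neither affects the argument:
\begin{itemize}
\item In (d), since $q<\alpha$, $\|u\|_1\|w\|_1=O_p(N^{2/q})$ is larger than $a_n\asymp N^{2/\alpha}$. So the $\ell_1$ bound fails; it does not beat $N^{2/\alpha}$ marginally. You never use it.
\item In (a), $u_iw_j$ is not itself invariant to the representative. $R_n$ is invariant because your derivation holds in any representative, which gives $R_n=\sum_\ell s_\ell-S_n^{\mathrm{profile}}(\beta_0)$.
\end{itemize}
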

\begin{proof}
\emph{Step~1: remainder \eqref{eq:exact-Rn}.}
Since $y_\ell=\mu_\ell(1+\eta_\ell)$ and $\mu_\ell(\beta_0,\hat\phi_0)=\mu_\ell e^{d_\ell'\nu_n}$,
\[
  y_\ell-\mu_\ell(\beta_0,\hat\phi_0)
  \;=\;\mu_\ell\eta_\ell-\mu_\ell\bigl(e^{d_\ell'\nu_n}-1\bigr).
\]
Write $e^{d_\ell'\nu_n}-1=d_\ell'\nu_\perp+\rho_\ell$ with $\rho_\ell=e^{d_\ell'\nu_n}-1-d_\ell'\nu_\perp$.
Substituting into $S_{\beta,n}(\beta_0,\hat\phi_0)=\sum_\ell\{y_\ell-\mu_\ell(\beta_0,\hat\phi_0)\}x_\ell$ gives $S_{\beta,n}(\beta_0,\hat\phi_0)=S_{\beta,n}(\beta_0,\phi_0)-H_{\beta\phi,n}\nu_\perp-\sum_\ell\mu_\ell\rho_\ell x_\ell$, while the fixed-effect first-order condition $S_{\phi,n}(\beta_0,\hat\phi_0)=0$ gives $H_{\phi\phi,n}\nu_\perp=S_{\phi,n}(\beta_0,\phi_0)-\sum_\ell\mu_\ell\rho_\ell d_\ell$ on the quotient by $e$.
Solving for $\nu_\perp$ and substituting,
\[
  S_{\beta,n}(\beta_0,\hat\phi_0)
  =\underbrace{S_{\beta,n}(\beta_0,\phi_0)
  -H_{\beta\phi,n}H_{\phi\phi,n}^{-1}S_{\phi,n}(\beta_0,\phi_0)}_{=\,\sum_\ell s_\ell}
  \;-\;\sum_\ell\mu_\ell\rho_\ell\tilde x_\ell .
\]
Using $e^{\xi_i+\omega_j}=(1+u_i)(1+w_j)$ with $(u,w)$ the orthogonal-representative errors,
\[
  e^{d_\ell'\nu_n}-1=u_i+w_j+u_iw_j,
  \qquad
  \rho_\ell=(u_i-\xi_i)+(w_j-\omega_j)+u_iw_j .
\]
The first two terms are additive in $(i,j)$, so by \eqref{eq:orth} they contribute zero to $\sum_\ell\mu_\ell\rho_\ell\tilde x_\ell$; only the cross term remains, giving \eqref{eq:exact-Rn}.

\emph{Step~2: fixed-effect rates (b).}
These are Lemmas~\ref{lem:fe-rate} and~\ref{lem:fe-lp}.
In particular, the exact integral identity $\bar H_{\phi,n}\nu_\perp=S_\phi^0$ of Lemma~\ref{lem:fe-rate} is solved using the conditioning $\lambda_{\min}(N^{-1}\bar H_{\phi,n}\restriction_{e^\perp})\ge c_4$, which in turn rests on the uniform lower bound $\xi_i,\omega_j\ge-C_0$ of Lemma~\ref{lem:margins}: the one-sided shock support $\eta_{ij}\ge-1$ keeps every fitted margin above $c'N$, so no exporter or importer weight degenerates and the averaged Hessian is well conditioned.

\emph{Step~3: score covariance remainder (c).}
Set $\check s_\ell:=\{y_\ell-\mu_\ell(\beta_0,\hat\phi_0)\}\,\tilde x_\ell$, the estimated score with the truth-weight residual, and $\check\Omega_n:=n^{-1}\sum_\ell\check s_\ell\check s_\ell'$.
From $\check s_\ell-s_\ell=-\mu_\ell(u_i+w_j+u_iw_j)\tilde x_\ell$ and Assumption~\ref{ass:dgp}(b), $\|\check s_\ell-s_\ell\|\le C\zeta_\ell$ w.p.a.\ one, where $\zeta_\ell:=|u_i|+|w_j|+|u_iw_j|$, so
\[
  \Bigl\|\check\Omega_n
  -n^{-1}\textstyle\sum_\ell s_\ell s_\ell'\Bigr\|
  \le\underbrace{\frac2n\sum_\ell\|s_\ell\|\,\|\check s_\ell-s_\ell\|}_{=:E_1}
  +\underbrace{\frac1n\sum_\ell\|\check s_\ell-s_\ell\|^2}_{=:E_2}.
\]
We bound $E_1$ and $E_2$ by splitting $\zeta_\ell$ into its additive part $|u_i|+|w_j|$, whose terms depend on a single index, and its cross part $|u_iw_j|$, which depends on both.
The two parts are bounded separately: an additive term factorizes over the dyadic sum, each value being replicated once per free index, whereas the cross term weighted by the shocks is a bilinear form in $B_n=(|\eta_{ij}|)$ and requires an operator-norm bound.
The centered array $|\eta_{ij}|-\E|\eta|$ is i.i.d.\ with the same tail index $\alpha$, so Lemma~\ref{lem:rowsum}(i) applied to it gives $\max_i|\sum_j(|\eta_{ij}|-\E|\eta|)|=O_p(N^{2/\alpha})$, and adding back the $O(N)$ mean yields
\begin{equation}
  \max_i\sum_j|\eta_{ij}|,\ \max_j\sum_i|\eta_{ij}|=O_p(N^{2/\alpha}),
  \|B_n\|_{\mathrm{op}}\le\Bigl(\max_i\textstyle\sum_j|\eta_{ij}|\cdot \max_j\sum_i|\eta_{ij}|\Bigr)^{1/2}=O_p(N^{2/\alpha}),
  \label{eq:Bn-op}
\end{equation}
where the operator-norm bound follows from the Schur test.

\emph{Additive contributions.}
By part~(b), the additive part of $E_2$ is $\tfrac{C}{n}\sum_{i\neq j}(u_i^2+w_j^2)=\tfrac{C}{n}(N-1)\bigl(\sum_iu_i^2+\sum_jw_j^2\bigr)=O_p(N^{4/\alpha-3})$, the factor $N-1$ being the replication count.
By Cauchy--Schwarz $\sum_i|u_i|\le N^{1/2}(\sum_iu_i^2)^{1/2}=O_p(N^{2/\alpha-1/2})$, and likewise for $w$, so the additive part of $E_1$ is $\tfrac{C}{n}(\sum_i|u_i|)\max_i\sum_j|\eta_{ij}|+(\mathrm{sym.})=O_p(N^{-2}\cdot N^{2/\alpha-1/2}\cdot N^{2/\alpha})=O_p(N^{4/\alpha-5/2})$.

\emph{Product contributions.}
Writing $|u|=(|u_i|)_i$ and $|w|=(|w_j|)_j$, the product part of $E_1$ does not factorize and is bounded through \eqref{eq:Bn-op}: $\tfrac{C}{n}\sum_{i\neq j}|\eta_{ij}||u_iw_j|=\tfrac{C}{n}|u|'B_n|w|\le\tfrac{C}{n}\|u\|_2\|B_n\|_{\mathrm{op}}\|w\|_2=O_p(N^{-2}\cdot N^{2/\alpha-1}\cdot N^{2/\alpha}\cdot N^{2/\alpha-1})=O_p(N^{6/\alpha-4})$ by part~(b).
The product part of $E_2$ carries no shock weight and therefore does factorize: $\tfrac{C}{n}\sum_{i\neq j}u_i^2w_j^2=\tfrac{C}{n}\|u\|_2^2\|w\|_2^2=O_p(N^{8/\alpha-6})$.

Collecting the four terms and multiplying by $n/a_n^2\asymp N^{2-4/\alpha}$,
\[
  \frac{n}{a_n^2}\Bigl\|\check\Omega_n-n^{-1}\sum_\ell s_\ell s_\ell'\Bigr\|
  =O_p\bigl(N^{-1/2}+N^{-1}+N^{2/\alpha-2}+N^{4/\alpha-4}\bigr)\convp0
\]
for every $\alpha\in(1,2)$.

It remains to pass from the truth-residual score covariance $\check\Omega_n$ to $\hat\Omega_n^{\mathrm{prof}}(\beta_0)=n^{-1}\sum_\ell\hat s_\ell(\beta_0)\hat s_\ell(\beta_0)'$, whose scores carry the fitted-weight residual $\hat{\tilde x}_\ell(\beta_0)$.
By Lemma~\ref{lem:fitted-residual} at $\beta_0$, $\varrho_N:=\max_\ell\|\hat{\tilde x}_\ell(\beta_0)-\tilde x_\ell\|=O_p(N^{1/q-1})=o_p(1)$ for any fixed $q\in(1,\alpha)$, and
\[
  \|\hat s_\ell(\beta_0)-\check s_\ell\|
  =\bigl|y_\ell-\mu_\ell(\beta_0,\hat\phi_0)\bigr|\,
  \bigl\|\hat{\tilde x}_\ell(\beta_0)-\tilde x_\ell\bigr\|
  \le C(|\eta_\ell|+\zeta_\ell)\,\varrho_N,
  \qquad
  \|\check s_\ell\|\le C(|\eta_\ell|+\zeta_\ell).
\]
Hence
\[
  \frac{n}{a_n^2}
  \bigl\|\hat\Omega_n^{\mathrm{prof}}(\beta_0)-\check\Omega_n\bigr\|
  \le\frac{C}{a_n^2}\bigl(\varrho_N+\varrho_N^2\bigr)
  \sum_\ell(|\eta_\ell|+\zeta_\ell)^2
  =O_p(\varrho_N)\,O_p\Bigl(1+\frac{n}{a_n^2}\Bigr)=o_p(1),
\]
using $\sum_\ell(|\eta_\ell|+\zeta_\ell)^2=O_p(n+a_n^2)$, by Lemma~\ref{lem:rowsum}(iv) applied to $\{\eta_\ell^2\}$ and Lemma~\ref{lem:fe-rate} as in Step~5 of the proof of Theorem~\ref{thm:stable}, and $n/a_n^2\to0$ for $\alpha\in(1,2)$.
Combining the two comparisons gives \eqref{eq:meat-remainder}.
The passage to the score covariance $\hat\Omega_n$ at the PPML estimates $(\hat\beta,\hat\phi(\hat\beta))$ is carried out in Step~5 of the proof of Theorem~\ref{thm:stable}, using only consistency, the estimator rate, and Lemmas~\ref{lem:fe-rate} and~\ref{lem:fitted-residual}.

\emph{Step~4: score remainder (d).}
By the exact identity \eqref{eq:exact-Rn}, the $k$-th component of the remainder is the bilinear form
\[
  R_n^{(k)}=\sum_{i\neq j}\mu_{ij}\tilde x_{ij}^{(k)}u_iw_j
  =u'A_n^{(k)}w,
\]
with $A_n^{(k)}$ the residualized weighted design matrix of condition~\eqref{eq:design-op}, $(A_n^{(k)})_{ij}=\mu_{ij}\tilde x_{ij}^{(k)}$ ($i\neq j$, zero diagonal).
To bound $|R_n^{(k)}|$, note that $A_n^{(k)}$ maps $\ell_1\to\ell_\infty$ with norm $\max_{ij}|\mu_{ij}\tilde x^{(k)}_{ij}|\le C$ by Assumption~\ref{ass:dgp}(b) and $\ell_2\to\ell_2$ with norm $O(N^\gamma)$ by \eqref{eq:design-op}; by the Riesz--Thorin theorem, for every $\vartheta\in[0,1]$ and $1/q=(1+\vartheta)/2$, $A_n^{(k)}$ maps $\ell_q\to\ell_{q'}$ ($q'$ the conjugate exponent of $q$) with norm $O(N^{\gamma(1-\vartheta)})$.
H\"older's inequality and the $\ell_q$ rates $\|u\|_q,\|w\|_q=O_p(N^{2/q-1})=O_p(N^{\vartheta})$ of Lemma~\ref{lem:fe-lp}, admissible whenever $q=2/(1+\vartheta)<\alpha$, i.e.\ $\vartheta>2/\alpha-1$, give
\begin{equation}
  \bigl|R_n^{(k)}\bigr|\le\bigl\|A_n^{(k)}\bigr\|_{q\to q'}\|u\|_q\|w\|_q =O_p\bigl(N^{\gamma(1-\vartheta)+2\vartheta}\bigr).
  \label{eq:score-interp}
\end{equation}
Since $a_n\asymp N^{2/\alpha}$ under Assumption~\ref{ass:dgp}(a), the right-hand side of \eqref{eq:score-interp} is $o_p(a_n)$ provided $\gamma(1-\vartheta)+2\vartheta<2/\alpha$ for some admissible $\vartheta$.
At $\vartheta=2/\alpha-1+\epsilon$ the requirement reads
\[
  2\gamma\Bigl(1-\frac1\alpha\Bigr)+\frac4\alpha-2+\epsilon(2-\gamma)
  \;<\;\frac2\alpha
  \quad\Longleftrightarrow\quad
  \gamma\;<\;1-\frac{\epsilon(2-\gamma)}{2(1-1/\alpha)},
\]
which holds for $\epsilon$ small whenever $\gamma<1$.
As the dimension $p$ is fixed, $\|R_n\|=o_p(a_n)$, which is \eqref{eq:fe-remainder}.
This proves (a)--(d).
\end{proof}
\begin{lemma}[Profile-score expansion]\label{lem:general-beta}
With the pseudo-true weights $\mu_\ell(\beta)$ and residual $\tilde x_\ell(\beta)$ of Appendix~\ref{app:profiling}, fixed-effect error $\nu_n(\beta)=\hat\phi(\beta)-\bar\phi(\beta)$,
$d_\ell'\nu_n(\beta)=\xi_i(\beta)+\omega_j(\beta)$, $u_i(\beta)=e^{\xi_i(\beta)}-1$, and $w_j(\beta)=e^{\omega_j(\beta)}-1$, the profile-score remainder \eqref{eq:Rn-def} has the closed form
\begin{equation}
  R_n(\beta)=\sum_\ell\mu_\ell(\beta)\,u_i(\beta)w_j(\beta)\,\tilde x_\ell(\beta),
  \label{eq:Rn-general}
\end{equation}
and, under Assumption~\ref{ass:dgp}(a),(b) and condition~\eqref{eq:design-op} with $\gamma<1$,
\begin{equation}
  \sup_{\beta\in\mathcal B}\bigl\|R_n(\beta)\bigr\|=o_p(a_n)\qquad\text{for every }\alpha\in(1,2).
  \label{eq:Rn-general-bound}
\end{equation}
\end{lemma}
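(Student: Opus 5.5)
The proof has two parts: an algebraic identity, which is Step~1 of Lemma~\ref{lem:remainder} carried out at a general $\beta$, and a uniform bound, which is Step~4 of that lemma made uniform in $\beta$ through Corollary~\ref{cor:uniform-beta}.

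\emph{Identity \eqref{eq:Rn-general}.} Fix $\beta\in\mathcal B$. Write $\hat\mu_\ell(\beta)=\mu_\ell(\beta)e^{d_\ell'\nu_n(\beta)}$ and split
\[
y_\ell-\hat\mu_\ell(\beta)=\bigl(y_\ell-\mu_\ell(\beta)\bigr)-\mu_\ell(\beta)\bigl(e^{d_\ell'\nu_n(\beta)}-1\bigr).
\]
The pseudo-true effects $\bar\phi(\beta)$ reproduce the population margins. Hence the sample fixed-effect condition $S_{\phi,n}(\beta,\hat\phi(\beta))=0$ becomes
\[
\sum_\ell\mu_\ell(\beta)\bigl(e^{d_\ell'\nu_n(\beta)}-1\bigr)d_\ell=S_\phi^0 ,
\]
where $S_\phi^0=\sum_\ell(y_\ell-\mu_\ell(\beta))d_\ell$, as in the proof of Corollary~\ref{cor:uniform-beta}.

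Next decompose the profile score $S_{\beta,n}(\beta,\hat\phi(\beta))$ with $x_\ell=\tilde x_\ell(\beta)+(\text{additive fit})$. The fixed-effect condition cancels the additive-fit part of the $x_\ell$ projection. What remains is
\[
S_n^{\mathrm{profile}}(\beta)=\sum_\ell\bigl(y_\ell-\mu_\ell(\beta)\bigr)\tilde x_\ell(\beta)-\sum_\ell\mu_\ell(\beta)\bigl(e^{d_\ell'\nu_n(\beta)}-1\bigr)\tilde x_\ell(\beta).
\]
Expand $e^{\xi_i+\omega_j}-1=u_i+w_j+u_iw_j$. The additive terms vanish by the $\mu(\beta)$-orthogonality $\sum_\ell\mu_\ell(\beta)\tilde x_\ell(\beta)d_\ell'=0$, and only the cross term survives. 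Comparing with \eqref{eq:Rn-def} gives \eqref{eq:Rn-general}. The identity is exact and representative-free.

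\emph{Bound \eqref{eq:Rn-general-bound}.} As before, $R_n^{(k)}(\beta)=u(\beta)'A_n^{(k)}(\beta)w(\beta)$. Condition \eqref{eq:design-op} controls $\|A_n^{(k)}(\beta)\|_{\mathrm{op}}$ uniformly in $\beta$, and $\|A_n^{(k)}(\beta)\|_{1\to\infty}\le C$ uniformly by Assumption~\ref{ass:dgp}(b). Riesz--Thorin interpolation then gives $\sup_\beta\|A_n^{(k)}(\beta)\|_{q\to q'}=O(N^{\gamma(1-\vartheta)})$ with $1/q=(1+\vartheta)/2$.

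Corollary~\ref{cor:uniform-beta} supplies $\sup_\beta(\|u(\beta)\|_q+\|w(\beta)\|_q)=O_p(N^{2/q-1})$ for every fixed $q\in(1,\alpha)$. Hölder then yields
\[
\sup_\beta|R_n^{(k)}(\beta)|=O_p\bigl(N^{\gamma(1-\vartheta)+2\vartheta}\bigr).
\]
Choosing $\vartheta=2/\alpha-1+\epsilon$ with $\epsilon$ small gives $o_p(N^{2/\alpha})=o_p(a_n)$ whenever $\gamma<1$, by the same exponent arithmetic as in Lemma~\ref{lem:remainder}(d). Since $p$ is fixed, this gives the claim.

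\emph{Main obstacle.} The difficulty is the uniformity in $\beta$ of the $\ell_q$ fixed-effect rates. Corollary~\ref{cor:uniform-beta} asserts it, but the bound must hold as a supremum, not pointwise, and the supremum must be taken inside the probability. The point to exploit is the one made in the proof of that corollary: the driving vector $S_\phi^0$ is the same for every $\beta$. Every random quantity is a function of $S_\phi^0$ and the margins $R_i,\tilde R_j$ alone. The remaining ingredients are the quotient-inverse bound of Lemma~\ref{lem:linf-hessian} and the margin lower bounds, and these hold on a single $\beta$-free event with constants uniform over compact $\mathcal B$. A further point to check is that the large-$\xi_i$ split used for the multiplicative errors in Lemma~\ref{lem:fe-lp} also works on that common event. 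This holds because the split is triggered by $R_i>T_\ast N$, which does not depend on $\beta$. Given this, no chaining argument is needed.
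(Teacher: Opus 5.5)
Your proposal is correct and follows the paper's proof. The identity uses the same three ingredients: the sample fixed-effect first-order condition, the expansion $e^{\xi_i+\omega_j}-1=u_i+w_j+u_iw_j$, and the $\mu(\beta)$-orthogonality that annihilates the additive terms. The bound is the same bilinear form $u(\beta)'A_n^{(k)}(\beta)w(\beta)$, controlled by Riesz--Thorin interpolation, H\"older, and the uniform $\ell_q$ rates of Corollary~\ref{cor:uniform-beta}.

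Your closing point is also sound: all the randomness enters through the $\beta$-free margins, including the split at $R_i>T_\ast N$. This spells out the uniformity in $\beta$ that the paper simply delegates to that corollary.
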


\begin{proof}
The reduction of $R_n(\beta)$ to the cross term, as in \eqref{eq:exact-Rn} is algebraic and uses only (a) the multiplicative identity $e^{d_\ell'\nu_n(\beta)}-1=u_i(\beta)+w_j(\beta)+u_i(\beta)w_j(\beta)$,
(b) the fixed-effect first-order condition $S_{\phi,n}(\beta,\hat\phi(\beta))=0$, and (c) the residual orthogonality $\sum_\ell\mu_\ell(\beta)\tilde x_\ell(\beta)d_\ell'=0$.
All three hold at every $\beta$: (a) is identical to the $\beta_0$ case, (b) defines $\hat\phi(\beta)$, and (c) holds by construction of $\tilde x_\ell(\beta)$ as the $\mu(\beta)$-weighted projection residual. The additive separable terms $u_i(\beta)+w_j(\beta)$ are therefore annihilated, leaving \eqref{eq:Rn-general}.

For \eqref{eq:Rn-general-bound}, by Corollary~\ref{cor:uniform-beta}, $\|u(\beta)\|_2,\|w(\beta)\|_2=O_p(N^{2/\alpha-1})$ and $\|u(\beta)\|_q,\|w(\beta)\|_q=O_p(N^{2/q-1})$ for $q\in(1,\alpha)$, uniformly in $\beta$. For the $k$-th component, \eqref{eq:Rn-general} is the bilinear form $R_n^{(k)}(\beta)=u(\beta)'A_n^{(k)}(\beta)w(\beta)$ with
$(A_n^{(k)}(\beta))_{ij}=\mu_{ij}(\beta)\tilde x_{ij}^{(k)}(\beta)$.
Condition~\eqref{eq:design-op} gives $\sup_{\beta\in\mathcal B}\max_k\|A_n^{(k)}(\beta)\|_{\mathrm{op}} =O(N^\gamma)$, while
$\max_{ij}|\mu_{ij}(\beta)\tilde x^{(k)}_{ij}(\beta)|\le C$ uniformly by Assumption~\ref{ass:dgp}(b). 
The interpolated bound \eqref{eq:score-interp} therefore applies uniformly in $\beta$ and gives $\sup_{\beta\in\mathcal B}\|R_n(\beta)\| =O_p(N^{\gamma(1-\vartheta)+2\vartheta})=o_p(a_n)$ for $\gamma<1$, which is \eqref{eq:Rn-general-bound}.
\end{proof}

The subnetwork analysis requires the analogue of Assumption~\ref{ass:dgp}(b) on the induced subnetwork away from the truth. 
The relevant population weights there are built from the subnetwork pseudo-true fixed effects, which solve the fixed-effect equations restricted to $\mathcal S$ and are not in
general the restriction of the full-sample $\bar\phi(\beta)$. 
The next lemma shows that they are nevertheless uniformly bounded, so the full-sample argument applies with $N$ replaced by $G$. 
The statement is deterministic: it holds for every draw.

\begin{lemma}[Subnetwork pseudo-true scaling]\label{lem:subnet-scaling}
Let $\mathcal C_G$ be any set of $G$ countries with induced complete subnetwork $\mathcal S$, and for $\beta\in\mathcal B$ let $\bar\phi_{\mathcal S}(\beta)$ solve the subnetwork population fixed-effect equations $\sum_{\ell\in\mathcal S}\bigl(\mu^0_\ell-\mu_\ell(\beta,\bar\phi_{\mathcal S}(\beta))\bigr)d_{\ell,\mathcal S}=0$; write $\mu_{ij,\mathcal S}(\beta)=\mu_{ij}(\beta,\bar\phi_{\mathcal S}(\beta))$ and $\tilde x_{ij,\mathcal S}(\beta)$ for the $\mu_{\mathcal S}(\beta)$-weighted two-way residual on $\mathcal S$, and set $e_{\mathcal S}:=(\mathbf 1_G',-\mathbf 1_G')'$.
Under Assumption~\ref{ass:dgp}(b) there are a $G_0<\infty$ and constants $0<c_{\mathcal S}\le C_{\mathcal S}<\infty$, $C'_{\mathcal S}<\infty$, depending only on the constants of Assumption~\ref{ass:dgp}(b) such that a solution exists, is unique on the quotient, and, in the representative lying in $e_{\mathcal S}^\perp$, satisfies
\[
  c_{\mathcal S}\le\mu_{ij,\mathcal S}(\beta)\le C_{\mathcal S},
  \qquad
  \max_{i\in\mathcal C_G}\bigl|\bar\phi^{\mathrm{ex}}_{i,\mathcal S}(\beta)\bigr|
  \;\vee\;\max_{j\in\mathcal C_G}\bigl|\bar\phi^{\mathrm{im}}_{j,\mathcal S}(\beta)\bigr|
  \;\le\;C_{\mathcal S},
  \qquad
  \max_{\ell\in\mathcal S}\bigl\|\tilde x_{\ell,\mathcal S}(\beta)\bigr\|\le C'_{\mathcal S},
\]
while the subnetwork fixed-effect Gram matrix $H_{\phi\phi,\mathcal S}(\beta)=\sum_{\ell\in\mathcal S}\mu_{\ell,\mathcal S}(\beta)d_{\ell,\mathcal S}d_{\ell,\mathcal S}'$ satisfies
\[
  c_{\mathcal S}
  \;\le\;\lambda_{\min}\bigl(G^{-1}H_{\phi\phi,\mathcal S}(\beta)\restriction_{e_{\mathcal S}^\perp}\bigr)
  \;\le\;\lambda_{\max}\bigl(G^{-1}H_{\phi\phi,\mathcal S}(\beta)\restriction_{e_{\mathcal S}^\perp}\bigr)
  \;\le\;C_{\mathcal S}.
\]
Consequently the induced subnetwork satisfies the boundedness and Gram conditions \eqref{eq:pseudo-bdd} and \eqref{eq:dense-design} with $N$ replaced by $G$, with constants uniform over draws, over $G\ge G_0$, and over $\beta\in\mathcal B$; the Hessian-limit condition \eqref{eq:Hn-limit} on the subnetwork is established separately in Lemma~\ref{lem:subnet-inherit}(ii).
\end{lemma}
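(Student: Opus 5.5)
The plan is to treat the subnetwork fixed-effect equations as a deterministic matrix-scaling problem on the complete bipartite support $K_{G,G}$ minus its diagonal. The base kernel is $k_{ij}(\beta)=\exp(x_{ij}'\beta)\in[e^{-C\,\mathrm{diam}},e^{C\,\mathrm{diam}}]$, bounded above and below uniformly by compactness of $\mathcal B$ and \eqref{eq:pseudo-bdd}. The margins to be matched are $r_i=\sum_{j\in\mathcal C_G,j\neq i}\mu^0_{ij}$ and $\tilde r_j$ analogously, and both lie in $[c(G-1),C(G-1)]$ because $\mu^0_{ij}\in[c,C]$. Existence and uniqueness on the quotient then follow exactly as in Corollary~\ref{cor:sample-existence}. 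The support is fully indecomposable for $G\ge3$, the margins are strictly positive, and $r_i+\tilde r_i\le 2C(G-1)<T_{\mathcal S}\asymp cG(G-1)$ once $G\ge G_0$. Strict concavity of the population objective on $e_{\mathcal S}^\perp$ gives uniqueness. All of this is deterministic and holds for every draw $\mathcal C_G$.

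\textbf{Step 1: bound the fixed effects.} I would work first with the Sinkhorn representative $\sum_j e^{\omega_j}=G-1$. The scaling equations read $e^{\phi^{\mathrm{ex}}_i}\sum_j k_{ij}e^{\phi^{\mathrm{im}}_j}=r_i$ and $e^{\phi^{\mathrm{im}}_j}\sum_i k_{ij}e^{\phi^{\mathrm{ex}}_i}=\tilde r_j$. This is the noise-free analogue of Lemma~\ref{lem:margins}(iii), with the heavy-tailed margins replaced by margins bounded above and below by multiples of $G$. The same three-step argument applies.
\begin{itemize}
\item[(1)] The lower bound on $e^{\phi^{\mathrm{ex}}_i}$ follows from $r_i\ge c(G-1)$ and $\sum_j k_{ij}e^{\phi^{\mathrm{im}}_j}\le C(G-1)$.
\item[(2)] This gives $\sum_i k_{ij}e^{\phi^{\mathrm{ex}}_i}\gtrsim G$, hence $e^{\phi^{\mathrm{im}}_j}\le C'$. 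Because $C'<(G-1)/2$ for $G\ge G_0$, no importer can dominate.
\item[(3)] Two-sided bounds on $\sum_j k_{ij}e^{\phi^{\mathrm{im}}_j}$ then give upper bounds on $e^{\phi^{\mathrm{ex}}_i}$ and lower bounds on $e^{\phi^{\mathrm{im}}_j}$. The total-mass identity $\sum_i r_i\asymp G^2$ bounds $\sum_i e^{\phi^{\mathrm{ex}}_i}\lesssim G$.
\end{itemize}
Here the bounds hold with deterministic constants instead of $O_p$ rates. Every Sinkhorn-representative effect is therefore bounded by constants depending only on $(c,C,\mathcal B)$.

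Passing to the representative in $e_{\mathcal S}^\perp$ shifts the effects by $t=(\sum_i\phi^{\mathrm{ex}}_i-\sum_j\phi^{\mathrm{im}}_j)/(2G)$. This shift is an average of bounded quantities and is therefore bounded, which gives the bound on $\bar\phi_{\mathcal S}(\beta)$. The bounds $c_{\mathcal S}\le\mu_{ij,\mathcal S}(\beta)\le C_{\mathcal S}$ follow from this together with $\|x_{ij}\|\le C$.

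\textbf{Step 2: the Gram matrix and the residual.} For the Gram bounds I would use $\mu_{\ell,\mathcal S}\in[c_{\mathcal S},C_{\mathcal S}]$, which sandwiches $H_{\phi\phi,\mathcal S}(\beta)$ between multiples of the unweighted incidence Gram $\sum_{\ell\in\mathcal S}d_{\ell,\mathcal S}d_{\ell,\mathcal S}'$. The unweighted Gram has an explicit spectrum. On $e_{\mathcal S}^\perp$ its eigenvalues are $G-1$, $G-2$ and $2(G-1)$, with the value $G-2$ coming from the diagonal removal. The nonzero part is therefore of order $G$ for $G\ge3$. For the residual bound I would apply Lemma~\ref{lem:linf-fit}, instance~(iii), to $g=x^{(k)}$ with weights $\mu_{\mathcal S}(\beta)$, so $p_i=q_j=1$ and $\epsilon_G=1/G\le\epsilon_\ast$ once $G\ge G_0$. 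This gives $\max|\tilde x_{\ell,\mathcal S}|\le(1+K_\ast)C$.

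\textbf{Main obstacle.} The step I expect to be delicate is the uniformity of the Step 1 scaling bounds over $\beta$ and over draws. The key point is that these bounds depend only on the ratio bounds of the kernel and of the margins and never on $\mathcal C_G$ itself. I would verify this by tracking constants through the three-step argument. The no-dominance step is where $G_0$ enters, and I would check it explicitly.
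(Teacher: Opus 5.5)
Your proposal is correct and follows essentially the same route as the paper: deterministic matrix scaling on the off-diagonal complete support for existence and quotient uniqueness, the three-step bounds in the representative $\sum_j e^{\phi^{\mathrm{im}}_j}=G-1$ with $G_0$ entering at the no-dominance step, a bounded shift to $e_{\mathcal S}^\perp$, Lemma~\ref{lem:linf-fit} with $p_i\equiv q_j\equiv1$ and $\epsilon_G=1/G$ for the residual, and a sandwich against the unweighted incidence Gram. (The paper gets existence a bit more directly: the target margins are generated by the strictly positive restricted $\mu^0$ on the same support, so the single-vertex margin check $r_i+\tilde r_i<T_{\mathcal S}$ is not needed.) One small slip: on $e_{\mathcal S}^\perp$ the unweighted Gram $\sum_{\ell\in\mathcal S}d_{\ell,\mathcal S}d_{\ell,\mathcal S}'$ has eigenvalues $G-2$, $G$ and $2(G-1)$, not $G-1$; this is harmless because you only use $\lambda_{\min}\ge G-2$ and $\lambda_{\max}\le 2(G-1)$.
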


\begin{proof}
Write $k_{ij}(\beta)=\exp(x_{ij}'\beta)$; by the boundedness of $x_{ij}$ and compactness of $\mathcal B$, $k_{ij}(\beta)\in[c_x,C_x]$ with $0<c_x\le C_x<\infty$ uniform over $\beta$.

We first show that $\bar\phi_{\mathcal S}(\beta)$ exists and is unique on the quotient.
Writing $\bar a_i=e^{\bar\phi^{\mathrm{ex}}_{i,\mathcal S}(\beta)}$ and $\bar b_j=e^{\bar\phi^{\mathrm{im}}_{j,\mathcal S}(\beta)}$ (barred, to distinguish these Sinkhorn factors from the fit coefficients of Lemma~\ref{lem:linf-fit}), the fixed-effect equations are the margin conditions
\[
  \bar a_i\sum_{j\in\mathcal C_G\setminus\{i\}}k_{ij}\bar b_j=R_i^0,
  \qquad
  \bar b_j\sum_{i\in\mathcal C_G\setminus\{j\}}k_{ij}\bar a_i=\tilde R_j^0,
\]
where $R_i^0=\sum_{j\in\mathcal C_G\setminus\{i\}}\mu^0_{ij}$ and $\tilde R_j^0=\sum_{i\in\mathcal C_G\setminus\{j\}}\mu^0_{ij}$, so $\bar\phi_{\mathcal S}(\beta)$ exists exactly when the kernel $k(\beta)$ admits a scaling to these margins.
The pattern $K_{G,G}\setminus\{(i,i)\}$ is fully indecomposable for $G\ge3$ \citep{Sinkhorn1967}.
The target margins are generated by the strictly positive restricted matrix $\{\mu^0_{ij}:i,j\in\mathcal C_G,\ i\neq j\}$ on that support, so they are strictly feasible: they satisfy the strict Hall conditions of the margin cone.
A finite scaling to these prescribed margins therefore exists \citep{RothblumSchneider1989}; see \citet{Idel2016} for a survey.
Strict concavity of the subnetwork population objective on the quotient gives uniqueness there.

For the bounds, fix temporarily the representative with $\sum_j\bar b_j=G-1$.
Since $\mu^0_{ij}\in[c,C]$, both $R_i^0$ and $\tilde R_j^0$ lie in $[c(G-1),C(G-1)]$ for every draw.
Then $\sum_{j\neq i}k_{ij}\bar b_j\le C_x(G-1)$, so $\bar a_i\ge R_i^0/\{C_x(G-1)\}\ge c/C_x=:a_{\min}>0$ for every $i$.
Then $\sum_{i\neq j}k_{ij}\bar a_i\ge c_xa_{\min}(G-1)$, so $\bar b_j\le\tilde R_j^0/\{c_xa_{\min}(G-1)\}\le CC_x/(c_xc)=:b_{\max}$.
In particular no importer weight dominates, and $\sum_{j\neq i}k_{ij}\bar b_j\ge c_x(G-1-b_{\max})\ge\tfrac12c_x(G-1)$ for $G\ge G_0:=2b_{\max}+3$, which also ensures $G\ge3$.
Hence $\bar a_i\le R_i^0/\{\tfrac12c_x(G-1)\}\le2C/c_x=:a_{\max}$ and $\bar b_j\ge\tilde R_j^0/\{C_xa_{\max}(G-1)\}\ge cc_x/(2CC_x)=:b_{\min}>0$.
Therefore $\mu_{ij,\mathcal S}(\beta)=k_{ij}(\beta)\bar a_i\bar b_j\in[c_xa_{\min}b_{\min},\,C_xa_{\max}b_{\max}]=:[c_{\mathcal S},C_{\mathcal S}]$, and the fixed effects themselves obey $|\bar\phi^{\mathrm{ex}}_{i,\mathcal S}(\beta)|\le|\log a_{\min}|\vee|\log a_{\max}|$ and $|\bar\phi^{\mathrm{im}}_{j,\mathcal S}(\beta)|\le|\log b_{\min}|\vee|\log b_{\max}|$, uniformly over draws, $G\ge G_0$, and $\beta\in\mathcal B$.
Passing to the representative in $e_{\mathcal S}^\perp$ subtracts $t_\ast e_{\mathcal S}$ with $t_\ast=\bigl(\sum_i\bar\phi^{\mathrm{ex}}_{i,\mathcal S}-\sum_j\bar\phi^{\mathrm{im}}_{j,\mathcal S}\bigr)/(2G)$, which is bounded by the same constants, so the fixed-effect bounds persist there after enlarging $C_{\mathcal S}$, and $\mu_{ij,\mathcal S}(\beta)$ is unchanged.

For the residual, $\tilde x^{(k)}_{\ell,\mathcal S}(\beta)$ is the residual of the bounded data $x^{(k)}$ from the $\mu_{\mathcal S}(\beta)$-weighted two-way fit on the complete $G$-network.
Lemma~\ref{lem:linf-fit} applies with kernel $r=\mu_{\mathcal S}(\beta)\in[c_{\mathcal S},C_{\mathcal S}]$ and $p_i\equiv q_j\equiv1$, for which the balance parameter is exactly $\epsilon_G=1/G$, so $\epsilon_G\le\epsilon_\ast$ after enlarging $G_0$ if necessary.
It bounds the fit coefficients by $K_\ast C$, so $\max_\ell\|\tilde x_{\ell,\mathcal S}(\beta)\|\le C+2K_\ast C=:C'_{\mathcal S}$.

Finally, for $v=(\xi,\omega)\in\R^{2G}$ we have $v'H_{\phi\phi,\mathcal S}(\beta)v=\sum_{i\neq j}\mu_{ij,\mathcal S}(\beta)(\xi_i+\omega_j)^2$.
Since $\sum_{i\neq j}(\xi_i+\omega_j)^2\le2(G-1)\|v\|^2$, the upper bound $\lambda_{\max}\bigl(G^{-1}H_{\phi\phi,\mathcal S}(\beta)\bigr)\le2C_{\mathcal S}$ is immediate.
For the lower bound, $v\in e_{\mathcal S}^\perp$ means $\sum_i\xi_i=\sum_j\omega_j$, so $\sum_{i,j}(\xi_i+\omega_j)^2=G\|v\|^2+2\bigl(\sum_i\xi_i\bigr)^2\ge G\|v\|^2$, while the excluded diagonal contributes at most $\sum_i(\xi_i+\omega_i)^2\le2\|v\|^2$.
Hence $\sum_{i\neq j}(\xi_i+\omega_j)^2\ge(G-2)\|v\|^2$ and $\lambda_{\min}\bigl(G^{-1}H_{\phi\phi,\mathcal S}(\beta)\restriction_{e_{\mathcal S}^\perp}\bigr)\ge c_{\mathcal S}(G-2)/G\ge\tfrac13c_{\mathcal S}$ for $G\ge3$.
Both constants depend only on $(c_{\mathcal S},C_{\mathcal S})$, and are absorbed by decreasing $c_{\mathcal S}$ and enlarging $C_{\mathcal S}$.
\end{proof}

\begin{lemma}[Subnetwork residual comparison]\label{lem:subnet-residual}
Under Assumption~\ref{ass:dgp}(b), let $\mathcal C_G$ be a uniformly drawn set of $G$ countries, $G\to\infty$, and let $\tilde x_{\ell,\mathcal S}$ be the $\mu$-weighted two-way residual on the induced complete subnetwork $\mathcal S$. Then under $G\to\infty$,
\[
  \max_{\ell\in\mathcal S}\bigl\|\tilde x_{\ell,\mathcal S}-\tilde x_\ell\bigr\|\;=\;O_p\Bigl(\sqrt{\tfrac{\log G}{G}}\Bigr)\;=\;o_p(1).
\]
\end{lemma}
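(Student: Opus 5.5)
The plan is to write the difference of the two residuals as a weighted two-way fit and then control its forcing terms by concentration under uniform country sampling. Fix a coordinate $k$ and drop it from the notation. Since a two-way fit reproduces additive components exactly, $\tilde x_{\ell,\mathcal S}-\tilde x_\ell=-(\Delta a_i+\Delta b_j)$ for $\ell=(i,j)\in\mathcal S$. Here $(\Delta a,\Delta b)$ is the $\mu$-weighted two-way fit on $\mathcal S$ of the data $\tilde x_{ij}$, the full-network residual restricted to the subnetwork. This is exactly the device used in the proof of Lemma~\ref{lem:fitted-residual}. The fit solves the normal equations
\[
  \Delta a_i=F_i-\sum_{j\in\mathcal C_G\setminus\{i\}}\frac{\mu_{ij}}{m_{i,\mathcal S}}\Delta b_j,\qquad
  \Delta b_j=\tilde F_j-\sum_{i\in\mathcal C_G\setminus\{j\}}\frac{\mu_{ij}}{\tilde m_{j,\mathcal S}}\Delta a_i,
\]
with forcing terms $F_i=m_{i,\mathcal S}^{-1}\sum_{j\in\mathcal C_G\setminus\{i\}}\mu_{ij}\tilde x_{ij}$, $\tilde F_j$ defined analogously, and $m_{i,\mathcal S}\asymp G$ because $\mu_{ij}\in[c,C]$.

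Next I would apply Lemma~\ref{lem:linf-fit}, instance~(iii), with kernel $r=\mu\in[c,C]$, $p_i\equiv q_j\equiv1$ and $\epsilon_G=1/G$. Its Doeblin/Dobrushin argument gives $\osc(\Delta b)\le C\max(\|F\|_\infty,\|\tilde F\|_\infty)$. The $\Delta a$-equation then writes $\Delta a_i+\Delta b_j$ as $F_i$ plus $\Delta b_j$ minus a weighted average of $\Delta b$, so the level cancels. This yields the deterministic reduction
\[
  \max_{\ell\in\mathcal S}|\Delta a_i+\Delta b_j|\le K\bigl(\max_{i\in\mathcal C_G}|F_i|+\max_{j\in\mathcal C_G}|\tilde F_j|\bigr),
\]
with $K$ depending only on $c/C$. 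It then remains to show that the forcing terms are $O_p(\sqrt{\log G/G})$.

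For the forcing terms, the key input is the full-network orthogonality \eqref{eq:orth} in row form, $\sum_{j\neq i}\mu_{ij}\tilde x_{ij}=0$. Each $\mu_{ij}\tilde x_{ij}$ is bounded by $C^2$ under \eqref{eq:pseudo-bdd}. Conditional on $i\in\mathcal C_G$, the other $G-1$ members form a uniform draw without replacement from the remaining $N-1$ countries, and the population mean of $\mu_{ij}\tilde x_{ij}$ over these countries is exactly zero. Hoeffding's inequality for sampling without replacement (which is at least as sharp as the with-replacement version) then gives, for each $i$,
\[
  \Pr\Bigl(\Bigl|\textstyle\sum_{j\in\mathcal C_G\setminus\{i\}}\mu_{ij}\tilde x_{ij}\Bigr|>t\sqrt{G\log G}\Bigr)\le2G^{-ct^2}.
\]
The same argument applies column-wise for importers. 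I would take a union bound over the at most $2G$ members of the draw, done conditionally on membership, i.e.\ by averaging over $i$ with inclusion probability $G/N$, and over the $p$ coordinates. With $t$ large this gives $\max_i|F_i|+\max_j|\tilde F_j|=O_p(G^{-1}\sqrt{G\log G})=O_p(\sqrt{\log G/G})$, and combining with the reduction completes the proof.

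The main obstacle I expect is making the union bound rigorous when the index set $\mathcal C_G$ is itself random. I would handle it by writing
\[
  \Pr\Bigl(\max_{i\in\mathcal C_G}|\cdot|>s\Bigr)\le\sum_{i=1}^N\Pr(i\in\mathcal C_G)\,\Pr\bigl(|\cdot|>s\mid i\in\mathcal C_G\bigr)=G\cdot2G^{-ct^2},
\]
which vanishes once $t$ is large enough. A smaller point is checking that the Dobrushin constant from Lemma~\ref{lem:linf-fit} is uniform over draws. It is, because only the kernel bounds $[c,C]$ and $\epsilon_G=1/G$ enter.
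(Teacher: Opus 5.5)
Your proposal is correct and follows essentially the same route as the paper's proof. Both write the residual difference as the $\mu$-weighted subnetwork two-way fit of the full-network residual $\tilde x$, and both use the Dobrushin contraction from the proof of Lemma~\ref{lem:linf-fit} (instance~(iii)) to bound $\osc(\Delta b)$ by the forcing terms, with the level cancelling in $\Delta a_i+\Delta b_j$. Both then control the forcing terms through the row and column orthogonality \eqref{eq:orth}, Hoeffding's inequality for sampling without replacement, and the inclusion-probability union bound over the random set $\mathcal C_G$.
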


\begin{proof}
Fix a coordinate $k$ and write $x^{(k)}_{ij}=a^{(k)}_i+b^{(k)}_j+\tilde x^{(k)}_{ij}$ for the full-sample fit \eqref{eq:xtilde-additive}.
A two-way fit reproduces additive components exactly, so the subnetwork fit of $x^{(k)}$ equals $(a^{(k)},b^{(k)})$ restricted to $\mathcal C_G$ plus the subnetwork fit $(\Delta a,\Delta b)$ of the data $\tilde x^{(k)}$, and $\tilde x^{(k)}_{\ell,\mathcal S}-\tilde x^{(k)}_\ell=-(\Delta a_i+\Delta b_j)$ for $\ell=(i,j)\in\mathcal S$.
The subnetwork normal equations for $(\Delta a,\Delta b)$ read
\[
  \Delta a_i =F_i-\sum_{j\in\mathcal C_G\setminus\{i\}}\frac{\mu_{ij}}{m^{\mathcal S}_i}\,\Delta b_j, \qquad
  \Delta b_j =\tilde F_j-\sum_{i\in\mathcal C_G\setminus\{j\}}\frac{\mu_{ij}}{\tilde m^{\mathcal S}_j}\,\Delta a_i,
\]
with $m^{\mathcal S}_i=\sum_{j\in\mathcal C_G\setminus\{i\}}\mu_{ij}$, $\tilde m^{\mathcal S}_j=\sum_{i\in\mathcal C_G\setminus\{j\}}\mu_{ij}$, and forcing terms $F_i=(m^{\mathcal S}_i)^{-1}\sum_{j\in\mathcal C_G\setminus\{i\}}\mu_{ij}\tilde x^{(k)}_{ij}$ and its importer analogue $\tilde F_j$.
By the full-sample orthogonality \eqref{eq:orth} applied to the exporter-$i$ and importer-$j$ indicators, $\sum_{j\neq i}\mu_{ij}\tilde x^{(k)}_{ij}=0$ for every $i$ and $\sum_{i\neq j}\mu_{ij}\tilde x^{(k)}_{ij}=0$ for every $j$, so each $F_i$ is a normalized sum, over a simple random sample drawn without replacement, of bounded population values with zero total.
Hoeffding's inequality for sampling without replacement \citep[Section~6]{Hoeffding1963} gives $\Pr(|F_i|>\varepsilon\mid i\in\mathcal C_G)\le2\exp(-c_H\varepsilon^2G)$ for a constant $c_H=c_H(c_\mu,C_\mu,C)>0$.
Only the drawn margins need to be controlled, and there are $2G$ of them: by exchangeability of the draw,
\begin{align*}
  \Pr\Bigl(\max_{i\in\mathcal C_G}|F_i|>\varepsilon\Bigr)
  &\;\le\;\E\Bigl[\sum_{i\in\mathcal C_G}\1\{|F_i|>\varepsilon\}\Bigr]\\
  &\;=\;\sum_{i=1}^N\Pr(i\in\mathcal C_G)\,
       \Pr\bigl(|F_i|>\varepsilon\mid i\in\mathcal C_G\bigr)\\
  &\;\le\;2G\exp(-c_H\varepsilon^2G),
\end{align*}
using $\Pr(i\in\mathcal C_G)=G/N$, and likewise for the importer margins.
Over the $p$ coordinates this yields $\max_{i\in\mathcal C_G}|F_i|+\max_{j\in\mathcal C_G}|\tilde F_j|=O_p(\sqrt{\log G/G})$, which vanishes under $G\to\infty$.

Eliminating $\Delta a$ as in Lemma~\ref{lem:linf-fit} (instance~(iii): weights $\mu\in[c,C]$ on the complete $G$-country subnetwork, $p_i\equiv q_j\equiv1$), $\Delta b=P_{\mathcal S}\Delta b+h_{\mathcal S}$ with $\|h_{\mathcal S}\|_\infty = \,O_p(\sqrt{\log G/G})$ and Dobrushin coefficient $\delta(P_{\mathcal S})\le1-\tau_0$, $\tau_0=\tau_0(c/C)>0$, so $\osc(\Delta b) = \,O_p(\sqrt{\log G/G})$.

Finally, write $\langle\Delta b\rangle_i=\sum_{j\in\mathcal C_G\setminus\{i\}}(\mu_{ij}/m^{\mathcal S}_i)\,\Delta b_j$ for the average appearing in the $\Delta a$-equation, whose weights are nonnegative and sum to one.
That equation reads $\Delta a_i=F_i-\langle\Delta b\rangle_i$, so $\Delta a_i+\Delta b_j=F_i+\bigl(\Delta b_j-\langle\Delta b\rangle_i\bigr)$, in which any common shift of $\Delta b$ cancels.
Since $\langle\Delta b\rangle_i$ is a convex combination of the $\Delta b_j$, it and $\Delta b_j$ both lie between $\min_j\Delta b_j$ and $\max_j\Delta b_j$, so $|\Delta b_j-\langle\Delta b\rangle_i|\le\osc(\Delta b)$.
Hence $\max_{\ell\in\mathcal S}|\Delta a_i+\Delta b_j|\le\max_i|F_i|+\osc(\Delta b)=O_p\bigl(\sqrt{\log G/G}\bigr)$.
\end{proof}

\begin{lemma}[Subnetwork inheritance]\label{lem:subnet-inherit}
Let $\mathcal C_G$ be a uniformly drawn set of $G$ countries with induced complete bilateral subnetwork $\mathcal S$ on $n_G=G(G-1)$ ordered pairs, and let $a_{n_G}\asymp G^{2/\alpha}$ solve $n_G\Pr(\eta>a_{n_G})\to1$. Suppose Assumption~\ref{ass:dgp} holds together with the subnetwork analogue of
\eqref{eq:design-op},
\begin{equation}
  \sup_{\beta\in\mathcal B}\max_{1\le k\le p}\bigl\|A^{k}_{(\mathcal S)}(\beta)\bigr\|_{\mathrm{op}}=O_p(G^{\gamma}),\qquad \gamma<1 ,
  \label{eq:design-op-sub}
\end{equation}
where $A^{k}_{(\mathcal S)}(\beta)$ is the $G\times G$ matrix with off-diagonal entries $\mu_{ij,\mathcal S}(\beta)\,\tilde x^{(k)}_{ij,\mathcal S}(\beta)$, built from the subnetwork pseudo-true weights and residuals of Lemma~\ref{lem:subnet-scaling}.
Then, as $G\to\infty$:
\begin{enumerate}
  \item[\emph{(i)}] the within-subnetwork tail spectral measure converges in
    probability over the draw, with
    $\kappa_{\ell,\mathcal S}=\mu_\ell\tilde x_{\ell,\mathcal S}$,
    \[
      \Bigl(\textstyle\sum_{\ell\in\mathcal S}\|\kappa_{\ell,\mathcal S}\|^\alpha\Bigr)^{-1}\sum_{\ell\in\mathcal S}\|\kappa_{\ell,\mathcal S}\|^\alpha\,
      \delta_{\kappa_{\ell,\mathcal S}/\|\kappa_{\ell,\mathcal S}\|}\;\Rightarrow\;\Gamma;
    \]
  \item[\emph{(ii)}] the within-subnetwork Hessian converges, $n_G^{-1}\sum_{\ell\in\mathcal S}\mu_\ell\tilde x_{\ell,\mathcal S}\tilde x_{\ell,\mathcal S}'\convp H$, where $\tilde x_{\ell,\mathcal S}$ partials out the $2G$ subnetwork fixed effects;
  \item[\emph{(iii)}] with $H^{\ast}_{\mathcal S}(\beta)$ and $\hat\Omega^{\mathrm{prof}}_{\mathcal S}(\beta_0)$ the subnetwork analogues of \eqref{eq:feasible-hessian} and the truth-profiled score covariance,
    \[
      \sup_{\beta\in\mathcal B}\|R_{\mathcal S}(\beta)\|=o_p(a_{n_G}), \qquad
      \frac{n_G}{a_{n_G}^{2}}\Bigl\|\hat\Omega^{\mathrm{prof}}_{\mathcal S}(\beta_0)-n_G^{-1}\textstyle\sum_{\ell\in\mathcal S}s_{\ell,\mathcal S} s_{\ell,\mathcal S}' \Bigr\|\convp0,
    \]
    and $\sup_{\|\beta-\beta_0\|\le\delta_G}\|H^{\ast}_{\mathcal S}(\beta)-H\|\convp0$ for every sequence $\delta_G\downarrow0$;
  \item[\emph{(iv)}] the subsample self-normalized statistic $T_{\mathcal S}$, defined as $\mathrm{SN}_n$ of \eqref{eq:SN-defn} computed on $\mathcal S$ with $(n,N)$ replaced by $(n_G,G)$, satisfies $T_{\mathcal S}\convd J^\star_\alpha$.
\end{enumerate}
\end{lemma}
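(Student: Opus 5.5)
The plan is to prove the four parts in the order (ii), (i), (iii), (iv). Lemma~\ref{lem:subnet-residual} supplies the uniform comparison $\max_{\ell\in\mathcal S}\|\tilde x_{\ell,\mathcal S}-\tilde x_\ell\|=O_p(\sqrt{\log G/G})$. Lemma~\ref{lem:subnet-scaling} supplies the deterministic, draw-uniform versions of \eqref{eq:pseudo-bdd} and \eqref{eq:dense-design} on $\mathcal S$. With these, every subnetwork object equals the corresponding object built from the full-sample multipliers $\kappa_\ell=\mu_\ell\tilde x_\ell$ restricted to $\mathcal S$, up to a uniformly vanishing perturbation.

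\emph{Part (ii).} Replace $\tilde x_{\ell,\mathcal S}$ by $\tilde x_\ell$ at cost $O_p(\sqrt{\log G/G})$, using boundedness. It then suffices to show
\[
n_G^{-1}\sum_{\ell\in\mathcal S}\mu_\ell\tilde x_\ell\tilde x_\ell'-H_n\convp0 .
\]
This is a $U$-statistic of order two over the uniformly drawn vertex set, with bounded kernel $f(i,j)=\mu_{ij}\tilde x_{ij}\tilde x_{ij}'$ (ordered pairs). Its mean over the draw is exactly $H_n$. Its variance is $O(1/G)$: the only covariance between the terms for $(i,j)$ and $(i',j')$ comes from pairs sharing a vertex, and there are $O(G^3)$ such pairs against $n_G^2\asymp G^4$. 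Chebyshev's inequality and $H_n\to H$ then give (ii).

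\emph{Part (i).} The same $U$-statistic argument, applied to the bounded continuous test functions $\|\kappa\|^\alpha g(\kappa/\|\kappa\|)$ for $g\in C(\mathbb S^{p-1})$, gives weak convergence to $\Gamma$. Two technical points arise. First, the map $\kappa\mapsto\|\kappa\|^\alpha g(\kappa/\|\kappa\|)$ is continuous at $0$ (it is bounded by $\|g\|_\infty\|\kappa\|^\alpha$), so the perturbation from $\tilde x_\ell$ to $\tilde x_{\ell,\mathcal S}$ is controlled uniformly through its modulus of continuity on the bounded set. Second, weak convergence needs only a countable convergence-determining class, so a diagonal subsequence argument upgrades the pointwise-in-$g$ convergence in probability.

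\emph{Part (iii).} The idea is to rerun Lemmas~\ref{lem:margins}--\ref{lem:general-beta} and Proposition~\ref{prop:primitive-d} verbatim on $\mathcal S$. This is possible because the subnetwork shocks $\{\eta_\ell\}_{\ell\in\mathcal S}$ are independent of the draw and i.i.d.\ with the same law. Conditional on $\mathcal C_G$, the design satisfies Assumption~\ref{ass:dgp}(a),(b) with $N\to G$ and constants that do not depend on the draw (Lemma~\ref{lem:subnet-scaling}), and the operator-norm condition holds by \eqref{eq:design-op-sub}. The proofs of those lemmas use only these ingredients, and their bounds are uniform in the constants. Consequently each $O_p$ statement holds conditionally with draw-uniform tail bounds, and it therefore holds unconditionally.

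Two further points need care. The subnetwork truth point must be the subnetwork pseudo-true value, and at $\beta_0$ correct specification gives $\bar\phi_{\mathcal S}(\beta_0)=\phi_0|_{\mathcal S}$, so the score remains $\eta_\ell\kappa_{\ell,\mathcal S}$. The Hessian piece $H^{\mathrm{pt}}_{\mathcal S}(\beta_0)-H$ is handled by part (ii).

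\emph{Part (iv).} This repeats Steps 2 and 5 of the proof of Theorem~\ref{thm:stable} conditionally on the draw. The joint stable limit \eqref{eq:joint-stable} holds along the subnetwork array because the characteristic-function and point-process limits depend on the multipliers only through $\bar A$ and $\Gamma$. Part (i) gives the spectral measure, and the normalizer $n_G^{-1}\sum_{\ell\in\mathcal S}\|\kappa_{\ell,\mathcal S}\|^\alpha\convp\bar A$ follows by the same $U$-statistic step. Since (i) holds only in probability over the draw, the argument goes along subsequences: a further subsequence on which (i) and (ii) hold almost surely yields conditional convergence, and dominated convergence of the characteristic functions then gives $T_{\mathcal S}\convd J^\star_\alpha$ jointly over shocks and draw. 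The continuous mapping step is the same as in \eqref{eq:SN-J}.

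The main obstacle is (iii). The delicate point is that Lemma~\ref{lem:margins} and Lemma~\ref{lem:linf-hessian} must yield bounds whose probabilities are uniform over draws, not merely $O_p$ for each fixed design. I will track explicitly that every constant depends only on $(c,C,\gamma)$ and the law of $\eta$.
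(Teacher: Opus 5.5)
Your proposal is correct and follows essentially the paper's route: you replace $\tilde x_{\ell,\mathcal S}$ by $\tilde x_\ell$ via Lemma~\ref{lem:subnet-residual}, bound the draw-induced fluctuation of the resulting bounded order-two vertex $U$-statistics at rate $O(1/G)$, rerun the full-sample lemmas on $\mathcal S$ with the draw-uniform constants of Lemma~\ref{lem:subnet-scaling} and \eqref{eq:design-op-sub}, and pass to the joint law of shocks and draw by a subsequence and bounded-convergence argument. The paper instead uses Lipschitz test functions and a permutation-martingale variance bound where you use uniformly continuous test functions and pair counting; one small correction is that under sampling without replacement vertex-disjoint terms are not exactly uncorrelated, but their covariances are $O(1/N)$, so they add at most $O(1/N)\le O(1/G)$ to the variance and your conclusion is unaffected.
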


\begin{proof}
The induced subnetwork is a complete bilateral network on $G$ countries, with per-country degree $G-1$ and $2G$ fixed effects.
It is therefore an instance of the full-sample design with $N$ replaced by $G$.
At the truth its weights are $\mu^0\in[c,C]$.
Away from the truth, the relevant population weights are the subnetwork pseudo-true weights $\mu_{ij,\mathcal S}(\beta)$ of Lemma~\ref{lem:subnet-scaling}, built from $\bar\phi_{\mathcal S}(\beta)$; this map is not in general the restriction of the full-sample $\bar\phi(\beta)$.
By Lemma~\ref{lem:subnet-scaling}, the fixed effects $\bar\phi_{\mathcal S}(\beta)$, the weights, the residuals $\tilde x_{\ell,\mathcal S}(\beta)$, and the subnetwork fixed-effect Gram matrix satisfy the analogues of \eqref{eq:pseudo-bdd} and \eqref{eq:dense-design}, with constants uniform over draws, over $G\ge G_0$, and over $\beta\in\mathcal B$.
The subnetwork therefore satisfies part~(a) and the boundedness and Gram parts of part~(b) of Assumption~\ref{ass:dgp} with $N$ replaced by $G$; the Hessian limit \eqref{eq:Hn-limit} on the subnetwork is part~(ii) below.
Once \eqref{eq:design-op-sub} is established, the arguments of Lemmas~\ref{lem:margins}--\ref{lem:general-beta} and the steps of the proof of Theorem~\ref{thm:stable} apply with $N$ replaced by $G$ and with \eqref{eq:design-op-sub} in the role of \eqref{eq:design-op}, with $\bar\phi_{\mathcal S},\mu_{\mathcal S},\tilde x_{\mathcal S}$ in the roles of $\bar\phi,\mu,\tilde x$.

\emph{(i).} Let $f$ be Lipschitz on $\mathbb S^{p-1}$ and set $\psi_f(\kappa)=\|\kappa\|^\alpha f(\kappa/\|\kappa\|)$, $\psi_f(0)=0$; for $\alpha\ge1$, $\psi_f$ is Lipschitz on $\{\|\kappa\|\le C\}$.
Since $\kappa_{\ell,\mathcal S}$ and $\kappa_\ell$ both lie in that set, Lemma~\ref{lem:subnet-residual} and this Lipschitz property give
\[
  X(f):=n_G^{-1}\sum_{\ell\in\mathcal S}\psi_f(\kappa_{\ell,\mathcal S})
  =n_G^{-1}\sum_{\ell\in\mathcal S}\psi_f(\kappa_\ell)
  +O_p\bigl(\sqrt{\log G/G}\bigr)
  =:X^0(f)+o_p(1).
\]
The full-sample array $\{\psi_f(\kappa_\ell)\}$ is non-random and the only randomness in $X^0(f)$ is the draw $\mathcal C_G$; the mean and variance below are computed in this conditional law.
$X^0(f)$ is exactly unbiased for the full-sample average: every ordered pair is included with probability $n_G/n$, so $\E X^0(f)=n^{-1}\sum_\ell\psi_f(\kappa_\ell)\to\bar A\int f\,d\Gamma$ by Assumption~\ref{ass:dgp}(c).
For its variance, we exploit exchangeability of the draw directly.
Let $\pi$ be a uniform permutation of $\{1,\dots,N\}$ with $\mathcal C_G=\{\pi(1),\dots,\pi(G)\}$, and let $M_k=\E[X^0(f)\mid\pi(1),\dots,\pi(k)]$, $k=0,\dots,G$, so that $M_0=\E X^0(f)$ and $M_G=X^0(f)$.
Swapping the $k$-th drawn country changes at most, $2(G-1)$ of the $n_G =G(G-1)$ ordered pairs, each summand bounded by $C_{\psi}/n_G$, hence $|M_k-M_{k-1}|\le 4C_{\psi}/G$ for each $k$, and orthogonality of martingale increments gives $\operatorname{Var}X^0(f)\le G\cdot(2C/G)^2=O(1/G)$.
Equivalently, symmetrizing $h_{ij}=\tfrac12\{\psi_f(\kappa_{ij})+\psi_f(\kappa_{ji})\}$ writes $X^0(f)=\binom G2^{-1}\sum_{\{i,j\}\subset\mathcal C_G}h_{ij}$ as a bounded
order-two U-statistic on a simple random sample drawn without replacement, whose finite-population variance is $O(1/G)$ by the Hoeffding decomposition. Hence $X(f)\convp\bar A\int f\,d\Gamma$.

By construction of $\psi_f$, the spectral measure on the left of~(i) satisfies $\int f\,d\Gamma_{\mathcal S}=X(f)/X(1)$.
Taking $f\equiv1$ gives $X(1)\convp\bar A>0$, so for every $f$ in a countable convergence-determining class of Lipschitz functions $\int f\,d\Gamma_{\mathcal S}\convp\int f\,d\Gamma$, which is the asserted weak convergence in probability $\Gamma_{\mathcal S}\Rightarrow\Gamma$.

\emph{(ii).} The same argument applies with the summands $\psi_f(\kappa_{\ell,\mathcal S})$ replaced by the entries of $\mu_\ell\tilde x_{\ell,\mathcal S}\tilde x_{\ell,\mathcal S}'$, which are Lipschitz in $\tilde x_{\ell,\mathcal S}$ on bounded sets and so admit the same replacement of $\tilde x_{\ell,\mathcal S}$ by $\tilde x_\ell$ up to $O_p(\sqrt{\log G/G})$.
Conditional on the shocks, the average of the non-random full-sample values $n^{-1}\sum_\ell\mu_\ell\tilde x_\ell\tilde x_\ell'$ has variance $O(1/G)$ by the same permutation-martingale argument.
Hence $n_G^{-1}\sum_{\ell\in\mathcal S}\mu_\ell\tilde x_{\ell,\mathcal S}\tilde x_{\ell,\mathcal S}'\convp H$, which is~(ii).

\emph{(iii).} Lemmas~\ref{lem:margins}, \ref{lem:tailmass}, \ref{lem:fe-rate}, \ref{lem:linf-hessian}, and~\ref{lem:fe-lp} with $G$ in place of $N$ give $\lambda_{\min}(G^{-1}\bar H_{\phi}^{(\mathcal S)}\restriction_{e_{\mathcal S}^\perp})\ge c_4$, $\|u\|_2,\|w\|_2=O_p(G^{2/\alpha-1})$, and $\|u\|_q,\|w\|_q=O_p(G^{2/q-1})$ for $q\in(1,\alpha)$, every country having degree $G-1$.
By the exact identity \eqref{eq:exact-Rn} of Lemma~\ref{lem:remainder}(a), and by the argument of Lemma~\ref{lem:general-beta} run with $(N,\eqref{eq:design-op})$ replaced by $(G,\eqref{eq:design-op-sub})$, the interpolated bound \eqref{eq:score-interp} applies on the subnetwork and gives $\sup_{\beta\in\mathcal B}\|R_{\mathcal S}(\beta)\|=O_p\bigl(G^{\gamma(1-\vartheta)+2\vartheta}\bigr)=o_p(a_{n_G})$ for every $\alpha\in(1,2)$, since $a_{n_G}\asymp G^{2/\alpha}$ and $\gamma<1$.
The subnetwork score covariance remainder satisfies \eqref{eq:meat-remainder} with $N\to G$ by Lemma~\ref{lem:remainder}(c), and the subnetwork profile-Hessian convergence follows as in the proof of Proposition~\ref{prop:primitive-d}.

\emph{(iv).} The shocks $\{\varepsilon_\ell\}_{\ell\in\mathcal S}$ are i.i.d.\ with the full-sample marginal, so by~(i) and Step~2 of the proof of Theorem~\ref{thm:stable}, $a_{n_G}^{-1}\sum_{\ell\in\mathcal S}s_{\ell,\mathcal S} \convd S_\alpha(\Lambda)$ where $s_{\ell,\mathcal S}=\sum_\ell \eta_\ell \mu_\ell \tilde{x}_{\ell,\mathcal S}$, and the joint analogue of \eqref{eq:joint-stable} holds on the subnetwork by the point-process argument there.
The passage from the truth-profiled subnetwork score covariance to the score covariance at the subsample estimates follows from the display in Step~5 of the proof of Theorem~\ref{thm:stable} with $N$ replaced by $G$, its inputs being supplied by~(iii).
Here Steps~2 and~5 are applied conditionally on the draw, for which the subnetwork array is deterministic; since the convergence in~(i) holds in probability, it holds almost surely along a further subsequence of any subsequence of draws, and bounded convergence over the draw then yields the convergence under the joint law of shocks and draw.
With~(ii)--(iii) and the continuous mapping theorem applied to the self-normalized ratio, whose limiting denominator is a.s.\ positive, $T_{\mathcal S}\convd J^\star_\alpha$.
\end{proof}

\begin{lemma}[Atomlessness of the self-normalized limit]\label{lem:atomless}
Let $\Pi$ be the Poisson random measure of \eqref{eq:prm} and let $(S_\alpha(\Lambda),\Xi)$ be the joint limit of \eqref{eq:joint-stable}, realized as the compensated linear and uncentered quadratic integrals of $\Pi$. Fix $k$, let $\iota_k$ be the $k$-th standard basis vector of $\R^p$, and set $g(\theta):=\iota_k'H^{-1}\theta$, $X:=\iota_k'H^{-1}S_\alpha(\Lambda)$, and $V:=\iota_k'H^{-1}\Xi H^{-1}\iota_k$.
\begin{enumerate}
\item[\emph{(i)}] If $\int_{\mathbb S^{p-1}}g(\theta)^2\,\Gamma(d\theta)>0$, then $V>0$ almost surely, $\Pr(X=t\sqrt V)=0$ for every $t\in\R$, and $J^\star_\alpha=X/\sqrt V$ has a continuous distribution function.
\item[\emph{(ii)}] Define the tail-balance constants
\[
  c_{k,+}:=\bar A\int_{\mathbb S^{p-1}}\bigl(g(\theta)_+\bigr)^\alpha\,\Gamma(d\theta)
  =C_\alpha^{-1}\int_{\mathbb S^{p-1}}\bigl(g(\theta)_+\bigr)^\alpha\,\Lambda(d\theta),
  \qquad
  c_{k,-}:=\bar A\int_{\mathbb S^{p-1}}\bigl(g(\theta)_-\bigr)^\alpha\,\Gamma(d\theta),
\]
with $x_+=\max(x,0)$ and $x_-=\max(-x,0)$, matching the polar representation $\varpi(dr,d\theta)=\alpha r^{-\alpha-1}dr\,\bar A\,\Gamma(d\theta)$ of the mean measure in \eqref{eq:prm}. Under the condition of~(i), $c_{k,+}+c_{k,-}>0$, and $J^\star_\alpha$ is distributed as the scalar self-normalized $\alpha$-stable limit of \citet{LoganMallowsRiceShepp1973} with tail balance $(c_{k,+},c_{k,-})$; in particular, $J^\star_\alpha\neq\mathcal N(0,1)$ for every $\alpha\in(1,2)$.
\end{enumerate}
\end{lemma}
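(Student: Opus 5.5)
The plan is to project the Poisson random measure $\Pi$ of \eqref{eq:prm} onto the scalar coordinate $g$. Under the mapping $z\mapsto(\iota_k'H^{-1}z,\ (\iota_k'H^{-1}z)^2)$ the pair $(X,V)$ becomes
\[
X=\int y\,(\Pi_g-\varpi_g)(dy),\qquad V=\int y^2\,\Pi_g(dy),
\]
where $\Pi_g$ is the image of $\Pi$ under $z\mapsto\iota_k'H^{-1}z$. By the mapping theorem, $\Pi_g$ is a Poisson random measure on $\R\setminus\{0\}$ with mean measure $\varpi_g$. In polar coordinates a point $r\theta$ maps to $r\,g(\theta)$, so a direct computation gives
\[
\varpi_g(y,\infty)=c_{k,+}\,y^{-\alpha},\qquad \varpi_g(-\infty,-y)=c_{k,-}\,y^{-\alpha}.
\]
Points with $g(\theta)=0$ map to the origin and contribute to neither integral. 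This identification uses only the polar form of $\varpi$ and the a.s.\ convergence of the truncated integrals already used for \eqref{eq:joint-stable}.

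For (i), the hypothesis $\int g^2\,d\Gamma>0$ gives $\Gamma(\{g\neq0\})>0$, hence $c_{k,+}+c_{k,-}>0$. Then $\varpi_g$ has infinite total mass near $0$ (it is $\alpha$-regularly varying with positive constant), so $\Pi_g$ has infinitely many points a.s., and $V$, being a sum of squares of these points, is strictly positive a.s.

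For atomlessness of $X/\sqrt V$, I would condition on all points of $\Pi_g$ except the single largest in modulus, or more robustly decompose $\Pi_g=\Pi_g^{(1)}+\Pi_g^{(2)}$ by thinning, with $\Pi_g^{(1)}$ restricted to $\{|y|>1\}$. Conditionally on $\Pi_g^{(2)}$ and on the number of points of $\Pi_g^{(1)}$, those points are i.i.d.\ with an absolutely continuous law. On the event that there is at least one such point, $X-t\sqrt V$ is then a nonconstant real-analytic function of one of these points $y_1$, namely $y_1-t\sqrt{y_1^2+\text{const}}+\text{const}$. Such a function has finitely many zeros unless it vanishes identically, and identical vanishing is ruled out by comparing behavior as $y_1\to\pm\infty$ (the case $|t|=1$ needs the constant term; if it vanishes, use the other sign branch or a second point). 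Hence $\Pr(X=t\sqrt V)=0$ for every $t$. To cover the event of no large point, let the threshold $1$ decrease to $0$; the probability of having no point with $|y|>\delta$ tends to $0$. Continuity of the distribution function of $J^\star_\alpha$ follows.

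For (ii), the joint law of $(X,V)$ depends on $\Pi_g$ only through $\varpi_g$, which is characterized by $(\alpha,c_{k,+},c_{k,-})$. This is exactly the Poisson representation of the limit of $\sum\zeta_m/(\sum\zeta_m^2)^{1/2}$ for i.i.d.\ scalar $\zeta_m$ in the $\alpha$-domain with that tail balance. The identification with \citet{LoganMallowsRiceShepp1973}, modern form in \citet{GineGotzeMason1997}, therefore follows by running the argument of Step~5 on a scalar i.i.d.\ array with matching tails. That the limit is not $\mathcal N(0,1)$ follows from those references: for $\alpha<2$ the self-normalized limit has density tails behaving like $\exp(-c x^2)$ with constant different from $1/2$, or more simply a known non-Gaussian moment. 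Alternatively, Gin\'e--G\"otze--Mason show that asymptotic normality of the self-normalized sum holds if and only if the summands lie in the Gaussian domain of attraction, which fails for $\alpha<2$.

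I expect the main obstacle to be the atomlessness argument in (i), in particular the degenerate case $|t|=1$ with one-sided $\varpi_g$. There I would use two large points rather than one, so the function becomes genuinely nonconstant in one coordinate.
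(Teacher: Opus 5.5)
Your proposal is correct, and part~(ii) is essentially the paper's argument. Like the paper, you push $\Pi$ forward under $(r,\theta)\mapsto r\,g(\theta)$ and read off $c_{k,\pm}$ from the polar form of $\varpi$. You then identify $(X,V)$ with the point-process limit of a scalar i.i.d.\ self-normalized sum with that tail balance, and exclude $\mathcal N(0,1)$ by the Gin\'e--G\"otze--Mason characterization. Drop your remark about the Gaussian-type tail constant of the density: it is not needed, and you do not justify it.

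In part~(i) you isolate a single jump differently from the paper. The paper uses the LePage series $\Pi=\sum_m\delta_{(\varsigma\Upsilon_m^{-1/\alpha},\theta_m)}$ and takes the first index $m^\star$ with $g(\theta_{m^\star})\neq0$. It then uses that $\Upsilon_{m^\star}$ is uniform on its gap given everything else. You instead restrict $\Pi_g$ to $\{|y|>\delta\}$, use that the points there are i.i.d.\ with an absolutely continuous law given their number, and let $\delta\downarrow0$. The paper's version needs no limiting step and no ``no large point'' event. Yours avoids the series representation and its deterministic centering constants. Either works.

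The $|t|=1$ case you flag as the main obstacle is not one, and you need neither a second large point nor ``the other sign branch.'' Conditionally on everything except $y_1$, write $X=A+y_1$ and $V=B+y_1^2$. The infinitely many remaining points give $B>0$ a.s.; this is the same fact you used for $V>0$. Squaring $A+y_1=t\sqrt{B+y_1^2}$ gives $(1-t^2)y_1^2+2Ay_1+A^2-t^2B=0$. This is a nonzero polynomial in every case: for $t^2=1$ and $A=0$ it is the constant $-B\neq0$. So it has at most two roots, whether $\varpi_g$ is one- or two-sided, which is exactly the paper's step. Your heuristic of comparing $y_1\to\pm\infty$ would not by itself settle the one-sided case, since $y_1$ then lives on a half-line.

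One more line is needed in the projection identity. $X$ is constructed with the vector truncation $\{\|z\|>\epsilon\}$, so writing it as the scalar compensated integral $\int y\,(\Pi_g-\varpi_g)(dy)$, truncated at $\{|y|>\epsilon\}$, needs justification. A.s.\ convergence along one truncation does not by itself give convergence along the other. The paper bounds the variance of the compensated integral over the symmetric difference of the two truncation sets by $C\epsilon^{2-\alpha}\int(|g|^\alpha+g^2)\,d\Gamma\to0$. For part~(i) alone you can sidestep this by thinning on $\{|g(z)|>\delta\}$, which is bounded away from the origin.
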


\begin{proof}
\emph{(i) Atomlessness.}
We isolate the first jump $m^\star$ with $g(\theta_{m^\star})\neq0$ and show that, conditionally on all other jumps, the event $\{X=t\sqrt V\}$ constrains its arrival time to at most two values, which is a null event because that arrival time has an atomless conditional law.

By the series representation of the Poisson integrals \citep[Section~3.10]{SamorodnitskyTaqqu1994}, realize $\Pi=\sum_{m\ge1}\delta_{(\varsigma\,\Upsilon_m^{-1/\alpha},\,\theta_m)}$ in polar coordinates, with $\{\Upsilon_m\}_{m\ge1}$ the arrival times of a unit-rate Poisson process on $(0,\infty)$, $\{\theta_m\}$ i.i.d.\ with law $\tilde\Gamma:=\Gamma/\Gamma(\mathbb S^{p-1})$ independent of $\{\Upsilon_m\}$, and $\varsigma>0$ a scale constant. Then
\[
  X=\sum_{m\ge1}\bigl[g(\theta_m)\,\varsigma\,\Upsilon_m^{-1/\alpha}-b_m\bigr],
  \qquad
  V=\varsigma^2\sum_{m\ge1}\Upsilon_m^{-2/\alpha}g(\theta_m)^2,
\]
with $\{b_m\}$ deterministic centering constants; both series converge almost surely for $\alpha\in(1,2)$.

The condition of~(i), $\int_{\mathbb S^{p-1}}g(\theta)^2\,\Gamma(d\theta)>0$, is equivalent to $\tilde\Gamma(\{g\neq0\})>0$, so the i.i.d.\ sequence $\{\theta_m\}$ almost surely contains infinitely many indices with $g(\theta_m)\neq0$; hence $V>0$ a.s.\ and, with $m^\star:=\min\{m:g(\theta_m)\neq0\}$, $B:=V-\varsigma^2\Upsilon_{m^\star}^{-2/\alpha}g(\theta_{m^\star})^2>0$ a.s.

Fix $t\in\R$ and condition on $\mathcal F:=\sigma(\{\theta_m\}_{m\ge1},\{\Upsilon_m\}_{m\neq m^\star})$; note $m^\star$ is $\mathcal F$-measurable. Given $\mathcal F$, the conditional law of $\Upsilon_{m^\star}$ is uniform on $(\Upsilon_{m^\star-1},\Upsilon_{m^\star+1})$ (with $\Upsilon_0:=0$), hence atomless, and $(X,V)$ depend on $\Upsilon_{m^\star}$ only through $r:=\varsigma\,\Upsilon_{m^\star}^{-1/\alpha}$:
\[
  X=A+g(\theta_{m^\star})\,r,
  \qquad
  V=B+g(\theta_{m^\star})^2r^2,
\]
with $A,B$ $\mathcal F$-measurable, $g(\theta_{m^\star})\neq0$, and $B>0$ a.s. The equation $A+g(\theta_{m^\star})r=t\,(B+g(\theta_{m^\star})^2r^2)^{1/2}$ has at most two solutions in $r>0$: squaring yields the quadratic $g(\theta_{m^\star})^2(1-t^2)r^2+2Ag(\theta_{m^\star})r+A^2-t^2B=0$, which is non-degenerate unless $t^2=1$ and $A=0$, in which case the original equation reads $g(\theta_{m^\star})r=\pm(B+g(\theta_{m^\star})^2r^2)^{1/2}$, impossible for $B>0$. Since $r\mapsto\Upsilon_{m^\star}$ is a bijection and the conditional law of $\Upsilon_{m^\star}$ is atomless, $\Pr(X=t\sqrt V\mid\mathcal F)=0$ a.s.; integrating gives $\Pr(X=t\sqrt V)=0$. As $V>0$ a.s., the law of $J^\star_\alpha=X/\sqrt V$ assigns no mass to any point, i.e.\ its distribution function is continuous, proving~(i).

\emph{(ii) Nonnormality via the projected point process.}
We realize $(X,V)$ as functionals of a single scalar $\alpha$-stable point process $\Pi_k$ on the line, so that $J^\star_\alpha$ becomes the self-normalized limit of an ordinary i.i.d.\ sum with tail balance $(c_{k,+},c_{k,-})$. Then the Giné--Götze--Mason characterization of asymptotic normality fails for $\alpha<2$.

Map each point of $\Pi$ by $(r,\theta)\mapsto x=r\,g(\theta)$, discarding points with $g(\theta)=0$. By the mapping theorem for Poisson random measures, the image $\Pi_k$ is a Poisson random measure on $\R\setminus\{0\}$ with mean measure determined by
\[
  \varpi_k\bigl((x,\infty)\bigr)=c_{k,+}\,x^{-\alpha},
  \qquad
  \varpi_k\bigl((-\infty,-x)\bigr)=c_{k,-}\,x^{-\alpha},
  \qquad x>0 ,
\]
an $\alpha$-stable scalar intensity with tail balance $(c_{k,+},c_{k,-})$. The condition of~(i) gives $\tilde\Gamma(\{g\neq0\})>0$, hence $c_{k,+}+c_{k,-}>0$. Linearity carries the compensation through the projection:
\[
  X\;=\;\iota_k'H^{-1}S_\alpha(\Lambda)
  \;=\;\lim_{\epsilon\downarrow0}
  \Bigl[\int_{|x|>\epsilon}x\,\Pi_k(dx)
  -\int_{|x|>\epsilon}x\,\varpi_k(dx)\Bigr],
  \qquad
  V\;=\;\int x^2\,\Pi_k(dx),
\]
so the pair $(X,V)$ is a functional of the single scalar Poisson random measure $\Pi_k$. The truncation sets require one reconciliation: $X$ is constructed in Step~5 of the proof of Theorem~\ref{thm:stable} with the vector truncation $\{\|z\|>\epsilon\}$, whereas the display uses the scalar truncation $\{|x|=r|g(\theta)|>\epsilon\}$. The two compensated integrals differ by a compensated Poisson integral over the symmetric difference of the truncation sets, with variance at most
\[
  \int r^2g(\theta)^2\,
  \1\bigl\{\1\{r>\epsilon\}\neq\1\{r|g(\theta)|>\epsilon\}\bigr\}\,
  \varpi(dr,d\theta)
  \;\le\;C\,\epsilon^{2-\alpha}
  \int_{\mathbb S^{p-1}}
  \bigl(|g(\theta)|^{\alpha}+g(\theta)^{2}\bigr)\,\Gamma(d\theta)
  \;\longrightarrow\;0
\]
as $\epsilon\downarrow0$, by the polar form of $\varpi$ and the boundedness of $g$; both truncations therefore yield the same limit $X$. 

It remains to identify the law of $J^\star_\alpha=X/\sqrt V$ as a scalar self-normalized limit and to rule out normality.
Let $\{\chi_i\}$ be i.i.d.\ mean-zero with regularly varying tails $\Pr(\chi_1>t)\sim c\,c_{k,+}t^{-\alpha}$ and $\Pr(\chi_1<-t)\sim c\,c_{k,-}t^{-\alpha}$ for some $c>0$, so that its normalized point process is $\Pi_k$ up to the common scale $c$.
The two-sided analogue of the joint limit \eqref{eq:joint-stable} with $p=1$ and $\kappa_\ell\equiv1$, whose proof applies verbatim with the general tail balance carried through the characteristic-function expansion \eqref{eq:cf-expansion}, gives
\[
  \Bigl(b_m^{-1}\textstyle\sum_{i\le m}\chi_i,\;b_m^{-2}\sum_{i\le m}\chi_i^2\Bigr)
  \;\convd\;(X,V)
\]
up to a common positive scale, which cancels in the ratio; see also \citet{LoganMallowsRiceShepp1973}.
Hence $J^\star_\alpha$ is the distributional limit of the self-normalized sum $\sum_{i\le m}\chi_i\big/\bigl(\sum_{i\le m}\chi_i^2\bigr)^{1/2}$ of an i.i.d.\ centered sequence with tail index $\alpha\in(1,2)$.

By the characterization of \citet{GineGotzeMason1997}, such a self-normalized sum converges to $\mathcal N(0,1)$ if and only if the summand is centered and lies in the domain of attraction of the Gaussian law.
A variable with regularly varying tails of index $\alpha<2$ does not lie in that domain, so the limit of the self-normalized sum, which exists and is unique, is not $\mathcal N(0,1)$.
Therefore $J^\star_\alpha\neq\mathcal N(0,1)$ for every $\alpha\in(1,2)$.
\end{proof}

\begin{lemma}[Vertex subsampling consistency]\label{lem:vertex-subsample}
Let $Z_{\mathcal C}$ be a statistic of the induced subnetwork on a country set $\mathcal C$, measurable with respect to the shocks on $\mathcal C$, with $Z_{\mathcal C}\convd J^\star_\alpha$ under the joint law of the shocks and a uniform draw, where $F_{J^\star_\alpha}$ is continuous (Lemma~\ref{lem:atomless}). If $Z_{\mathcal C}$ is undefined for some $\mathcal C$, set $Z_{\mathcal C}=0$. Define
\[
  F^{c}_{n,G}(t):=\binom NG^{-1}\sum_{|\mathcal C|=G}\1\{Z_{\mathcal C}\le t\},
  \qquad
  \widehat F_{n,G,M}(t):=M^{-1}\sum_{m=1}^M\1\{Z_{\mathcal C^{(m)}}\le t\},
\]
the latter over $M$ independent uniform draws. If $G\to\infty$ and $G/N\to0$, then
\[
  \sup_t\bigl|F^{c}_{n,G}(t)-F_{J^\star_\alpha}(t)\bigr|\convp0,
  \qquad\text{and, as $M\to\infty$,}\qquad
  \sup_t\bigl|\widehat F_{n,G,M}(t)-F_{J^\star_\alpha}(t)\bigr|\convp0 .
\]
\end{lemma}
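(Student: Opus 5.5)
The plan follows the classical subsampling argument of \citet[Theorem~2.1]{PolitisRomano1994}, adapted from subsets of i.i.d.\ observations to subsets of vertices. The complete subset-average cdf $F^c_{n,G}(t)$ is a U-statistic of order $G$ indexed by countries. For a fixed $t$ we compare its mean with $F_{J^\star_\alpha}(t)$ and show that its variance vanishes. Uniformity in $t$ and the Monte Carlo statement are then routine.

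\emph{Step~1 (mean).} By exchangeability of the uniform draw,
\[
\E F^c_{n,G}(t)=\Pr_{\eta,\mathcal C}(Z_{\mathcal C}\le t).
\]
This probability tends to $F_{J^\star_\alpha}(t)$ for every $t$, by the hypothesis $Z_{\mathcal C}\convd J^\star_\alpha$ and continuity of $F_{J^\star_\alpha}$.

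\emph{Step~2 (variance), the main obstacle.} Dyadic subsets of countries are not disjoint blocks. Two $G$-subsets $\mathcal C,\mathcal C'$ share shocks exactly when $|\mathcal C\cap\mathcal C'|\ge2$, since only then do they have a common ordered pair. We write
\[
\operatorname{Var}F^c_{n,G}(t)=\E_{\mathcal C,\mathcal C'}\operatorname{Cov}\bigl(\1\{Z_{\mathcal C}\le t\},\1\{Z_{\mathcal C'}\le t\}\bigr),
\]
with $\mathcal C,\mathcal C'$ drawn independently and uniformly. Each $Z_{\mathcal C}$ depends only on the shocks within $\mathcal C$, and the shocks are independent across pairs, so the covariance is zero whenever $|\mathcal C\cap\mathcal C'|\le1$. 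Otherwise it is bounded by one. Hence
\[
\operatorname{Var}F^c_{n,G}(t)\le\Pr(|\mathcal C\cap\mathcal C'|\ge2).
\]
The overlap is hypergeometric with mean $G^2/N$, and
\[
\Pr(|\mathcal C\cap\mathcal C'|\ge2)\le\binom G2^2\Big/\binom N2\asymp G^4/N^2.
\]
This is $o(1)$ only when $G=o(N^{1/2})$. Under the weaker condition $G/N\to0$ I would instead use a blocking (Hoeffding-type) argument. Partition a random permutation of the countries into $k=\lfloor N/G\rfloor$ disjoint $G$-blocks. Their induced subnetworks use disjoint pair sets, so the corresponding indicators are independent. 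Averaging over permutations reproduces $F^c_{n,G}$, and Jensen's inequality then gives
\[
\operatorname{Var}F^c_{n,G}(t)\le\operatorname{Var}\Bigl(k^{-1}\textstyle\sum_{b\le k}\1\{Z_{\mathcal C_b}\le t\}\Bigr).
\]
The terms are conditionally independent given the permutation. Their conditional means are not identical, but this is handled by the law of total variance: the inner variance is at most $1/(4k)$, and the outer term is the variance of a mean over blocks of a bounded function of the permutation. I would bound that outer term by exchangeability, noting that a block mean averaged over disjoint blocks is $F$ evaluated with the shocks integrated out, and reuse the permutation-martingale device from the proof of Lemma~\ref{lem:subnet-inherit}(i). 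This yields $O(G/N)\to0$, which is exactly where $G/N\to0$ is used. Chebyshev's inequality then gives $F^c_{n,G}(t)\convp F_{J^\star_\alpha}(t)$ pointwise.

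\emph{Step~3 (uniformity and Monte Carlo).} $F^c_{n,G}$ is monotone and bounded, and the limit $F_{J^\star_\alpha}$ is continuous. Pointwise convergence on a finite grid of quantile points therefore gives
\[
\sup_t\bigl|F^c_{n,G}(t)-F_{J^\star_\alpha}(t)\bigr|\convp0,
\]
by the P\'olya argument. Conditional on the data, $\widehat F_{n,G,M}$ is an empirical cdf of $M$ i.i.d.\ draws from $F^c_{n,G}$. The Dvoretzky--Kiefer--Wolfowitz inequality gives
\[
\sup_t\bigl|\widehat F_{n,G,M}-F^c_{n,G}\bigr|=O_p(M^{-1/2})
\]
uniformly in the data, and the triangle inequality completes the argument.
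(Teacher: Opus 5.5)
Your architecture matches the paper's: the mean via exchangeability, Hoeffding's block representation with Jensen for the variance, P\'olya for uniformity, and DKW for the Monte Carlo step. Step~2 as written does not close, however.

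You bound $\operatorname{Var}_\eta F^c_{n,G}(t)$ by the \emph{joint} variance over shocks and permutation of the block average. This creates an outer term $\operatorname{Var}_\pi\bigl(k^{-1}\sum_{b\le k}g_t(\mathcal C_b)\bigr)$ with $g_t(\mathcal C):=\Pr_\eta(Z_{\mathcal C}\le t)$. The device you propose for it does not give $O(G/N)$. In a Doob martingale that reveals the permutation one position at a time, relocating one country alters two blocks, so each increment can move the block average by up to $2/k$. Summing $N$ such increments yields only $N(2/k)^2\asymp G^2/N$. That is the same $G=o(N^{1/2})$ restriction you set out to escape, and it fails at the default $G=\lfloor N^{0.7}\rfloor$.

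Exchangeability alone only reduces the term to $k^{-1}\operatorname{Var}g_t(\mathcal C_1)+(1-k^{-1})\operatorname{Cov}(g_t(\mathcal C_1),g_t(\mathcal C_2))$ for two disjoint uniform blocks. When $G^2/N\to\infty$, the law of a disjoint pair is far in total variation from that of an independent pair. So this cross-covariance is not automatically small.

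The outer term never needs to appear. $F^c_{n,G}(t)$ is random only through the shocks, and the right centering is its shock mean $F_{n,G}(t)=\binom NG^{-1}\sum_{\mathcal C}g_t(\mathcal C)$. Center each indicator at its own shock mean, $\tilde h_{\mathcal C}:=\1\{Z_{\mathcal C}\le t\}-g_t(\mathcal C)$. Hoeffding's representation then gives $F^c_{n,G}(t)-F_{n,G}(t)=\E_\pi\bigl[k^{-1}\sum_b\tilde h_{\mathcal C_b}\bigr]$ for every realization of the shocks. Jensen in $\pi$ and Fubini give
\[
  \E_\eta\bigl(F^c_{n,G}(t)-F_{n,G}(t)\bigr)^2\;\le\;\E_\pi\E_\eta\Bigl(k^{-1}\textstyle\sum_{b\le k}\tilde h_{\mathcal C_b}\Bigr)^2\;\le\;\frac{1}{4k},
\]
because for fixed $\pi$ the $\tilde h_{\mathcal C_b}$ are mean zero and independent across disjoint blocks. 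This is the paper's argument, and it is exactly where the non-identical conditional means are absorbed.

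If you prefer to keep your decomposition, the outer term is in fact $O(G/N)$. Proving it needs a spectral fact rather than the martingale device. The normalized disjointness (Kneser) operator on $G$-subsets has nontrivial eigenvalues bounded in modulus by $G/(N-G)$. This gives $|\operatorname{Cov}(g(\mathcal C_1),g(\mathcal C_2))|\le\frac{G}{N-G}\operatorname{Var}g(\mathcal C_1)$ for disjoint uniformly drawn blocks.
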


\begin{proof}
We first show that the undefined statistics are asymptotically negligible. Let
\[
  \pi^{\mathrm{fail}}_{n,G}:=\binom NG^{-1}\sum_{|\mathcal C|=G}\1\{Z_{\mathcal C}\text{ undefined on }\mathcal C\},
\]
whose shock-expectation equals the probability that $Z_{\mathcal C}$ is undefined on a uniform draw. For the PPML statistics this probability vanishes: Corollary~\ref{cor:sample-existence} applies with $G$ in place of $N$, its margin inputs (Lemma~\ref{lem:margins}(i)--(ii)) transferring to the induced subnetwork as in Lemma~\ref{lem:subnet-inherit} with a bound uniform over $\mathcal C$, so the strict Hall conditions hold w.p.a.\ one. Markov's inequality gives $\pi^{\mathrm{fail}}_{n,G}\convp0$, and replacing $Z_{\mathcal C}$ on this fraction of subsets moves each cdf by at most $\pi^{\mathrm{fail}}_{n,G}$ uniformly in $t$, so the convention $Z_{\mathcal C}=0$ does not affect the limits.

Write $F_{n,G}(t):=\Pr_{\eta,\mathcal C}(Z_{\mathcal C}\le t)$. By the maintained convergence $Z_{\mathcal C}\convd J^\star_\alpha$ and continuity of $F_{J^\star_\alpha}$, $F_{n,G}(t)\to F_{J^\star_\alpha}(t)$ for every $t$. Conditional on the shocks, the Monte Carlo error is $\sup_t|\widehat F_{n,G,M}(t)-F^{c}_{n,G}(t)|=O_p(M^{-1/2})$ by the Dvoretzky--Kiefer--Wolfowitz inequality. It therefore suffices to show $F^{c}_{n,G}(t)-F_{n,G}(t)\convp0$ pointwise over the shocks.

$F^{c}_{n,G}(t)$ is a complete subset average over the $N$ countries with summands $h_t(\mathcal C)=\1\{Z_{\mathcal C}\le t\}\in[0,1]$ and shock-mean $\E_\eta[F^{c}_{n,G}(t)]=F_{n,G}(t)$. Since the deterministic design varies with $\mathcal C$, center each summand separately, $\tilde h_{\mathcal C}:=h_t(\mathcal C)-\E_\eta[h_t(\mathcal C)]$. 
Centered summands on disjoint country blocks are independent: disjoint blocks induce disjoint sets of directed pairs, whose shocks are independent.
By Hoeffding's representation of a complete subset average as an average, over permutations $\pi$ independent of the shocks, of block means over $K=\lfloor N/G\rfloor$ disjoint $G$-country blocks $\mathcal C_1,\dots,\mathcal C_K$ \citep[Section~5]{Hoeffding1963},
\[
  F^{c}_{n,G}(t)-F_{n,G}(t)=\E_\pi\Bigl[K^{-1}\textstyle\sum_{b\le K}\tilde h_{\mathcal C_b}\Bigr],
\]
so Jensen's inequality and block independence give
\[
  \operatorname{Var}_\eta\bigl(F^{c}_{n,G}(t)\bigr)
  \;\le\;\E_\pi\Bigl[K^{-2}\textstyle\sum_{b\le K}\operatorname{Var}_\eta(\tilde h_{\mathcal C_b})\Bigr]
  \;\le\;\frac{1}{4K}\;=\;O\Bigl(\frac GN\Bigr)\;\longrightarrow\;0 .
\]
Hence $F^{c}_{n,G}(t)\convp F_{J^\star_\alpha}(t)$, and with the Monte Carlo error also $\widehat F_{n,G,M}(t)\convp F_{J^\star_\alpha}(t)$, for every $t$. Monotonicity of both cdfs and continuity of $F_{J^\star_\alpha}$ generalize these pointwise limits to the two uniform statements by P\'olya's argument.
\end{proof}

\section{Cluster Standard Errors under Heavy Tails}
\label{app:cluster}

\setcounter{theorem}{0}
\setcounter{proposition}{0}
\setcounter{lemma}{0}
\setcounter{equation}{0}

Clustering on the country or the country pair is the standard robustness device in applied gravity work: it permits arbitrary within-cluster dependence while requiring only independence across clusters. It does not, however, restore valid inference under heavy tails. The cluster-robust sandwich is justified by a finite-variance cluster central limit theorem, and under Assumption~\ref{ass:dgp}(a) the cluster score inherits the same infinite-variance tail as the pair-level score. The paper reports cluster confidence intervals in Section~\ref{sec:application} only as a conventional benchmark, not as a remedy; this appendix explains heuristically why that benchmark fails.

Partition the $n$ dyads into disjoint clusters $\mathcal{C}_1,\dots,\mathcal{C}_{n_c}$ (clustering on the exporter gives $n_c=N$ clusters; on the country pair, $n_c\asymp N^2$). 
The cluster score and the cluster-robust score covariance estimator are
\begin{equation}
  T_c=\sum_{\ell\in\mathcal{C}_c}s_\ell,\qquad\hat\Omega_n^{\mathrm{clus}}=\frac1n\sum_{c=1}^{n_c}\hat T_c\hat T_c',\qquad\hat T_c=\sum_{\ell\in\mathcal{C}_c}\hat s_\ell(\hat\beta),
  \label{eq:cluster-score-meat}
\end{equation}
with $s_\ell=\eta_\ell\kappa_\ell$ and $\kappa_\ell=\mu_\ell\tilde x_\ell$. Under Assumption~\ref{ass:dgp}, the cluster scores $\{T_c\}$ are independent across $c$. 
When within-cluster dependence is instead allowed, independence across clusters must be imposed separately. 
In either case the cluster-robust $t$-statistic is asymptotically standard normal only if the independent cluster array obeys a finite-variance CLT.

We now heuristically argue that $T_{c,k}=\sum_{\ell\in \mathcal{C}_c}\eta_\ell\kappa_{\ell k}$ is heavy-tailed, and a rigorous treatment is beyond the scope of this paper.
Fix a cluster $c$ with identifying variation in the coordinate $k$ of interest, i.e.\ $\kappa_{\ell k}\neq0$ for some $\ell\in\mathcal{C}_c$.
Extremes of heavy-tailed sums are governed by the one-large-jump principle: a single large pair shock $\eta_{\ell^*}$ dominates the cluster total, so that $T_{c,k}\approx\eta_{\ell^*}\kappa_{\ell^* k}$ on the relevant events and
\begin{equation}
  \Pr\bigl(|T_{c,k}|>t\bigr)\;\sim\; L_{c,k}\,t^{-\alpha},
  \qquad L_{c,k}>0,
  \label{eq:cluster-tail-heuristic}
\end{equation}
as $t\to\infty$, for a cluster-specific tail constant $L_{c,k}$.
Unlike the independent benchmark, in which $L_{c,k}=c_+\sum_{\ell}|\kappa_{\ell k}|^{\alpha}$ splits by sign, this constant depends on the joint tail of $(\eta_\ell)_{\ell\in\mathcal{C}_c}$ and may be
larger or smaller.
Formally, \eqref{eq:cluster-tail-heuristic} holds whenever the cluster shock vector is multivariate regularly varying with index $\alpha$ and the weights $\{\kappa_{\ell k}\}$ do not produce exact tail cancellation.

The implication does not depend on the value of $L_{c,k}$.
Because $\alpha<2$, \eqref{eq:cluster-tail-heuristic} gives $\E[T_{c,k}^2]=\infty$, so the finite-variance cluster CLT that justifies the conventional cluster-robust standard error does not apply, and asymptotic standard normality of the studentized cluster statistic is not justified; under suitable multivariate regular-variation and anti-cancellation conditions on the cluster shock vectors, a nonstandard self-normalized stable limit is expected instead.
The mechanism is generic: clustering changes the variance estimator but not the infinite-variance score from which it is built, and the choice of clustering level does not affect this conclusion.
Within-cluster dependence typically reinforces it by concentrating the cluster total on a single extreme, although dependence can also alter tail constants or, in special configurations, produce cancellation.
The $m$-out-of-$n$ bootstrap instead forms the self-normalized statistic \eqref{eq:SN-defn}. Its score-sum numerator and heteroskedasticity-robust denominator carry the same stable scale $a_n$, so the ratio converges to the law $J^\star_\alpha$ of Theorem~\ref{thm:stable}, which is not $\mathcal N(0,1)$. Country-level resampling then estimates the quantiles of $J^\star_\alpha$ consistently.
Validity comes not from removing the infinite variance from the $t$-statistic but from replacing the normal critical values with the subsampling ones.

\bibliographystyle{apalike}
\bibliography{references}

\end{document}